\documentclass[12pt]{article}

\AtBeginDocument{%
   \setlength\abovedisplayskip{0.05in}
   \setlength\belowdisplayskip{0.08in}}

\usepackage[onehalfspacing]{setspace}

\makeatletter
\renewcommand\@footnotetext[1]{%
  \insert\footins{%
    \reset@font\footnotesize
    \interlinepenalty\interfootnotelinepenalty
    \splittopskip\footnotesep
    \splitmaxdepth \dp\strutbox \floatingpenalty \@MM
    \hsize\columnwidth \@parboxrestore
    {\setstretch{1.0}\protect\@makefntext{%
      \rule{\z@}{\footnotesep}\ignorespaces#1}}}}
\makeatother

\usepackage{geometry}
\usepackage{enumitem}
\usepackage{lineno}
\usepackage{booktabs}
\usepackage{amssymb}
\usepackage{amsmath}
\usepackage{amsthm}
\usepackage{caption}
\usepackage{epsfig}
\usepackage{graphicx}
\usepackage{graphics}
\usepackage{float}
\usepackage{subfigure}
\usepackage{multirow}
\usepackage{xcolor}
\usepackage{lineno}
\usepackage{fullpage}
\usepackage[normalem]{ulem} 
\usepackage{makeidx}
\usepackage{xspace}
\usepackage{wrapfig}
\usepackage{lscape}
\usepackage{mathrsfs}
\usepackage{adjustbox}
\usepackage{multirow}
\usepackage{varwidth}
\usepackage{rotating}
\usepackage{lscape}
\usepackage[x11names,dvipsnames]{xcolor}
\usepackage{tikz}
\usetikzlibrary{automata, positioning}
\usepackage[bottom]{footmisc}
\usepackage[colorlinks,citecolor=darkblue,urlcolor=darkblue,linkcolor=darkblue]{hyperref} 
\graphicspath{{C:/Users/flori/Desktop/folders/minimumWagePaper/paper/}}
\makeindex
\newtheorem{theorem}{Theorem}
\newtheorem{assumption}{Assumption}
\newtheorem{corollary}{Corollary}

\newtheorem{lemma}{Lemma}

\newtheorem{proposition}{Proposition}
\newtheorem{definition}{Definition}

\newtheorem{claim}{Claim}
\newtheorem{obs}{Observation}

\definecolor{purple}{rgb}{0.6, 0.4, 0.8}
\definecolor{darkred}{rgb}{1, 0.1, 0.3}
\definecolor{darkblue}{rgb}{0.0, 0.0, 0.55}
\definecolor{darkgreen}{rgb}{0,0.6,0.5}
\definecolor{forestgreen}{rgb}{0.0, 0.46, 0.37}
\definecolor{bittersweet}{rgb}{1.0, 0.44, 0.37}
\definecolor{navy}{rgb}{0.0, 0.0, 0.55}
\definecolor{brown}{rgb}{0.53, 0.18, 0.09}
\definecolor{Green}{rgb}{0.0, 0.47, 0.44}

\newcommand {\mm}[1] {\ifmmode{#1}\else{\mbox{$#1$}}\fi}

\newcommand\E{\mathbb{E}}
\newcommand\Proba{\mathbb{P}}
\newcommand\R{\mathbb{R}}

\usepackage{dsfont}

\newcommand{\1}{\mathbbm{1}}
\newcommand{\DKL}{D_{\mathrm{KL}}}

\usepackage{sectsty}
\sectionfont{\centering\Large}
\subsectionfont{\large}
\usepackage{fancyhdr}
\usepackage{natbib}

\usepackage{pst-node, pst-plot, pst-poly}
\usepackage{rotating}

\usepackage{tikz}
\usepackage{pstricks}
\usepackage{tikz-cd}
\usepackage{siunitx}
\usepackage{pgfplots}
\usepgfplotslibrary{fillbetween}
\usepackage{algorithm}
\usepackage{algorithmic}
\usepackage{bbm}
\usepackage{csquotes}

\usepackage{scalerel}
\usetikzlibrary{shapes,arrows,trees,calc}
\pgfplotsset{compat=1.12}

\usetikzlibrary{trees}
\allowdisplaybreaks 
\begin{document}

\title{ \vspace{-3.8em}
\mbox{\small\bf\MakeUppercase{Motivating Innovation with a Misspecified Roadmap}}\footnote{I am very thankful to Drew Fudenberg for his support and mentorship. I am also grateful to Ian Ball, Glenn Ellison, Xavier Gabaix, Ani Ghosh, Robert Gibbons, Itzhak Gilboa, Gustavo Manso, Alfonso Maselli, Stephen Morris, Parag Pathak, Alejandro Rivera, Alexander Wolitzky, and seminar participants at the MIT Theory Lunch for insightful discussions.
}
\vspace{-0.5em}
}
\author{\small\MakeUppercase{Florian Mudekereza}\footnote{Department of Economics, MIT, \href{mailto:florianm@mit.edu}{\texttt{\footnotesize florianm@mit.edu}}.}}
\date{}

\maketitle
\thispagestyle{empty}
\setcounter{page}{0}
\vspace{-0.35in}
\begin{abstract}
We analyze a principal-agent relationship where a principal communicates a \textit{roadmap} to guide an agent who is learning the value of innovation. However, the agent is concerned that the roadmap is \textit{misspecified}. We find that the agent can fall into a \textit{breakthrough trap}, where early unexplained success triggers a loss of trust in the roadmap, such that \textit{no} contract can motivate him to continue innovating. We also obtain an upper bound on the frequency of innovative activity that tightens as the degree of misspecification increases, which can cause ``exploration-exploitation'' \textit{cycles} to emerge endogenously over time. 
\par\noindent\textit{Keywords}: innovation, contract, learning, misspecification, robustness.
\end{abstract}

\vspace{0.04in}

\begin{singlespace}
{\noindent\small\textit{Discovery commences with the awareness of anomaly, i.e., with the recognition that nature has somehow violated the paradigm-induced expectations that govern normal science.}
\par \hfill-- \citet[][pp. 52--53]{kuhn70}}
\end{singlespace}
\vspace{-0.04in}

\section{Introduction}

A central theme in the management of innovation is that organizations must commit to a ``business model'' in order to coordinate effort, while simultaneously recognizing that the model is an imperfect description of an environment that is not yet well understood.
\citet[][Chapter 8.2]{mgt11} develop this idea using the notion of a ``roadmap'' for innovative projects conducted under high uncertainty: a roadmap is intended to discipline employees' decisions and communication even when it is incomplete, so it should be accompanied with a ``robust mindset'' that acknowledges potential errors or \textit{misspecification}.\footnote{\citet[][pp. 172--173]{mgt11} note: ``A map, even a bad one, can help the organization to cope with the
threatening and uncontrollable nature of the unknown. It helps to make
team members action-oriented and discover other sources of information that may be more appropriate than the bad map. The role of a map in the
ability to cope with unk unks, which we call `robust mind-set' [...]''} 
We formalize this mindset in a principal-agent framework under moral hazard and find that it fundamentally changes the economics of innovation. Specifically, unlike standard economic theory, which predicts that an early breakthrough lowers the agency cost of future innovation \citep[e.g.,][]{manso2011}, our framework predicts the opposite effect.  
\par To fix ideas, consider a materials-science lab where a Chief Scientific Officer (the principal) hires a Research Scientist (the agent) to discover a new compound. The principal has access to the lab's historical data, protocols, and simulation tools that jointly produce a disciplined set of predictive scenarios for experimental outcomes, which we refer to as ``structured knowledge'': a set of plausible outcome distributions $Q$ and a prior belief $\mu$ that assigns likelihoods to these distributions. Since monitoring the scientist's actions may not be feasible, the principal offers guidance by communicating her structured knowledge $(\mu,Q)$ as a \textit{roadmap}. The scientist, however, is a statistically sophisticated expert; while he treats $\mu$ as a credible summary statistic of the lab's track record, he is concerned that $Q$ is \textit{misspecified}. This may reflect the sentiment of scientists in large labs who believe protocols are well-calibrated, but still recognize that models are approximations of real experiments and thus may fail to identify new compounds. The central question then becomes how incentive provision interacts with \textit{learning} when the evidence generated by the project endogenously affects the scientist's trust in models.

Our baseline framework embeds a robust-control mindset into a two-period, two-action agency environment that isolates the tension between incentive schemes that motivate ``routine'' and ``innovative'' activities. Our main result, Theorem \ref{thm:dynamic:trap-scaleup}, is an impossibility result that arises when the agent's concern for misspecification evolves endogenously. We identify a ``Good News is Bad News'' phenomenon, where early success in a project can paradoxically reduce, rather than increase, the feasibility of sustaining innovation. Specifically, (i) \textit{Good News}: standard Bayesian theory implies that early success relaxes incentive constraints by signaling higher project quality; (ii) \textit{Bad News}: early unexplained success triggers a loss of trust in models and hence compresses the set of continuation payoffs. From a graphical perspective, Bayesian updating raises the slope of the incentive curve, whereas the misspecification effect bends this curve downward by imposing an upper bound on the principal's \textit{incentive capacity}. When this misspecification effect dominates the Bayesian-updating effect, Theorem \ref{thm:dynamic:trap-scaleup} shows that no contract can motivate the agent to continue innovating after achieving a breakthrough. We refer to this implementability failure as a \textit{breakthrough trap}. Once in this trap, increasing the post-breakthrough value of innovation no longer raises equilibrium continuation effort and ceases to affect the principal's expected payoff. This negative finding suggests that in innovation projects where early breakthroughs are poorly explained by models, high-powered incentives may exhibit a weak association with future innovation effort.
\par To illustrate the breakthrough trap, consider a lab in 2023 whose principal designs a roadmap for a project on room-temperature superconductivity  based on an infamous material called LK-99 \citep[][]{lk23}. Suppose a scientist follows the roadmap and obtains an early successful experimental outcome.  Interpreting this success as a positive signal of the project's quality, the principal encourages the scientist to scale up the experiments. However, after auditing the mechanism behind the early success, the scientist realizes that it was an \textit{anomaly} driven by an incidental byproduct of the process---an impurity that can reproduce the same behavior---rather than by the mechanism the roadmap relies on \citep[][]{naturelk23,nature23}. In other words, the scientist realizes that the roadmap is misspecified---it did not produce the success for reasons it claimed. Since success can occur through channels that the roadmap does not discipline, the scientist concludes that exerting effort to scale up the LK-99 project is no longer justifiable; as \citet[][p. 18]{naturef23} notes: ``The wave of excitement caused by LK-99---the purple crystal that was going to change the world---has now died down after studies showed it wasn’t a superconductor.'' This is the sort of agency dynamics that our framework aims to capture:  sophisticated statisticians routinely reassess their models in light of new evidence, which can lead them to abandon a project if they realize that their models are very misspecified.
\par The primary economic tension in our framework is not that innovation is hard to initiate, but rather that it can become harder to sustain innovation effort precisely after the most encouraging evidence is realized. Theorem \ref{thm:dynamic:trap-scaleup} reveals that motivating continued innovation depends critically on how informative early evidence is about model misspecification. Thus, what is more relevant for subsequent innovative activity is not whether early evidence is good or bad per se, but whether it is \textit{explained} or \textit{unexplained} by the roadmap: unexplained breakthroughs trigger a loss of trust in the roadmap and hence reduce incentive leverage, whereas explained failures need not. We build on this insight to propose three managerial tools that can mitigate breakthrough traps. (1) ``Tolerating early failure'' by reporting \textit{coarse} performance evaluations can preserve incentives by attenuating the extent to which early success constitutes a misspecification ``shock'' (Corollary \ref{prop:dynamic:feedback-optimal}). (2) A principal can \textit{screen} agents at hiring using a menu of path-dependent contracts to select agents who are most resilient to unexplained evidence (Corollary \ref{thm:dynamic:screening}). (3) When neither feedback design nor screening is feasible, replacing the agent after an early success can restore incentives, assuming a new agent would observe the past success without experiencing the incumbent's misspecification shock (Corollary \ref{prop:dynamic:turnover}). Notice that this turnover policy is the opposite logic of \citeauthor{manso2011}'s (\citeyear{manso2011}) termination policy prescribing to replace the agent after early  failure.  
Our framework therefore provides a new explanation for the so-called ``paradoxes of success,'' where employees (e.g., founders, CEOs) tend to get replaced after achieving major milestones \citep[][]{founder03,founder18}. The key is that replacement here is not due to changes in skill or technology, but instead it is due to the epistemic cost of discovery---the act of achieving a breakthrough is itself a diagnostic shock that can degrade the agent's trust in models.
\par Our second main result, Theorem \ref{thm:cycles:frontier}, is the long-run accumulation of Theorem \ref{thm:dynamic:trap-scaleup}’s local mechanism, which establishes an  upper bound on innovation frequency. On an infinite-horizon path, Theorem \ref{thm:cycles:frontier} shows that equilibrium innovation intensity is subject to a \emph{speed limit}: the long-run frequency of innovation is bounded above by a quantity that is inversely proportional to the degree of misspecification. This happens because repeated innovation generates model-diagnostic evidence that raises the agent's misspecification concerns and therefore tightens incentive capacity. As a result, equilibrium behavior can exhibit endogenous ``exploration-exploitation'' \textit{cycles} in which periods of active innovation are followed by phases where the organization prioritizes routine tasks to prevent the agent's misspecification concerns from overshooting. Such cycles have been documented in practice but are attributed to several factors, including technological change \citep[e.g.,][]{innov90}. However, in our framework, innovation cycles arise for a novel reason: they act as a mechanism for building \textit{trust}; organizations must manage both their employees' efforts and their trust in the roadmap. Our framework therefore predicts that innovation intensity is upper bounded and exhibits endogenous cycles in high-uncertainty domains (e.g., materials science), where diagnostic evidence indicates poor roadmap fit. This  suggests that organizations may speed up innovation intensity by improving the statistical performance of their roadmaps, e.g.,  tuning model parameters.

\par We also show that breakthrough traps persist even in the long run when a sophisticated agent is repeatedly taking actions based on a misspecified roadmap. A natural way to shut down the mechanism that generates these traps is to have the agent behave as a na{\"i}ve, unsophisticated statistical learner. A leading example is an agent who is too ``lenient'' in his assessment of the roadmap's performance relative to realized data in the sense that he attributes too much unexplained evidence to sampling variability. When this is the case, the agent's misspecification concerns will vanish in the long run despite the presence of misspecification, and as a result, our principal-agent framework converges to \citeauthor{manso2011}'s (\citeyear{manso2011}) expected-utility environment, where breakthrough traps do \textit{not} arise.  

\par We now describe our framework's primitives. The agent is forward-looking and has misspecification concerns. To formalize the separation between his trust in the prior $\mu$ and his concern about misspecification of models in $Q$, our baseline framework follows the robust-control literature by adopting \citeauthor{hansenmiss25}'s (\citeyear{hansenmiss25}) \textit{average robust control} (ARC) criterion, which was extended to dynamic settings by \citet{lanzani2025}. Under ARC, the agent first applies \citeauthor{hansen01}'s (\citeyear{hansen01}) multiplier-robustness adjustments \textit{within} each model in $Q$, reflecting his fear of misspecification. He then evaluates contracts by averaging \textit{across} models using $\mu$, reflecting his trust in the prior.  The principal is risk neutral, forward-looking, and assuming she has access to aggregate information across multiple projects, she is neutral to both model ambiguity and misspecification.  Allowing the principal to be averse to either ambiguity or misspecification would only reinforce the breakthrough-trap mechanism (Section \ref{sec:lit}); a similar reinforcement happens when the agent is allowed to distrust $\mu$ (Appendix \ref{sec:distrust}).  \citeauthor{manso2011}'s (\citeyear{manso2011}) seminal framework is the special case where the agent is also neutral to both ambiguity and misspecification. 

\par\noindent--- \textit{Related Work}: Our main contribution is to the theoretical literature on incentive provision for innovation, pioneered by \citet{manso2011}, who argues that contracts that tolerate failure in the short run and reward success in the long run are best suited for motivating innovation \citep[e.g.,][]{manso13,innov14}. In contrast, we identify new dynamic constraints when a sophisticated misspecification-averse agent can learn: observing unexplained successful outcomes tightens incentive capacity by endogenously increasing the agent's misspecification concerns, generating both local implementability failures and long-run upper bounds on innovation frequency. Since these results are driven by the endogenous evolution of the agent's misspecification concerns, they cannot be generated by classical frameworks such as expected utility or maxmin \citep{gilboa89} because these frameworks do not capture concerns for model misspecification. In fact, these two frameworks constitute extreme long-run limits of our framework, where the agent is, respectively, either too lenient or too demanding in his evaluation of the organization's roadmap in light of observed data (Observation \ref{prop:dynamic:infty:limit-pref}). 
\par We also contribute to the literature on contracts under ambiguity by introducing concerns for model misspecification on the \textit{agent side} of moral-hazard environments. \citet{amb11} are the first to introduce ambiguity only on the agent side by assuming the agent has an incomplete preference \`a la \citet{bew02}. In contrast, we emphasize a \textit{separation} between the agent's attitudes toward the organization's prior belief and models,  
which is not possible in popular  frameworks such as maxmin preferences \citep{gilboa89}, Choquet preferences \citep{schmeidler89}, Bewley preferences \citep{bew02}, or smooth ambiguity preferences \citep{smooth05}. This separation is key  because it allows our agent to learn about model likelihood and model misspecification separately. \citeauthor{hansenmiss25}'s (\citeyear{hansenmiss25}) ARC preference therefore provides a tractable and economically meaningful  way of representing \citeauthor{mgt11}'s (\citeyear{mgt11}) ``robust mindset'' in high-uncertainty contracting settings. Unlike \citet{lanzani2025} who assumes a myopic agent in the dynamic extension of ARC, our agent is forward-looking, so we build on \citeauthor{mmr06_JET}'s (\citeyear{mmr06_JET}) insights to obtain a \textit{recursive} representation of ARC, which is dynamically consistent (Appendix \ref{subsec:dynamic:mmr}). We should note, however, that the ARC functional form is not essential; as we show in Appendices \ref{sec:smooth}--\ref{sec:distrust},  our  qualitative results continue to hold under \citeauthor{hansenmiss25}'s (\citeyear{hansenmiss25}) general classes of misspecification-robust decision criteria, which allow more complex attitudes toward misspecification. 
Section \ref{sec:lit} discusses our connection to the broader moral-hazard literature. 

\par\noindent--- \textit{Outline}: The remainder of the paper proceeds as follows. Section \ref{sec:dynamicmodel} introduces our framework. Section \ref{sec:dynamic} analyzes breakthrough traps and Section \ref{sec:avoidtrap} examines how to avoid them. Section \ref{sec:cycles} considers infinite-horizon settings. Section \ref{sec:discussion} discusses some extensions.

\section{Framework}\label{sec:dynamicmodel}

We start with a single-agent problem that isolates how an agent learns from early innovative activity. We then use this decision problem as the basis for the agency problem.

\subsection{Single-agent benchmark}\label{subsec:dynamic:single-agent}

There are two periods $t\in\{1,2\}$ and two outcomes $Y=\{0,1\}$, where $y=1$ is a ``success'' and $y=0$ is a ``failure.'' The agent chooses action $a_t\in A=\{1,2\}$ in each period $t$. Action $1$ is a \textit{routine} (or familiar) action and action $2$ is an \textit{innovative} (or experimental) action. Their costs are normalized as $c^1=0$ and $c^2=k>0$ so that innovation is more costly.

\medskip
\noindent\textbf{Routine action.}
Action $1$ induces one ``prior'' or ``structured'' model $Q^{1}=\{q^1\}\subset\Delta(Y)$ with $q^1(1)=:p\in(0,1)$. It is routine in the sense that it will not affect the agent's learning process, so it is the analogue of the ``conventional work method'' in \citet{manso2011}.

\medskip
\noindent\textbf{Innovative action.}
Action $2$ induces two structured models $Q^{2}=\{q^2_L,q^2_H\}\subset\Delta(Y)$ with
$q^2_L(1)=:\theta_L$, $q^2_H(1)=:\theta_H$, and $0<\theta_L<p<\theta_H<1$. Model $q^2_L$ represents a pessimistic assessment of innovation and model $q^2_H$ represents an optimistic assessment. The agent begins with a prior belief $\mu_1\in\Delta(Q^{2})$ with $\mu_1(q^2_H)=m\in(0,1)$. Thus, the innovative action is the action whose value the agent seeks to learn, so it is the analogue of the ``new work method'' in \citet{manso2011}. The multiplicity of models in $Q^{2}$ aims to capture \citeauthor{mgt11}'s (\citeyear{mgt11}) idea that innovation projects are highly uncertain. This formulation will allow us to separate two kinds of learning: learning about which structured model is more likely and learning about how well these models themselves explain the evidence.

\medskip
\noindent\textbf{Expected-utility benchmark.}
First consider an expected-utility (EU) agent who trusts the structured models. For any continuation payoff vector $x:Y\to\R$, the period-$t$ value of the routine action is $\E_{q^1}[x(y)]$, while the period-$t$ value of the innovative action is
$\sum_{q\in Q^{2}}\mu_t(q)\E_q[x(y)],$ where $\mu_t$ is a Bayesian posterior defined below.
Thus, the difference
$\sum_{q\in Q^{2}}\mu_t(q)\E_q[x(y)]-\E_{q^1}[x(y)]$ is the EU payoff wedge between innovation and routine activity.
This comparison is the main object in \citeauthor{manso2011}'s  (\citeyear{manso2011}) standard bandit problem. 

\medskip
\noindent\textbf{Histories.} Histories are $h_1=\varnothing$ and $h_2=y_1\in Y$.

\medskip
\noindent\textbf{Bayesian updating.}
If the routine action is chosen $a_1=1$, set $\mu_2(\cdot|y_1)=\mu_1$ for every $y_1\in Y$. If the innovative action is chosen $a_1=2$ and $y_1$ is observed, Bayes rule yields:
\begin{align}
\mu_2(q^2_H|y_1=1)\propto m\hspace{0.02in}\theta_H,\qquad
\mu_2(q^2_H|y_1=0)\propto m(1-\theta_H).
\label{eq:dynamic:mu-update}
\end{align}
Under EU, an early success is good news because it shifts posterior weight toward the optimistic model $q^2_H$, and this is the only way evidence affects the period-2 comparison. 

\medskip
\noindent\textbf{Misspecification concern.}
Our key novelty is that we allow the agent to worry that all the structured models used to evaluate actions are \textit{misspecified}. To disentangle his attitude toward models from their prior likelihoods, his one-step evaluation of continuation payoffs is represented by \citeauthor{hansenmiss25}'s (\citeyear{hansenmiss25}) \textit{average robust control} (ARC) criterion, which consists of two steps. First, within each structured model $q$, the agent applies \citeauthor{hansen01}'s (\citeyear{hansen01}) multiplier-robustness adjustment, which uses a CARA transformation of the EU benchmark to reflect his concern that $q$ is misspecified. Second, he averages these model-specific robust valuations using a prior belief $\mu$. Formalizing these two steps, for $q\in\Delta(Y)$, $\lambda>0$, and $x: Y\to\R$, the \textit{multiplier criterion} is  
\begin{align}
g_q(x;\lambda):=\min_{r\in\Delta(Y)}
\Big\{\E_r[x(y)]+\frac{1}{\lambda}\DKL(r\|q)\Big\}
=-\frac{1}{\lambda}\log\Big(\sum_{y\in Y} q(y)e^{-\lambda x(y)}\Big),
\label{eq:dynamic:gq}
\end{align}
where the equality follows from the variational formula of \citet[][Proposition 1.4.2]{dupuis97}, $\DKL(r\|q):=\sum_{y\in Y} r(y)\log\frac{r(y)}{q(y)}\ge 0$ is the Kullback-Leibler (KL) divergence, and the constant $\lambda>0$ captures the agent's distrust of model $q$. Then, 
for a prior $\mu\in\Delta(Q^{2})$, \citet[][eq. (35)]{hansenmiss25} define ARC as the $\mu$-average of  (\ref{eq:dynamic:gq}):
\begin{align}
\mathcal G(x;\mu,\lambda):=\sum_{q\in Q^{2}} \mu(q)\hspace{0.02in} g_q(x;\lambda).
\label{eq:dynamic:G}
\end{align}
Under ARC, the continuation evaluation of innovation is $\mathcal G(x;\mu_t,\lambda_t)$, while the continuation evaluation of the routine action is $g_{q^1}(x;\lambda_t)$. The difference
$\mathcal G(x;\mu_t,\lambda_t)-g_{q^1}(x;\lambda_t)$
is the ARC payoff wedge between innovation and routine activity.
Relative to EU, the new state variable is $\lambda_t$, which measures how much the agent distrusts the structured models used to evaluate innovation in period $t$. The evolution of this state is formalized below.

\medskip
\noindent\textbf{LLR updating.}
The agent updates his misspecification concern $\lambda$ endogenously given past histories using the \textit{average} LLR updating rule introduced in \citet[][eq. (2)]{lanzani2025}.

\begin{assumption}\label{ass:dynamic:llr}
Fix $\gamma>0$. If the routine action is chosen $a_1=1$, set $\lambda_2(y_1)=\lambda_1$ for every $y_1\in Y$. If the innovative action is chosen $a_1=2$ and $y_1\in Y$ is observed,\footnote{Since $Y$ is finite and all structured models have full support, we follow \citeauthor{lanzani2025}'s (\citeyear[][Section 2.2]{lanzani2025}) suggestion by choosing $\Delta(Y)$ as the set of ``alternative unstructured'' models. Thus, more formally, (\ref{eq:dynamic:lambda-update}) is $\lambda_2(y_1)=\text{LLR}^2(y_1;Q^{2})/\gamma$, where $\text{LLR}^2(y;Q^{2})
:=-\log\frac{\text{\normalfont max}_{q\in Q^{2}}q(y)}{\text{\normalfont max}_{p\in\Delta(Y)}p(y)}$; see \eqref{eq:dynamic:infty:LLR2} for the general form.} set
\begin{align}
\lambda_2(y_1)=-\frac{1}{\gamma}\log\underset{q\in Q^{2}}{\text{\normalfont max}}\hspace{0.04in} q(y_1).
\label{eq:dynamic:lambda-update}
\end{align}
\end{assumption}
The constant $\lambda_1\geq0$ denotes the agent's initial misspecification concern at $t=1$, where $\lambda_t\equiv0$ is the limit case corresponding to the EU benchmark.  The constant $\gamma$ captures the extent to which the agent's concern for misspecification is sensitive to unexplained evidence. A success is ``unexplained'' if it has low absolute likelihood even under the most favorable model. A large LLR indicates a poor fit of models $Q^{2}$ to data, which signals a high risk of misspecification. This updating rule aims to formalize the logic behind the LK-99 example: experts routinely reassess the validity or empirical fit of their models in light of new evidence. This LLR rule is standard in statistics for hypothesis testing and model selection under misspecification \citep[e.g.,][]{miss77,miss89}. \citet[][Theorem 1]{lanzani2025} provides normative justifications for the LLR rule by showing that it is a \textit{rationality} benchmark in repeated decision problems. We show in Appendix \ref{subsec:disc:generic-update} that more general fit-based rules can be used in our framework. We focus on LLR mostly because it yields simple closed-form expressions, as shown below.

\begin{lemma}\label{lem:dynamic:lambda-closed}
Under Assumption \ref{ass:dynamic:llr} and $Q^{2}=\{q^2_L,q^2_H\}$,
\begin{align}
\lambda_2(1)=\frac{-\log \theta_H}{\gamma},\quad \text{and}\quad
\lambda_2(0)=\frac{-\log(1-\theta_L)}{\gamma}.
\label{eq:dynamic:lambda-closed}
\end{align}
\end{lemma}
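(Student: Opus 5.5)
The plan is to evaluate the updating rule \eqref{eq:dynamic:lambda-update} of Assumption \ref{ass:dynamic:llr} directly at each of the two outcomes $y_1\in\{0,1\}$. The only work is identifying which structured model in $Q^{2}$ attains $\max_{q\in Q^{2}} q(y_1)$. I would also note, via the footnote's more formal definition, that the denominator $\max_{p\in\Delta(Y)}p(y_1)$ equals $1$: the point mass on $y_1$ lies in $\Delta(Y)$. So the general LLR form reduces exactly to \eqref{eq:dynamic:lambda-update}, and no extra term appears.

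For $y_1=1$, the candidates are $q^2_L(1)=\theta_L$ and $q^2_H(1)=\theta_H$. Since $\theta_L<p<\theta_H$, the maximum is $\theta_H$, which gives $\lambda_2(1)=-\log\theta_H/\gamma$.

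For $y_1=0$, the candidates are $q^2_L(0)=1-\theta_L$ and $q^2_H(0)=1-\theta_H$. The ordering $\theta_L<\theta_H$ reverses to $1-\theta_L>1-\theta_H$, so the maximum is attained by the pessimistic model. This gives $\lambda_2(0)=-\log(1-\theta_L)/\gamma$.

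Finally, I would note that both values are strictly positive and finite because $\theta_L,\theta_H\in(0,1)$, so the logarithms are well defined and negative. There is no real obstacle here. The only point requiring care is the sign reversal in the failure case, which is exactly why the most favorable model switches from $q^2_H$ to $q^2_L$.
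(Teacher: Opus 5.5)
Your proposal is correct and matches the paper's proof: evaluate \eqref{eq:dynamic:lambda-update} at $y_1=1$ and $y_1=0$, using that $\theta_H$ maximizes $q(1)$ and $1-\theta_L$ maximizes $q(0)$ over $Q^{2}$. Your extra check that the footnote's denominator $\max_{p\in\Delta(Y)}p(y_1)$ equals $1$ is correct, though the paper does not need it.
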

An early success therefore has two effects. First, by Bayes rule \eqref{eq:dynamic:mu-update}, it raises the posterior weight on $q^2_H$. Second, by LLR \eqref{eq:dynamic:lambda-closed}, it also affects the agent's future misspecification concern. Thus,  first-period innovation has an experimentation value under ARC  because it affects the agent's continuation decision through two state variables ($\mu_2(\cdot|\cdot),\lambda_2(\cdot)$).

\subsection{Agency problem}\label{subsec:dynamic:env}

We now introduce the agency problem in which a principal delegates the two-period task above to an agent whose action is not observed. This will allow us to study how the agent's misspecification concern affects both his choices and the set of continuation payoff wedges that a contract can induce between routine and innovation activities.

\medskip
\noindent\textbf{Actions and costs.}
The action and outcome structure is as in Section \ref{subsec:dynamic:single-agent}. Action $1$ is routine, action $2$ is innovative, and their costs are $c^1=0$ and $c^2=k>0$.\footnote{To focus on the tension between innovative and routine actions, we omit the possibility of ``shirking.'' Appendix \ref{app:dynamic:shirking} allows shirking and shows that it only reinforces our main results.}

\medskip
\noindent\textbf{Principal.}
The principal is a risk-neutral subjective EU maximizer whose discount rate for period-2 utility is normalized to 1. We allow her period-2 utility to be scaled by a constant $\xi>1$. Specifically, her period-1 utility is $y_1$ and period-2 utility is $R_2y_2$, where
\begin{align}\label{eq:multiplier}
    R_2 := 1 + (\xi-1)\1\{a_1=2, y_1=1, a_2=2\}.
\end{align}
That is, if the agent innovates early, succeeds, and continues to innovate, the period-2 utility is scaled by $R_2= \xi>1$. This is a reduced-form way to capture \textit{staged investment}, where early success or milestone raises the value of continuation investment in a project, such as venture-capital stage financing of risky projects \citep[e.g.,][]{staged12}.

\medskip
\noindent\textbf{Structured knowledge.}
For the rest of the paper, we interpret structured models as information communicated within an organization. The pair $(\mu_1,Q^{2})$ represents the principal's \textit{structured knowledge} about the innovative action: $Q^{2}$ contains plausible predictive scenarios for the innovation outcome, while $\mu_1$ assigns likelihoods to these scenarios.

\medskip
\noindent\textbf{Communication.}
We interpret $(\mu_1,Q^{2})$ as the principal's \emph{roadmap} for innovation that she communicates to the agent to offer guidance in a project. The roadmap is \textit{exogenous},\footnote{Section \ref{sec:disc} indicates that this exogenous communication approach is standard in the literature. 
Appendix \ref{sec:disc:roadmap-no-free-lunch} shows that the effect of endogenous roadmaps on our qualitative results is limited.} e.g., $(\mu_1,Q^{2})$ is determined ex ante by the organization's protocols or scientific evidence. 

\medskip
\noindent\textbf{Agent.}
The agent's discount rate for period-2 utility is also normalized to 1. He evaluates continuation payoffs with the ARC criterion in \eqref{eq:dynamic:G}. This criterion implies two behavioral assumptions: (i) the agent treats the innovation prior $\mu_1$ as a credible belief, perhaps because he recognizes the organization's track record in classifying new scenarios based on historical data.\footnote{The assumption that the agent trusts the principal's belief captures the idea of ``shared beliefs'' in the organizational-culture literature \citep[e.g.,][]{van2005,van10}. Appendix \ref{sec:distrust} shows that our main qualitative results continue to hold even when the agent is allowed to distrust the principal's belief.} (ii) He faces, however, a daunting reality: his specific task is novel; he is therefore concerned that all structured models are misspecified representations of the true distribution induced by innovation.  In Appendix \ref{subsec:dynamic:mmr}, we follow \citet{mmr06_JET} to derive a recursive representation of ARC \eqref{eq:dynamic:G} that is dynamically consistent (Proposition \ref{prop:dynamic:mmr-recursive}).\footnote{Focusing on dynamic consistency allows us to show that our impossibility result (Theorem \ref{thm:dynamic:trap-scaleup}) is not driven by the agent's behavioral biases. If dynamic consistency is dropped, then as \citet[][Section 5.1, p. 1351]{lanzani2025} remarks, it may also be technically challenging to analyze ``an
uncommitted, forward-looking, and sophisticated agent playing an intra-personal equilibrium
with their future selves.''} We should note that our baseline analysis focuses on ARC because it is tractable, but it is not essential for our qualitative results (Appendices \ref{sec:smooth}--\ref{sec:distrust}).

\medskip
\noindent\textbf{Contracts.}
A dynamic contract is a pair $(x_1,x_2)$, where $x_1:Y\to\R$ and $x_2:Y\times Y\to\R$ denote utility payments in periods 1 and 2, respectively. Let $u:\R\to\R$ be strictly increasing and onto, with inverse $v:=u^{-1}$. Then, the monetary wage that delivers utility $x$ is $v(x)$ and the principal incurs cost $v(x)$, so we work directly in utility space with $x$.

\medskip
\noindent\textbf{Incentive capacity.}
For any $(\mu,\lambda)$ and $x$, define the \textit{incentive capacity} as
\begin{align}
C(\mu,\lambda):=\sup_{x\in\R^Y}{M}(\mu,\lambda;x),
\label{eq:dynamic:Delta-C}
\end{align}
where ${M}(\mu,\lambda;x):=\mathcal G(x;\mu,\lambda)-g_{q^1}(x;\lambda)$ is the incentive gap.  $C(\mu,\lambda)$ is the maximum continuation wedge between innovation and routine utility that the principal can create given $(\mu,\lambda)$. This object will play a central role because given $(\mu_2(\cdot|y_1=1),\lambda_2(1))$, the innovative action $a_2=2$ is implementable after a success if and only if
$C(\mu_2(\cdot|y_1=1),\lambda_2(1))> k$ or with equality when the supremum in \eqref{eq:dynamic:Delta-C} is attained (Proposition \ref{prop:dynamic:terminal-impl}).

\medskip
\noindent\textbf{Dynamic game.}
Hereafter, ``equilibrium'' refers to a subgame-perfect Nash equilibrium (SPNE) of the induced two-period extensive-form game, where at each history $h_t$, the agent optimizes given continuation payoffs and state $(\mu_t(\cdot|h_t),\lambda_t(h_t))$, and the principal chooses a contract ex ante while anticipating this behavior. To keep the main analysis focused, the formal dynamic implementation analysis is relegated to Appendices \ref{subsec:dynamic:policies}--\ref{subsec:dynamic:optimal}.

\section{Breakthrough Trap}\label{sec:dynamic}

The key economic question of this paper is whether an early breakthrough makes continued innovation easier or harder to implement. The robust-control literature predicts that higher-powered incentives induce more pessimistic belief distortions. Our contribution is to show that when these distortions evolve endogenously with learning in agency problems, they can completely eliminate implementability of continued innovation.

\subsection{Impossibility result}\label{subsec:dynamic:trap}

A \textit{breakthrough trap} is said to occur when the post-success incentive capacity falls below the agent's cost of innovation. This concept is defined more formally below.  
\begin{definition}\normalfont
A \textit{breakthrough trap} occurs when $C(\mu_2(\cdot|y_1=1),\lambda_2(1))<k$. 
\end{definition}
We are now in position to present this paper's main impossibility result.

\begin{theorem}[Breakthrough trap]\label{thm:dynamic:trap-scaleup}
Fix $(p,\theta_H,m,k,\gamma,\xi)$ with $0<p<\theta_H<1$, $k>0$, and define $\lambda^*:=\frac{1}{k}\log \frac{1-p}{1-\theta_H}.$
If $\lambda_2(1)>\lambda^*$, then there exists $\bar\theta_L\in(0,p)$ such that for all
$\theta_L\in(0,\bar\theta_L)$, the post-success incentive capacity satisfies
$C(\mu_2(\cdot|y_1=1),\lambda_2(1))<k.$
Consequently, under any dynamic contract
and any induced subgame-perfect equilibrium:
\begin{enumerate}
    \item[(1)] After $(a_1=2,y_1=1)$, the continuation action $a_2 = 2$ is never implemented.
    \item[(2)] The principal's expected profit is independent of the scale-up multiplier $\xi$.
\end{enumerate}
\end{theorem}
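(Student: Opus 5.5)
The plan is to compute the incentive capacity $C(\mu,\lambda)$ almost in closed form and bound it by two explicit terms. First I reduce the supremum in \eqref{eq:dynamic:Delta-C} to a one-dimensional problem. Since $g_q(x+c\mathbf 1;\lambda)=g_q(x;\lambda)+c$ for every $q$, the gap $M(\mu,\lambda;x)$ depends on $x$ only through $d:=x(1)-x(0)$. Write $e:=e^{-\lambda d}\in(0,\infty)$ and $f_q(d):=-\frac1\lambda\log(1-q+qe)$, where $q$ also denotes the success probability. Then
\[
M=\mu(q^2_H)\bigl[f_{\theta_H}-f_p\bigr]+\mu(q^2_L)\bigl[f_{\theta_L}-f_p\bigr],
\qquad
f_{\theta}-f_p=\tfrac1\lambda\log\frac{1-p+pe}{1-\theta+\theta e}.
\]

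Next I bound each bracket by a monotonicity argument on the ratio $\frac{1-p+pe}{1-\theta+\theta e}$. Its derivative in $e$ has the sign of $p-\theta$.
\begin{itemize}
\item For $\theta=\theta_H>p$ the ratio is decreasing in $e$. So for $d\ge0$ (i.e.\ $e\le1$) the ratio is bounded by its limit $\frac{1-p}{1-\theta_H}$ as $e\to0$. For $d\le0$ the ratio is at most $1$.
\item For $\theta=\theta_L<p$ the ratio is increasing in $e$. So the bracket is nonpositive for $d\ge0$, and for $d<0$ it is bounded by the $e\to\infty$ limit $\frac1\lambda\log\frac{p}{\theta_L}$.
\end{itemize}
Splitting into the cases $d\ge0$ and $d<0$ gives
\[
C(\mu,\lambda)\le\max\Big\{\mu(q^2_H)\tfrac1\lambda\log\tfrac{1-p}{1-\theta_H},\;\mu(q^2_L)\tfrac1\lambda\log\tfrac{p}{\theta_L}\Big\}.
\]

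Now I specialize to $\mu=\mu_2(\cdot|y_1=1)$ and $\lambda=\lambda_2(1)=-\log\theta_H/\gamma$ from Lemma~\ref{lem:dynamic:lambda-closed}. The key observation is that $\lambda_2(1)$ does not depend on $\theta_L$.
\begin{itemize}
\item \emph{First term.} Since $\mu_2(q^2_H|1)<1$ and $\lambda_2(1)>\lambda^*$, this term is strictly below $\frac{1}{\lambda_2(1)}\log\frac{1-p}{1-\theta_H}=k\lambda^*/\lambda_2(1)<k$. I will make it uniform in $\theta_L$ by simply using $\mu_2(q^2_H|1)\le1$, which still gives a bound $k\lambda^*/\lambda_2(1)<k$ independent of $\theta_L$.
\item \emph{Second term.} By \eqref{eq:dynamic:mu-update}, $\mu_2(q^2_L|1)=\frac{(1-m)\theta_L}{m\theta_H+(1-m)\theta_L}\le\frac{(1-m)\theta_L}{m\theta_H}$. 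So the term is at most $\frac{(1-m)}{m\theta_H\lambda_2(1)}\theta_L\log(p/\theta_L)$, which tends to $0$ as $\theta_L\downarrow0$.
\end{itemize}
I then choose $\bar\theta_L\in(0,p)$ so that this expression is below $k$ (for instance below $k\lambda^*/\lambda_2(1)$) for all $\theta_L<\bar\theta_L$. This gives $C<k$ strictly.

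The consequences follow directly. For (1), by Proposition~\ref{prop:dynamic:terminal-impl}, $a_2=2$ after $(a_1=2,y_1=1)$ is implementable only if $C\ge k$, so it is never implemented in any SPNE. For (2), $\xi$ enters the principal's payoff only through $R_2$ in \eqref{eq:multiplier}, on the event $\{a_1=2,y_1=1,a_2=2\}$. By (1) this event has zero probability under every contract and equilibrium, so every contract's value, and hence the principal's expected profit, does not depend on $\xi$.

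The main obstacle is the uniform control of the $d<0$ region, i.e.\ contracts that reward failure. There the pessimistic model $q^2_L$ can generate a wedge as large as $\log(p/\theta_L)/\lambda$, which blows up as $\theta_L\to0$. The argument relies on the posterior weight shrinking like $\theta_L$, faster than the logarithmic blow-up. This is exactly why the theorem needs small $\theta_L$, and I will check this step carefully.
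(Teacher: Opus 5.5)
Your proof is correct and follows essentially the same route as the paper. Both use translation invariance and the monotone-ratio computation in $e^{-\lambda d}$ (the paper's Lemma~\ref{lem:dynamic:two-point}), the bound $1-\mu_2(q^2_H|1)\le\frac{1-m}{m\theta_H}\theta_L$ together with $\theta_L\log(p/\theta_L)\to0$, and then Proposition~\ref{prop:dynamic:terminal-impl} and the $R_2$ argument for parts (1)--(2). The only difference is that you split on the sign of $d$ to obtain the max of the two model-specific bounds rather than the paper's sum. That is slightly sharper, since you need the $q^2_L$ term only below $k$ rather than below the slack $k-\frac{1}{\lambda_2(1)}\log\frac{1-p}{1-\theta_H}$, but it changes nothing essential.
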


Theorem \ref{thm:dynamic:trap-scaleup} highlights a fundamental tension between Bayesian learning and endogenous misspecification concerns. In standard exploration-exploitation frameworks such as \citet{manso2011}, a breakthrough is good news in the following sense: it shifts the posterior $\mu_2$ toward the model $q^2_H$, which increases the expected returns of continued innovation. In our framework, however, a breakthrough also reveals whether the models are flawed, so it is bad news when it triggers an endogenous shock in the agent's post-success misspecification concern $\lambda_2(1)=-\frac{1}{\gamma}\log\theta_H$. When $\lambda_2(1) > \lambda^*$, 
the misspecification shock dominates the improvement in the posterior belief from Bayesian updating. Then, the agent's shadow cost of robustness---the risk premium required to bear the uncertainty of innovation---rises faster than the project's expected surplus. As a result, the incentive capacity $C(\mu_2,\lambda_2)$---the maximum transferable utility the principal can promise per unit of risk---falls below the cost of innovation $k$. In this case, Theorem \ref{thm:dynamic:trap-scaleup}.(1) shows that it will become impossible to motivate the agent to innovate in period 2. This result reveals that a success---the very signal that makes a project more valuable---paradoxically renders innovation uncontractible in the future when it is poorly explained by models. \par The threshold $\lambda^*=\frac{1}{k}\log \frac{1-p}{1-\theta_H}$ is intuitive: it captures the project's signal-to-cost ratio or statistical discriminability per unit of innovation cost. The condition $\lambda_2(1)>\lambda^*$ in Theorem \ref{thm:dynamic:trap-scaleup} may be rewritten as $k\log \frac{1}{\theta_H}>\gamma \log \frac{1-p}{1-\theta_H}$, which shows that it can hold in two key cases: i) when $\theta_H$ is small enough such that early breakthrough is very surprising; ii) when $\gamma$ is small enough such that the agent is very sensitive to misspecification. 
\par  Theorem \ref{thm:dynamic:trap-scaleup}.(2) is the main economic consequence of breakthrough traps. It shows that in such a trap, the principal loses all incentive leverage over post-breakthrough innovation. Recall that the scale-up multiplier $\xi$ in \eqref{eq:multiplier} raises the principal's payoff only on histories in which the agent innovates early, succeeds, and then continues innovating. Notice, however, that Theorem \ref{thm:dynamic:trap-scaleup}.(1) rules out this continuation action after success. Thus, even high-powered post-success incentives can fail to sustain innovation precisely at the stage where the project becomes more valuable to an organization. In other words, unexplained success severs the usual link between project value and continuation effort. 
\begin{corollary}\label{cor:dynamic:success-failure-reversal}
Fix any finite initial misspecification concern $\lambda_1\in[0,\infty)$. Under the hypotheses of Theorem \ref{thm:dynamic:trap-scaleup}, there exists $\bar\theta_L\in(0,p)$ such that for all $\theta_L\in(0,\bar\theta_L)$:
\begin{enumerate}[label=(\roman*),leftmargin=18pt]
\item innovation is implementable at $t=1$ from the initial state $(\mu_1,\lambda_1)$;
\item continued innovation is not implementable at $t=2$ after $(a_1=2,y_1=1)$;
\item continued innovation is implementable at $t=2$ after $(a_1=2,y_1=0)$.
\end{enumerate}
\end{corollary}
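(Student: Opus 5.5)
The plan is to reduce all three parts to bounds on the one-step incentive capacity $C(\mu,\lambda)$. Part (ii) is then exactly Theorem \ref{thm:dynamic:trap-scaleup}. Parts (i) and (iii) follow from showing that $C$ becomes arbitrarily large as $\theta_L\to0$, at the relevant states.

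\emph{Step 1 (closed form for $M$).} Since $Y=\{0,1\}$ and $g_q(x+c;\lambda)=g_q(x;\lambda)+c$, the gap $M(\mu,\lambda;x)$ depends on $x$ only through $d:=x(1)-x(0)$. From \eqref{eq:dynamic:gq}, $g_q(x;\lambda)=x(0)-\frac1\lambda\log\big(1-q(1)+q(1)e^{-\lambda d}\big)$, so for $\lambda>0$
\begin{align*}
M(\mu,\lambda;d)=\frac1\lambda\Big[\log\big(1-p+pe^{-\lambda d}\big)-\sum_{q\in Q^2}\mu(q)\log\big(1-q(1)+q(1)e^{-\lambda d}\big)\Big].
\end{align*}
The key observation is that rewarding failure is effective. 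Letting $d\to-\infty$ (so the terms $e^{-\lambda d}$ dominate) gives
\begin{align*}
C(\mu,\lambda)\ \ge\ \frac1\lambda\Big[\log p-\mu(q^2_H)\log\theta_H-\mu(q^2_L)\log\theta_L\Big].
\end{align*}
When $\mu(q^2_L)$ is bounded away from $0$, the right-hand side tends to $+\infty$ as $\theta_L\to0$ for any fixed finite $\lambda>0$. When $\lambda=0$ (the EU case), $M=d\,(\bar\theta-p)$ with $\bar\theta:=\sum_q\mu(q)q(1)$, so $C=+\infty$ whenever $\bar\theta\neq p$. This fails for at most one value of $\theta_L$, which can be excluded by shrinking $\bar\theta_L$ (for small $\theta_L$ we have $\bar\theta\approx m\theta_H$, and if $m\theta_H=p$ we instead get $\bar\theta>p$ strictly).

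\emph{Step 2 (part (iii)).} By Lemma \ref{lem:dynamic:lambda-closed}, $\lambda_2(0)=-\log(1-\theta_L)/\gamma\in(0,\infty)$ for every $\theta_L\in(0,1)$. By \eqref{eq:dynamic:mu-update},
\begin{align*}
\mu_2(q^2_L|0)=\frac{(1-m)(1-\theta_L)}{m(1-\theta_H)+(1-m)(1-\theta_L)}\ \ge\ \frac{1-m}{m(1-\theta_H)+1-m}>0,
\end{align*}
uniformly in $\theta_L$. Step 1 therefore gives $C(\mu_2(\cdot|0),\lambda_2(0))\to\infty$ as $\theta_L\to0$. Indeed, $\lambda_2(0)\to0$ only helps, since the bracket is of order $\log(1/\theta_L)$ and the prefactor $1/\lambda_2(0)$ grows. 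So $C>k$ for small $\theta_L$, and Proposition \ref{prop:dynamic:terminal-impl} yields implementability of $a_2=2$ after $(a_1=2,y_1=0)$.

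\emph{Step 3 (part (i)).} At $t=1$, $\mu_1(q^2_L)=1-m>0$ and $\lambda_1$ is fixed and finite. Step 1 therefore gives $C(\mu_1,\lambda_1)>k$ for $\theta_L$ small, treating $\lambda_1=0$ separately as above. To convert this into dynamic implementability, I would use a contract whose period-2 payment $x_2(y_1,\cdot)$ is constant in $y_2$. The constant need not be the same across $y_1$; for simplicity take it independent of $y_1$. Then the agent's continuation utility is the same after every history and under every action. By the recursive representation of ARC (Proposition \ref{prop:dynamic:mmr-recursive}), the period-1 comparison then collapses to the one-step wedge $M(\mu_1,\lambda_1;x_1)$. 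Choosing $x_1$ with $M>k$ implements $a_1=2$. If one wants (i) and (iii) together, the post-failure incentive payments from Step 2 enter the period-1 evaluation as continuation values. In that case I would rescale $x_1$, since $C(\mu_1,\lambda_1)$ is unbounded as $\theta_L\to0$ and can absorb any bounded continuation differential. Finally, take $\bar\theta_L$ to be the minimum of the three thresholds: Theorem \ref{thm:dynamic:trap-scaleup}'s, Step 2's, and Step 3's.

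The main obstacle I expect is Step 3. The period-1 evaluation under the recursive ARC representation need not be additively separable in the continuation value, because the multiplier adjustment acts on period-1 payoff plus continuation value together. So I must verify that history-independent continuation utilities enter as a pure constant shift. Translation invariance of $g_q$ should deliver this. The degenerate EU case $\lambda_1=0$, handled by excluding the single $\theta_L$ with $\bar\theta=p$, is the other point requiring care.
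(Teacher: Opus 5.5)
Your argument is correct for the statement as written. Parts (ii) and (iii) follow the paper's route. Part (ii) is Theorem \ref{thm:dynamic:trap-scaleup}. Part (iii) is the failure-rewarding bound of Observation \ref{prop:dynamic:capacity-lb} combined with $1/\lambda_2(0)\to\infty$, exactly as in Corollary \ref{cor:dynamic:failure-impl}.

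There is one small slip in Step 2. $\mu_2(q^2_L|0)$ is decreasing in $\theta_L$, so $\frac{1-m}{m(1-\theta_H)+1-m}$ is its limit as $\theta_L\downarrow0$. It is therefore an upper bound, not a lower bound. The uniform positive lower bound you need follows instead from $1-\theta_L>1-p$, or from that limit, as the paper does.

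The genuine difference is part (i). You use a contract that is flat in period 2, so $V_2\equiv\widehat V_2\equiv c$. Translation invariance then collapses the first-period constraint of Proposition \ref{prop:dynamic:first-period-ic} to $M(\mu_1,\lambda_1;x_1)\ge k$. This is simpler and proves (i) literally, but the resulting equilibrium has routine play after both outcomes.

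The paper proves a stronger claim: one contract implements the whole reversal plan ($a_1=2$, routine after success, innovation after failure). It works as follows:
\begin{itemize}
\item it sets $x_2(1,\cdot)\equiv 0$, so that $V_2(1)=\widehat V_2(1)$;
\item it writes $x_1=z-\beta V_2$;
\item it lets the period-1 failure payment $z(0)\to\infty$.
\end{itemize}
In that limit the post-failure differential $V_2(0)-\widehat V_2(0)$ drops out of the routine valuation. The remaining gap is $\lambda_1^{-1}[\log p-m\log\theta_H-(1-m)\log\theta_L]$, which diverges as $\theta_L\downarrow0$.

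That device is what your closing remark on (i) and (iii) together lacks. The post-failure contract changes with $\theta_L$, and so does $V_2(0)-\widehat V_2(0)$. If that contract rewards failure with spreads of order $1/\lambda_2(0)$, the differential grows like $1/\theta_L$. That is faster than $C(\mu_1,\lambda_1)$, which grows only like $\log(1/\theta_L)$. So ``absorbing a bounded differential'' through a sup-norm bound is not justified. You would need either the paper's limit argument, or a check that the differential is nonnegative, in which case monotonicity of $g_{q^1}$ makes it harmless.
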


Corollary \ref{cor:dynamic:success-failure-reversal} shows that the difficulty is not initiating innovation. For sufficiently exploratory projects, the principal can motivate innovation at $t=1$ for any bounded initial misspecification concern. The difficulty appears at the continuation stage: after an unexplained success, continued innovation can become uncontractible, while it remains implementable after failure. Thus, our framework delivers a reversal of implementability across histories: innovation can be launched, but an encouraging outcome can move the relationship into a state where  incentives lose power. We illustrate these dynamics below.

\subsubsection{Illustration: incentive curves}

A convenient way to analyze the breakthrough-trap mechanism is to plot the post-success continuation wedge as a function of the promised utility spread. Let $\varDelta_{\mathfrak{s}}:=x(1)-x(0)$ denote the utility spread and normalize $x(0)=0$ and $x(1)=\varDelta_{\mathfrak{s}}\in\R$. 
After a success, let
$\mu_{\mathfrak{s}}:=\mu_2(q^2_H|y_1=1)$ and $\lambda_{\mathfrak{s}}:=\lambda_2(1).$
Then, the post-success continuation wedge is
\begin{align*}
M_{\mathfrak{s}}(\varDelta_{\mathfrak{s}})
&:=M(\mu_{\mathfrak{s}},\lambda_{\mathfrak{s}};(0,\varDelta_{\mathfrak{s}}))\\
&=
\mu_{\mathfrak{s}}\Big[-\frac{1}{\lambda_{\mathfrak{s}}}
\log\frac{1-\theta_H+\theta_He^{-\lambda_{\mathfrak{s}}\varDelta_{\mathfrak{s}}}}{1-p+pe^{-\lambda_{\mathfrak{s}}\varDelta_{\mathfrak{s}}}}\Big]
+(1-\mu_{\mathfrak{s}})\Big[-\frac{1}{\lambda_{\mathfrak{s}}}
\log\frac{1-\theta_L+\theta_Le^{-\lambda_{\mathfrak{s}}\varDelta_{\mathfrak{s}}}}{1-p+pe^{-\lambda_{\mathfrak{s}}\varDelta_{\mathfrak{s}}}}\Big].
\end{align*}
This curve is the continuation incentive technology facing the principal after a breakthrough. To isolate the role of endogenous misspecification, let us compare $M_{\mathfrak{s}}(\varDelta_{\mathfrak{s}})$ above with the EU benchmark. Define
$\bar\theta_{\mathfrak{s}}:=\mu_{\mathfrak{s}}\theta_H+(1-\mu_{\mathfrak{s}})\theta_L.$
Then, the EU continuation wedge is simply the line
$M_{\mathfrak{s}}^{\text{EU}}(\varDelta_{\mathfrak{s}})=\big(\bar\theta_{\mathfrak{s}}-p\big)\varDelta_{\mathfrak{s}}.$
Thus, after success, Bayesian learning alone makes the implementable wedge grow linearly in the promised utility spread. In contrast, under ARC, the same spread is compressed by the misspecification shock $\lambda_{\mathfrak{s}}$. \par For concreteness, consider the following parameterization:
$$
p=0.40,\qquad \theta_H=0.45,\qquad \theta_L=0.02,\qquad m=0.50,\qquad \gamma=0.40,\qquad k=0.05.
$$
After a success,
$\mu_{\mathfrak{s}}=\frac{m\theta_H}{m\theta_H+(1-m)\theta_L}\approx 0.957,$
$\lambda_{\mathfrak{s}}=\frac{-\log\theta_H}{\gamma}
\approx 1.996,$ and
$\lambda^*=\frac{1}{k}\log\frac{1-p}{1-\theta_H}
\approx 1.74,$
so $\lambda_{\mathfrak{s}}>\lambda^*$ holds here because an early success is surprising even under the most optimistic model $q^2_H$. We can now verify whether continued innovation is implementable.
\par Under the EU benchmark, the posterior mean success probability becomes
$\bar\theta_{\mathfrak{s}}\approx 0.43,$
so the EU continuation wedge is $M_{\mathfrak{s}}^{\text{EU}}(\varDelta_{\mathfrak{s}})\approx 0.03\varDelta_{\mathfrak{s}}.$
Thus, any promised utility spread greater than or equal to
$\varDelta_{\mathfrak{s}}^{\text{EU}}:=\frac{k}{\bar\theta_{\mathfrak{s}}-p}\approx 1.58$
would implement continued innovation under Bayesian learning alone, where success affects only posterior beliefs. \par Under the ARC criterion, however, the incentive capacity \eqref{eq:dynamic:Delta-C} is bounded above:
\begin{align*}
C_{\mathfrak{s}}:=\sup_{\varDelta_{\mathfrak{s}}\in\R}M_{\mathfrak{s}}(\varDelta_{\mathfrak{s}})
=\lim_{\varDelta_{\mathfrak{s}}\to\infty}M_{\mathfrak{s}}(\varDelta_{\mathfrak{s}})=
\frac{\mu_{\mathfrak{s}}}{\lambda_{\mathfrak{s}}}\log\frac{1-p}{1-\theta_H}
+\frac{1-\mu_{\mathfrak{s}}}{\lambda_{\mathfrak{s}}}\log\frac{1-p}{1-\theta_L}
\approx 0.03
<0.05
=k.
\end{align*}

\begin{figure}[hbt!]
\centering
\begin{tikzpicture}
\begin{axis}[
width=0.9\textwidth,
height=0.58\textwidth,
xmin=0, xmax=5,
ymin=0, ymax=0.17,
xlabel={success-contingent continuation spread $\varDelta_{\mathfrak{s}}=x(1)-x(0)$},
ylabel={post-success continuation wedge},
axis lines=left,
samples=250,
domain=0:5,
scaled y ticks=false,
legend style={
font=\footnotesize,
draw=none,
fill=none,
at={(0.02,0.98)},
anchor=north west,
row sep=1pt,
/tikz/every even column/.append style={column sep=3pt}
},
legend cell align=left,
label style={font=\small},
tick label style={font=\scriptsize},
xlabel style={yshift=0.01em},
xtick={0,0.5,1,1.577,2,2.5,3,3.5,4,4.5},
xticklabels={0,0.5,1,$\varDelta_{\mathfrak{s}}^{\text{EU}}$,2,2.5,3,3.5,4,4.5},
yticklabel style={
font=\scriptsize,
/pgf/number format/fixed,
/pgf/number format/precision=2
},
extra y ticks={0.0312739, 0.05},
extra y tick labels={$C_{\mathfrak{s}}$, $k$},
]
\addplot[name path=eu, very thick, ForestGreen]
{0.0317021276595744*x};
\addlegendentry{$M_{\mathfrak{s}}^{\text{EU}}(\varDelta_{\mathfrak{s}})$}

\addplot[name path=arc, very thick, BrickRed]
{0.957446808510638*(-1/1.99626924054443*ln((0.55 + 0.45*exp(-1.99626924054443*x))/(0.6 + 0.4*exp(-1.99626924054443*x))))
+0.0425531914893617*(-1/1.99626924054443*ln((0.98 + 0.02*exp(-1.99626924054443*x))/(0.6 + 0.4*exp(-1.99626924054443*x))))};
\addlegendentry{$M_{\mathfrak{s}}(\varDelta_{\mathfrak{s}})$}
\addplot[name path=kline, black, dashed, thick, forget plot] {0.05};
\addplot[black, densely dotted, thick, forget plot] {0.0312739349092558};

\addplot[forget plot, ForestGreen!18, draw=none]
fill between[of=eu and kline, soft clip={domain=1.577:5}];
\addlegendimage{area legend, draw=ForestGreen!50!black, fill=ForestGreen!18}
\addlegendentry{incentive surplus under Bayesian learning alone}

\addplot[forget plot, BrickRed!18, draw=none]
fill between[of=eu and arc, soft clip={domain=0:1.577}];

\addplot[forget plot, BrickRed!18, draw=none]
fill between[of=kline and arc, soft clip={domain=1.577:5}];
\addlegendimage{area legend, draw=BrickRed!60!black, fill=BrickRed!18}
\addlegendentry{incentive deficit when success is unexplained}

\addplot[forget plot, purple!18, draw=none]
fill between[of=kline and eu, soft clip={domain=0:1.577}];
\addlegendimage{area legend, draw=purple!60!black, fill=purple!18}
\addlegendentry{incentive deficit under both paradigms}

\draw[densely dashed] (axis cs:1.577,0) -- (axis cs:1.577,0.05);
\end{axis}
\end{tikzpicture}
\caption{\footnotesize EU spread $\varDelta_{\mathfrak{s}}^{\text{EU}}\approx 1.58$, innovation cost $k=0.05$, and incentive capacity $C_{\mathfrak{s}}\approx 0.03$. The purple region captures the incentive deficit of implementing innovation that is common to both learning paradigms, which is associated with all utility spreads that are strictly smaller than $\varDelta_{\mathfrak{s}}^{\text{EU}}$.}
\label{fig:dynamic:capacity}
\end{figure}

Thus, after a breakthrough, even arbitrarily large utility spreads cannot restore implementability of continued innovation, i.e., a breakthrough trap. To help visualize the underlying mechanism, Figure \ref{fig:dynamic:capacity} plots the curves $M_{\mathfrak{s}}(\varDelta_{\mathfrak{s}})$ and $M^{\text{EU}}_{\mathfrak{s}}(\varDelta_{\mathfrak{s}})$ as functions of the utility spread $\varDelta_{\mathfrak{s}}\in\R$, using shading to highlight their distance from the cost $k$. The green region represents the incentive surplus associated with utility spreads that would make innovation implementable under Bayesian learning alone. The red region shows that, after the misspecification shock, the same post-success node is bounded strictly below $k$ for all utility spreads. In summary, this figure depicts the main economic tradeoff. On one hand, Bayesian learning \textit{raises} the slope of the continuation wedge because success shifts weight toward $q^2_H$. On the other hand, endogenous robustness adjustments \textit{bend} the entire curve downward because a larger $\lambda_{\mathfrak{s}}$ makes the agent value success-contingent continuation pay less: the more diagnostic the breakthrough is about roadmap failure, the less incentive leverage the principal can extract from any given promised utility spread.

\par The ``mirror image'' of Figure \ref{fig:dynamic:capacity} plots implementability after an early failure to clarify that the breakthrough trap is driven by the epistemic tension between explained and unexplained evidence. When innovation is more likely to generate failure than the routine action, a larger failure-contingent continuation payoff is what makes innovation incentive-compatible: failure is now the continuation outcome more indicative of innovative activity. 
\par The failure-contingent continuation spread becomes $\varDelta_{\mathfrak{f}}:=x(0)-x(1)$, so set $x(1)=0$ and $x(0)=\varDelta_{\mathfrak{f}}\in\R$. Define the posteriors
$\mu_{\mathfrak{f}}:=\mu_2(q^2_H| y_1=0),$
$\lambda_{\mathfrak{f}}:=\lambda_2(0),$
and
$\bar\theta_{\mathfrak{f}}:=\mu_{\mathfrak{f}}\theta_H+(1-\mu_{\mathfrak{f}})\theta_L.$
The continuation wedges are
$M_{\mathfrak{f}}^{\mathrm{EU}}(\varDelta_{\mathfrak{f}})
=(p-\bar\theta_{\mathfrak{f}})\varDelta_{\mathfrak{f}}$
and
$M_{\mathfrak{f}}(\varDelta_{\mathfrak{f}})
:=M(\mu_{\mathfrak{f}},\lambda_{\mathfrak{f}};(\varDelta_{\mathfrak{f}},0)).$
Using the parameterization in Figure \ref{fig:dynamic:capacity}, 
$\mu_{\mathfrak{f}}\approx 0.36,$
$\lambda_{\mathfrak{f}}\approx 0.05,$
and
$\bar\theta_{\mathfrak{f}}\approx 0.17.$ After failure, the misspecification threshold is $\lambda_*:=\frac{1}{k}\log \frac{p}{\theta_L}\approx60$, so $\lambda_{\mathfrak{f}}<\lambda_*$. 
\par  Unlike an early success, early failure induces a very negligible misspecification shock because it is well explained by the roadmap. Thus, both continuation wedges are increasing in the post-failure premium $\varDelta_{\mathfrak{f}}$, both cross the cost $k$ near $\varDelta^{\text{EU}}_{\mathfrak{f}}\approx 0.22$, and notably, the ARC wedge now lies above the EU benchmark. Figure \ref{fig:dynamic:capacity-failure} illustrates the epistemic asymmetry of Theorem \ref{thm:dynamic:trap-scaleup}: while unexplained success tends to reduce incentive leverage, explained failure can preserve it  (Corollary \ref{cor:dynamic:success-failure-reversal}) and even strengthen it relative to the EU benchmark, as shown here. Thus, what matters for continued innovation is not whether early evidence is good or bad, but whether it is \emph{explained} or \emph{unexplained} by the roadmap. 

\begin{figure}[hbt!]
\centering
\begin{tikzpicture}
\begin{axis}[
width=0.9\textwidth,
height=0.58\textwidth,
xmin=0, xmax=3,
ymin=0, ymax=0.705,
xlabel={failure-contingent continuation spread $\varDelta_{\mathfrak{f}}=x(0)-x(1)$},
ylabel={post-failure continuation wedge},
axis lines=left,
samples=250,
domain=0:5,
scaled y ticks=false,
legend style={
font=\footnotesize,
draw=none,
fill=none,
at={(0.02,0.98)},
anchor=north west,
row sep=1pt,
/tikz/every even column/.append style={column sep=3pt}
},
legend cell align=left,
label style={font=\small},
tick label style={font=\scriptsize},
xlabel style={yshift=0.01em},
xtick={0,0.2217,0.5,1,1.5,2,2.5,3},
xticklabels={0,$\varDelta^{\text{EU}}_{\mathfrak{f}}$,0.5,1,1.5,2,2.5,3},
ytick={0,0.05,0.25,0.5},
yticklabels={0,$k$,0.25,0.5},
]

\addplot[name path=eu, very thick, ForestGreen]
{0.225424836601307*x};
\addlegendentry{$M_{\mathfrak{f}}^{\mathrm{EU}}(\varDelta_{\mathfrak{f}})$}

\addplot[name path=arc, very thick, BrickRed]
{0.359477124183007*(-1/0.0505067682937987*ln((0.45 + 0.55*exp(-0.0505067682937987*x))/(0.4 + 0.6*exp(-0.0505067682937987*x))))
+0.640522875816993*(-1/0.0505067682937987*ln((0.02 + 0.98*exp(-0.0505067682937987*x))/(0.4 + 0.6*exp(-0.0505067682937987*x))))};
\addlegendentry{$M_{\mathfrak{f}}(\varDelta_{\mathfrak{f}})$}

\addplot[name path=kline, black, dashed, thick, forget plot] {0.05};

\addplot[forget plot, ForestGreen!18, draw=none]
fill between[of=eu and kline, soft clip={domain=0.2217:5}];
\addlegendimage{area legend, draw=ForestGreen!50!black, fill=ForestGreen!18}
\addlegendentry{incentive surplus under both paradigms}

\addplot[forget plot, yellow!40, draw=none]
fill between[of=arc and eu, soft clip={domain=0:5}];
\addlegendimage{area legend, draw=yellow!80!black, fill=yellow!40}
\addlegendentry{additional surplus when failure is explained}

\addplot[forget plot, purple!18, draw=none]
fill between[of=kline and arc, soft clip={domain=0:0.2210}];
\addlegendimage{area legend, draw=purple!60!black, fill=purple!18}
\addlegendentry{incentive deficit under both paradigms}

\draw[densely dashed] (axis cs:0.2217,0) -- (axis cs:0.2217,0.05);

\end{axis}
\end{tikzpicture}
\caption{\footnotesize Post-failure implementability of innovation under the same parameterization as Figure \ref{fig:dynamic:capacity}. 
}
\label{fig:dynamic:capacity-failure}
\end{figure}

\subsubsection{Illustration: expected-utility benchmark}\label{sec:eu}

We now clarify that the breakthrough trap in Theorem \ref{thm:dynamic:trap-scaleup} is driven entirely by endogenous misspecification concerns. We achieve this by considering an EU maximizing agent as in \citet{manso2011}. In our framework, this is the case where the agent has no concern for misspecification, i.e., $\lambda_t(\cdot)=0$ for all $t$. Since \eqref{eq:dynamic:gq}--\eqref{eq:dynamic:G} are defined for $\lambda>0$, we extend them at $\lambda=0$ by taking limits: for every $x\in\R^Y$ and $\mu\in \Delta(Q^{2})$, $\lim_{\lambda\downarrow 0}g_q(x;\lambda)=\E_q[x]$ and $\lim_{\lambda\downarrow 0}\mathcal{G}(x;\mu,\lambda)=\E_{\bar q_\mu}[x]$, where $\bar q_\mu:=\sum_q \mu(q)\hspace{0.02in}q$. Now, define $\zeta_x:=x(1)-x(0)$ and $\bar\theta(\mu):=\bar q_\mu(1)$. Then, at state $(\mu,0)$, the incentive gap becomes
$M(\mu,0;x)
= \mathcal{G}(x;\mu,0)-g_{q^1}(x;0)
 = (\bar\theta(\mu)-p)\zeta_x$ and the incentive capacity  becomes
$$
C(\mu,0)
 = \sup_{x:Y\to\R} M(\mu,0;x)
 =\
\begin{cases}
+\infty, & \text{if }\bar\theta(\mu)\neq p,\\
0, & \text{if }\bar\theta(\mu)=p.
\end{cases}
$$
Indeed, if $\bar\theta(\mu)>p$, choosing $\zeta_x\to+\infty$ makes $M(\mu,0;x)\to+\infty$; if $\bar\theta(\mu)<p$, choosing $\zeta_x\to-\infty$ makes $M(\mu,0;x)\to+\infty$; and if $\bar\theta(\mu)=p$, then $M(\mu,0;x)=0$ for all $x$. Since $\mu_2(\cdot|y_1)$ has full support on $Q^{2}=\{q^2_L,q^2_H\}$ whenever $m\in(0,1)$ and $0<\theta_L<\theta_H<1$, we have $\bar\theta(\mu_2(\cdot|y_1))\in(\theta_L,\theta_H)$. Thus, except in the knife-edge case where the updated belief leaves the agent indifferent between continued innovation and the routine alternative  (i.e., $\bar\theta(\mu_2(\cdot|y_1)) = p$), incentive capacity is always infinite:
$C(\mu_2(\cdot|y_1),0)=+\infty$, for all $y_1\in Y$.
\par To summarize, in the EU benchmark, the post-success incentive capacity condition required for existence of a breakthrough trap in Theorem \ref{thm:dynamic:trap-scaleup} does \textit{not} hold generically. This means that continued innovation under EU is generically implementable after a success. The breakthrough trap is therefore a misspecification-driven phenomenon.

\subsubsection{Illustration: state-space diagram}

Having established that breakthrough traps do not arise in the expected-utility benchmark, Figure \ref{fig:dynamic:state-space-feasibility} provides a state-space representation of the continuation problem by combining the insights of Figures \ref{fig:dynamic:capacity} and \ref{fig:dynamic:capacity-failure} in a single diagram. The horizontal axis $\lambda=0$ is the Bayesian paradigm: along this axis, evidence affects only posterior beliefs, so a breakthrough simply moves the state rightward from $m$ to $\mu_{\mathfrak s}$, which is the good-news effect. Under ARC, however, evidence also changes the agent's concern for misspecification, so the state space becomes two-dimensional. In particular, an early success induces not only the horizontal shift from $m$ to $\mu_{\mathfrak s}$ but also the vertical shift from an initial concern $\lambda_1=0$ to $\lambda_{\mathfrak s}$, which is the bad-news effect generated by unexplained evidence.
\par Under the parameterization in Figure \ref{fig:dynamic:capacity}, Corollary \ref{cor:dynamic:success-failure-reversal} implies that innovation is implementable at $t=1$ for the initial state $(m,\lambda_1)$. Thus, the key economic tension is not that innovation is hard to initiate, but that it may be harder to sustain precisely after the most encouraging outcome is realized. Following an early success, the state moves to $(\mu_{\mathfrak s},\lambda_{\mathfrak s})$, where the principal can no longer implement innovation, whereas an early failure moves the state to $(\mu_{\mathfrak f},\lambda_{\mathfrak f})$, where innovation remains implementable. Figure \ref{fig:dynamic:state-space-feasibility} therefore highlights the key tradeoff: success relaxes the belief side of the agency problem, but can simultaneously tighten the incentive side enough to make continued innovation infeasible.

\begin{figure}[hbt!]
\centering
\begin{tikzpicture}

\colorlet{DarkOrange}{BurntOrange!75!black}

\pgfmathsetmacro{\p}{0.40}
\pgfmathsetmacro{\thH}{0.45}
\pgfmathsetmacro{\thL}{0.02}
\pgfmathsetmacro{\mprior}{0.50}
\pgfmathsetmacro{\kcost}{0.05}
\pgfmathsetmacro{\gammaBase}{0.40}

\pgfmathsetmacro{\mus}{\mprior*\thH/(\mprior*\thH + (1-\mprior)*\thL)}
\pgfmathsetmacro{\muf}{\mprior*(1-\thH)/(\mprior*(1-\thH) + (1-\mprior)*(1-\thL))}

\pgfmathsetmacro{\lams}{-ln(\thH)/\gammaBase}
\pgfmathsetmacro{\lamf}{-ln(1-\thL)/\gammaBase}

\pgfmathsetmacro{\Aplus}{ln((1-\p)/(1-\thH))}
\pgfmathsetmacro{\Bplus}{ln((1-\p)/(1-\thL))}
\pgfmathsetmacro{\Aminus}{ln(\p/\thH)}
\pgfmathsetmacro{\Bminus}{ln(\p/\thL)}

\pgfmathsetmacro{\extraYtick}{\Aplus/\kcost}

\pgfmathsetmacro{\xMin}{0.30}
\pgfmathsetmacro{\xMax}{1.00}
\pgfmathsetmacro{\yMax}{2.40}

\begin{axis}[
width=0.9\textwidth,
height=0.5\textwidth,
xmin=\xMin, xmax=\xMax,
ymin=0, ymax=\yMax,
axis lines=left,
axis on top, 
xlabel={posterior weight on optimistic model $\mu=\mu(q^2_H)$},
ylabel={misspecification concern $\lambda$},
samples=250,
scaled y ticks=false,
legend style={
font=\scriptsize,
draw=none,
fill=none,
at={(0.02,0.995)}, 
anchor=north west,
row sep=1pt
},
legend cell align=left,
label style={font=\small},
tick label style={font=\scriptsize},
xlabel style={yshift=0.01em},
xtick={0.40,0.60,0.70,0.80,0.90}, 
extra x ticks={\muf,0.50,\mus},
extra x tick labels={$\mu_{\mathfrak f}$,$m$,$\mu_{\mathfrak s}$},
ytick={0.5,1.0,1.5}, 
extra y ticks={ \lams, \lamf}, 
extra y tick labels={ $\lambda_{\mathfrak s}$, $\lambda_{\mathfrak f}$}, 
clip=false,
]

\addplot[
name path=frontierclip,
draw=none,
domain=\xMin:\xMax
]
{min(\yMax, max((x*\Aplus + (1-x)*\Bplus)/\kcost, (x*\Aminus + (1-x)*\Bminus)/\kcost))};

\addplot[name path=zeroline, draw=none, domain=\xMin:\xMax, forget plot] {0};
\addplot[name path=topline, draw=none, domain=\xMin:\xMax, forget plot] {\yMax};

\addplot[ForestGreen!18, draw=none, forget plot]
fill between[of=zeroline and frontierclip];

\addplot[BrickRed!18, draw=none, area legend, legend image post style={fill=BrickRed!30}]
fill between[of=frontierclip and topline];
\addlegendentry{innovation-implementability frontier $\bar\lambda(\mu)$} 
\begin{scope}
\clip (axis cs:\xMin,0) rectangle (axis cs:\xMax,\yMax);
\addplot[
very thick,
black,
domain=0.25:1.05, 
legend image post style={very thick} 
]
{max((x*\Aplus + (1-x)*\Bplus)/\kcost, (x*\Aminus + (1-x)*\Bminus)/\kcost)};
\addlegendentry{breakthrough-trap region} 
\end{scope}

\addplot[
only marks,
mark=*,
mark size=1.5pt,
color=black,
forget plot
]
coordinates {(\mprior,0)};

\addplot[
only marks,
mark=*,
mark size=1.5pt,
color=DarkOrange, 
forget plot
]
coordinates {(\mus,\lams)};
 
\addplot[
only marks,
mark=*,
mark size=1.5pt,
color=NavyBlue,
forget plot
]
coordinates {(\muf,\lamf)};

\draw[->, densely dashed, thin, DarkOrange, shorten >=2pt]
(axis cs:\mprior,0) -- node[sloped, above, font=\scriptsize, text=DarkOrange] {success update} (axis cs:\mus,\lams);

\draw[->, densely dashed, thin, NavyBlue, shorten >=2pt]
(axis cs:\mprior,0) -- node[sloped, above, font=\scriptsize, text=NavyBlue] {failure update} (axis cs:\muf,\lamf);

\draw[densely dotted, black, very thin]
(axis cs:\xMin,\lams) -- (axis cs:\mus,\lams);

\draw[densely dotted, black, very thin]
(axis cs:\mus,0) -- (axis cs:\mus,\lams);

\draw[densely dotted, black, very thin]
(axis cs:\xMin,\lamf) -- (axis cs:\muf,\lamf);

\draw[densely dotted, black, very thin]
(axis cs:\muf,0) -- (axis cs:\muf,\lamf);

\node[left, font=\scriptsize, yshift=-5pt, xshift=0.5pt] at (axis cs:\xMin, 0) {0};

\end{axis}
\end{tikzpicture}
\caption{\footnotesize
The green region contains states $(\mu,\lambda)$ at which innovation is implementable at $t=2$; the red region contains states at which only the routine action is implementable, i.e., the breakthrough-trap region. 
The black curve is the upper envelope $\bar\lambda(\mu)=\max\{\lambda^+(\mu),\lambda^-(\mu)\}$,
which separates innovation-implementable states from routine-only states.
The arrows are the one-step update transitions from the initial state $(m,\lambda_1)$ after success or failure under the parametrization in Figure \ref{fig:dynamic:capacity}, where we set $\lambda_1=0$.}
\label{fig:dynamic:state-space-feasibility}
\end{figure}

\par More formally, Figure \ref{fig:dynamic:state-space-feasibility} summarizes the state evolution and contractibility through the innovation-implementability frontier $\bar\lambda(\mu):=\max\{\lambda^{+}(\mu),\lambda^{-}(\mu)\}$, where
$$
\lambda^{+}(\mu):=\frac{1}{k}\left[\mu\log\frac{1-p}{1-\theta_H}+(1-\mu)\log\frac{1-p}{1-\theta_L}\right], \quad 
\lambda^{-}(\mu):=\frac{1}{k}\left[\mu\log\frac{p}{\theta_H}+(1-\mu)\log\frac{p}{\theta_L}\right]
$$
are the maximal misspecification levels consistent with continued innovation when the continuation contract rewards success or failure, respectively, for any belief $\mu\in\Delta(Q^{2})$. Figure \ref{fig:dynamic:state-space-feasibility} should therefore be interpreted as a map of innovation implementability, not as a map of innovation desirability at a given posterior. The green region at low $\mu$ does not mean that innovation is attractive in the usual success-probability sense; it simply means  that there exists some continuation contract that can motivate innovation there. 
\subsection{Discussion of results and assumptions}\label{sec:disc}
\textbf{Scope.} Appendix \ref{sec:general} shows that the mechanism underlying the breakthrough trap is not tied to the functional form of the ARC criterion or the LLR updating rule. The mechanism holds under more general specifications, but we focus mainly on the ARC-LLR pair because it allows us to present this mechanism in its most tractable representation. Similarly, the mechanism does not depend on the shape of the agent's utility or his liability; however, note that assuming limited liability would only reinforce the mechanism.\footnote{To see this, suppose $w(y)\ge \underline w$. Then, attainable utility vectors satisfy
$x(y)=u(w(y))\ge u(\underline w)\hspace{0.03in} \forall y$, so the feasible set of continuation utility vectors shrinks from $\mathbb R^Y$ to
$X^{LL}:=\{x\in\mathbb R^Y: x(y)\ge u(\underline w)\hspace{0.03in}\forall y\}$.
Define the liability-constrained capacity
$C^{LL}(\mu,\lambda):=\sup_{x\in X^{LL}} M(\mu,\lambda;x)$.
Thus,
$C^{LL}(\mu,\lambda)\le C(\mu,\lambda)$.} Moreover, if  the routine arm is treated as a ``known arm'' in the strong sense that the agent is assumed to know its true distribution, then the mechanism would also be reinforced.\footnote{The reason is that the agent would now evaluate the routine arm using its expected utility instead of ARC, but this would only make the routine arm more valuable because $g_{q^1}(x,\lambda)\leq\E_{q^1}[x(\cdot)]$ $\forall\lambda>0$.}

\medskip
\noindent\textbf{Roadmap.} We borrow the term ``roadmap'' from the title of \citet[][Chapter 8.2.2]{mgt11}: ``A Roadmap into Unknown Terrain'' because they argue that it is a ``rational coordination device'' in organizations.  This is consistent with \citet[][p. 621]{van10} who argues: ``The manager's beliefs give \textit{direction} to the firm by influencing the employees’ decisions, and thus also lead to coordination.'' Thus, we interpret  a roadmap as a model-belief ``library'' that managers share with employees to guide them in the execution of new tasks. \citet[][]{hansenmiss25} justify roadmap \textit{exogeneity} by interpreting structured models as outcomes of scientific knowledge.\footnote{\citet[][p. 2]{hansenmiss25} remark: ``These structured models are ones that are explicitly motivated or featured, such as ones with substantive motivation or scientific underpinnings [...] They may be based on scientific knowledge relying on empirical evidence and theoretical arguments.''}
More broadly, this exogenous communication is standard in the literature.\footnote{For example, \citet[][p. 1154]{amb11} interpret a roadmap as an exogenous ``consistency'' requirement between the beliefs of the principal and agent. \citet[][p. 1411]{miao16} assume that both the principal and agent have exogenous access to the same  model. \citet[][p. 88]{kellner17} assumes that both the principal and
the agent exogenously share the set of models and have a common prior.} Appendix \ref{sec:disc:roadmap-no-free-lunch} allows \textit{endogenous} roadmaps and shows that they have very limited effect on the breakthrough-trap mechanism; for example, designing a roadmap to avoid breakthrough traps can backfire in the long run because it involves trading off local gains against global misspecification cost.

\medskip
\noindent\textbf{Misspecification.} Classical moral hazard assumes rational expectations---both the principal and agent know the true outcome distribution. \citet[][]{miao16} suggest a departure from this assumption by acknowledging concerns for model misspecification in agency problems, and they also capture these concerns using \citeauthor{hansen01}'s (\citeyear{hansen01}) multiplier robustness.  
This departure is consistent with innovation domains such as materials science, where scientists rely on their labs' models and equipment to predict new experimental outcomes. In these settings, scientists would face model uncertainty in their respective tasks, whereas the lab manager may not face such uncertainty because she can aggregate information across all her scientists' tasks. This information asymmetry is standard in the literature \citep[e.g.,][]{maskin90,maskin92,amb11,kellner15,kellner17,amb24}. For example, \citet[][p. 1149]{amb11} justify it as: ``The principal owns the production technology and can evaluate the relation between effort and output more precisely.'' For further details, see Section \ref{sec:lit}.

\section{Managing Breakthrough Traps}\label{sec:avoidtrap}
Thus far, we have shown that breakthrough traps create costly bottlenecks in innovation projects. We will now propose three managerial tools that can help mitigate these traps: (1) feedback design, (2) screening at hiring, and (3) post-success turnover.
These tools aim to complement incentive design by affecting the evolution of the agent's misspecification concerns, and therefore the set of continuation contracts that can be sustained. 
\subsection{Feedback design}\label{subsec:dynamic:feedback}

We study how the principal can use \emph{feedback design}---a choice of what
performance information is publicly observed after period 1---to prevent breakthrough traps. Feedback design is a well-established organizational tool, which is effective when the principal is better able to evaluate performance than the agent \citep[e.g.,][Section VII]{manso2011}.

\medskip
\noindent\textbf{Feedback policies.}
Let $\mathcal{S}=\{0,1\}$ denote a binary \emph{evaluation} signal. A feedback \emph{policy} is a Markov
kernel $\Pi: Y\to \Delta(\mathcal{S})$, written as $\Pi(s|y)$ for $s\in \mathcal{S}$ and $y\in Y$.
After $y_1$ is realized, the public signal $s_1$ is drawn according to $\Pi(\cdot|y_1)$ and is
the only period-1 outcome statistic observed by the agent, and hence the only statistic on which
the continuation decision at $t=2$ can be conditioned. Since failing to reward a genuine breakthrough would unravel the agent's incentives, we focus on policies that do not generate \emph{false negatives}, i.e., do not classify a true success as an unfavorable evaluation.

\medskip
\noindent\textbf{Signal likelihoods.}
For any  $q\in\Delta(Y)$, define the induced signal distribution as
$q^\Pi(s)\hspace{0.02in}:=\sum_{y\in Y} q(y)\hspace{0.02in}\Pi(s|y),\hspace{0.02in} \forall s\in \mathcal{S}.$
If $a_1=2$ and $s_1$ is observed at $t=1$, the agent updates the second-order prior $\mu_1$ by Bayes rule using the
likelihoods $\{q^\Pi(s_1): q\in Q^{2}\}$ as
$\mu_2(q|s_1)\propto \mu_1(q)\hspace{0.02in}q^\Pi(s_1)$ $\forall q\in Q^{2}.$
We maintain the LLR updating rule (Assumption \ref{ass:dynamic:llr}), but now applied to the
observed feedback, so \eqref{eq:dynamic:lambda-update} becomes $\lambda_2(s_1)=-\frac{1}{\gamma}\log{\text{\normalfont max}}_{q\in Q^{2}} q^\Pi(s_1).$

\medskip
\noindent\textbf{Coarse evaluations.}
For $r\in[0,1]$, the \emph{coarse evaluation} (type-I error) policy $\Pi^r$ is
\begin{align}
\Pi^r(1|1)=1,\qquad \Pi^r(1|0)=r,
\label{eq:dynamic:Pi-r}
\end{align}
so that true successes are always classified as favorable ($s=1$) while failures are classified as
favorable with probability $r\in[0,1]$. The parameter $r$ governs how coarse the
evaluation is: higher $r$ makes $s=1$ less statistically surprising under the structured models,
thereby mitigating the LLR shock in the agent's misspecification concern $\lambda$.

\medskip
\noindent\textbf{Design principle.}
For each coarse evaluation $\Pi^r$, define
$q_H^r:=\theta_H+(1-\theta_H)r,$ and $
q_L^r:=\theta_L+(1-\theta_L)r.$
After a favorable evaluation $s_1=1$, the posterior and misspecification concern are
$\mu_2^r(q_H^2|s_1=1)
=
\frac{m q_H^r}{m q_H^r+(1-m)q_L^r},$ $\lambda_2^r(1)
=
-\frac{1}{\gamma}\log q_H^r.$
Thus, coarsening affects both components of the continuation state. The relevant design object is therefore the post-evaluation capacity
$C\big(\mu_2^r(\cdot|s_1=1),\lambda_2^r(1)\big).$
Fix any small $\eta>0$ and consider the set $F_\eta:=\{r\in[0,1]:C(\mu_2^r(\cdot|s_1=1),\lambda_2^r(1))\ge k+\eta\}$. The next result shows that $F_\eta$ is nonempty and closed under the hypotheses of Corollary \ref{cor:dynamic:success-failure-reversal}, so $r_\eta:=\min F_\eta$ is well defined. 

\begin{corollary}\label{prop:dynamic:feedback-optimal}
Restrict attention to binary feedback policies $\Pi$ satisfying \emph{no false negatives}, i.e., $\Pi(1|1)=1$. Fix $\eta>0$. Under the hypotheses of Corollary \ref{cor:dynamic:success-failure-reversal}, $F_\eta$ is nonempty and closed. Moreover, $\Pi^{r_\eta}$ preserves continued innovation after a favorable evaluation with slack $\eta$, i.e., $C(\mu_2^{r_\eta}(\cdot|s_1=1),\lambda_2^{r_\eta}(1))\ge k+\eta$, and uniquely minimizes the false-positive rate $\Pi(1|0)$ among all no-false-negative policies satisfying this capacity inequality.
\end{corollary}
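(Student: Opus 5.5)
The plan is to reduce everything to the scalar function $\phi(r):=C(\mu_2^r(\cdot|s_1=1),\lambda_2^r(1))$ on $[0,1]$ and then read off closedness, nonemptiness, and the minimality claim. First I would compute $C(\mu,\lambda)$ in closed form for $\lambda>0$. For binary $Y$, only the spread $\varDelta=x(1)-x(0)$ matters, and $M(\mu,\lambda;(0,\varDelta))$ is monotone in $\varDelta$ on each half-line. Taking the limits $\varDelta\to\pm\infty$, as in the illustration after Theorem~\ref{thm:dynamic:trap-scaleup}, should give
\[
C(\mu,\lambda)=\frac{1}{\lambda}\max\Big\{\mu\log\tfrac{1-p}{1-\theta_H}+(1-\mu)\log\tfrac{1-p}{1-\theta_L},\ \mu\log\tfrac{p}{\theta_H}+(1-\mu)\log\tfrac{p}{\theta_L}\Big\}=\frac{k\,\bar\lambda(\mu)}{\lambda},
\]
which matches the frontier $\bar\lambda$ plotted in Figure~\ref{fig:dynamic:state-space-feasibility}. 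I will take this formula as the working characterization; its derivation is the same one used to prove Theorem~\ref{thm:dynamic:trap-scaleup}.

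\textbf{Continuity and closedness.} For $r\in[0,1]$ the maps $r\mapsto q_H^r,q_L^r$ are affine with values in $[\theta_L,1]$, so $\mu_2^r$ is continuous in $r$. The map $\lambda_2^r(1)=-\gamma^{-1}\log q_H^r$ is continuous and strictly positive for $r<1$, and it tends to $0$ as $r\to1$. On $[0,1)$, $\phi$ is therefore a continuous function of continuous functions. At $r=1$ the state is $(\mu_1,0)$, and the EU benchmark of Section~\ref{sec:eu} gives $C=+\infty$ unless $\bar\theta(\mu_1)=p$. Correspondingly, $\phi(r)\to+\infty$ as $r\uparrow1$, because $\bar\lambda(\mu_2^r)\to\bar\lambda(m)$. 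This limit is strictly positive: $\bar\lambda$ is a maximum of two affine functions, one of which is positive at any interior $\mu$ unless both vanish, and they cannot both vanish because that would require $\bar\theta=p$ in both directions simultaneously. So $\phi$ is continuous as an extended-real map on $[0,1]$, and $F_\eta=\phi^{-1}([k+\eta,\infty])$ is closed in the compact interval $[0,1]$.

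\textbf{Nonemptiness.} Since $\phi(r)\to\infty$ as $r\uparrow1$, some $r<1$ satisfies $\phi(r)\ge k+\eta$, and $r=1$ itself lies in $F_\eta$ (with $C=\infty$). If one prefers to avoid the degenerate endpoint, the hypotheses of Corollary~\ref{cor:dynamic:success-failure-reversal} also allow an interior argument. For small $\theta_L$, $\bar\lambda(\mu)\ge\lambda^-(\mu)$ becomes large whenever $1-\mu$ is bounded away from zero, so the frontier does not degenerate. Hence $r_\eta=\min F_\eta$ exists. The same corollary gives $\phi(0)<k$, so $0\notin F_\eta$ and $r_\eta>0$.

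\textbf{Minimality and uniqueness.} Any no-false-negative binary policy has the form $\Pi(1|1)=1$ and $\Pi(1|0)=r$ for some $r\in[0,1]$, so it coincides with some $\Pi^r$, and its false-positive rate is exactly $r$. Among policies satisfying $\phi(r)\ge k+\eta$, the false-positive rate is therefore minimized exactly at $r=\min F_\eta$. The minimizer is unique because distinct $r$ define distinct policies. By definition of $F_\eta$, $\Pi^{r_\eta}$ attains the slack $\eta$.

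\textbf{Main obstacle.} The step I expect to be hardest is the behavior at $r\to1$, where $\lambda\to0$. The boundary of $[0,1]$ has to be handled using the EU extension, and one has to verify $\bar\lambda(m)>0$ so that $\phi$ genuinely blows up rather than converging to something finite. A secondary check is whether the intended answer is just $F_\eta\neq\varnothing$ (trivial via $r=1$) or whether a more informative $r_\eta<1$ is wanted. In the latter case I would argue directly that $\phi$ is unbounded near $1$, which yields points of $F_\eta$ strictly inside $(0,1)$.
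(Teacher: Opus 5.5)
Your nonemptiness step and your minimality/uniqueness step are the same as the paper's. For nonemptiness, at $r=1$ the favorable signal is uninformative, the state is $(\mu_1,0)$, and the EU limit gives $C=+\infty$. For minimality, no-false-negative binary policies are exactly the $\Pi^r$, indexed by their false-positive rate. Closedness is where you diverge, and there the load-bearing formula $C(\mu,\lambda)=k\bar\lambda(\mu)/\lambda$ is asserted rather than proved.

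The formula is true, but not for the reasons you give. The proof of Theorem~\ref{thm:dynamic:trap-scaleup} bounds the $q^2_H$- and $q^2_L$-wedges separately, and their suprema sit at opposite ends ($\varDelta\to+\infty$ versus $\varDelta\to-\infty$), so it yields only an upper bound. The wedge also need not be monotone on either half-line: at the post-success state of the paper's numerical example it is positive as $\varDelta\to-\infty$, negative just left of $0$, and zero at $0$. What you actually need is that the wedge has no interior maximum.

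Write $r=e^{-\lambda\varDelta}$, $c_\theta:=(1-\theta)/\theta$, $c_H:=c_{\theta_H}$, $c_L:=c_{\theta_L}$, $\mu=\mu(q^2_H)$. Then $\lambda M=f_\mu(r)$ with $f_\mu'(r)=\frac{1}{r+c_p}-\frac{\mu}{r+c_H}-\frac{1-\mu}{r+c_L}$. Its sign is that of $Ar+B$, where $A=\mu c_H+(1-\mu)c_L-c_p$ and $B=c_Hc_L-c_p(\mu c_L+(1-\mu)c_H)$. An interior maximum would need $A<0<B$. That is impossible, because $(\mu c_H+(1-\mu)c_L)(\mu c_L+(1-\mu)c_H)-c_Hc_L=\mu(1-\mu)(c_H-c_L)^2\ge 0$.

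You can also avoid the formula altogether. Note $C(\mu,\lambda)=S(\mu)/\lambda$ with $S(\mu):=\sup_{r>0}f_\mu(r)$ independent of $\lambda$. $S$ is a supremum of functions affine in $\mu$ with uniformly bounded coefficients, hence finite, convex and Lipschitz. That alone gives continuity of $\phi$ on $[0,1)$.

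At the endpoint, the blow-up $\phi(r)\to\infty$ as $r\uparrow 1$ needs only two things. The first is the easy lower bound from the endpoint limits, as in Observation~\ref{prop:dynamic:capacity-lb}. The second is $\bar\lambda(m)>0$. Your argument for that is muddled, but strict Jensen gives it: $k\lambda^+(\mu)>\log\frac{1-p}{1-\bar\theta(\mu)}$ and $k\lambda^-(\mu)>\log\frac{p}{\bar\theta(\mu)}$, and one of these right-hand sides is nonnegative.

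You should also state explicitly that the hypotheses of Corollary~\ref{cor:dynamic:success-failure-reversal}, after shrinking $\bar\theta_L$, give $\bar\theta(m)\neq p$. In the knife-edge case, $\phi(1)=0$ under the EU convention while still $\phi(r)\to\infty$ as $r\uparrow 1$, so $F_\eta$ would not be closed. The paper relies on the same fact when it asserts $C(\mu_1,0)=+\infty$.

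As for what the two routes buy, the paper's is leaner. It needs neither the value of $C$ nor any analysis near $r=1$: the limit point $r=1$ is handled by $1\in F_\eta$, and continuity on $[0,\bar r]$ with $\bar r<1$ comes from the maximum theorem on the compactified spread space. Yours, once the formula is secured, makes $\phi(r)=k\bar\lambda(\mu_2^r)/\lambda_2^r(1)$ explicit. Even with just the lower bound, it shows $r_\eta\in(0,1)$, i.e., that a genuinely coarse policy, not only the fully uninformative one, attains slack $\eta$. The paper's argument does not establish that.
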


Policy $\Pi^{r_\eta}$ captures the minimal tolerance for early failure needed to preserve continuation incentives after a favorable evaluation. Under $\Pi^{r_\eta}$, true successes are always reported as favorable, while true failures are also reported as favorable with the smallest probability that keeps the favorable-evaluation state within the innovation-implementable region. This differs from \citeauthor{manso2011}'s (\citeyear{manso2011}) wage-based tolerance for early failure, which operates by reducing the agent's fear of wage punishment. Here, tolerance operates through performance evaluation: it makes a favorable evaluation less diagnostic of roadmap misspecification and hence preserves implementability of continued innovation.

\subsection{Screening at hiring}\label{subsec:dynamic:screening}

While feedback design is an effective tool to prevent breakthrough traps, its optimal implementation relies on the principal's knowledge of the agent's characteristics. Specifically, the optimal coarse evaluation policy in Corollary \ref{prop:dynamic:feedback-optimal} depends critically on the agent's update scale $\gamma$. Thus, when $\gamma$ is unknown, the principal cannot calibrate the performance evaluation to the agent's specific misspecification sensitivity, which would render the optimal feedback design infeasible. In such environments, a natural alternative is to \textit{screen} agents at hiring. This approach captures the key idea from the organizational-culture literature that shared beliefs and values in organizations are established primarily through screening at hiring rather than ex-post adaptation \citep[e.g.,][Section 2]{van10}. 
\par The need for screening arises naturally from the mechanism behind the breakthrough trap. Recall that the endogenous post-success misspecification intensity is
$\lambda_2(1)=-\frac{1}{\gamma}\log\theta_H$ (Lemma \ref{lem:dynamic:lambda-closed}). This expression reveals a
natural source of heterogeneity: agents may differ in the update
scale $\gamma>0$, which captures their sensitivity to misspecification.
We now show that such heterogeneity can be screened one-shot at hiring using a menu of
dynamic contracts whose key instrument is success-contingent continuation incentives.

\medskip
\noindent\textbf{Types and hiring menus.}
Assume $\gamma>0$ is \textit{privately} known to the agent at $t=1$ (hiring) and affects preferences only
through the LLR updating rule in Assumption \ref{ass:dynamic:llr}. A \emph{hiring menu} is a pair of dynamic contracts
$\{E,I\}$, where $\mathfrak{m}\in\{E,I\}$ specifies utility payments $(x_1^\mathfrak{m},x_2^\mathfrak{m})$. Given type $\gamma$, the
agent selects the contract that maximizes his ex-ante value. Let $V_1^\mathfrak{m}(\gamma)$ denote the
resulting equilibrium value under contract $\mathfrak{m}$ for type $\gamma$, where $V_1$ is defined formally in
\eqref{eq:dynamic:V2}--\eqref{eq:dynamic:V1} and the state $(\mu_t,\lambda_t)$ evolves according to
\eqref{eq:dynamic:mu-update} and \eqref{eq:dynamic:lambda-closed}. The key quantity here is the utility difference across contracts $E$ and $I$:
$$D(\gamma):=V_1^I(\gamma)-V_1^E(\gamma).$$

\medskip
\noindent\textbf{Routine vs. innovation tracks.}
We interpret $E$ as an \emph{exploitation track} and $I$ as an \emph{innovation track}. The screening argument below uses variation in the continuation utility at the post-success node, because this is the node where the type parameter $\gamma$ changes the misspecification concern $\lambda_2(1)=-\frac{1}{\gamma}\log\theta_H$. To capture this, we assume the continuation utility under the innovation track $I$ is nonconstant at the success history:
\begin{align}\label{eq:dynamic:screening:nondeg}
x_2^I(1,0)\neq x_2^I(1,1).
\end{align}

\begin{corollary}\label{thm:dynamic:screening}
Fix a hiring menu $\{E,I\}$. Suppose that under $E$, $a_1=1$ is chosen at $t=1$ for every
$\gamma>0$, and under $I$, $a_1=2$ is chosen at $t=1$ for every $\gamma>0$. If
\eqref{eq:dynamic:screening:nondeg} holds, then $D(\gamma)$ is strictly increasing in $\gamma$. Moreover, if $D(\gamma')<0<D(\gamma'')$ for some $\gamma'<\gamma''$, then there exists a unique
cutoff $\gamma^\ast\in(\gamma',\gamma'')$ such that $D(\gamma^\ast)=0$, where all types $\gamma>\gamma^\ast$
strictly prefer track $I$, and all types $\gamma<\gamma^\ast$ strictly prefer track $E$.
\end{corollary}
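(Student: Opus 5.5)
Since $\gamma$ enters preferences only through the LLR rule in Assumption \ref{ass:dynamic:llr}, the plan is to trace exactly where $\gamma$ appears in $V_1^E$ and $V_1^I$, show that $V_1^E$ does not depend on $\gamma$ at all, and show that $V_1^I$ is strictly increasing and continuous in $\gamma$. The cutoff statement then follows from the intermediate value theorem.

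\emph{Step 1: the exploitation track.} Under $E$, every type chooses $a_1=1$. By Assumption \ref{ass:dynamic:llr} and the Bayesian rule \eqref{eq:dynamic:mu-update}, the state after period 1 is $(\mu_2,\lambda_2)=(\mu_1,\lambda_1)$ for every $y_1$, whatever $\gamma$ is. Using the recursive representation of Proposition \ref{prop:dynamic:mmr-recursive} (equations \eqref{eq:dynamic:V2}--\eqref{eq:dynamic:V1}), the period-2 values and the period-1 aggregation under $E$ use only $(\mu_1,\lambda_1)$ and the contract. Hence $V_1^E(\gamma)$ is constant in $\gamma$, and $D(\gamma)=V_1^I(\gamma)-\text{const}$.

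\emph{Step 2: monotonicity at period 2 under $I$.} After $a_1=2$ and outcome $y_1$, the state is $(\mu_2(\cdot|y_1),\lambda_2(y_1))$. Here $\mu_2$ is independent of $\gamma$. By Lemma \ref{lem:dynamic:lambda-closed}, $\lambda_2(1)=-\log\theta_H/\gamma$ and $\lambda_2(0)=-\log(1-\theta_L)/\gamma$, and both are strictly decreasing in $\gamma$ because $\theta_H<1$ and $\theta_L>0$. The period-2 value at node $y_1$ is
\[
V_2(y_1)=\max\bigl\{\mathcal G(x_2^I(y_1,\cdot);\mu_2,\lambda_2(y_1))-k,\; g_{q^1}(x_2^I(y_1,\cdot);\lambda_2(y_1))\bigr\}.
\]
This equals the equilibrium continuation value even when the agent is indifferent between the two actions. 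The key fact is that the certainty equivalent $g_q(x;\lambda)=-\frac1\lambda\log\E_q e^{-\lambda x}$ is nonincreasing in $\lambda$. It is strictly decreasing whenever $x$ is nonconstant and $q$ has full support: this follows from Jensen/Hölder monotonicity of power means, or equivalently from the variational formula \eqref{eq:dynamic:gq}, whose penalty weight $1/\lambda$ shrinks as $\lambda$ rises. All structured models $q^1,q^2_L,q^2_H$ have full support, so $\mathcal G$ inherits the same property.

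By \eqref{eq:dynamic:screening:nondeg}, both terms in $V_2(1)$ are strictly decreasing in $\lambda$, and a maximum of strictly decreasing functions is strictly decreasing. Hence $V_2(1)$ is strictly increasing in $\gamma$. At the failure node the same argument gives only that $V_2(0)$ is nondecreasing in $\gamma$, strictly so if $x_2^I(0,\cdot)$ is nonconstant, and this weak version suffices.

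\emph{Step 3: aggregation at period 1 and the cutoff.} Under $I$, every type chooses $a_1=2$, so there is no switching of the period-1 action across $\gamma$. Then
\[
V_1^I(\gamma)=\mathcal G\bigl(x_1^I+V_2;\mu_1,\lambda_1\bigr)-k,
\]
where $\lambda_1$ and $\mu_1$ do not depend on $\gamma$. The function $\mathcal G(\cdot;\mu_1,\lambda_1)$ is strictly increasing in each coordinate. This holds because $\partial g_q/\partial x(y)$ is proportional to $q(y)e^{-\lambda x(y)}>0$ and $\mu_1$ puts positive weight on both models. Combining with Step 2, $V_1^I$ is strictly increasing in $\gamma$, and therefore so is $D$.

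Each map $\gamma\mapsto\lambda_2(y_1)$ is continuous, and $g_q$, $\mathcal G$ and $\max$ are continuous, so $D$ is continuous on $(0,\infty)$. If $D(\gamma')<0<D(\gamma'')$, the intermediate value theorem gives a root $\gamma^*\in(\gamma',\gamma'')$, and strict monotonicity makes it unique. Types $\gamma>\gamma^*$ then have $D>0$ and strictly prefer $I$, while types $\gamma<\gamma^*$ have $D<0$ and strictly prefer $E$.

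The main obstacle is Step 3. It requires checking that the recursive form in Proposition \ref{prop:dynamic:mmr-recursive} really aggregates period-1 pay plus continuation value through $\mathcal G(\cdot;\mu_1,\lambda_1)$ with a $\gamma$-free period-1 concern. If the recursion instead uses some other monotone aggregator, the argument survives only as long as that aggregator is strictly increasing in $V_2(1)$.
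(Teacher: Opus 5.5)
Your proposal is correct and follows essentially the same route as the paper's proof. $V_1^E$ is constant because first-period routine play freezes the state at $(\mu_1,\lambda_1)$, and $V_2^I(1;\cdot)$ is strictly increasing by the strict $\lambda$-monotonicity of $g_q$ for nonconstant payoffs (Lemma \ref{lem:F-lambda}). Strict coordinatewise monotonicity of $\mathcal G(\cdot;\mu_1,\lambda_1)$ transmits this to $V_1^I$, and continuity plus the intermediate value theorem deliver the unique cutoff.

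The ``obstacle'' you flag is settled by Lemma \ref{lem:app:screening:decomposition}, which gives $\mathcal V_1(\varnothing,2;z)=\mathcal G(z;\mu_1,\lambda_1)-k$ directly from \eqref{eq:dynamic:V-innov}. Your derivative argument does not literally cover the boundary case $\lambda_1=0$. That case is handled by the EU-limit convention, under which $\mathcal G(\cdot;\mu_1,0)$ is still strictly increasing because $\bar q_{\mu_1}$ has full support.
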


Corollary \ref{thm:dynamic:screening} characterizes a simple screening mechanism that is consistent with innovation problems: continuation incentives that reward success make the innovation track
strictly more attractive for types with larger $\gamma$, because a larger $\gamma$ dampens the
post-success rise in misspecification concerns $\lambda_2(1)=-\frac{1}{\gamma}\log\theta_H$ and therefore
raises the agent's valuation of any nonconstant success-contingent continuation pay. 
Notice also that this screening device does not require extraneous lotteries or side signals:
it relies only on the project's own  output. Combined with
Theorem \ref{thm:dynamic:trap-scaleup}, Corollary \ref{thm:dynamic:screening} clarifies which types of agents are profitable for
high-upside projects: when the scale-up value $\xi$ in \eqref{eq:multiplier} is high, the principal prefers (and can screen for)
types with larger $\gamma$, i.e., agents whose misspecification concerns do not become
overwhelming after early unexplained success.   
\subsubsection{Illustration: screening by update scale}

In this section, we visualize how the screening mechanism in Corollary \ref{thm:dynamic:screening} operates by varying the private update scale $\gamma$. Fix the parameterization in Figure \ref{fig:dynamic:capacity}. The type that makes the post-success misspecification concern equal the sufficient threshold $\lambda^*$ in Theorem \ref{thm:dynamic:trap-scaleup} is
$\gamma^{BT}:=\frac{-\log \theta_H}{\lambda^*}\approx 0.459.$
Thus, every type with $\gamma<\gamma^{BT}$ satisfies $\lambda_{\mathfrak{s}}(\gamma)>\lambda^*$ and is therefore in a breakthrough trap after success by Theorem \ref{thm:dynamic:trap-scaleup}.

To isolate the screening force, consider a hiring menu with two tracks: the exploitation track $E$ is flat, so its value does not depend on $\gamma$, whereas the innovation track $I$ instead uses a success-contingent continuation spread, as in \eqref{eq:dynamic:screening:nondeg}. The key object is the post-success continuation wedge
$M_{\mathfrak{s}}(\varDelta_I;\gamma)
:=M(\mu_{\mathfrak{s}},\lambda_{\mathfrak{s}}(\gamma);(0,\varDelta_I))$, where $\mu_{\mathfrak{s}}\approx 0.957$ and $\varDelta_I:=x_2^I(1,1)-x_2^I(1,0)\geq 0$ is the success premium on the innovation track.

\par Figure \ref{fig:dynamic:screening-example} compares two types: $\gamma_L=0.40<\gamma^{BT}$ and $\gamma_H=0.70>\gamma^{BT}$. The low type has misspecification concern  $\lambda^{\gamma_L}_{\mathfrak{s}}(1)\approx 1.996>\lambda^*\approx 1.740$ and incentive capacity $C_{\mathfrak{s}}(\gamma_L)\approx 0.031<k,$
so no success-contingent premium can implement continued innovation, i.e., he is in a breakthrough trap. In contrast, the high type has $\lambda^{\gamma_H}_{\mathfrak{s}}(1)\approx 1.141<\lambda^*$ and $
C_{\mathfrak{s}}(\gamma_H)\approx 0.055>k,$
so continued innovation is implementable for sufficiently large success-contingent premium. The smallest premium that implements innovation for high type is $\varDelta_I^{H}\approx 2.67$, so at $\varDelta_I=3$: $M_{\mathfrak{s}}(3;\gamma_L)\approx 0.031<k$ while $M_{\mathfrak{s}}(3;\gamma_H)\approx 0.051>k.$

\begin{figure}[hbt!]
\centering
\begin{tikzpicture}
\begin{axis}[
width=0.9\textwidth,
height=0.52\textwidth,
xmin=0, xmax=5,
ymin=0, ymax=0.065,
xlabel={success-contingent continuation spread $\varDelta_I=x_2^I(1,1)-x_2^I(1,0)$},
ylabel={post-success continuation wedge},
axis lines=left,
samples=250,
domain=0:5,
scaled y ticks=false,
legend style={
font=\scriptsize,
draw=none,
fill=none,
at={(1,0.05)}, 
anchor=south east,
row sep=1pt,
/tikz/every even column/.append style={column sep=3pt}
},
legend cell align=left,
label style={font=\small},
tick label style={font=\scriptsize},
xlabel style={yshift=0.01em},
xtick={0,1,2,2.667,3,4,5},
xticklabels={0,1,2,$\varDelta_I^{H}$,3,4,5},
ytick={0,0.01,0.02,0.04,0.06}, 
yticklabel style={
font=\scriptsize,
/pgf/number format/fixed,
/pgf/number format/precision=2
},
extra y ticks={0.0312739,0.05,0.0547294},
extra y tick labels={$C_{\mathfrak{s}}(\gamma_L)$,$k$,$C_{\mathfrak{s}}(\gamma_H)$},
]

\addplot[name path=high, very thick, ForestGreen]
{0.957446808510638*(-1/1.14072528031110*ln((0.55 + 0.45*exp(-1.14072528031110*x))/(0.6 + 0.4*exp(-1.14072528031110*x))))
+0.0425531914893617*(-1/1.14072528031110*ln((0.98 + 0.02*exp(-1.14072528031110*x))/(0.6 + 0.4*exp(-1.14072528031110*x))))};
\addlegendentry{$M_{\mathfrak{s}}(\varDelta_I;\gamma_H)$}

\addplot[name path=low, very thick, BrickRed]
{0.957446808510638*(-1/1.99626924054443*ln((0.55 + 0.45*exp(-1.99626924054443*x))/(0.6 + 0.4*exp(-1.99626924054443*x))))
+0.0425531914893617*(-1/1.99626924054443*ln((0.98 + 0.02*exp(-1.99626924054443*x))/(0.6 + 0.4*exp(-1.99626924054443*x))))};
\addlegendentry{$M_{\mathfrak{s}}(\varDelta_I;\gamma_L)$}

\addplot[name path=kline, black, dashed, thick, forget plot] {0.05};

\addplot[black, densely dotted, forget plot] {0.0312739}; 
\addplot[black, densely dotted, forget plot] {0.0547294};

\addplot[forget plot, purple!18, draw=none]
fill between[of=high and kline, soft clip={domain=0:2.667}];

\addplot[forget plot, ForestGreen!18, draw=none]
fill between[of=high and kline, soft clip={domain=2.667:5}];
\addlegendimage{area legend, draw=ForestGreen!50!black, fill=ForestGreen!18}
\addlegendentry{high-type incentive surplus}

\addplot[forget plot, BrickRed!18, draw=none]
fill between[of=low and kline, soft clip={domain=2.667:5}];
\addlegendimage{area legend, draw=BrickRed!60!black, fill=BrickRed!18}
\addlegendentry{low-type incentive deficit}

\addlegendimage{area legend, draw=purple!60!black, fill=purple!18}
\addlegendentry{incentive deficit for both types}

\draw[densely dashed] (axis cs:2.667,0) -- (axis cs:2.667,0.05);
\end{axis}
\end{tikzpicture}
\caption{\footnotesize The low type $\gamma_L=0.40$ remains trapped after success because his post-success continuation wedge is bounded below $k$. The high type $\gamma_H=0.70$ can still be induced to continue innovating after success once the continuation spread exceeds $\varDelta_I^{H}\approx 2.67$, which is where the two types can be separated.}
\label{fig:dynamic:screening-example}
\end{figure}

Figure \ref{fig:dynamic:screening-example} illustrates how the same success-contingent premium can fail to motivate the low type---whose trust in the roadmap collapses following an early unexplained breakthrough---yet can remain sufficient to motivate the high type. Thus, the innovation track becomes strictly more attractive as the update scale $\gamma$ rises, while the flat exploitation track does not. This insight summarizes the sorting force behind Corollary \ref{thm:dynamic:screening}: by adjusting the flat utility level on track $E$, the principal can generate a unique hiring cutoff that assigns low-$\gamma$ agents to exploitation track and high-$\gamma$ agents to innovation track.

\subsection{Post-success turnover}
If neither feedback design nor screening is feasible in practice, then what could a principal do to avoid breakthrough traps? A standard organizational tool is the policy recommending to replace the agent after early failure \citep[e.g.,][Section VI]{manso2011}. We show that an opposite policy can be effective here: it may be profitable to replace the agent after early success and hire a new agent. Building on the premise from Section \ref{subsec:dynamic:screening} that agents differ in their epistemic reactions to evidence, we treat the state variable $\lambda_t$ as a private, agent-specific sentiment toward the experimentation process. Here is the intuition:
\begin{itemize}
    \item \textit{Incumbent agent}: After early success, his posterior $\mu_2(\cdot|y_1=1)$ increases, and if he finds this success surprising, his post-success misspecification concern $\lambda_2(1)$ would increase. If $\lambda_2(1)$ is larger than the threshold $\lambda^*$, he would fall into a breakthrough trap, where post-success incentive capacity falls below his effort cost (Theorem \ref{thm:dynamic:trap-scaleup}).
    \item \textit{New agent}: A new agent is hired at $t=2$ to replace the incumbent. The principal shares with him the early success, so he inherits the high posterior $\mu_2(\cdot|y_1=1)$, but he does \textit{not} suffer the associated misspecification ``shock'' because $\lambda_2(1)$ is part of the incumbent's private experience during experimentation. When the new agent's baseline misspecification concern is low enough that incentive capacity is larger than his cost, it would be possible to motivate him to innovate in period 2.
\end{itemize}
The main assumption here is that the variable $\lambda_t$ is interpreted as an agent-specific robustness state rather than as a public sufficient statistic of raw data: it summarizes how the incumbent's own experimentation under the roadmap affects his concern for misspecification.\footnote{Here is a key distinction. The raw success $y_1=1$ is public evidence about the project, so it is used to update $\mu$. However, the diagnostic effect of having produced and interpreted that success can be agent-specific: hands-on experimentation, model audit, and personal exposure to anomaly can degrade the incumbent's trust in the roadmap in a way that is not summarized by the public outcome alone.} Turnover then works because it changes the bearer of that robustness state, not because it changes the public data. The next result formalizes this intuition.
\begin{corollary}\label{prop:dynamic:turnover}
Suppose $a_1=2$ is implemented at $t=1$.
Fix $\varphi\ge 0$.
After observing $y_1=1$, the principal may pay $\varphi$ and hire a new agent for $t=2$.
The new agent observes $y_1=1$ and has state $(\mu_2(\cdot|y_1=1),\lambda_1)$.
If $C\big(\mu_2(\cdot|y_1=1),\lambda_2(1)\big)<k< C\big(\mu_2(\cdot|y_1=1),\lambda_1\big),$
then
\begin{itemize}
    \item[(i)] A turnover contract implements $a_2=2$ after $(a_1=2,y_1=1)$.
    \item[(ii)] The principal's
net expected profit is strictly increasing in $\xi$. If the principal's net expected profit is bounded within the class of contracts that keep the incumbent after $y_1=1$, then, for every $\varphi$, turnover is strictly optimal for all sufficiently large
$\xi$.
\end{itemize}
\end{corollary}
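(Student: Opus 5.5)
The plan is to reduce both parts to the one-period implementability criterion of Proposition \ref{prop:dynamic:terminal-impl}, applied at the post-success node $h_2=y_1=1$. Throughout, the state is the new agent's state $(\mu_2(\cdot|y_1=1),\lambda_1)$ rather than the incumbent's.

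\textbf{Part (i).} A turnover contract is a period-2 continuation payment $x_2^{new}:Y\to\R$ offered to the new agent at the post-success history. The new agent is one-shot at $t=2$, so his problem is exactly the terminal problem of Proposition \ref{prop:dynamic:terminal-impl}: he innovates iff
\[
M(\mu_2(\cdot|y_1=1),\lambda_1;x_2^{new})\ge k .
\]
Because $k<C(\mu_2(\cdot|y_1=1),\lambda_1)$ holds strictly, the definition of the supremum in \eqref{eq:dynamic:Delta-C} gives some $x^\circ$ with $M(\mu_2(\cdot|y_1=1),\lambda_1;x^\circ)>k$; no attainment issue arises. Adding a constant to $x^\circ$ leaves $M$ unchanged, since both $g_q$ and $\mathcal G$ are translation-equivariant, so we can also meet any participation constraint.

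The turnover decision is made only after $y_1=1$, and the incumbent's period-2 payments on that branch can be specified freely. Hence we can keep the incumbent's period-1 incentives intact: either leave his payoff on the success branch unchanged, or pay him the value he would have received if retained, which is routine-only continuation by Theorem \ref{thm:dynamic:trap-scaleup}. This guarantees $a_1=2$ remains implemented.

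\textbf{Part (ii).} Write the principal's net expected profit under the turnover contract as
\[
\Pi^{T}(\xi)=A+\Pr(a_1=2,y_1=1)\,\bigl[\xi\,\bar\theta_{\mathfrak s}-v\text{-cost of }x_2^{new}-\varphi\bigr],
\]
where $A$ collects the terms that do not depend on $\xi$. Here $\Pr(y_1=1)=m\theta_H+(1-m)\theta_L>0$, and $\bar\theta_{\mathfrak s}>0$ is the principal's posterior success probability. By \eqref{eq:multiplier}, $\xi$ enters only through $R_2$ on the history $(2,1,2)$, and part (i) implements that history with positive probability. The contract itself does not depend on $\xi$, so $\Pi^T$ is affine in $\xi$ with strictly positive slope. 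The optimal turnover profit is at least this, and since it is a supremum of functions that are nondecreasing in $\xi$ and strictly increasing along this fixed contract, it is strictly increasing.

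For the retention class, the hypothesis $C(\mu_2(\cdot|y_1=1),\lambda_2(1))<k$ means Theorem \ref{thm:dynamic:trap-scaleup}(1)'s argument (equivalently Proposition \ref{prop:dynamic:terminal-impl}) rules out $a_2=2$ after $(2,1)$ under any incumbent-retaining contract. Theorem \ref{thm:dynamic:trap-scaleup}(2) then gives that retention profits are independent of $\xi$. By assumption they are bounded, say by $\bar\Pi$. Since $\Pi^T(\xi)\to\infty$ linearly in $\xi$ for each fixed $\varphi$, there is $\bar\xi(\varphi)$ with $\Pi^T(\xi)>\bar\Pi$ for all $\xi>\bar\xi(\varphi)$, so turnover is strictly optimal there.

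\textbf{Main obstacle.} The delicate step is that Theorem \ref{thm:dynamic:trap-scaleup} is stated under its own parametric hypotheses, while this corollary assumes only the capacity inequality. I would therefore re-derive items (1)–(2) directly from $C(\mu_2(\cdot|y_1=1),\lambda_2(1))<k$ via Proposition \ref{prop:dynamic:terminal-impl}. The point is that the implementability argument used only this inequality, and the supremum is not attained because of the strict inequality. It remains to check that the turnover option does not alter the incumbent's period-1 incentive constraint, which the construction above handles.
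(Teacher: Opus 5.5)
Your proposal is correct and follows the paper's proof essentially step for step. The strict inequality $k<C(\mu_2(\cdot|y_1=1),\lambda_1)$ yields a finite $x$ with $M>k$, to which Proposition~\ref{prop:dynamic:terminal-impl} applies. The fixed turnover contract gives profit affine in $\xi$ with slope $\Proba(y_1=1)\,\bar\theta_{\mathfrak s}>0$. Retention profit is $\xi$-independent, because the argument for Theorem~\ref{thm:dynamic:trap-scaleup}(2) uses only $C(\mu_2(\cdot|y_1=1),\lambda_2(1))<k$, as you observe. It is also bounded by hypothesis, so turnover wins for all large $\xi$.

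The one misstep is your aside that the \emph{optimal} turnover profit is strictly increasing because it is a supremum of nondecreasing functions, one of which is strictly increasing. That inference is invalid: $\max\{\xi,c\}$ is flat for $\xi<c$, and turnover contracts in which the new agent takes the routine action have $\xi$-independent profit. It is not needed, however, since the paper's monotonicity claim concerns the fixed turnover contract, exactly as in your affine computation.
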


Observe that the turnover in Corollary \ref{prop:dynamic:turnover} is not about skills: the innovation cost $k$ is a task primitive and is unchanged by replacement.
The gain comes from the fact that the new agent does not inherit the incumbent's misspecification shock. Notice that this turnover would fail if the incumbent was allowed to communicate his experience with the roadmap to the new agent because then the new agent would also inherit the state $\lambda_2(1)$. \par Thus, Corollary \ref{prop:dynamic:turnover} identifies a policy consistent with the ``paradoxes of success,'' where employees get replaced after achieving key milestones \citep[e.g.,][]{founder03}. The intuition is that the person who discovers a breakthrough may not be \textit{the} right person to scale it, not because of differences in skill, but because the act of discovery may make them too worried about model robustness to execute the scale up efficiently. 

\section{Endogenous Innovation Cycles}\label{sec:cycles}

We now analyze what happens when the breakthrough-trap mechanism operates repeatedly along an infinite-horizon path. We show that even when breakthrough traps are locally mitigated, sustained
innovation can face a \emph{speed limit}, which
implies that \textit{cycles} between routine and innovation actions may emerge endogenously in the long run.

\medskip
\noindent\textbf{Environment.}
We extend the two-period setting in Section \ref{subsec:dynamic:env} to infinite horizons. It is known in the dynamic principal-agent literature that the agent's discount rate $\beta$ must be strictly less than 1 to ensure that his continuation values are well-behaved in infinite-horizon settings, so let $\beta\in(0,1)$ \citep[e.g.,][]{radner85,spear87}. 

\par At each $t\geq1$, the agent chooses action $a_t\in A$. He updates his belief $\mu_t$ using Bayesian rule and updates his misspecification concern $\lambda_t$ using the average LLR rule in Assumption \ref{ass:dynamic:llr} extended for all $t\geq2$. Formally,
$\lambda_t(h_t)=\frac{\mathrm{LLR}^2(h_t,Q^{2})}{\gamma(t-1)} $ for $t\ge 2,$
where
\begin{align}
\mathrm{LLR}^2(h_t,Q^{2})
:=
-\log\left(
\frac{\max_{q\in Q^{2}}\prod_{\tau=1}^{t-1}q(y_\tau)^{\1\{a_\tau=2\}}}
     {\max_{p\in\Delta(Y)}\prod_{\tau=1}^{t-1}p(y_\tau)^{\1\{a_\tau=2\}}}
\right)
\qquad \forall t\ge 2,\forall h_t.
\label{eq:dynamic:infty:LLR2}
\end{align}

This rule maintains the premise that learning is generated only through innovation histories.\footnote{Setting the initial concern $\lambda_1$ to zero is without loss in this long-run analysis: allowing an initial shock $\ell_0:=\gamma\lambda_1$ would replace $\lambda_t(h_t)$ by
$\frac{\ell_0+\mathrm{LLR}^2(h_t,Q^{2})}{\gamma(t-1)}$, so the additional term $\frac{\ell_0}{\gamma(t-1)}$ vanishes as $t\to\infty$.} The product keeps those observations for which the innovation roadmap made predictions. The denominator converts this likelihood-ratio evidence into the agent's current average concern at the decision date. Thus, when the agent spends time on the routine method, he observes no new anomaly about the innovation roadmap; past unexplained innovation outcomes become less representative of his current evidence as the relationship continues without further roadmap failures. Appendix \ref{subsec:app:infty:innovation-llr} shows that this updating rule is a special case of \citet[][eq. (2)]{lanzani2025}, adapted to our innovation domain, so an agent who uses it can be viewed as a ``statistically sophisticated'' type.
\par The true data-generating process is standard: conditional on the realized action path, outcomes are independent over time, with action-dependent distribution $p^a_\ast$. This is a natural assumption in bandit or exploration-exploitation problems such as material and drug discovery because it captures the idea that an innovation attempt is a controlled experimental trial. That is, the outcome can be success or failure, its probability depends on whether the agent uses the routine or innovative method, and once the method is fixed, remaining variation reflects only trial-level noise or idiosyncratic measurement error. 
\par We focus on the case where the innovation roadmap is misspecified, so its best-fit KL discrepancy
$D_\ast^{2}:=\min_{q\in Q^{2}}\DKL(p_\ast^2\|q)$ satisfies $D_\ast^{2}>0$. Thus, $D_\ast^{2}$ measures how well structured models approximate the true outcome distribution induced by innovative activities. This is therefore a misspecified-Bayesian-learning environment. Appendix \ref{app:infinite} contains all the technical details that formalize the infinite-horizon environment.

\medskip
\noindent\textbf{Innovation frequency.}
Let
$\alpha_T:=\frac{1}{T}\sum_{t=1}^{T}\1\{a_t=2\}$
denote the realized innovation share up to horizon $T\geq1$. This object will play a central role in this section because it summarizes how frequently the agent chooses to innovate along an equilibrium path.

\medskip
\noindent\textbf{Robustness thresholds.}
 Recall the threshold $\lambda^*$ in Theorem \ref{thm:dynamic:trap-scaleup} and rename it as follows
\begin{align}
\bar\lambda^H_k:=\frac{1}{k}\log \frac{1-p}{1-\theta_H} \quad\text{and}\quad
\bar\lambda^L_k:=\frac{1}{k}\log \frac{p}{\theta_L},
\label{eq:cycles:lambda-bars-main}
\end{align}
which correspond to the maximal robustness levels at which innovation remains contractible when the
agent considers $q^2_H$ or $q^2_L$, respectively. Now, let ${Q}^{2}_*\subseteq Q^{2}$ denote the set of structured models that best approximate
the true innovation outcome distribution, i.e., ${Q}^{2}_*:=\text{arg min}_{q\in Q^{2}}\DKL(p_\ast^2\|q)$. Then, the following constant will play a key role:
\begin{align}
\alpha_\ast
:=
\frac{\gamma}{D_\ast^{2}}\hspace{0.03in} \max_{q\in {Q}^{2}_*}\bar\lambda_k(q),
\label{eq:cycles:alpha-bar}
\end{align}
where $\bar\lambda_k(q^2_H)=\bar\lambda^H_k$ and $\bar\lambda_k(q^2_L)=\bar\lambda^L_k.$ We focus on environments where innovation is economically \textit{valuable} to an organization, in the sense that it is implemented infinitely often along an equilibrium path. We also focus on the generic case where there is a unique KL minimizer---${Q}^{2}_*$ is a singleton---to abstract from the knife-edge case where the two distinct models $\{q^2_H,q^2_L\}$ happen to have the same asymptotic statistical performance.\footnote{This assumption can be relaxed, but at the expense of requiring a more abstract condition that we call ``local uniform convergence'' in \eqref{eq:dynamic:infty:state-unif}, which is needed to apply Proposition \ref{thm:dynamic:infty:bridge} in Appendix \ref{subsec:dynamic:infty}.} 

\begin{theorem}\label{thm:cycles:frontier}
Suppose innovation occurs infinitely often along an equilibrium path, the induced equilibrium contract is bounded, and ${Q}^{2}_*$ is a singleton. Then, $(\alpha_T)_{T\ge 1}$ satisfies 
$$
\limsup_{T\to\infty}\hspace{0.03in}\alpha_T\le \alpha_\ast \qquad \text{\normalfont a.s.},
$$
where $\alpha_\ast$ is in \eqref{eq:cycles:alpha-bar}. Moreover, if $\alpha_\ast<1$, then along any equilibrium path on which innovation occurs infinitely often, the
routine action must also occur infinitely often a.s.
\end{theorem}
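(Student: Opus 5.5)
Theorem \ref{thm:cycles:frontier} follows from two ingredients. The first is an implementability constraint: innovation at date $t$ requires $\lambda_t(h_t)\le\bar\lambda_k(q)$ asymptotically. The second is a law-of-large-numbers characterization of $\lambda_t$ along the path. Combined, they bound the innovation share. The first step is the infinite-horizon analogue of the capacity bound behind Theorem \ref{thm:dynamic:trap-scaleup}. Fix a history $h_t$ at which $a_t=2$ is implemented. Since the equilibrium contract is bounded, continuation values are bounded, uniformly in $t$, by some $\bar x$ (using $\beta<1$). The one-step wedge between innovating and the routine action must then cover $k$. For a state $(\mu,\lambda)$, the supremum of the ARC wedge over continuation vectors is
\begin{align*}
\max\Big\{\tfrac{1}{\lambda}\big[\mu\log\tfrac{1-p}{1-\theta_H}+(1-\mu)\log\tfrac{1-p}{1-\theta_L}\big],\ \tfrac{1}{\lambda}\big[\mu\log\tfrac{p}{\theta_H}+(1-\mu)\log\tfrac{p}{\theta_L}\big]\Big\},
\end{align*}
as in the frontier $\bar\lambda(\mu)$ of Figure \ref{fig:dynamic:state-space-feasibility}. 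The only change is that the agent's discount factor $\beta$ and bounded continuation spreads enter. These only shrink the capacity, and the finite spread with $\beta\le1$ matters only as a slack that I will absorb. Implementability therefore forces $\lambda_t(h_t)\le \bar\lambda(\mu_t(h_t))$ at every innovation date.

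The second step pins down the state asymptotically. Since innovation occurs infinitely often and outcomes under action 2 are i.i.d.\ $p^2_\ast$, misspecified Bayesian learning (Berk-type consistency, with $Q^2_\ast=\{q_\ast\}$ a singleton) gives $\mu_t\to\delta_{q_\ast}$ a.s. Then $\bar\lambda(\mu_t)\to\bar\lambda(q_\ast)$, which equals $\bar\lambda_k(q_\ast)$ on the relevant branch; I take the max over the two branches as it appears in \eqref{eq:cycles:alpha-bar}. For the concern itself, let $N_t$ be the number of innovation dates before $t$. The numerator of \eqref{eq:dynamic:infty:LLR2} is the maximized likelihood over $Q^2$ and the denominator is the empirical-distribution likelihood. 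By the strong law, $\mathrm{LLR}^2(h_t,Q^2)/N_t\to\min_{q\in Q^2}\DKL(p^2_\ast\|q)=D^2_\ast$ a.s.: the empirical frequency $\hat p_t\to p^2_\ast$, and the LLR per observation equals $\min_q \DKL(\hat p_t\|q)$ exactly, which is continuous in $\hat p_t$. Hence
\begin{align*}
\lambda_t(h_t)=\frac{N_t}{t-1}\cdot\frac{D^2_\ast+o(1)}{\gamma}\quad\text{a.s.}
\end{align*}
This is the content I would cite from Proposition \ref{thm:dynamic:infty:bridge}, with the singleton assumption supplying its uniform-convergence hypothesis.

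The third step combines the two. Along the subsequence of innovation dates $t_j$, the first step yields $\frac{N_{t_j}}{t_j-1}(D^2_\ast+o(1))/\gamma\le \bar\lambda_k(q_\ast)+o(1)$, so $\limsup_j N_{t_j}/t_j\le\alpha_\ast$. To pass to all $T$, note that $\alpha_T$ only increases at innovation dates. Any local maximum of $T\mapsto\alpha_T$ is therefore attained at $T=t_j$ for some $j$ (precisely, $\alpha_T=(N_{t_j}+1)/T$ with $t_j\le T$ the last innovation date), and $(N_{t_j}+1)/t_j$ has the same limsup. This gives $\limsup_T\alpha_T\le\alpha_\ast$ a.s. For the final claim, if $\alpha_\ast<1$ and routine occurred only finitely often, then $\alpha_T\to1$, a contradiction; hence routine also occurs infinitely often a.s.

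I expect the main obstacle to be the first step. A rigorous argument must show that, in the infinite-horizon recursive ARC representation with bounded contracts, the incentive constraint at an innovation date really reduces to the one-shot capacity inequality $\lambda_t\le\bar\lambda(\mu_t)$, asymptotically up to vanishing error. I would try to exploit the recursive structure (Proposition \ref{prop:dynamic:mmr-recursive}). At date $t$, the agent compares $-c^a+\beta\,\mathcal G(W_{t+1};\mu_t,\lambda_t)$ with the routine analogue. Write $W_{t+1}$ as a vector over $Y$ and bound $\sup_W$ of the $\mathcal G-g_{q^1}$ wedge by the capacity formula, which is an unconstrained supremum. The remaining issue is that the same-period payment $x_t$ is evaluated under the period-$t$ state, so the whole continuation including $x_t$ must be treated as a single $x\in\R^Y$. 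That is precisely what $C(\mu,\lambda)$ allows, and the factor $\beta<1$ only tightens the bound.
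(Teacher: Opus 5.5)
Your Step 1 does not go through, and it carries the whole argument. In the recursion \eqref{eq:dynamic:infty:W}, the incentive constraint at an innovation date $t$ is $\mathcal G(z^2_t;\mu_t,\lambda_t)-k\ge g_{q^1}(z^1_t;\lambda_t)$, where $z^a_t(y)=x_t(\chi_t(h_t),y)+\beta W_{t+1}(h_t,a,y)$. The two vectors differ. The transfer $x_t$ is common because it is public-measurable. The continuation value $W_{t+1}(h_t,a,y)$, however, depends on $a$: innovating updates the private state $(\mu_{t+1},\lambda_{t+1})$, and routine play does not. This is the same experimentation channel that makes $V_2\neq\widehat V_2$ in Proposition \ref{prop:dynamic:first-period-ic}.

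So there is no single $x\in\R^Y$ to plug into $C(\mu_t,\lambda_t)$. Your remark that ``the whole continuation including $x_t$ must be treated as a single $x$'' assumes away exactly the difficulty. Boundedness of the contract and $\beta<1$ only give $\|z^1_t-z^2_t\|_\infty\le\eta_t$ with $\eta_t$ bounded. Monotonicity and translation invariance of $g_{q^1}$ then yield $C(\mu_t,\lambda_t)\ge k-\eta_t$. A merely bounded slack is useless in the limit, and discounting does not make it vanish, so your claim that ``$\beta<1$ only tightens the bound'' is not a proof.

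What is missing is an argument that $\eta_t\to0$ along (a subsequence of) innovation dates. The paper obtains this in three steps.
\begin{itemize}
\item It shows the state is locally uniformly stable over every finite continuation tree. Posterior odds move by at most a factor $K^N$ over $N$ periods (Lemma \ref{lem:dynamic:infty:mu-local}). The normalized LLR moves by $O(\log t/t)$ (Lemma \ref{lem:dynamic:infty:llr-local}).
\item It applies the bridge result (Proposition \ref{thm:dynamic:infty:bridge} and Corollary \ref{cor:dynamic:infty:bridge-subseq-tree}). This shows $W_{t+1}$ is uniformly close to the value $W^\infty_{t+1}$ of the time-homogeneous recursion at the limit state $(\delta_{\{\hat q\}},\bar\lambda)$.
\item It notes that $W^\infty$ depends only on the public history, since contracts are public-measurable (Lemma \ref{lem:cycles:Wlimit-public}). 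Hence $W^\infty_{t+1}(h_t,2,y)=W^\infty_{t+1}(h_t,1,y)$.
\end{itemize}

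The bridge proposition is not what delivers your Step 2; that is the LLR law of large numbers, Lemma \ref{lem:cycles:llr-frequency}. The bridge is precisely the tool your Step 1 needs. Because it requires a limiting state, you also cannot work along all innovation dates. You must pass to a subsequence along which $\alpha_{t_m}$, and hence $\lambda_{t_m}$, converges, as the paper does.

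Once $\eta_{m_n}\to0$ is established, the rest of your outline matches the paper's argument: posterior concentration, $\lambda_{t_{m_n}}\to\alpha D^2_\ast/\gamma$, the capacity bound near $\delta_{\{q_\ast\}}$, and the passage from innovation dates to all $T$.
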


Theorem \ref{thm:cycles:frontier} is the long-run counterpart of Theorem \ref{thm:dynamic:trap-scaleup}. Recall that Theorem \ref{thm:dynamic:trap-scaleup} gives a local implementability condition: after a history, continued innovation is feasible only if the continuation state $(\mu_t,\lambda_t)$ keeps incentive capacity above cost $k$. Theorem \ref{thm:cycles:frontier} then aggregates this same local constraint along an infinite-horizon path: when innovation is implemented very frequently, the induced accumulation of misspecification concerns eventually pushes $(\mu_t,\lambda_t)$ beyond the range in which Theorem \ref{thm:dynamic:trap-scaleup}’s local inequality can be satisfied. The upper bound $\alpha_\ast$ in \eqref{eq:cycles:alpha-bar} therefore constitutes a
\emph{speed limit}: it is the maximum long-run innovation intensity consistent with keeping the agent's
endogenous misspecification concern low enough that continued innovation remains implementable.

Notice that $\alpha_\ast$ is composed of the three key forces from our framework. (1) The numerator $\gamma$ captures the \emph{agent side}:
how strongly statistical evidence is translated into misspecification concerns. (2) The denominator $D_\ast^{2}$ captures the
\emph{roadmap side}: how well structured models can approximate the true innovation
outcome distribution. (3) $\bar\lambda^H_k$ and $\bar\lambda^L_k$ in (\ref{eq:cycles:lambda-bars-main}) capture the \emph{incentive side}:
how much leverage a contract can extract from outcome differences between innovation and routine action
before misspecification concerns dominate. Sustainable innovation intensity is therefore increasing in
$\gamma$ and in the roadmap's incentive leverage, and decreasing in $D_\ast^{2}$ and in 
innovation cost $k$. Thus, organizations can speed up innovation intensity using three distinct organizational levers: 
\begin{itemize}
    \item[(1)] hiring epistemically resilient agents (Corollary \ref{thm:dynamic:screening}), which would increase $\gamma$.
    \item[(2)] improving their roadmaps' fit  (e.g., tuning hyperparameters), which would reduce $D_\ast^{2}$. Observe that when the roadmap is correctly specified,  $D_\ast^{2}=0$, so $\alpha_\ast=+\infty$.
    \item[(3)] choosing projects with more distinguishable outcomes, which would increase $\bar \lambda(\cdot)$.
\end{itemize}

When $\alpha_\ast<1$, Theorem \ref{thm:cycles:frontier} shows that continued innovation cannot be implemented at full intensity. Since implementing innovation can degrade the agent's \textit{trust} in the roadmap, organizations must periodically revert to routine tasks to rebuild this trust and prevent misspecification concerns from reaching uncontractible levels. This tension arises because long-run misspecification concerns accumulate proportionally to the realized frequency of innovation (Lemma \ref{lem:cycles:llr-frequency}). Theorem \ref{thm:cycles:frontier} therefore isolates a new dynamic mechanism: when a sophisticated agent faces a high-uncertainty contracting environment, exploration-exploitation cycles emerge endogenously in the long run.\footnote{These cycles also differ from those in standard misspecified Bayesian learning which are belief-driven \citep[e.g.,][]{nyarko91,fuden17}. In our framework, even given a fixed belief, cycles will still arise because they are driven by the agent's endogenous misspecification concerns and not by his belief.} Importantly, these cycles are dictated by the internal tension of managing the agent's trust in the roadmap and its statistical performance, rather than by exogenous changes in technology or payoff. 

\subsection{Illustration: long-run innovation frontier}\label{subsec:cycles:frontier-example}

A simple way to visualize the mechanism underlying Theorem \ref{thm:cycles:frontier} is to convert its asymptotic bound into a one-dimensional innovation frontier. This exercise will capture the long-run analogue of Figure \ref{fig:dynamic:capacity}. In Figure \ref{fig:dynamic:capacity}, the relevant object is the post-success continuation wedge as a function of the promised utility spread. Here, the relevant object is the \emph{asymptotic local incentive capacity} as a function of the long-run innovation frequency.

\par Let the true innovation outcome (Bernoulli) distribution have success probability $\theta_\ast\in(0,1)$ and
$\DKL(\theta_\ast\|\theta_H)<\DKL(\theta_\ast\|\theta_L),$
so that the unique KL minimizer is $q^2_H$. Then, $Q_*^{2}=\{q^2_H\}$ and $D^2_\ast=\DKL(\theta_\ast\|\theta_H)
=
\theta_\ast\log\frac{\theta_\ast}{\theta_H}
+
(1-\theta_\ast)\log\frac{1-\theta_\ast}{1-\theta_H}.$ Now, suppose a realized path satisfies $\alpha_T\to\alpha\in(0,1]$. Under the average LLR rule \eqref{eq:dynamic:infty:LLR2}, the agent's misspecification concern converges to $\lambda_\infty(\alpha)=\frac{\alpha D^2_\ast}{\gamma}$ (Lemma \ref{lem:cycles:llr-frequency}).
Since Bayesian posteriors asymptotically concentrate on KL minimizers under misspecification (Lemma \ref{lem:cycles:posterior}), the posterior $\mu_t$ will concentrate on $q^2_H$ as $t\rightarrow\infty$ and the relevant incentive capacity becomes
$$
C_\infty(\alpha)
:=
C(\delta_{\{q^2_H\}},\lambda_\infty(\alpha))
=
\frac{1}{\lambda_\infty(\alpha)}
\log\frac{1-p}{1-\theta_H}
=
\frac{\gamma}{\alpha D^2_\ast}
\log\frac{1-p}{1-\theta_H}.
$$
Using $\bar\lambda_k^H=\frac{1}{k}\log\frac{1-p}{1-\theta_H}$ from \eqref{eq:cycles:lambda-bars-main}, the above can be rewritten as
$C_\infty(\alpha)=k\frac{\alpha_\ast}{\alpha},$ where $\alpha_\ast=\frac{\gamma}{D^2_\ast}\bar\lambda_k^H.$
Thus, the long-run implementability condition $C_\infty(\alpha)\ge k$ is equivalent to
$\alpha\le \alpha_\ast.$
This is the speed limit in Theorem \ref{thm:cycles:frontier}: the more frequently the organization innovates, the more misspecification concern accumulates, and the less continuation incentive capacity remains available at future innovation nodes. For a numerical illustration, consider
$$
p=0.10,\qquad \theta_H=0.50,\qquad \theta_L=0.02,\qquad \theta_\ast=0.19,\qquad \gamma=0.10,\qquad k=0.40
$$
and evaluate all the expressions: $D^2_\ast
=
\DKL(0.19\|0.50)
\approx 0.21$
and
$\bar\lambda_k^H\approx 1.47$, so $\alpha_\ast\approx 0.71$ and
$C_\infty(\alpha)\approx \frac{0.28}{\alpha}.$
Notice, for example, that full-intensity innovation is not sustainable in this setting because
$C_\infty(1)\approx 0.28<0.40=k,$
whereas a lower innovation frequency such as $\alpha=0.50$ remains sustainable because
$C_\infty(0.50)\approx 0.56>0.40=k.$

\begin{figure}[hbt!]
\centering
\begin{tikzpicture}
\begin{axis}[
width=0.90\textwidth,
height=0.5\textwidth,
xmin=0.20, xmax=1.00,
ymin=0.00, ymax=1.50,
xlabel={Long-run innovation frequency $\alpha$},
ylabel={Asymptotic local incentive capacity},
axis lines=left,
samples=250,
domain=0.20:1.00,
scaled y ticks=false,
legend style={
font=\footnotesize,
draw=none,
fill=none,
at={(0.98,0.98)},
anchor=north east,
row sep=1pt,
/tikz/every even column/.append style={column sep=3pt}
},
legend cell align=left,
label style={font=\small},
tick label style={font=\scriptsize},
xlabel style={yshift=0.01em},
xtick={0.2,0.4,0.6,0.7101472662,0.8,1.0},
xticklabels={0.2,0.4,0.6,$\alpha_\ast$,0.8,1},
ytick={0,0.2,0.6,0.8,1.0,1.2,1.4}, 
yticklabel style={
font=\scriptsize,
/pgf/number format/fixed,
/pgf/number format/precision=2
},
extra y ticks={0.40},
extra y tick labels={$k$},
]
\addplot[name path=frontier, very thick, BrickRed]
{0.2840589065/x};
\addlegendentry{$C_\infty(\alpha)=k\alpha_\ast/\alpha$}

\addplot[name path=kline, black, dashed, thick, forget plot] {0.40};

\addplot[black, densely dotted, thick, forget plot]
coordinates {(0.7101472662,0) (0.7101472662,0.40)};

\addplot[forget plot, ForestGreen!18, draw=none]
fill between[of=frontier and kline, soft clip={domain=0.20:0.7101472662}];
\addlegendimage{area legend, draw=ForestGreen!50!black, fill=ForestGreen!18}
\addlegendentry{long-run incentive surplus}

\addplot[forget plot, BrickRed!18, draw=none]
fill between[of=kline and frontier, soft clip={domain=0.7101472662:1.00}];
\addlegendimage{area legend, draw=BrickRed!60!black, fill=BrickRed!18}
\addlegendentry{long-run incentive deficit}
\end{axis}
\end{tikzpicture}
\caption{\footnotesize The frontier $C_\infty(\alpha)=k\alpha_\ast/\alpha$ intersects the cost threshold $k$ uniquely at
$\alpha_\ast\approx 0.71$. 
}
\label{fig:cycles:frontier-example}
\end{figure}

\par Figure \ref{fig:cycles:frontier-example} plots the frontier $C_\infty(\alpha)$ as a function of the innovation frequency $\alpha$. The green region is the asymptotic incentive \textit{slack} associated with long-run innovation frequencies ($\alpha \le \alpha_\ast$) that are sustainable---their induced misspecification concern stays low enough for innovation to remain locally implementable. The red region is the asymptotic incentive \textit{shortfall} associated with unsustainable innovation frequencies ($\alpha > \alpha_\ast$). That is, if an organization attempts to innovate at these higher frequencies, the accumulated misspecification evidence will eventually drive the local incentive capacity, $C_\infty(\alpha)$, strictly below cost $k$. To summarize, this example has illustrated why when $\alpha_\ast<1$, organizations in high-uncertainty domains cannot innovate all the time: they must intersperse routine phases often enough to keep the average innovation intensity below the frontier.

\section{Concluding Remarks}\label{sec:discussion}

\subsection{Related Literature}\label{sec:lit}
In classical moral hazard, the principal and the agent are each aware of the true distribution of output associated with each level of effort \citep[e.g.,][]{holmstrom79,gh83}. It is now recognized that this is a strong assumption, which leads to the question of how it can be relaxed. 
As \citet[][p. 1429]{miao16} note: ``Contracting problems involve at least two parties. Introducing ambiguity or robustness into such problems must consider which party faces ambiguity and what it is ambiguous about.'' Some papers focus on the case where the principal faces ambiguity, not the agent; see \citet[][]{carroll19} for a survey. Here are two papers in this literature. \citet{carroll15} considers a maxmin principal who is averse to the ambiguity about the available actions, whereas the agent faces no ambiguity. \citet{miao16} study a continuous-time framework where the principal has a multiplier preference because she is concerned about misspecification, whereas the agent trusts it. The main justification for this approach is that the principal contracts with an agent who ``owns'' a relevant technology.
\par Notice that the approach above is the opposite of our framework, where we assume the agent faces ambiguity, \textit{not} the principal. Our paper therefore fits in the literature pioneered by \citet{amb11}, where a justification is that the principal ``owns'' a relevant technology and contracts with a less informed agent \citep[e.g.,][]{amb11,kellner15,kellner17,amb24}. Unlike \citet{amb11} who assume the agent has an incomplete preference, our agent is more sophisticated in the sense that he can disentangle his attitude toward prior beliefs and models, which then allows him to learn about model likelihoods and model misspecification separately.  Thus, our framework is the first to introduce both concerns for misspecification and learning dynamics on the agent side of moral hazard. As noted throughout, we have in mind settings where the principal aims to motivate the agent to innovate in high-uncertainty environments. On one hand, the principal can be viewed as the designer or owner of the roadmap, so she likely has superior information about it, hence why she trusts it. In practice,  managers of large labs would trust their roadmaps because they oversee many scientists engaged in diverse innovative activities, giving them an aggregate sense of the uncertainty involved across different tasks. On the other hand, each scientist bears, individually, the epistemic uncertainty present within their respective tasks.  This form of asymmetric information between principal and agent is close in spirit to the ``informed-principal'' literature pioneered by \citet{maskin90,maskin92}, where the principal is allowed to have superior information about the technology or productivity environment faced by the agent.
\par In parallel to the two literatures above, there is a ``middle ground'' pioneered by \citet{ghir94}, who allows both the principal and agent to face ambiguity with Choquet preferences \citep{schmeidler89}. Future research may extend our framework by allowing the principal to have ambiguity-misspecification concerns, which would unify the three literatures above. However, allowing the principal to  have ambiguity-misspecification concerns would not overturn the breakthrough-trap mechanism. The reason is that Theorem \ref{thm:dynamic:trap-scaleup} is an implementability result on the agent's post-success incentive set: once the agent's post-success incentive capacity falls below innovation cost, no  contract can induce continued innovation. This feasibility failure is pinned down by the agent's criterion and updating rule, not by the principal's objective function. Thus, replacing the principal's preference by any ambiguity-misspecification-sensitive preference would change only the principal's ranking of feasible contracts, not restore feasibility when the agent's post-success incentive set is empty. If anything, additional principal-side pessimism would reinforce the breakthrough-trap mechanism by shrinking the set of available contracts.

\subsection{Conclusion}
We analyzed an agency problem where a sophisticated agent learns the value of innovation while being concerned that  models in the organization's ``roadmap'' are misspecified. We identified a ``breakthrough trap,'' where early success can trigger a loss of trust in the roadmap, to the point that it renders continued innovation uncontractible. Thus, long-run innovation intensity is subject to an endogenous ``speed limit,'' which may force organizations into ``exploration-exploitation'' cycles to manage both the agent's effort and his trust in the roadmap. We then showed that tolerance for early failure, screening at hiring, and post-success turnover can help mitigate breakthrough traps. 

\par Appendix \ref{sec:dynamic:optimal} derives a recursive  ARC and characterizes dynamic implementation. Appendix \ref{app:proofs} contains proofs of main results. Appendices \ref{sec:general}--\ref{app:dynamic:extensions} contain further extensions.

\appendix

\section{Appendix: Dynamic Analysis}\label{sec:dynamic:optimal}
We formalize the recursive representation of ARC in Appendix \ref{subsec:dynamic:mmr} and characterize dynamic implementation in Appendices \ref{subsec:dynamic:policies}--\ref{subsec:dynamic:optimal}. To align more closely with \citeauthor{mmr06_JET}'s (\citeyear{mmr06_JET}) recursive representations, let the agent's discount rate be denoted more generally as a constant $\beta\in(0,1]$. All proofs from this appendix are in Appendix \ref{sec:appproof}.

\subsection{Recursive ARC}\label{subsec:dynamic:mmr}

Following \citet{mmr06_JET}, we now characterize a recursive formulation of ARC that is appropriate for our innovation environment. Notably, this is new in the decision-theory literature because \citet{lanzani2025} assumes myopia. Recall that in Section \ref{subsec:dynamic:env}, the prior has full support on $Q^{2}$, and all models in $Q^{2}$ and $q^1$ have full support on $Y$.

\medskip
\noindent\textbf{Event tree and information.}
Fix the two-period tree with outcome paths $\omega=(y_1,y_2)\in\Omega:=Y\times Y$.\footnote{Relative to \citeauthor{mmr06_JET}'s (\citeyear{mmr06_JET}) node-payoff formulation, our $x_t$ is an edge payoff paid after $y_t$ is realized. Equivalently, one can relabel dates so that $x_t$ is the payoff at the successor information set; the recursive representation is unchanged because $x_t$ enters the one-step-ahead continuation vector.}
Let $(\mathcal F_t)_{t=1}^3$ be the filtration generated by realized histories:
$\mathcal F_1=\{\varnothing,\Omega\}$ (before observing $y_1$),
$\mathcal F_2=\sigma(y_1)$ (after observing $y_1$),
and $\mathcal F_3=\sigma(y_1,y_2)$ (after observing $y_2$). Formally, let $\kappa_t$ denote the one-step KL penalty defined as follows:
\begin{itemize}[]
\item If $a=1$, then for any $r\in\Delta(Y)$ and every $t$, set
$\kappa_t(h_t,1;r):=\frac{1}{\lambda_t(h_t)}\hspace{0.02in}\DKL(r\|q^1).$
\item If $a=2$, represent distortions as a joint distribution $\pi\in\Delta(Q^{2}\times  Y)$ with a benchmark
$\pi^0_t(q,y):=\mu_t(q|h_t)\hspace{0.02in}q(y)$ for all $t$, and set
\begin{align*}
\kappa_t(h_t,2;\pi):=
\begin{cases}
\displaystyle \frac{1}{\lambda_t(h_t)}\hspace{0.02in}\DKL(\pi\|\pi^0_t), & \text{if }\pi_{Q^{2}}=\mu_t(\cdot|h_t),\\
+\infty, & \text{otherwise,}
\end{cases}
\end{align*}
for every $t$, where $\pi_{Q^{2}}$ is the marginal of $\pi$ on $Q^{2}$.
\end{itemize} 

At $t=2$, the state depends on the first-period action. To ease notation, we write $\mu_2(\cdot|y_1)$, $\lambda_2(y_1)$, and $V_2(y_1;x_2)$ below to refer to the post-innovation history $(a_1=2,y_1)$. For the first-period comparison only, let $\widehat{\mathcal V}_2(y_1,a;z)$ denote the same expression as $\mathcal V_2(y_1,a;z)$, but evaluated at the unchanged state $(\mu_1,\lambda_1)$, as induced by first-period routine play. The agent's continuation values are then defined by backward induction:
\begin{align}
V_2(y_1;x_2) &:= \max_{a_2\in\{1,2\}}
 \mathcal V_2 \big(y_1,a_2; x_2(y_1,\cdot)\big),\label{eq:dynamic:V2}\\
\widehat V_2(y_1;x_2) &:= \max_{a_2\in\{1,2\}}
 \widehat{\mathcal V}_2 \big(y_1,a_2; x_2(y_1,\cdot)\big),\nonumber\\
V_1(x_1,x_2) &:=\max\left\{
 \mathcal V_1 \big(\varnothing,1; x_1(\cdot)+ \beta \widehat V_2(\cdot;x_2)\big),
 \mathcal V_1 \big(\varnothing,2; x_1(\cdot)+ \beta V_2(\cdot;x_2)\big)
 \right\}.\label{eq:dynamic:V1}
\end{align}
where for each $t,h_t,a$ and continuation utility vector $z:Y\to\R$,
\begin{align}
\mathcal V_t(h_t,1;z)&:=\min_{r\in\Delta(Y)}
\Big\{\sum_{y\in Y}r(y)z(y)+\kappa_t(h_t,1;r)\Big\},\label{eq:dynamic:V-routine}\\
\mathcal V_t(h_t,2;z)&:=\min_{\pi\in\Delta(Q^{2}\times Y)}
\Big\{\sum_{(q,y)\in Q^{2}\times Y}\pi(q,y)z(y)+\kappa_t(h_t,2;\pi)\Big\}-k.\label{eq:dynamic:V-innov}
\end{align}

\medskip
\noindent\textbf{Local misspecification.}
This representation encodes the behavioral restrictions on what the scientist distrusts that are necessary under ARC. For the routine action, $q^1$ is the benchmark scenario generated by an established protocol, and the distortion $r$ captures robustness concerns around that familiar scenario. For innovation, the roadmap is instead a collection of scenarios $Q^2$ together with posterior likelihoods $\mu_t(\cdot|h_t)$. In the large-lab interpretation, these likelihoods are informed by the organization's historical data, experimental protocols, and simulation tools, so a scientist may regard $\mu_t(\cdot|h_t)$ as a calibrated aggregate summary of which roadmap scenario best fits the evidence observed so far. This is why our recursive formulation does not allow Nature to distort the marginal on $Q^2$: this is operationalized as  $\kappa_t(h_t,2;\pi)=+\infty$ unless $\pi_{Q^2}=\mu_t(\cdot|h_t)$.

At the same time, the scientist may still doubt the predictions made conditional on any particular scenario $q\in Q^2$. In materials-science experiments, a scenario is only a simplified representation of the underlying mechanism; it may omit impurities, synthesis conditions, or other features of the experiment. Thus, even conditional on $q$, the scientist may fear that the outcome distribution is a nearby distribution rather than exactly $q$. This is why Nature may distort the $Y$-conditional distribution within each scenario, with misspecification intensity $\lambda_t(h_t)>0$.\footnote{When $\lambda=0$, define $\frac{1}{\lambda}\DKL(\cdot\|f):=\mathbb{I}_{f}(\cdot)$, where $\mathbb{I}_{f}(g)=0$ if $g=f$; $\mathbb{I}_{f}(g)=+\infty$ if $g\neq f$.
The recursive construction still works because the resulting ambiguity index is
grounded, closed, and convex.} Thus, misspecification concern is assumed to be ``local'' in the following sense: the scientist trusts the lab's posterior over roadmap scenarios, but worries that each scenario is only an approximation of the real experiment. At the end of this analysis, we show that this assumption is a defining feature of ARC. 

\medskip
\noindent\textbf{KL decomposition.}
Fix $h_t$ and write $\mu:=\mu_t(\cdot|h_t)$ and $\pi^0(q,y):=\mu(q)\hspace{0.02in}q(y)$.
If a joint distribution $\pi\in\Delta(Q^{2}\times Y)$ satisfies $\pi_{Q^{2}}=\mu$, then the KL cost in \eqref{eq:dynamic:V-innov}
decomposes as 
\begin{align}\label{eq:dynamic:KL-decomp}
\DKL(\pi\|\pi^0)=\sum_{q\in Q^{2}} \mu(q)\hspace{0.02in}\DKL \big(\pi(\cdot|q)\|q(\cdot)\big),
\end{align}
where $\pi(\cdot|q)$ denotes the $Y$-conditional of $\pi$ given $q$.
 \eqref{eq:dynamic:KL-decomp} is the formal statement that the misspecification penalty is local:
it is an average of within-model likelihood distortions.

\medskip
\noindent\textbf{One-period-ahead ambiguity indices.}
To connect more closely to \citet{mmr06_JET}, it is convenient to express the one-step problems
\eqref{eq:dynamic:V-routine}--\eqref{eq:dynamic:V-innov} as minimizations over the
one-period-ahead marginals on the next observed outcome $y\in Y$. For $t\in\{1,2\}$, history $h_t$, and $r\in\Delta(Y)$, define the ``one-period-ahead
ambiguity index'' $\tau_t$ as follows:
\begin{itemize}[leftmargin=18pt]
\item If $a=1$, set
$\tau_t(h_t,1;r)\hspace{0.02in}:=\hspace{0.02in}\kappa_t(h_t,1;r)\hspace{0.02in}=\hspace{0.02in}\frac{1}{\lambda_t(h_t)}\hspace{0.02in}\DKL(r\|q^1).$
\item If $a=2$, set
$\tau_t(h_t,2;r)\hspace{0.02in}:=\hspace{0.02in}\inf\big\{\kappa_t(h_t,2;\pi): \pi\in\Delta(Q^{2}\times Y), \pi_Y=r\big\},$
where $\pi_Y$ is the marginal of $\pi$ on $Y$ (and feasibility already enforces $\pi_{Q^{2}}=\mu_t(\cdot|h_t)$).
\end{itemize}
Since $z(\cdot)$ depends only on the realized $y$, 
\eqref{eq:dynamic:V-routine}--\eqref{eq:dynamic:V-innov} can be equivalently written as
\begin{align}
\mathcal V_t(h_t,1;z)
&=\min_{r\in\Delta(Y)}\Big\{\sum_{y\in Y} r(y)\hspace{0.02in}z(y)\hspace{0.02in}+\hspace{0.02in}\tau_t(h_t,1;r)\Big\},\label{eq:dynamic:mmr-one-step-routine}\\
\mathcal V_t(h_t,2;z)
&=\min_{r\in\Delta(Y)}\Big\{\sum_{y\in Y} r(y)\hspace{0.02in}z(y)\hspace{0.02in}+\hspace{0.02in}\tau_t(h_t,2;r)\Big\}-k.\label{eq:dynamic:mmr-one-step-innov}
\end{align}
\begin{definition}[Dynamic action plan]\label{def:dynamic:plan}
An \emph{action plan} is $\psi=(a_1,\sigma_2)$ where $a_1\in A$ and
$\sigma_2:Y\to A$ is a continuation policy. We say that a contract $(x_1,x_2)$
\emph{subgame-perfectly implements} $\psi$ if there exists a subgame-perfect equilibrium in
which the agent chooses $a_1$ at $t=1$ and, after each history $y_1\in Y$, chooses
$a_2=\sigma_2(y_1)$ at $t=2$.
\end{definition}
After a plan $\psi=(a_1,\sigma_2)$ is fixed, $\tau_2(y_1,\sigma_2(y_1);\cdot)$ is understood as the index evaluated at the period-2 state induced by $a_1$: if $a_1=2$, it is the post-innovation state $(\mu_2(\cdot|y_1),\lambda_2(y_1))$, while if $a_1=1$, it is the unchanged state $(\mu_1,\lambda_1)$. Thus, $\{\tau_t(h_t,\psi_t(h_t);\cdot)\}^2_{t=1}$ is the ``one-period-ahead ambiguity index'' in \citet[][Proposition 2]{mmr06_JET}.

\medskip
\noindent\textbf{Induced two-period dynamic ambiguity index.}
Fix any history-dependent action plan $\psi=(a_1,\sigma_2)$. We
now specialize the recursive variational representation in \citet[][Theorem 1, eqs. (10)--(12),
and Theorem 2]{mmr06_JET} to our two-period tree
$\Omega$ defined above. For a payoff vector $(x_1,x_2)$, define the plan-induced payoffs
$$
u_1^\psi(y_1):=x_1(y_1)-k\1\{a_1=2\},
\qquad
u_2^\psi(y_1,y_2):=x_2(y_1,y_2)-k\1\{\sigma_2(y_1)=2\}.
$$
For any $p\in\Delta(\Omega)$, define
$r_1(y_1):=\sum_{y_2\in Y}p(y_1,y_2),$ $
r_2(y_1)(y_2):=\frac{p(y_1,y_2)}{r_1(y_1)}$ if $r_1(y_1)>0$, with $r_2(y_1)$ arbitrary when $r_1(y_1)=0$. We now derive the terminal, period-2, and period-1 ambiguity indices. At the terminal date, after the full outcome path $\omega=(y_1,y_2)$ is observed,
there is no remaining uncertainty, so the terminal ambiguity index $c_3^\psi$ is
\begin{align}
c_3^\psi(\omega,p)
:=
\begin{cases}
0, & \text{if }p=\delta_{\{\omega\}},\\
+\infty, & \text{otherwise,}
\end{cases}
\qquad \omega\in\Omega,\quad p\in\Delta(\Omega),
\label{eq:dynamic:c3}
\end{align}
where $\delta_{\{\omega\}}$ is the Dirac measure on $\omega$. At a period-2 history $y_1$, the
continuation distribution is identified with $r_2\in\Delta(Y)$, and the period-2 ambiguity index $c_2^\psi$ is
\begin{align}
c_2^\psi(y_1,r_2)
&:=
\beta\sum_{y_2\in Y}r_2(y_2)c_3^\psi\big((y_1,y_2),d_{(y_1,y_2)}\big)
+\tau_2\big(y_1,\sigma_2(y_1);r_2\big) \nonumber\\
&=
\tau_2\big(y_1,\sigma_2(y_1);r_2\big).
\label{eq:dynamic:c2}
\end{align}
At the initial node, the period-1 ambiguity index $c_1^\psi$ over two-period distributions is
\begin{align}
c_1^\psi(p)
&:=
\beta
\sum_{\substack{y_1\in Y:\\ r_1(y_1)>0}}
r_1(y_1)c_2^\psi\big(y_1,r_2(y_1)\big)
+\tau_1(\varnothing,a_1;r_1) \nonumber\\
&=
\beta
\sum_{\substack{y_1\in Y:\\ r_1(y_1)>0}}
r_1(y_1)\tau_2\big(y_1,\sigma_2(y_1);r_2(y_1)\big)
+\tau_1(\varnothing,a_1;r_1).
\label{eq:dynamic:c1}
\end{align}

Notice that \eqref{eq:dynamic:c2}--\eqref{eq:dynamic:c1} are the two-period ``no-gain
condition'' in \citet[][eq. (11)]{mmr06_JET}: the cost of a distribution over future histories is the
current one-period-ahead cost plus the discounted expected continuation cost. The dynamic variational representation, corresponding to \citet[][eq. (10)]{mmr06_JET},
becomes
\begin{align}
W_2^\psi(y_1;x_2)
&:=
\min_{r_2\in\Delta(Y)}
\left\{
\sum_{y_2\in Y}r_2(y_2)u_2^\psi(y_1,y_2)
+c_2^\psi(y_1,r_2)
\right\},
\label{eq:dynamic:W2-full}\\
W_1^\psi(x_1,x_2)
&:=
\min_{p\in\Delta(\Omega)}
\left\{
\sum_{(y_1,y_2)\in\Omega}p(y_1,y_2)
\big[u_1^\psi(y_1)+\beta u_2^\psi(y_1,y_2)\big]
+c_1^\psi(p)
\right\}.
\label{eq:dynamic:W1-full}
\end{align}
Using \eqref{eq:dynamic:c1}, the period-1 representation \eqref{eq:dynamic:W1-full} can be
written recursively as
\begin{align}
W_1^\psi(x_1,x_2)
=
\min_{r_1\in\Delta(Y)}
\left\{
\sum_{y_1\in Y}r_1(y_1)
\big[u_1^\psi(y_1)+\beta W_2^\psi(y_1;x_2)\big]
+\tau_1(\varnothing,a_1;r_1)
\right\},
\label{eq:dynamic:W1-recursive}
\end{align}
which is the two-period version of \citet[][eq. (12)]{mmr06_JET}. Using \eqref{eq:dynamic:W2-full} and \eqref{eq:dynamic:W1-recursive}, and then plugging $u_1^\psi$ and $u_2^\psi$ gives
$W_1^\psi(x_1,x_2)
=
\mathcal V_1\big(\varnothing,a_1;x_1(\cdot)+\beta W_2^\psi(\cdot;x_2)\big)$ and
\begin{align*}
W_2^\psi(y_1;x_2)=\begin{cases}
    \mathcal V_2\big(y_1,\sigma_2(y_1);x_2(y_1,\cdot)\big) &\text{if }a_1=2\\
    \widehat{\mathcal V}_2\big(y_1,\sigma_2(y_1);x_2(y_1,\cdot)\big)&\text{if }a_1=1.
\end{cases}
\end{align*}
Thus, the recursive variational representation generated by
$c_3^\psi,c_2^\psi,c_1^\psi$ coincides with the node-by-node continuation values used in
\eqref{eq:dynamic:V2}--\eqref{eq:dynamic:V1}. This is formalized in the next result.

\begin{proposition}\label{prop:dynamic:mmr-recursive}
Fix any history-dependent action plan $\psi=(a_1,\sigma_2)$.
Consider the induced preference over two-period payoff processes $(x_1,x_2)$ obtained by evaluating continuation payoff vectors
node-by-node using: $\mathcal V_1(\varnothing,a_1;\cdot)$ at the initial node and, at each period-2 history $y_1$, using
$\mathcal V_2(y_1,\sigma_2(y_1);\cdot)$ if $a_1=2$ and
$\widehat{\mathcal V}_2(y_1,\sigma_2(y_1);\cdot)$ if $a_1=1$.
Then, this plan-induced preference is a recursive variational preference in the sense of \citet[][Theorem 2]{mmr06_JET},
with one-period-ahead ambiguity indices $\{\tau_t(h_t,\psi_t(h_t);\cdot)\}_{t=1}^2$ given in
\eqref{eq:dynamic:mmr-one-step-routine}--\eqref{eq:dynamic:mmr-one-step-innov}.
In particular, it is dynamically consistent in the sense of \citet[][Theorem 1]{mmr06_JET}.
Consequently, the optimal continuation values in \eqref{eq:dynamic:V2}--\eqref{eq:dynamic:V1} are obtained by backward induction
from these dynamically consistent conditional preferences.
\end{proposition}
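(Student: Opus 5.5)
The plan is to check directly that the node-by-node construction matches the recursive variational structure of \citet[][Theorems 1--2]{mmr06_JET}. That structure needs two things: (a) at each node, a one-period-ahead ambiguity index that is grounded, convex and lower semicontinuous on $\Delta(Y)$; and (b) a dynamic index $c_1^\psi$ built from these local indices through the no-gain (chain-rule) recursion. Once (a) and (b) hold, dynamic consistency and backward-induction optimality follow from their theorems.

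\textbf{Step 1: the one-step indices.} For $a=1$, the index $\tau_t(h_t,1;r)=\frac{1}{\lambda_t}\DKL(r\|q^1)$ is convex and lsc. It is grounded because it is nonnegative and vanishes at $r=q^1$. When $\lambda=0$ we use the indicator convention from the footnote.

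For $a=2$, $\tau_t(h_t,2;\cdot)$ is the image of the convex lsc map $\pi\mapsto\kappa_t(h_t,2;\pi)$ under the linear marginal map $\pi\mapsto\pi_Y$. The domain $\Delta(Q^2\times Y)$ is compact, so the infimum is attained, and the resulting function is convex and lsc. It is grounded at $r=\bar q_{\mu}$, since $\pi^0$ is feasible with zero cost. Using the KL decomposition \eqref{eq:dynamic:KL-decomp}, I would also verify that minimizing over $\pi$ in \eqref{eq:dynamic:V-innov} equals minimizing over $r$ and then over $\pi$ with $\pi_Y=r$. This is just an interchange of infima, which gives \eqref{eq:dynamic:mmr-one-step-innov}. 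As a sanity check, it returns the ARC value $\mathcal G(z;\mu,\lambda)-k$ via the Donsker--Varadhan formula \eqref{eq:dynamic:gq} applied model by model.

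\textbf{Step 2: recursion.} Take $c_3^\psi$, $c_2^\psi$ and $c_1^\psi$ as in \eqref{eq:dynamic:c3}--\eqref{eq:dynamic:c1}; these satisfy the no-gain condition by construction. Then I would show \eqref{eq:dynamic:W1-full} equals \eqref{eq:dynamic:W1-recursive}. Write $p\in\Delta(\Omega)$ as the pair $(r_1,r_2(\cdot))$. This parametrization is bijective up to the choice of $r_2(y_1)$ on null $y_1$, and those choices do not affect either objective. The joint minimum then splits: first minimize over $r_2(y_1)$ separately for each $y_1$, which yields $W_2^\psi(y_1;x_2)$, then minimize over $r_1$.

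The action costs $-k\1\{\cdot\}$ are constants inside each node, so they pass through the minimizations. Substituting gives $W_1^\psi=\mathcal V_1(\varnothing,a_1;x_1+\beta W_2^\psi)$, with $W_2^\psi$ equal to $\mathcal V_2$ or $\widehat{\mathcal V}_2$ according to $a_1$.

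\textbf{Step 3: conclusion.} Invoking \citet[][Theorem 2]{mmr06_JET}, the family $\{W_t^\psi\}$ is a recursive variational preference, and by their Theorem 1 it is dynamically consistent. Taking the maximum over $\sigma_2$ at each $y_1$ and then over $a_1$ reproduces \eqref{eq:dynamic:V2}--\eqref{eq:dynamic:V1}; this is backward induction, justified by consistency.

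\textbf{Main obstacle.} I expect the difficulty to lie in the plan-dependence. In MMR the indices are fixed primitives, whereas here $\tau_2$ depends on $a_1$ through the state $(\mu_2,\lambda_2)$ versus $(\mu_1,\lambda_1)$. My approach is to fix $\psi$ first, so each plan induces its own legitimate MMR preference. Optimality is then a comparison across finitely many plans, and at each node it reduces to the local max; this requires showing that the period-2 choice affects only $W_2$ at that node.
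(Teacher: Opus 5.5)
Your proposal is correct and takes essentially the same route as the paper. Both verify that each plan-induced one-step index is grounded, convex and lower semicontinuous, with the interchange of infima reducing \eqref{eq:dynamic:V-innov} to \eqref{eq:dynamic:mmr-one-step-innov}; assemble $c_3^\psi,c_2^\psi,c_1^\psi$ through the no-gain recursion; invoke \citet[][Theorems 1--2]{mmr06_JET} for each fixed $\psi$; and recover \eqref{eq:dynamic:V2}--\eqref{eq:dynamic:V1} by node-by-node maximization. Two points you leave implicit, which the paper states explicitly: the domain condition that each index is finite at a full-support point (here $q^1$ and $\bar q_\mu$), and the monotonicity of $\mathcal V_1(\varnothing,a_1;\cdot)$, which justifies replacing the choice of $\sigma_2$ by pointwise maximization of $W_2^\psi$.
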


The local-misspecification restriction $\pi_{Q^{2}}=\mu_t(\cdot|h_t)$ is needed for Proposition
\ref{prop:dynamic:mmr-recursive}. It ensures that the one-step innovation value is
the ARC value in \citet[][eq. (35)]{hansenmiss25}. To see this, fix $h_t$ and let
$\mu:=\mu_t(\cdot|h_t)$ and $\pi^0(q,y):=\mu(q)q(y)$, then
\begin{align*}
\min_{\substack{\pi\in\Delta(Q^{2}\times Y):\\ \pi_{Q^{2}}=\mu}}
\Big\{
\sum_{(q,y)\in Q^{2}\times Y}\pi(q,y)z(y)
+\frac{1}{\lambda_t(h_t)}\DKL(\pi\|\pi^0)
\Big\}
=
\sum_{q\in Q^{2}}\mu(q)g_q(z;\lambda_t(h_t))
=
\mathcal G(z;\mu,\lambda_t(h_t)),
\end{align*}
which is $\mathcal G$ in \eqref{eq:dynamic:G}. The equality follows from the KL decomposition in \eqref{eq:dynamic:KL-decomp} and the
pointwise minimization over the $Y$-conditional distribution within each model $q$. If Nature were allowed to distort the marginal on $Q^{2}$, then the
one-step value would become
\begin{align*}
\min_{\pi\in\Delta(Q^{2}\times Y)}
\Big\{
\sum_{(q,y)\in Q^{2}\times Y}\pi(q,y)z(y)
+\frac{1}{\lambda_t(h_t)}\DKL(\pi\|\pi^0)
\Big\}
&=
-\frac{1}{\lambda_t(h_t)}
\log\Big(
\sum_{q\in Q^{2}}\mu(q)\sum_{y\in Y}q(y)e^{-\lambda_t(h_t)z(y)}
\Big)\\
&=
g_{\bar q_\mu}(z;\lambda_t(h_t)),
\end{align*}
where $\bar q_\mu:=\sum_{q\in Q^{2}}\mu(q)q$. However, in general,
$\sum_{q\in Q^{2}}\mu(q)g_q(z;\lambda_t(h_t))
\neq
g_{\bar q_\mu}(z;\lambda_t(h_t)),$
because of log. Thus, allowing Nature to distort $\mu$ would collapse
the two-stage ARC evaluation into a single multiplier-like preference around the
mixture $\bar q_\mu$. Such a preference may be variational, but it is no longer represented by the ARC
criterion in \eqref{eq:dynamic:G}. Thus, the one-period-ahead ambiguity indices
and the no-gain recursion used in Proposition \ref{prop:dynamic:mmr-recursive} would
represent a different dynamic preference. Local misspecification is therefore the restriction
that keeps Bayesian updating over models separate from within-model
misspecification and makes the recursive variational representation coincide with ARC.

\subsection{Subgame-perfect implementation}\label{subsec:dynamic:policies}

Since the contract can condition on the intermediate outcome $y_1$, the $t=2$ decision problems
separate across histories. The relevant period-2 state, however, depends on the first-period action
induced by the plan. Fix an action plan $\psi=(a_1,\sigma_2)$ and define its plan-induced state after
$y_1$, $\big(\mu_2^\psi(\cdot|y_1),\lambda_2^\psi(y_1)\big)$, as follows
\begin{align*}
\big(\mu_2^\psi(\cdot|y_1),\lambda_2^\psi(y_1)\big):=
\begin{cases}
\big(\mu_2(\cdot|y_1),\lambda_2(y_1)\big), & \text{if }a_1=2,\\
(\mu_1,\lambda_1), & \text{if }a_1=1.
\end{cases}
\end{align*}
All statements below use the EU-limit convention when $\lambda_2^\psi(y_1)=0$.

\begin{proposition}\label{prop:dynamic:terminal-impl}
Fix a plan $\psi=(a_1,\sigma_2)$ and a dynamic contract $(x_1,x_2)$. At $t=2$, after
$y_1$, $a_2=2$ is weakly optimal at the plan-induced state
$\big(\mu_2^\psi(\cdot|y_1),\lambda_2^\psi(y_1)\big)$ if and only if
${M} \big(\mu_2^\psi(\cdot|y_1),\lambda_2^\psi(y_1); x_2(y_1,\cdot)\big)\ge k.$
Thus, conditional on $\psi$, $a_2=2$ is implementable after $y_1$ if and only if there exists a
finite vector $x:Y\to\R$ such that
${M}(\mu_2^\psi(\cdot|y_1),\lambda_2^\psi(y_1);x)\ge k.$
Equivalently, either
$C(\mu_2^\psi(\cdot|y_1),\lambda_2^\psi(y_1))>k,$
or
$C(\mu_2^\psi(\cdot|y_1),\lambda_2^\psi(y_1))=k$
and the supremum defining $C(\cdot,\cdot)$ in \eqref{eq:dynamic:Delta-C} is attained.
\end{proposition}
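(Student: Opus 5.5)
We prove Proposition \ref{prop:dynamic:terminal-impl} by backward induction at the terminal node, using the recursive representation of Proposition \ref{prop:dynamic:mmr-recursive}. The period-2 subproblems separate across histories, since $x_2(y_1,\cdot)$ is an arbitrary vector in $\R^Y$ for each $y_1$. Fix $\psi$ and $y_1$, and write $(\mu,\lambda)$ for the plan-induced state and $z:=x_2(y_1,\cdot)$.

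By \eqref{eq:dynamic:V-routine}--\eqref{eq:dynamic:V-innov} and the ARC identity derived after Proposition \ref{prop:dynamic:mmr-recursive}, we have
\begin{align*}
\mathcal V_2(y_1,1;z)=g_{q^1}(z;\lambda),\qquad \mathcal V_2(y_1,2;z)=\mathcal G(z;\mu,\lambda)-k.
\end{align*}
For routine, this follows from \eqref{eq:dynamic:gq}. For innovation, it follows from the KL decomposition \eqref{eq:dynamic:KL-decomp} together with the constraint $\pi_{Q^2}=\mu$. When $\lambda=0$, the indicator convention in the footnote forces $r=q^1$ and $\pi=\pi^0$, so the two values reduce to $\E_{q^1}[z]$ and $\E_{\bar q_\mu}[z]-k$. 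This matches the EU-limit extension of $g$ and $\mathcal G$ in Section \ref{sec:eu}.

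Weak optimality of $a_2=2$ in \eqref{eq:dynamic:V2} means $\mathcal V_2(y_1,2;z)\ge \mathcal V_2(y_1,1;z)$. This is equivalent to $M(\mu,\lambda;z)=\mathcal G(z;\mu,\lambda)-g_{q^1}(z;\lambda)\ge k$, which gives the first claim.

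For implementability conditional on $\psi$, the two directions are as follows. If some finite $x$ satisfies $M(\mu,\lambda;x)\ge k$, set $x_2(y_1,\cdot)=x$. By the first part, $a_2=2$ is then a best response at that node, so the SPNE may select it there. The choices of $x_1$ and $x_2$ at other histories are unconstrained by this node, so the node is implementable conditional on $\psi$. Conversely, implementation requires $a_2=2$ to be weakly optimal, hence such a finite $x$ exists.

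The equivalence with $C=\sup_x M$ is elementary. Existence of $x$ with $M\ge k$ gives $C\ge k$; if $C=k$, that $x$ attains the supremum. Conversely, if $C>k$, the definition of the supremum gives some $x$ with $M(x)>k$. If $C=k$ and the supremum is attained at $x^\ast$, then $M(x^\ast)=k$.

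The only delicate point, and the one I expect to need care, is the identification of $\mathcal V_2(y_1,2;\cdot)$ with the ARC value. This needs the minimization over joint $\pi$ with a fixed marginal to decompose model by model, each piece yielding the Donsker--Varadhan/Dupuis--Ellis closed form. Minor care is also needed to handle the $\lambda=0$ boundary consistently, and to note that when $a_1=1$ one uses $\widehat{\mathcal V}_2$ at state $(\mu_1,\lambda_1)$, which is the same computation.
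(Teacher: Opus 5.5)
Your proposal is correct and follows essentially the same route as the paper. Like the paper, you identify the routine value with $g_{q^1}(z;\lambda)$ through the Dupuis--Ellis variational formula, and the innovation value with $\mathcal G(z;\mu,\lambda)-k$ through the KL decomposition under the fixed $Q^{2}$-marginal; you then read off the $C$-characterization from the definition of the supremum. Your explicit handling of $\lambda=0$ via the indicator convention, and of the two attainment cases, simply spells out what the paper compresses into its EU-limit convention and a one-line appeal to the definition of $C$.
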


The next bound and limit are convenient sufficient conditions for node-by-node feasibility in
Proposition \ref{prop:dynamic:terminal-impl}.

\begin{obs}\label{prop:dynamic:capacity-lb}
Fix $\lambda>0$ and $\mu\in\Delta(Q^{2})$, where $\mu(q^2_H)=m$ and $\mu(q^2_L)=1-m$. Then,
$C(\mu,\lambda) \ge \frac{1}{\lambda}\big(m\log \frac{p}{\theta_H}+(1-m)\log \frac{p}{\theta_L}\big).$
\end{obs}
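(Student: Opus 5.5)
Since $C(\mu,\lambda)$ is a supremum of $M(\mu,\lambda;x)$ over $x\in\R^Y$, it suffices to exhibit a sequence of contracts along which the incentive gap converges to the stated bound. I will use the failure-rewarding direction of Figure \ref{fig:dynamic:capacity-failure}: set $x(1)=0$ and $x(0)=\varDelta$, then send $\varDelta\to-\infty$ after the change of sign described below. The target expression $\frac{1}{\lambda}\big(m\log\frac{p}{\theta_H}+(1-m)\log\frac{p}{\theta_L}\big)$ equals $\lambda^-(\mu)k/\lambda$. This suggests it is the limit of $M$ when the payoff after failure is pushed to $+\infty$ relative to the payoff after success.

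First, I write $M$ in closed form using \eqref{eq:dynamic:gq}--\eqref{eq:dynamic:G}. With $x=(x(0),x(1))=(\varDelta,0)$,
\begin{align*}
M(\mu,\lambda;x)=\sum_{q\in Q^{2}}\mu(q)\Big[-\frac{1}{\lambda}\log\frac{q(1)+(1-q(1))e^{-\lambda\varDelta}}{p+(1-p)e^{-\lambda\varDelta}}\Big].
\end{align*}
This uses the fact that $g_q(x;\lambda)-g_{q^1}(x;\lambda)$ is $-\frac1\lambda$ times the log of the ratio of the two moment generating sums.

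Second, I let $\varDelta\to+\infty$. Each exponential term $e^{-\lambda\varDelta}$ vanishes because $\lambda>0$. Each bracket therefore converges to $-\frac1\lambda\log\frac{q(1)}{p}=\frac1\lambda\log\frac{p}{q(1)}$. Taking the $\mu$-average with $q(1)\in\{\theta_H,\theta_L\}$ and weights $m,1-m$ gives exactly the stated expression as a limit of values of $M$. Because $C$ is the supremum, it is at least any such limit, which proves the bound. Finiteness of each $x$ along the sequence is automatic, and the limit can be exchanged with the finite sum trivially.

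There is no substantive obstacle; the only care needed is the sign convention. Rewarding failure means $x(0)\to+\infty$, so it is the success term $e^{-\lambda x(1)}$ relative to $e^{-\lambda x(0)}$ that dominates inside each log. I will double-check this by factoring $e^{-\lambda x(1)}$ out of numerator and denominator before taking limits. Optionally, I can note the symmetric bound with $\log\frac{1-p}{1-\theta}$, obtained by rewarding success. That bound was used in the illustration of Figure \ref{fig:dynamic:capacity}, but the observation only requires the failure version.
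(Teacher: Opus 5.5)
Your proof is correct and is essentially the paper's argument. The paper takes $x^T(0)=0$, $x^T(1)=-T$ with $T\to\infty$, which is a constant translate of your $(x(0),x(1))=(\varDelta,0)$ with $\varDelta\to+\infty$ and yields the same model-by-model limit $\frac{1}{\lambda}\log\frac{p}{q(1)}$. Delete the stray ``$\varDelta\to-\infty$'' in your first paragraph: the limit you actually take, correctly, is $\varDelta\to+\infty$.
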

The next result shows that continued innovation is implementable after a failure. It constitutes the mirror image of Theorem \ref{thm:dynamic:trap-scaleup} that is illustrated in Figure \ref{fig:dynamic:capacity-failure}.
\begin{corollary}\label{cor:dynamic:failure-impl}
Let $(\mu_{\mathfrak f},\lambda_{\mathfrak f})
:=(\mu_2(\cdot|y_1=0),\lambda_2(0))$. Then,
$\lim_{\theta_L\downarrow0}C(\mu_{\mathfrak f},\lambda_{\mathfrak f})=+\infty.$
In particular, for any fixed $k>0$, there exists $\bar\theta_L\in(0,p)$ such that for every
$\theta_L\in(0,\bar\theta_L)$,
$C(\mu_{\mathfrak f},\lambda_{\mathfrak f})>k,$
so $a_2=2$ is implementable after $y_1=0$.
\end{corollary}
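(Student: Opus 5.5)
The plan is to combine the lower bound of Observation \ref{prop:dynamic:capacity-lb} with the closed forms of Lemma \ref{lem:dynamic:lambda-closed} and the Bayes update \eqref{eq:dynamic:mu-update}, and then let $\theta_L\downarrow 0$.

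\textbf{Step 1 (closed forms after failure).} By Lemma \ref{lem:dynamic:lambda-closed}, $\lambda_{\mathfrak f}=-\frac{1}{\gamma}\log(1-\theta_L)$. This is strictly positive for every $\theta_L\in(0,p)$ and satisfies $\lambda_{\mathfrak f}\downarrow 0$ as $\theta_L\downarrow 0$. By \eqref{eq:dynamic:mu-update},
$$\mu_{\mathfrak f}(q^2_H)=\frac{m(1-\theta_H)}{m(1-\theta_H)+(1-m)(1-\theta_L)}.$$
This is bounded away from $1$ uniformly in $\theta_L\in(0,p)$, since $1-\theta_L\ge 1-p>0$; it converges to $\frac{m(1-\theta_H)}{m(1-\theta_H)+1-m}<1$. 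Hence $1-\mu_{\mathfrak f}(q^2_H)\ge c>0$ for some constant $c$ independent of $\theta_L$.

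\textbf{Step 2 (apply the lower bound).} Since $\lambda_{\mathfrak f}>0$, Observation \ref{prop:dynamic:capacity-lb} applied at $(\mu_{\mathfrak f},\lambda_{\mathfrak f})$ gives
$$C(\mu_{\mathfrak f},\lambda_{\mathfrak f})\ge \frac{1}{\lambda_{\mathfrak f}}\Big(\mu_{\mathfrak f}(q^2_H)\log\frac{p}{\theta_H}+(1-\mu_{\mathfrak f}(q^2_H))\log\frac{p}{\theta_L}\Big).$$
In the bracket, the first term is bounded below by $-\log(\theta_H/p)$, which is finite and negative. The second term is at least $c\log(p/\theta_L)\to+\infty$. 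So the bracket tends to $+\infty$, and in particular it is positive for small $\theta_L$. Dividing a quantity tending to $+\infty$ by $\lambda_{\mathfrak f}\downarrow 0$ only reinforces the divergence. Indeed, for $\theta_L$ small, $\lambda_{\mathfrak f}\le 2\theta_L/\gamma$, so the lower bound is at least of order $\frac{\gamma c}{2\theta_L}\log\frac{p}{\theta_L}\to+\infty$. This proves $\lim_{\theta_L\downarrow 0}C(\mu_{\mathfrak f},\lambda_{\mathfrak f})=+\infty$.

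\textbf{Step 3 (implementability).} Given $k>0$, choose $\bar\theta_L\in(0,p)$ such that the lower bound exceeds $k$ for all $\theta_L<\bar\theta_L$. Then $C(\mu_{\mathfrak f},\lambda_{\mathfrak f})>k$. By Proposition \ref{prop:dynamic:terminal-impl}, strict inequality suffices without any attainment issue, so $a_2=2$ is implementable after $(a_1=2,y_1=0)$.

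The main point requiring care is uniformity in Step 2. Both the posterior and $\lambda_{\mathfrak f}$ move with $\theta_L$, so I need the weight $1-\mu_{\mathfrak f}(q^2_H)$ on the exploding term $\log(p/\theta_L)$ to stay bounded below. Step 1 supplies exactly this bound. I also need the bracket to be positive before dividing by the vanishing $\lambda_{\mathfrak f}$, so that the division does not flip the sign of the bound.
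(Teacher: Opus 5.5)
Your proposal is correct and follows the paper's proof essentially step for step. Both arguments apply the lower bound of Observation~\ref{prop:dynamic:capacity-lb} at the post-failure state, control the negative $\log(p/\theta_H)$ term via $\mu_{\mathfrak f}(q^2_H)\le 1$, use $\lambda_{\mathfrak f}\le 2\theta_L/\gamma$, and conclude with Proposition~\ref{prop:dynamic:terminal-impl}. The only minor difference is that you bound $1-\mu_{\mathfrak f}(q^2_H)$ away from zero uniformly on $(0,p)$ using $1-\theta_L\ge 1-p$, whereas the paper gets this bound only for small $\theta_L$ from the limit $m_F(0)<1$.
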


\subsection{First-period implementation}\label{subsec:dynamic:optimal}

Here, we complete the two-period implementation analysis by characterizing the agent's $t=1$ best response given an arbitrary continuation contract. 

\begin{proposition}\label{prop:dynamic:first-period-ic}
Fix a dynamic contract $(x_1,x_2)$ and the agent's discount rate $\beta\in(0,1]$. At $t=1$, $a_1=2$ is weakly optimal if and only if
\begin{align}\label{eq:dynamic:first-period-IC}
\mathcal G\big(x_1(\cdot)+\beta V_2(\cdot;x_2);\mu_1,\lambda_1\big)-g_{q^1}\big(x_1(\cdot)+\beta \widehat V_2(\cdot;x_2);\lambda_1\big)
\ge
k.
\end{align}
\end{proposition}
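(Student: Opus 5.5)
This is a direct comparison of the two first-period values in the definition \eqref{eq:dynamic:V1}. Since $V_1(x_1,x_2)$ is the maximum of two branch values, $a_1=2$ is weakly optimal exactly when the innovation branch is at least the routine branch:
\[
\mathcal V_1\big(\varnothing,2;x_1(\cdot)+\beta V_2(\cdot;x_2)\big)\ \ge\ \mathcal V_1\big(\varnothing,1;x_1(\cdot)+\beta \widehat V_2(\cdot;x_2)\big).
\]
Proposition \ref{prop:dynamic:mmr-recursive} justifies reading this as weak optimality of $a_1=2$ in the dynamic game. Under either branch the continuation behaviour at $t=2$ is the agent's optimal, subgame-perfect response. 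The dynamically consistent recursive structure guarantees that the correct continuation vector is $V_2$ when the period-2 state is the post-innovation state and $\widehat V_2$ when the period-2 state is unchanged.

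The remaining work is to identify each side with a closed form. I will take the continuation utility vectors $z^I(y_1):=x_1(y_1)+\beta V_2(y_1;x_2)$ and $z^R(y_1):=x_1(y_1)+\beta\widehat V_2(y_1;x_2)$. Both are finite real vectors on $Y$, because $V_2$ and $\widehat V_2$ are maxima of two finite one-step values. For the routine side, \eqref{eq:dynamic:V-routine} with $\kappa_1(\varnothing,1;r)=\frac{1}{\lambda_1}\DKL(r\|q^1)$ is exactly the variational formula \eqref{eq:dynamic:gq}, so $\mathcal V_1(\varnothing,1;z^R)=g_{q^1}(z^R;\lambda_1)$. For the innovation side, \eqref{eq:dynamic:V-innov} forces $\pi_{Q^2}=\mu_1$, and the KL decomposition \eqref{eq:dynamic:KL-decomp} splits the minimization into separate within-model problems. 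This is the identity displayed after Proposition \ref{prop:dynamic:mmr-recursive}, and it gives $\mathcal V_1(\varnothing,2;z^I)=\mathcal G(z^I;\mu_1,\lambda_1)-k$. Substituting both into the branch comparison and rearranging yields \eqref{eq:dynamic:first-period-IC}. Each step is an equivalence, so the ``if and only if'' follows.

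The main point needing care is the boundary case $\lambda_1=0$, which the paper allows ($\lambda_1\ge 0$). There the penalties are the indicator functions from the footnote on the recursive construction, so the minimizations collapse to $\E_{q^1}[z^R]$ and $\sum_q\mu_1(q)\E_q[z^I]$. These coincide with the EU-limit conventions for $g_{q^1}(\cdot;0)$ and $\mathcal G(\cdot;\mu_1,0)$ from Section \ref{sec:eu}, so the same statement holds.

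A second point to check is the asymmetry between $V_2$ and $\widehat V_2$. Because the period-2 state depends on $a_1$, the two branches integrate against different continuation vectors. I will simply carry this asymmetry through and not try to combine the two terms into a single incentive gap $M$. This is why \eqref{eq:dynamic:first-period-IC} is written as a difference of two criteria evaluated at different arguments.
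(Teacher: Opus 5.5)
Your proposal is correct and follows essentially the same route as the paper: compare the two branches of \eqref{eq:dynamic:V1}, then use the ARC identities $\mathcal V_1(\varnothing,2;z)=\mathcal G(z;\mu_1,\lambda_1)-k$ and $\mathcal V_1(\varnothing,1;z)=g_{q^1}(z;\lambda_1)$. Your explicit check of the $\lambda_1=0$ case via the indicator-penalty convention, and of the finiteness of $V_2$ and $\widehat V_2$, are details the paper leaves implicit.
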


\section{Appendix: Proofs for Main Text}\label{app:proofs}

This appendix collects proofs of main results; see Appendix \ref{sec:appproof} for all other proofs.

\subsection{Proof of Lemma \ref{lem:dynamic:lambda-closed}}
\begin{proof} By Assumption \ref{ass:dynamic:llr},
$\lambda_2(y_1)=-\frac{1}{\gamma}\log\underset{q\in Q^{2}}{\text{\normalfont max}}\hspace{0.04in}q(y_1).$
If $y_1=1$, then $\underset{q\in Q^{2}}{\text{\normalfont max}}\hspace{0.04in}q(1)=\max\{\theta_L,\theta_H\}=\theta_H$, hence
$\lambda_2(1)=-\log(\theta_H)/\gamma$.
If $y_1=0$, then $\underset{q\in Q^{2}}{\text{\normalfont max}}\hspace{0.04in}q(0)=\max\{1-\theta_L,1-\theta_H\}=1-\theta_L$, hence
$\lambda_2(0)=-\log(1-\theta_L)/\gamma$.
\end{proof}

\subsection{Proof of Theorem \ref{thm:dynamic:trap-scaleup} and Corollary \ref{cor:dynamic:success-failure-reversal}}
\begin{lemma}[Translation invariance]\label{lem:translation}
For $\zeta\in\R$, define $T_\zeta(x):=x-\zeta\cdot\mathbf{1}$, where $\mathbf{1}$ is the vector of
ones in $\R^Y$. Then, $\forall x\in\R^Y,\mu\in\Delta(Q^{2}),\lambda>0$, $\mathcal{G}(T_\zeta(x);\mu,\lambda) = \mathcal{G}(x;\mu,\lambda) - \zeta.$
\end{lemma}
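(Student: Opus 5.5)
The plan is to verify translation invariance model by model: first for each multiplier criterion $g_q$, then for the $\mu$-average $\mathcal G$. Fix $x\in\R^Y$, $\zeta\in\R$, $\lambda>0$ and $q\in\Delta(Y)$.

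The first step is to use the closed form in \eqref{eq:dynamic:gq}. Since $e^{-\lambda(x(y)-\zeta)}=e^{\lambda\zeta}e^{-\lambda x(y)}$ for every $y$, the factor $e^{\lambda\zeta}$ comes out of the sum. Taking logarithms then gives
\begin{align*}
g_q(T_\zeta(x);\lambda)=-\frac{1}{\lambda}\Big(\lambda\zeta+\log\sum_{y\in Y}q(y)e^{-\lambda x(y)}\Big)=g_q(x;\lambda)-\zeta .
\end{align*}
The same conclusion follows from the variational side of \eqref{eq:dynamic:gq}. For every $r\in\Delta(Y)$ we have $\E_r[x-\zeta\mathbf 1]=\E_r[x]-\zeta$, because $r$ sums to one. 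The KL penalty does not depend on $x$. Hence the objective shifts by $-\zeta$ uniformly in $r$, and so does its minimum.

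The second step is to pass to $\mathcal G$ using its definition \eqref{eq:dynamic:G}:
\begin{align*}
\mathcal G(T_\zeta(x);\mu,\lambda)=\sum_{q\in Q^{2}}\mu(q)\big(g_q(x;\lambda)-\zeta\big)=\mathcal G(x;\mu,\lambda)-\zeta .
\end{align*}
This uses only $\sum_{q}\mu(q)=1$. If needed, the EU-limit convention at $\lambda=0$ gives the same identity directly by linearity of expectation.

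No step is a real obstacle. The only point to check is that the log-sum-exp expression is well defined, and it is: every term $q(y)e^{-\lambda x(y)}$ is positive, so the sum is strictly positive.
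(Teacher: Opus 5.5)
Your proof is correct and matches the paper's argument. The paper likewise factors $e^{\lambda\zeta}$ out of the sum to get $g_q(T_\zeta(x);\lambda)=g_q(x;\lambda)-\zeta$ for each model $q$, then averages over $\mu$ using $\sum_q\mu(q)=1$; your variational remark is just an extra check of that per-model step.
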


\begin{lemma}\label{lem:dynamic:two-point}
Fix $\lambda>0$ and Bernoulli distributions $q(\zeta),q(\vartheta)\in\Delta( Y)$ with $\zeta,\vartheta\in(0,1)$.
Then,
$\sup_{x: Y\to\R}\big(g_{q(\zeta)}(x;\lambda)-g_{q(\vartheta)}(x;\lambda)\big)
=\frac{1}{\lambda}\log \max\big\{\frac{\vartheta}{\zeta},\frac{1-\vartheta}{1-\zeta}\big\}.$
\end{lemma}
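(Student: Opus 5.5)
We prove the two-point formula by reducing the supremum to a one-dimensional problem and observing that the objective is monotone.

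\emph{Normalization.} The argument behind Lemma \ref{lem:translation} applies verbatim to each single-model criterion: $g_q(x-\zeta\mathbf 1;\lambda)=g_q(x;\lambda)-\zeta$. The constant $\zeta$ therefore cancels in the difference $g_{q(\zeta)}-g_{q(\vartheta)}$. Hence we may normalize $x(0)=0$ and write $x(1)=\Delta\in\R$. The supremum over $x\in\R^Y$ then equals the supremum over $\Delta\in\R$.

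\emph{Closed form.} Put $s:=e^{-\lambda\Delta}$, which ranges over $(0,\infty)$ as $\Delta$ ranges over $\R$. Using the closed form in \eqref{eq:dynamic:gq}, the objective becomes
\begin{align*}
g_{q(\zeta)}(x;\lambda)-g_{q(\vartheta)}(x;\lambda)=\frac{1}{\lambda}\log\phi(s),\qquad \phi(s):=\frac{1-\vartheta+\vartheta s}{1-\zeta+\zeta s}.
\end{align*}
Since $\log$ is increasing and $\lambda>0$, it suffices to compute $\sup_{s>0}\phi(s)$.

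\emph{Monotonicity of $\phi$.} $\phi$ is a ratio of affine functions with positive denominator on $s>0$. Its derivative has the sign of
\begin{align*}
\vartheta(1-\zeta)-\zeta(1-\vartheta)=\vartheta-\zeta,
\end{align*}
so $\phi$ is monotone on $(0,\infty)$. Its supremum is therefore the larger of its two endpoint limits: $\lim_{s\downarrow0}\phi(s)=\frac{1-\vartheta}{1-\zeta}$ (corresponding to $\Delta\to+\infty$) and $\lim_{s\to\infty}\phi(s)=\frac{\vartheta}{\zeta}$ (corresponding to $\Delta\to-\infty$). Taking $\frac{1}{\lambda}\log$ of this maximum gives the claimed formula.

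\emph{Attainment and the degenerate case.} When $\zeta\neq\vartheta$, $\phi$ is strictly monotone, so the supremum is approached only in the limit $|\Delta|\to\infty$ and is not attained. This is what the later appeal to Proposition \ref{prop:dynamic:terminal-impl} needs. When $\zeta=\vartheta$, $\phi\equiv1$ and both sides equal $0$.

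No step here is a genuine obstacle. The only point needing care is justifying that $\sup_{s>0}\phi(s)$ equals the maximum of the two endpoint limits, and the sign computation of $\phi'$ settles that.
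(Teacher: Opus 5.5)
Your proof is correct and follows the paper's argument step for step: normalize $x(0)=0$ by translation invariance, substitute $e^{-\lambda x(1)}$ to reduce the objective to $\frac{1}{\lambda}\log\phi$ for a ratio of affine functions, note that $\phi'$ has the sign of $\vartheta-\zeta$, and take the larger of the two endpoint limits. The remarks on non-attainment when $\zeta\neq\vartheta$ and on the degenerate case $\zeta=\vartheta$ are correct additions. The only blemish is notational: you reuse $\zeta$ as the translation constant even though it is also the Bernoulli parameter.
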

\begin{proof}[Proof of Lemma \ref{lem:dynamic:two-point}]
Let $q(\zeta)$ and $q(\vartheta)$ be Bernoulli distributions on $ Y=\{0,1\}$ with success probabilities
$\zeta,\vartheta\in(0,1)$, and fix $\lambda>0$.

\medskip
\noindent\textit{Step 1: normalization.}
For any $\eta\in\R$, translation invariance of $g_q$ implies
$g_q(x-\eta\mathbf 1;\lambda)=g_q(x;\lambda)-\eta$ (Lemma \ref{lem:translation}) and therefore
$g_{q(\zeta)}(x;\lambda)-g_{q(\vartheta)}(x;\lambda)
=
g_{q(\zeta)}(x-\eta\mathbf 1;\lambda)-g_{q(\vartheta)}(x-\eta\mathbf 1;\lambda)$.
Hence, without loss of generality, set $x(0)=0$ and write $x(1)=s\in\R$.

\medskip
\noindent\textit{Step 2: reparameterization.}
Let $r:=e^{-\lambda s}>0$. Then,
$g_{q(\zeta)}(x;\lambda)=-\frac{1}{\lambda}\log((1-\zeta)+\zeta r)$ and
$g_{q(\vartheta)}(x;\lambda)=-\frac{1}{\lambda}\log((1-\vartheta)+\vartheta r),$
so $g_{q(\zeta)}(x;\lambda)-g_{q(\vartheta)}(x;\lambda)
=
\frac{1}{\lambda}\log
\frac{(1-\vartheta)+\vartheta r}{(1-\zeta)+\zeta r}
=\frac{1}{\lambda}\log \phi(r),$
where $\phi(r):=\frac{(1-\vartheta)+\vartheta r}{(1-\zeta)+\zeta r}$.

\medskip
\noindent\textit{Step 3: maximization over $r>0$.}
The derivative is
$\phi'(r)=\frac{\vartheta(1-\zeta)-\zeta(1-\vartheta)}{((1-\zeta)+\zeta r)^2}
=\frac{\vartheta-\zeta}{((1-\zeta)+\zeta r)^2},$
so $\phi$ is monotone in $r$. Therefore, $\sup_{r>0}\phi(r)$ equals the larger of the two boundary limits:
$\lim_{r\downarrow 0}\phi(r)=\frac{1-\vartheta}{1-\zeta}$ and $\lim_{r\uparrow\infty}\phi(r)=\frac{\vartheta}{\zeta}.$
Taking the larger boundary value and applying $\log(\cdot)/\lambda$ yields
$\sup_{x: Y\to\R}\big(g_{q(\zeta)}(x;\lambda)-g_{q(\vartheta)}(x;\lambda)\big)
=
\frac{1}{\lambda}\log\max\big\{\frac{\vartheta}{\zeta},\frac{1-\vartheta}{1-\zeta}\big\}.$\end{proof}

\begin{proof}[Proof of Theorem \ref{thm:dynamic:trap-scaleup}] We prove parts (1) and (2) separately, starting with the former.

\medskip
\par\noindent\textit{--- Theorem \ref{thm:dynamic:trap-scaleup}.(1).} Let $\lambda_{\mathfrak{s}}(\cdot):=\lambda_2(1)=-\log(\theta_H)/\gamma$ (Lemma \ref{lem:dynamic:lambda-closed}) and $\mu_{\mathfrak{s}}(\cdot):=\mu_2(\cdot|y_1=1)$ with
$\mu_{\mathfrak{s}}(q^2_H)=m_{\mathfrak{s}}(\theta_L):=\frac{m\theta_H}{m\theta_H+(1-m)\theta_L}$ in \eqref{eq:dynamic:mu-update}.
Fix any utility vector $x: Y\to\R$. Then,
$$
{M}(\mu_{\mathfrak{s}},\lambda_{\mathfrak{s}};x)
=
m_{\mathfrak{s}}(\theta_L)\big(g_{q^2_H}(x;\lambda_{\mathfrak{s}})-g_{q^1}(x;\lambda_{\mathfrak{s}})\big)
+
(1-m_{\mathfrak{s}}(\theta_L))\big(g_{q^2_L}(x;\lambda_{\mathfrak{s}})-g_{q^1}(x;\lambda_{\mathfrak{s}})\big).
$$
Since $\theta_H>p>\theta_L$, Lemma \ref{lem:dynamic:two-point} yields $\forall x$:
$g_{q^2_H}(x;\lambda_{\mathfrak{s}})-g_{q^1}(x;\lambda_{\mathfrak{s}})
\le \frac{1}{\lambda_{\mathfrak{s}}}\log\max\big\{\frac{p}{\theta_H},\frac{1-p}{1-\theta_H}\big\}
=\frac{1}{\lambda_{\mathfrak{s}}}\log \frac{1-p}{1-\theta_H}$
 and 
$g_{q^2_L}(x;\lambda_{\mathfrak{s}})-g_{q^1}(x;\lambda_{\mathfrak{s}})
\le \frac{1}{\lambda_{\mathfrak{s}}}\log\max\big\{\frac{p}{\theta_L},\frac{1-p}{1-\theta_L}\big\}
=\frac{1}{\lambda_{\mathfrak{s}}}\log \frac{p}{\theta_L},$
where the equalities use $\theta_H>p$ (so $\frac{1-p}{1-\theta_H}>\frac{p}{\theta_H}$) and
$p>\theta_L$ (so $\frac{p}{\theta_L}>\frac{1-p}{1-\theta_L}$). Combining them yields:
\begin{align}
{M}(\mu_{\mathfrak{s}},\lambda_{\mathfrak{s}};x)
\le
\frac{1}{\lambda_{\mathfrak{s}}}\log \frac{1-p}{1-\theta_H}
+
\frac{1-m_{\mathfrak{s}}(\theta_L)}{\lambda_{\mathfrak{s}}}\log \frac{p}{\theta_L}
\quad\forall x.
\label{eq:succ:Delta-ub}
\end{align}

Now, note that $1-m_{\mathfrak{s}}(\theta_L)
=
\frac{(1-m)\theta_L}{m\theta_H+(1-m)\theta_L}
\le
\frac{1-m}{m\theta_H}\hspace{0.02in}\theta_L.$
Hence,
$0\le (1-m_{\mathfrak{s}}(\theta_L))\log \frac{p}{\theta_L}
\le \frac{1-m}{m\theta_H}\hspace{0.02in}\theta_L\log \frac{p}{\theta_L}
\xrightarrow[\theta_L\downarrow 0]{}0,$
since $\theta\log(1/\theta)\to 0$ as $\theta\downarrow 0$. Assume $\lambda_{\mathfrak{s}}>\lambda^*=\frac{1}{k}\log \frac{1-p}{1-\theta_H}$.
Equivalently,
$\frac{1}{\lambda_{\mathfrak{s}}}\log \frac{1-p}{1-\theta_H}<k.$
Let $\eta:=k-\frac{1}{\lambda_{\mathfrak{s}}}\log\big(\frac{1-p}{1-\theta_H}\big)>0$.
Choose $\bar\theta_L\in(0,p)$ such that for all $\theta_L\in(0,\bar\theta_L)$,
$\frac{1-m_{\mathfrak{s}}(\theta_L)}{\lambda_{\mathfrak{s}}}\log \frac{p}{\theta_L}<\frac{\eta}{2}.$
Then, \eqref{eq:succ:Delta-ub} implies ${M}(\mu_{\mathfrak{s}},\lambda_{\mathfrak{s}};x)\le k-\eta/2$ for all $x$, and
therefore
$C(\mu_{\mathfrak{s}},\lambda_{\mathfrak{s}})=\sup_x {M}(\mu_{\mathfrak{s}},\lambda_{\mathfrak{s}};x)\le k-\eta/2<k.$
Finally, if $C(\mu_{\mathfrak{s}},\lambda_{\mathfrak{s}})<k$, there does not exist any continuation utility schedule
$x_2(1,\cdot)$ with ${M}(\mu_{\mathfrak{s}},\lambda_{\mathfrak{s}};x_2(1,\cdot))\ge k$, so by
Proposition \ref{prop:dynamic:terminal-impl}, action $a_2=2$ is not implementable after history
$(a_1=2,y_1=1)$.

\par\noindent\textit{--- Theorem \ref{thm:dynamic:trap-scaleup}.(2).} Given $C(\mu_2(\cdot|y_1=1),\lambda_2(1))<k$, 
fix any dynamic contract $(x_1,x_2)$ and consider any induced SPNE.
On the history $(a_1=2,y_1=1)$, the state entering $t=2$ is $(\mu_2(\cdot|1),\lambda_2(1))$.
Since $C(\mu_2(\cdot|1),\lambda_2(1))<k$, by definition of incentive capacity (\ref{eq:dynamic:Delta-C}), we have
${M}(\mu_2(\cdot|1),\lambda_2(1);x)<k$ for every continuation utility vector $x$.
In particular, $M\big(\mu_2(\cdot|1),\lambda_2(1); x_2(1,\cdot)\big)<k.$
Applying Proposition \ref{prop:dynamic:terminal-impl} to any plan with $a_1=2$ implies that after the post-innovation history $(a_1=2,y_1=1)$, the agent strictly prefers
$a_2=1$ to $a_2=2$. Thus, in any SPNE, $\Proba(a_1=2, y_1=1, a_2=2)=0.$
Recall that the principal's period-2 scale-up utility is $R_2 = 1 + (\xi-1)\hspace{0.02in}\1\{a_1=2, y_1=1, a_2=2\}$ (see, \eqref{eq:multiplier}),
so $R_2=1$ almost surely under any SPNE whenever the above event has probability zero. Consequently, the principal's
expected profit is independent of $\xi$.\end{proof}

\begin{proof}[Proof of Corollary \ref{cor:dynamic:success-failure-reversal}]
Parts (ii) and (iii) follow from Theorem \ref{thm:dynamic:trap-scaleup} and Corollary \ref{cor:dynamic:failure-impl}, after shrinking $\bar\theta_L$ so that the post-failure capacity is strictly larger than $k$.

It remains to prove part (i). Choose a finite continuation vector $x_2(0,\cdot)$ that makes $a_2=2$ optimal after $y_1=0$; this exists by part (iii), since the post-failure capacity is strictly larger than $k$. Set $x_2(1,0)=x_2(1,1)=0$. After $y_1=1$, both actions then deliver the same continuation utility, but $a_2=2$ additionally incurs the cost $k>0$, so $a_2=1$ is strictly optimal. Thus, there is a finite $x_2$ such that $a_2=1$ is optimal after $y_1=1$ and $a_2=2$ is optimal after $y_1=0$. Let $V_2(\cdot;x_2)$ denote the period-2 value after first-period innovation and let $\widehat V_2(\cdot;x_2)$ denote the period-2 value after first-period routine play (see, \eqref{eq:dynamic:V2}--\eqref{eq:dynamic:V1}). Since $x_1$ is unrestricted in utility space, it is enough to find a finite vector $z:Y\to\mathbb R$ such that $\mathcal G(z;\mu_1,\lambda_1)-k\ge g_{q^1}(z-\beta(V_2-\widehat V_2);\lambda_1)$; then setting $x_1=z-\beta V_2(\cdot;x_2)$ makes first-period innovation optimal. 
\par Now, suppose $\lambda_1>0$. For $T>0$, set $z_T(0)=T$ and $z_T(1)=0$. As $T\to+\infty$, the routine value $g_{q^1}(z_T-\beta(V_2-\widehat V_2);\lambda_1)$ converges to a finite number because $V_2-\widehat V_2$ is finite (since $x_2$ is finite), while $\mathcal G(z_T;\mu_1,\lambda_1)$ converges to $-\lambda_1^{-1}\{m\log\theta_H+(1-m)\log\theta_L\}$, which diverges to $+\infty$ as $\theta_L\downarrow0$. Hence, after shrinking $\bar\theta_L$ if necessary, some finite $T$ satisfies the first-period incentive constraint.  Then, suppose $\lambda_1=0$. For all sufficiently small $\theta_L>0$, the initial mean success probability $m\theta_H+(1-m)\theta_L$ differs from $p$. If this mean is above $p$, choose $z_T(1)=T$ and $z_T(0)=0$; if it is below $p$, choose $z_T(0)=T$ and $z_T(1)=0$. Under the limiting EU criterion, $\mathcal G(z_T;\mu_1,0)-g_{q^1}(z_T-\beta(V_2-\widehat V_2);0)$ is affine in $T$ with strictly positive slope, so the first-period incentive constraint holds for $T$ large enough. Thus, first-period innovation is implementable.
\end{proof}

\begin{proof}[Proof of Corollary \ref{prop:dynamic:feedback-optimal}]
To avoid notational conflicts throughout this proof, write $g_H(\cdot;\lambda):=g_{q_H^2}(\cdot;\lambda)$, $g_L(\cdot;\lambda):=g_{q_L^2}(\cdot;\lambda)$, and $g_1(\cdot;\lambda):=g_{q^1}(\cdot;\lambda)$. Fix any binary policy $\Pi$ satisfying $\Pi(1|1)=1$. Then, $\Pi$ is determined by its false-positive rate $r:=\Pi(1|0)\in[0,1]$, so $\Pi=\Pi^r$. Under $\Pi^r$, the favorable-signal likelihoods are $q_H^r=\theta_H+(1-\theta_H)r$ and $q_L^r=\theta_L+(1-\theta_L)r$. Let $m^r:=m q_H^r/(m q_H^r+(1-m)q_L^r)$ denote the posterior weight on the high innovation model after $s_1=1$. Proposition \ref{prop:dynamic:terminal-impl} implies that continued innovation after $s_1=1$ is implementable with slack $\eta$ whenever $C(\mu_2^r(\cdot|s_1=1),\lambda_2^r(1))\ge k+\eta$.

We first show that $F_\eta$ is nonempty. When $r=1$, both favorable-signal likelihoods are equal to one, so the favorable evaluation is completely uninformative: $\mu_2^r(\cdot|s_1=1)=\mu_1$ and $\lambda_2^r(1)=0$. Under the hypotheses of Corollary \ref{cor:dynamic:success-failure-reversal}, the EU-limit calculation used there gives $C(\mu_1,0)=+\infty$. Thus, $1\in F_\eta$. It remains to show that $F_\eta$ is closed. Let $\{r_n\}_{n\ge1}\subseteq F_\eta$ and suppose $r_n\to r\in[0,1]$. If $r=1$, then $r\in F_\eta$ by the previous paragraph. Therefore, suppose $r<1$. Choose $\bar r\in(r,1)$. Since $r_n\to r$, there exists $N$ such that $r_n\in[0,\bar r]$ for every $n\ge N$; omitting the first $N-1$ terms preserves both convergence to $r$ and membership in $F_\eta$, so it is enough to work on $[0,\bar r]$. On this interval, $r\mapsto\lambda_2^r(1)=-\gamma^{-1}\log q_H^r$ is continuous and bounded away from zero because $q_H^r\le q_H^{\bar r}<1$ for every $r\in[0,\bar r]$. By translation invariance (Lemma \ref{lem:translation}), the supremum defining capacity can be normalized to payoff vectors with $x(0)=0$. Thus, writing $x(1)=\tau$, we have $C(\mu_2^r(\cdot|s_1=1),\lambda_2^r(1))=\sup_{\tau\in\R}\Phi(r,\tau)$, where $\Phi(r,\tau):=m^r g_H((0,\tau);\lambda_2^r(1))+(1-m^r)g_L((0,\tau);\lambda_2^r(1))-g_1((0,\tau);\lambda_2^r(1))$. The map $\Phi$ is continuous on $[0,\bar r]\times\R$. Moreover, because $\lambda_2^r(1)$ is bounded away from zero on $[0,\bar r]$, $\Phi$ extends continuously to $[0,\bar r]\times\overline{\R}$, where $\overline{\R}:=\R\cup\{-\infty,+\infty\}$, with endpoint values obtained by taking $\tau\to+\infty$ and $\tau\to-\infty$ in the preceding expression. Therefore, by the maximum theorem, the function $r\mapsto C(\mu_2^r(\cdot|s_1=1),\lambda_2^r(1))=\max_{\tau\in\overline{\R}}\Phi(r,\tau)$ is continuous on $[0,\bar r]$. Hence, $C(\mu_2^r(\cdot|s_1=1),\lambda_2^r(1))=\lim_{n\to\infty}C(\mu_2^{r_n}(\cdot|s_1=1),\lambda_2^{r_n}(1))\ge k+\eta$, so $r\in F_\eta$. As a result, $F_\eta$ is closed. Since $F_\eta$ is a nonempty closed subset of $[0,1]$, it is compact, so $r_\eta=\min F_\eta$ is well defined. 

Among no-false-negative binary policies, minimizing false positives is minimizing $r=\Pi(1|0)$. By definition of $r_\eta$, no smaller $r$ satisfies the capacity inequality. Since every binary policy is uniquely determined by its false-positive rate, $\Pi^{r_\eta}$ uniquely minimizes $\Pi(1|0)$ among all no-false-negative policies that preserve innovation with slack $\eta$.
\end{proof}

\subsection{Proof of Corollary \ref{thm:dynamic:screening}}\label{app:proof:thm-dynamic-screening}
\begin{lemma}\label{lem:F-lambda}
Fix $x\in\R^Y$. If $0<\lambda_1<\lambda_2$, then $\mathcal{G}(x;\mu,\lambda_1) \ge \mathcal{G}(x;\mu,\lambda_2).$
Moreover, if $x$ is non-constant and every $q\in Q^{2}$ has full support, the inequality is strict.
\end{lemma}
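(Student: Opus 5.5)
Since $\mathcal G(x;\mu,\lambda)=\sum_{q\in Q^{2}}\mu(q)\,g_q(x;\lambda)$ is a $\mu$-average with nonnegative weights, it suffices to prove the claim for a single model: for each $q\in\Delta(Y)$, the map $\lambda\mapsto g_q(x;\lambda)$ is nonincreasing on $(0,\infty)$, and strictly decreasing when $x$ is non-constant and $q$ has full support. The strict version for $\mathcal G$ then follows because $\mu$ has full support on $Q^{2}$ (or at least one model has positive weight), so at least one term decreases strictly while none increase.

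For monotonicity I would use the variational formula \eqref{eq:dynamic:gq}: $g_q(x;\lambda)=\min_{r\in\Delta(Y)}\{\E_r[x]+\frac{1}{\lambda}\DKL(r\|q)\}$. For fixed $r$, the objective is nonincreasing in $\lambda$ because $\DKL(r\|q)\ge0$ and $1/\lambda$ is decreasing. A pointwise minimum of nonincreasing functions is nonincreasing, which gives the weak inequality $g_q(x;\lambda_1)\ge g_q(x;\lambda_2)$ for $\lambda_1<\lambda_2$.

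For strictness, let $r^*$ be the minimizer at $\lambda_1$. This is the exponentially tilted distribution $r^*(y)\propto q(y)e^{-\lambda_1 x(y)}$. If $x$ is non-constant and $q$ has full support, then $r^*\neq q$, so $\DKL(r^*\|q)>0$. Hence
$$g_q(x;\lambda_1)=\E_{r^*}[x]+\tfrac{1}{\lambda_1}\DKL(r^*\|q)>\E_{r^*}[x]+\tfrac{1}{\lambda_2}\DKL(r^*\|q)\ge g_q(x;\lambda_2).$$

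An alternative route is to view $-\frac{1}{\lambda}\log\E_q[e^{-\lambda x}]$ as a power mean of $e^{-x}$ with exponent $\lambda$, and then apply Jensen/Lyapunov monotonicity of $\lambda\mapsto\frac{1}{\lambda}\log\E_q[e^{\lambda Z}]$ with $Z=-x$. That approach requires verifying the strict Jensen inequality, which is where non-constancy and full support enter. The only mildly delicate step in either approach is identifying the minimizer and confirming that it differs from $q$; this is immediate from the explicit Gibbs form.
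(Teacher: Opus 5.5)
Your proof is correct, but it takes a different route from the paper. The paper differentiates the closed form directly. With the tilted distribution $p_{q,\lambda}(y;x)\propto q(y)e^{-\lambda x(y)}$, it shows $\partial_\lambda g_q(x;\lambda)=-\lambda^{-2}\DKL(p_{q,\lambda}(\cdot;x)\|q)\le 0$. The derivative is strictly negative exactly when $p_{q,\lambda}\neq q$, which happens when $x$ is non-constant and $q$ has full support. The paper then averages over $\mu$.

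You instead work from the variational formula \eqref{eq:dynamic:gq}. The penalized objective is pointwise nonincreasing in $\lambda$, so its minimum is too. Strictness comes from evaluating the $\lambda_2$-objective at the $\lambda_1$-minimizer, whose divergence from $q$ is positive.

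The two arguments are closely related: the paper's derivative formula is just the envelope theorem applied to your variational problem. Your version avoids calculus, and its weak-monotonicity half carries over verbatim to any nonnegative penalty scaled by $1/\lambda$, such as $\lambda^{-1}c(r,q)$ in the spirit of Appendix \ref{sec:smooth}. The paper's version instead gives a quantitative rate of decline and differentiability in $\lambda$.

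Your parenthetical about $\mu$ is the right justification, and no full-support assumption on $\mu$ is needed. Some $q$ has $\mu(q)>0$, and every $q\in Q^{2}$ is assumed to have full support, so that term decreases strictly.
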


\begin{lemma}\label{lem:app:screening:decomposition}
Fix any posterior $\mu\in\Delta(Q^{2})$ and define the benchmark joint distribution
$\pi^0(q,y):=\mu(q) q(y)$ on $Q^{2}\times Y$. Fix any $\lambda\in(0,\infty)$ and any payoff 
$z:Y\to\R$. Then,
\begin{align}\label{eq:app:screening:decomposition}
\min_{\pi\in\Delta(Q^{2}\times Y):\ \pi_{Q^{2}}=\mu}
\Bigg\{\sum_{(q,y)\in Q^{2}\times Y}\pi(q,y) z(y)
+\frac{1}{\lambda} \DKL\!\big(\pi\big\|\pi^0\big)\Bigg\}
=\sum_{q\in Q^{2}}\mu(q) g_q(z;\lambda),
\end{align}
for $g_q$ in \eqref{eq:dynamic:gq} and the minimizer is
$\pi^\ast(q,y)=\mu(q)
\frac{q(y)\exp\{-\lambda z(y)\}}{\sum_{y'\in Y}q(y')\exp\{-\lambda z(y')\}}$ $\forall(q,y)\in Q^{2}\times Y.$
\end{lemma}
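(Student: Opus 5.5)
The statement to prove is Lemma~\ref{lem:app:screening:decomposition}. The plan is to reduce the constrained joint minimization to independent within-model multiplier problems, using the KL decomposition \eqref{eq:dynamic:KL-decomp}, and then solve each one with the Gibbs variational formula \eqref{eq:dynamic:gq}.

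\emph{Reparametrization.} Every $\pi\in\Delta(Q^{2}\times Y)$ with $\pi_{Q^{2}}=\mu$ can be written as $\pi(q,y)=\mu(q)\,r_q(y)$, where each $r_q\in\Delta(Y)$. Since $\mu$ has full support on $Q^{2}$, this correspondence is a bijection between the feasible set and the product $\prod_{q\in Q^{2}}\Delta(Y)$. Under it, the linear term becomes $\sum_q\mu(q)\sum_y r_q(y)z(y)$. The chain rule for KL divergence, which is exactly \eqref{eq:dynamic:KL-decomp}, gives
\[
\DKL(\pi\|\pi^0)=\sum_q\mu(q)\,\DKL(r_q\|q),
\]
because the $Q^{2}$-marginals of $\pi$ and $\pi^0$ coincide and so the marginal term vanishes. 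I would verify this in one line: expand the logarithm $\log\frac{\mu(q)r_q(y)}{\mu(q)q(y)}$, in which the $\mu(q)$ factors cancel.

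\emph{Separation.} The objective is now $\sum_q\mu(q)\big\{\E_{r_q}[z]+\frac1\lambda\DKL(r_q\|q)\big\}$. Each summand depends only on its own $r_q$, and the weights $\mu(q)$ are nonnegative. The minimum over the product set therefore equals the $\mu$-weighted sum of the separate minima, and by \eqref{eq:dynamic:gq} each separate minimum is $g_q(z;\lambda)$. This proves the value identity \eqref{eq:app:screening:decomposition}.

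\emph{Minimizer.} It remains to show that the minimizer of $\E_{r}[z]+\frac1\lambda\DKL(r\|q)$ is the exponentially tilted law $r^*_q(y)\propto q(y)e^{-\lambda z(y)}$. I would use the Gibbs/Donsker--Varadhan identity: for any $r\in\Delta(Y)$,
\[
\E_r[z]+\frac1\lambda\DKL(r\|q)=g_q(z;\lambda)+\frac1\lambda\DKL(r\|r^*_q).
\]
Since $q$ has full support, $r^*_q$ also has full support and the right-hand side is well defined. The identity shows the objective is minimized exactly at $r=r^*_q$, and the minimizer is unique because KL divergence vanishes only at equality. Multiplying back by $\mu(q)$ yields the stated $\pi^*$.

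The main obstacle is only bookkeeping. One point to check is that no feasible $\pi$ is lost in the reparametrization, which holds because $\mu$ has full support (for models with $\mu(q)=0$ the conditional would be arbitrary and contribute nothing anyway). The other is the $0\log0$ convention in the KL terms, which holds since all $q$ have full support. Everything else is direct substitution.
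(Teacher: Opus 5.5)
Your proposal is correct and takes essentially the same route as the paper: write each feasible $\pi$ as $\mu(q)r_q(y)$, split the objective into separate within-model multiplier problems via the KL chain rule, and identify each minimum as $g_q(z;\lambda)$. Two small differences: you verify the tilted minimizer directly through the identity $\E_r[z]+\frac{1}{\lambda}\DKL(r\|q)=g_q(z;\lambda)+\frac{1}{\lambda}\DKL(r\|r^*_q)$ (which also gives uniqueness) where the paper cites Dupuis--Ellis; and the paper works on $\mathrm{supp}(\mu)$ from the start instead of assuming $\mu$ has full support, which the lemma does not impose. Your closing remark already handles the $\mu(q)=0$ case, but it would be cleaner to build it into the bijection step.
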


\begin{proof}[Proof of Corollary \ref{thm:dynamic:screening}]
Fix $\{E,I\}$ and $\gamma>0$. Throughout, $\mu_2(\cdot|y_1)$ is given by
\eqref{eq:dynamic:mu-update} and $\lambda_2(\cdot)$ evolves by Assumption \ref{ass:dynamic:llr} and
Lemma \ref{lem:dynamic:lambda-closed}. Recall $g_q$ and $\mathcal G$ from
\eqref{eq:dynamic:gq}--\eqref{eq:dynamic:G}.

\medskip
\noindent\emph{Step 1: Terminal reduction and monotonicity in $\gamma$.}
Fix any history $y_1\in Y$ and any continuation utility vector $z:Y\to\R$. Applying \citet[][Proposition 1.4.2]{dupuis97} to \eqref{eq:dynamic:V-routine} yields $\mathcal V_2(y_1,1;z)=g_{q^1}(z;\lambda_2(y_1)).$
Similarly, Lemma \ref{lem:app:screening:decomposition} (applied with $\mu=\mu_2(\cdot|y_1)$ and
$\lambda=\lambda_2(y_1)$) implies from \eqref{eq:dynamic:V-innov} that
$\mathcal V_2(y_1,2;z)=\sum_{q\in Q^{2}}\mu_2(q|y_1)\hspace{0.02in}g_q(z;\lambda_2(y_1)) - k
=\mathcal G\big(z;\mu_2(\cdot|y_1),\lambda_2(y_1)\big)-k.$
Therefore, by \eqref{eq:dynamic:V2},
\begin{align}\label{eq:app:screening:V2-reduction}
V_2(y_1;x_2)=
\max\Big\{
g_{q^1} \big(x_2(y_1,\cdot);\lambda_2(y_1)\big),\hspace{0.02in}
\mathcal G \big(x_2(y_1,\cdot);\mu_2(\cdot|y_1),\lambda_2(y_1)\big)-k
\Big\}.
\end{align}

Now specialize to contract $I$. Under the given hypothesis, the agent chooses $a_1=2$
under $I$ for every $\gamma$, so $\mu_2(\cdot|y_1)$ is independent of $\gamma$ and
$\lambda_2^\gamma(1)=\frac{-\log\theta_H}{\gamma}$, $\lambda_2^\gamma(0)=\frac{-\log(1-\theta_L)}{\gamma}$
by Lemma \ref{lem:dynamic:lambda-closed}. In particular, for each $y_1\in Y$,
$\gamma' >\gamma$ implies $\lambda_2^{\gamma'}(y_1)<\lambda_2^\gamma(y_1)$.

By Lemma \ref{lem:F-lambda}, for any $q$ and any fixed $x$,
$\lambda\mapsto g_q(x;\lambda)$ is weakly decreasing on $(0,\infty)$, and it is strictly
decreasing whenever $x$ is non-constant. Hence, for each $y_1$,
$\gamma' >\gamma\Longrightarrow
g_{q^1} \big(x_2^I(y_1,\cdot);\lambda_2^{\gamma'}(y_1)\big) \ge
g_{q^1} \big(x_2^I(y_1,\cdot);\lambda_2^{\gamma}(y_1)\big),$
and similarly for each $q\in Q^{2}$, $g_{q} (x_2^I(y_1,\cdot);\lambda_2^{\gamma'}(y_1)) \ge
g_{q} (x_2^I(y_1,\cdot);\lambda_2^{\gamma}(y_1)).$
Since $\mu_2(\cdot|y_1)$ has full support on $Q^{2}$, summing across $q$ yields
$\mathcal G \big(x_2^I(y_1,\cdot);\mu_2(\cdot|y_1),\lambda_2^{\gamma'}(y_1)\big) \ge\
\mathcal G \big(x_2^I(y_1,\cdot);\mu_2(\cdot|y_1),\lambda_2^{\gamma}(y_1)\big).$
If, in addition, $x_2^I(y_1,\cdot)$ is non-constant, then all of the above weak inequalities are
strict by the strict part of Lemma \ref{lem:F-lambda}. Applying these monotonicities to \eqref{eq:app:screening:V2-reduction} shows that for each $y_1$,
$V_2^I(y_1;\gamma)$ is weakly increasing in $\gamma$, and it is strictly increasing at
$y_1=1$ because \eqref{eq:dynamic:screening:nondeg} implies $x_2^I(1,\cdot)$ is non-constant and
therefore both terms inside the max in \eqref{eq:app:screening:V2-reduction} strictly increase in
$\gamma$. Formally, if $\gamma'>\gamma$, then both components inside the max are strictly larger at
$\gamma'$ than at $\gamma$, hence the max is strictly larger as well:
\begin{align}\label{eq:app:screening:V2-strict}
V_2^I(1;\gamma') > V_2^I(1;\gamma),
\qquad
V_2^I(0;\gamma') \ge V_2^I(0;\gamma).
\end{align}

\medskip
\noindent\emph{Step 2: $V_1^I(\gamma)$ is strictly increasing in $\gamma$.}
Define the first-period payoff vector induced by contract $I$:
$z_\gamma^I(y_1):=x_1^I(y_1)+\beta\hspace{0.02in}V_2^I(y_1;\gamma),$ for $y_1\in Y$. By \eqref{eq:app:screening:V2-strict} and $\beta>0$,
\begin{align}\label{eq:app:screening:z-mon}
z_{\gamma'}^I(1) > z_\gamma^I(1),
\qquad
z_{\gamma'}^I(0) \ge z_\gamma^I(0)
\quad\text{whenever }\gamma'>\gamma.
\end{align}

To compute the period-1 innovation evaluation, apply Lemma \ref{lem:app:screening:decomposition} to 
\eqref{eq:dynamic:V-innov} at $t=1$ when $\lambda_1>0$ (with $\mu=\mu_1$ and $\lambda=\lambda_1$, where $\pi^0_1(q,y)=\mu_1(q)q(y)$ and $\pi_{Q^{2}}=\mu_1$). The boundary case $\lambda_1=0$ follows from the EU-limit convention. Hence, for any $z:Y\to\R$,
\begin{align}\label{eq:app:screening:V1-innov-reduction}
\mathcal V_1(\varnothing,2;z)=\mathcal G(z;\mu_1,\lambda_1)-k.
\end{align}

We claim $\mathcal G(\cdot;\mu_1,\lambda_1)$ is strictly increasing in each coordinate. If $\lambda_1>0$, let $z,z':Y\to\R$ satisfy $z'(y)\ge z(y)$ for every $y$ and $z'(y_0)>z(y_0)$ for some $y_0\in Y$. Since every $q\in Q^{2}$ has full support,
$\sum_{y\in Y}q(y)e^{-\lambda_1 z'(y)}<\sum_{y\in Y}q(y)e^{-\lambda_1 z(y)}$ for every $q\in Q^{2}$. Applying $-(1/\lambda_1)\log(\cdot)$ gives $g_q(z';\lambda_1)>g_q(z;\lambda_1)$ for each $q$, so
$\mathcal G(z';\mu_1,\lambda_1)>\mathcal G(z;\mu_1,\lambda_1)$. If $\lambda_1=0$, then by the EU-limit convention, 
$\mathcal G(z;\mu_1,0)=\sum_{y\in Y}\bar q_1(y)z(y)$, where
$\bar q_1:=\sum_{q\in Q^{2}}\mu_1(q)q$ has full support. Thus,
$\mathcal G(z';\mu_1,0)>\mathcal G(z;\mu_1,0)$ as well.

\medskip
\noindent\emph{Step 3: $V_1^E(\gamma)$ is constant in $\gamma$.}
Under contract $E$, the hypothesis says that $a_1=1$ is chosen for every $\gamma>0$. Therefore, the relevant period-2 continuation value in the period-1 routine branch is the post-routine object
$\widehat V_2^E(\cdot)$ (see, Appendix \ref{subsec:dynamic:mmr}). By the definition of
$\widehat{\mathcal V}_2$, first-period routine play leaves the state equal to the initial state
$(\mu_1,\lambda_1)$. Hence, for each $y_1\in Y$,
$\widehat V_2^E(y_1)
=
\max\left\{
g_{q^1}\big(x_2^E(y_1,\cdot);\lambda_1\big),
\mathcal G\big(x_2^E(y_1,\cdot);\mu_1,\lambda_1\big)-k
\right\}.$
The contract $E$, the initial belief $\mu_1$, and the initial misspecification concern $\lambda_1$ are fixed as $\gamma$ varies. Thus, $\widehat V_2^E(y_1)$ is independent of $\gamma$ for every $y_1\in Y$. The period-1 payoff vector relevant for choosing the routine track is
$x_1^E(\cdot)+\beta\widehat V_2^E(\cdot),$
so it is also independent of $\gamma$. As a result,
$V_1^E(\gamma)
=
\mathcal V_1\big(\varnothing,1;x_1^E(\cdot)+\beta\widehat V_2^E(\cdot)\big)$
is constant in $\gamma$.

\smallskip
\noindent\textit{Step 4: Continuity of $D(\gamma)$.}
Under contract $I$, $\mu_2(\cdot|y_1)$ is independent of $\gamma$ by 
hypothesis that $a_1=2$ is chosen for every $\gamma$. Moreover, by Lemma
\ref{lem:dynamic:lambda-closed}, for each $y_1\in Y$ we have $\lambda_2^\gamma(y_1)=Z(y_1)/\gamma$
for constant $Z(y_1)>0$, so $\gamma\mapsto \lambda_2^\gamma(y_1)$ is continuous on $(0,\infty)$.
Since $Y$ is finite and each $q$ has full support on $Y$, 
$\lambda\mapsto g_q(x;\lambda)=-(1/\lambda)\log\big(\sum_{y\in Y}q(y)e^{-\lambda x(y)}\big)$ is
continuous on $(0,\infty)$ for any fixed payoff vector $x$. It follows from
\eqref{eq:app:screening:V2-reduction} that for each $y_1$, the continuation value
$\gamma\mapsto V_2^I(y_1;\gamma)$ is continuous as the pointwise maximum of two continuous
functions. Thus, $\gamma\mapsto z_\gamma^I(y_1)=x_1^I(y_1)+\beta V_2^I(y_1;\gamma)$ is continuous,
and therefore $\gamma\mapsto V_1^I(\gamma)=\mathcal V_1(\varnothing,2;z_\gamma^I)$ is continuous
because $\mathcal G(\cdot;\mu_1,\lambda_1)$ is continuous on $\R^{Y}$.
Under contract $E$, Step 3 shows $V_1^E(\gamma)$ is constant, so $D(\gamma)$ is continuous.

\medskip
\noindent\emph{Step 5.}
Since $V_1^I(\gamma)$ is strictly increasing and $V_1^E(\gamma)$ is constant, the difference
$D(\gamma)=V_1^I(\gamma)-V_1^E(\gamma)$ is strictly increasing. By the continuity argument above,
$D(\cdot)$ is also continuous on $(0,\infty)$. Fix $\gamma'<\gamma''$ with $D(\gamma')<0<D(\gamma'')$.
By continuity there exists $\gamma^\ast\in(\gamma',\gamma'')$ such that $D(\gamma^\ast)=0$, and by
strict monotonicity this $\gamma^\ast$ is unique. Hence, $D(\gamma)<0$ for all $\gamma<\gamma^\ast$
and $D(\gamma)>0$ for all $\gamma>\gamma^\ast$, which delivers the cutoff sorting.\end{proof}

\subsection{Proof of Corollary \ref{prop:dynamic:turnover}}

\begin{proof}
\noindent\emph{Step 1: Implementing innovation.}
The condition $C(\mu_2(\cdot|y_1=1),\lambda_1)> k$ means that there exists a finite vector
$x:Y\to\R$ such that $M(\mu_2(\cdot|y_1=1),\lambda_1;x)> k,$
where $M(\mu,\lambda;x)=\mathcal G(x;\mu,\lambda)-g_{q^1}(x;\lambda)$ is defined in
\eqref{eq:dynamic:Delta-C}. Consider a turnover policy that replaces the incumbent after $y_1=1$ and
offers the new agent the continuation contract $x_2(1,\cdot)=x$. Since the new agent's
state at $t=2$ is $(\mu_2(\cdot|y_1=1),\lambda_1)$, Proposition \ref{prop:dynamic:terminal-impl} implies
that $a_2=2$ is optimal for him after $y_1=1$. This proves part (i).

\medskip
\noindent\emph{Step 2: Net profit is strictly increasing in $\xi$.}
Recall the scale-up multiplier $\xi$ in \eqref{eq:multiplier}. Fix any turnover contract that induces $a_2=2$ after the first-period innovation success $(a_1=2,y_1=1)$. Under this contract, the principal's net expected profit can be written as
$\varPi^{\mathrm{turn}}(\xi)
=
\E\left[y_1+y_2+(\xi-1)\1\{a_1=2,y_1=1,a_2=2\}y_2\right]
-\E\left[v(x_1(y_1))+v(x_2(y_1,y_2))\right]
-\varphi\Proba(y_1=1).$
All wage costs and the hiring cost $\varphi\Proba(y_1=1)$ are independent of $\xi$. Since the first-period action is innovation and the turnover contract induces $a_2=2$ after $y_1=1$, we have
$\1\{a_1=2,y_1=1,a_2=2\}=\1\{y_1=1\}$ along the induced equilibrium path. Thus,
$\frac{\partial \varPi^{\mathrm{turn}}(\xi)}{\partial \xi}=
\E\left[\1\{y_1=1\}y_2\right]=
\Proba(y_1=1)\E[y_2\mid y_1=1,a_2=2]=
\big(m\theta_H+(1-m)\theta_L\big)
\sum_{q\in Q^{2}}\mu_2(q|y_1=1)q(1)
>0,$
because $m\theta_H+(1-m)\theta_L>0$ and
$\sum_{q\in Q^{2}}\mu_2(q|y_1=1)q(1)\in(\theta_L,\theta_H)\subset(0,1)$.
Thus, $\varPi^{\mathrm{turn}}(\xi)$ is affine in $\xi$ with strictly positive slope, and hence is strictly increasing in $\xi$.

\medskip
\noindent\emph{Step 3: Turnover is optimal.}
If the incumbent is kept after $y_1=1$,
$C(\mu_2(\cdot|y_1=1),\lambda_2(1))<k$ implies that, under any dynamic contract and any induced SPNE,
$a_2=2$ is never implemented after $(a_1=2,y_1=1)$ and the principal's expected profit is independent
of $\xi$ (Theorem \ref{thm:dynamic:trap-scaleup}). Let $\varPi^{\text{keep}}$ be the supremum of net expected profit among all dynamic
contracts that keep the incumbent after $y_1=1$ (which is independent of $\xi$):
$$
\varPi^{\text{keep}}
:=\sup\big\{\text{net expected profit from }(x_1,x_2)\ :\ (x_1,x_2)\text{ keeps incumbent after }y_1=1\big\}.
$$
Fix the turnover contract constructed in Steps 1--2 and let $\varPi^{\text{turn}}(\xi)$ denote its net profit.
By Step 2, $\varPi^{\text{turn}}(\xi)=\varPi^{\text{turn}}(1)+(\xi-1)s$ for some $s>0$.
Since $\varPi^{\text{keep}}<+\infty$ is assumed, let
$\bar \xi:=\max\big\{1,\ 1+\frac{\varPi^{\text{keep}}-\varPi^{\text{turn}}(1)}{s}\big\}.$
Then, $\varPi^{\text{turn}}(\xi)>\varPi^{\text{keep}}$ for all $\xi>\bar \xi$, so the turnover contract above yields strictly higher net expected profit than every contract that keeps the incumbent after $y_1=1$. This proves part (ii).
\end{proof}

\subsection{Proof of Theorem \ref{thm:cycles:frontier}}\label{app:proof:cycles}

\begin{lemma}\label{lem:cycles:capacity}
Fix $\lambda>0$, $p,\theta\in(0,1)$, $p\neq\theta$, and suppose $\mu=\delta_{\{q_\theta\}}$ with $q_\theta(1)=\theta$.
Then,
$$
C(\delta_{\{q_\theta\}},\lambda)=
\begin{cases}
\displaystyle \frac{1}{\lambda}\log \frac{1-p}{1-\theta} & \text{if }\theta>p,\\
\displaystyle \frac{1}{\lambda}\log \frac{p}{\theta} & \text{if }\theta<p.
\end{cases}
$$
\end{lemma}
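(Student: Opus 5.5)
With $\mu=\delta_{\{q_\theta\}}$, the averaging in \eqref{eq:dynamic:G} collapses to a single term, $\mathcal G(x;\delta_{\{q_\theta\}},\lambda)=g_{q_\theta}(x;\lambda)$. The incentive gap therefore becomes $M(\delta_{\{q_\theta\}},\lambda;x)=g_{q_\theta}(x;\lambda)-g_{q^1}(x;\lambda)$, and the capacity is
$$C(\delta_{\{q_\theta\}},\lambda)=\sup_{x\in\R^Y}\big(g_{q_\theta}(x;\lambda)-g_{q^1}(x;\lambda)\big).$$
This is exactly the object evaluated in Lemma \ref{lem:dynamic:two-point}, with $\zeta=\theta$ and $\vartheta=p$. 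That lemma gives
$$C(\delta_{\{q_\theta\}},\lambda)=\frac{1}{\lambda}\log\max\Big\{\frac{p}{\theta},\frac{1-p}{1-\theta}\Big\}.$$

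It remains to identify which term attains the maximum in each case.

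\begin{itemize}
\item If $\theta>p$, then $p/\theta<1<(1-p)/(1-\theta)$, so the maximum equals $(1-p)/(1-\theta)$.
\item If $\theta<p$, the inequalities reverse, so the maximum equals $p/\theta$.
\end{itemize}

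Substituting each case into the formula above yields the two displayed expressions.

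There is no real obstacle. The only points to check are that the Dirac prior is an admissible $\mu$ in the definition of $\mathcal G$ (it is, since the sum in \eqref{eq:dynamic:G} is well defined for any $\mu\in\Delta(Q^2)$), and that $p\neq\theta$ makes the maximum strictly greater than $1$. That strictness guarantees the capacity is positive and the case split is unambiguous.
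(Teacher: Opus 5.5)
Your proof is correct and is essentially the paper's argument: the paper likewise treats this as a direct specialization of Lemma~\ref{lem:dynamic:two-point} with $(\zeta,\vartheta)=(\theta,p)$. It then reads off which of $p/\theta$ and $(1-p)/(1-\theta)$ attains the maximum according to whether $\theta>p$ or $\theta<p$, exactly as you do.
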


\begin{proof}[Proof of Lemma \ref{lem:cycles:capacity}]
This is an immediate specialization of Lemma \ref{lem:dynamic:two-point}. Applying Lemma \ref{lem:dynamic:two-point} with $(\zeta,\vartheta)=(\theta,p)$ yields
$C(\delta_{\{q_\theta\}},\lambda)
=\frac{1}{\lambda}\log \max\left\{\frac{p}{\theta},\frac{1-p}{1-\theta}\right\}.$
If $\theta>p$, then $(1-p)/(1-\theta)>\hspace{0.02in}p/\theta$, so the maximum equals $(1-p)/(1-\theta)$; if
$\theta<p$ the maximum equals $p/\theta$.
\end{proof}
Theorem \ref{thm:cycles:frontier} relies on some technical results in Appendix \ref{app:dynamic:verification}:  Lemmas \ref{lem:dynamic:infty:mu-local}--\ref{lem:cycles:Wlimit-public}, and Corollary \ref{cor:dynamic:infty:bridge-subseq-tree}.  Here, we state two results that describe the long-run behavior of $(\mu_t,\lambda_t)$. See Appendix \ref{sec:appproof} for all their proofs. The first result below is a variation of \citet[][Lemma 6]{lanzani2025} and the second is that Bayesian posteriors concentrate on KL minimizers. 
\begin{lemma}\label{lem:cycles:llr-frequency}
Let $(T_n)_{n\ge 1}$ be a deterministic or random sequence with $T_n\to\infty$ such that
$$
\alpha_{T_n}:=\frac{1}{T_n}\sum_{t=1}^{T_n}\1\{a_t=2\} \longrightarrow \alpha\in[0,1]
\qquad\text{a.s.}
$$
Then, under the true distribution induced by $p_\ast$, the following hold a.s.
$$
\frac{\mathrm{LLR}^2(h_{T_n+1},Q^{2})}{T_n} \longrightarrow \alpha D_\ast^{2},
\qquad\text{and hence}\qquad
\lambda_{T_n+1}
=
\frac{\mathrm{LLR}^2(h_{T_n+1},Q^{2})}{\gamma T_n}
\longrightarrow
\frac{\alpha D_\ast^{2}}{\gamma}.
$$
\end{lemma}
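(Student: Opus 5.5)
We need to prove Lemma \ref{lem:cycles:llr-frequency}: along $T_n$, $\mathrm{LLR}^2(h_{T_n+1},Q^2)/T_n\to\alpha D^2_\ast$ a.s.

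The plan is to rewrite the numerator and denominator of \eqref{eq:dynamic:infty:LLR2} in terms of the empirical distribution of innovation outcomes. Let $N_T:=\sum_{t\le T}\1\{a_t=2\}=T\alpha_T$ and let $\hat p_T\in\Delta(Y)$ be the empirical frequency of outcomes on innovation dates up to $T$.

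\emph{First step.} When $N_T>0$, the unstructured maximum is attained at $\hat p_T$. This gives
\[
-\log\max_{p\in\Delta(Y)}\textstyle\prod p(y_\tau)^{\1\{a_\tau=2\}}=N_T H(\hat p_T),
\]
where $H$ is Shannon entropy. For each $q\in Q^2$, we also have $-\log\prod q(y_\tau)^{\1\{a_\tau=2\}}=N_T[H(\hat p_T)+\DKL(\hat p_T\|q)]$. Hence the exact identity
\[
\mathrm{LLR}^2(h_{T+1},Q^2)=N_T\min_{q\in Q^2}\DKL(\hat p_T\|q).
\]
This is the key reduction: dividing by $T$ gives $\alpha_T\cdot\min_q\DKL(\hat p_T\|q)$.

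\emph{Second step.} The main probabilistic point is that $\hat p_{T_n}\to p^2_\ast$ a.s. whenever $N_{T_n}\to\infty$, even though the innovation dates are chosen adaptively along an equilibrium path. I would handle this with the standard ``stack of i.i.d. draws'' (skeleton) construction. Conditional on the action path, outcomes are independent with distribution $p^a_\ast$. We may therefore realize the outcome of the $j$-th innovation date as the $j$-th element $\varepsilon_j$ of a fixed i.i.d. $p^2_\ast$ sequence, without changing the law of histories. The empirical frequency over innovation dates is then exactly the empirical mean of $\varepsilon_1,\dots,\varepsilon_{N_T}$. By the strong law of large numbers, this converges to $p^2_\ast$ a.s. along any (possibly random) sequence with $N_T\to\infty$.

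Alternatively, I would use the martingale SLLN for $\sum_t \1\{a_t=2\}(\1\{y_t=1\}-p_\ast^2(1))$, normalized by $N_T$, to avoid the coupling argument. I expect justifying this step rigorously to be the main obstacle, since $T_n$ may be random and the actions are history-dependent.

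\emph{Third step.} Since all $q\in Q^2$ have full support on a finite $Y$, the map $r\mapsto\min_{q}\DKL(r\|q)$ is continuous on $\Delta(Y)$. It is therefore bounded by $\max_q\max_y\log(1/q(y))$. Split into two cases.

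If $\alpha>0$, then $N_{T_n}=T_n\alpha_{T_n}\to\infty$. Continuity gives $\min_q\DKL(\hat p_{T_n}\|q)\to D^2_\ast$, so the product with $\alpha_{T_n}\to\alpha$ converges to $\alpha D^2_\ast$.

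If $\alpha=0$, the boundedness of the KL factor gives $\mathrm{LLR}^2/T_n\le\alpha_{T_n}\cdot\text{const}\to0=\alpha D^2_\ast$. This holds whether or not $N_{T_n}$ diverges; when $N_T=0$, the LLR is $0$ by convention.

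The statement for $\lambda_{T_n+1}$ then follows directly from its definition, $\lambda_{T+1}=\mathrm{LLR}^2(h_{T+1},Q^2)/(\gamma T)$, since $\gamma$ is a fixed constant.
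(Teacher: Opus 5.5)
Your proposal is correct and matches the paper's proof essentially step for step. The shared steps are the exact identity $\mathrm{LLR}^2(h_{T+1},Q^{2})=N_T\min_{q\in Q^{2}}\DKL(\hat p_T\|q)$, almost-sure convergence of the innovation-date empirical frequencies on $\{N_T\to\infty\}$, and the same split into the cases $\alpha>0$ and $\alpha=0$, using continuity and the uniform bound $\log(1/\underline q)$. The only difference is in the adaptive-sampling step: the paper applies Azuma--Hoeffding and Borel--Cantelli to the bounded-increment martingale $\sum_t \1\{a_t=2\}(\1\{y_t=y\}-p_\ast^2(y))$ stopped at successive innovation dates, whereas your i.i.d.-stack (skeleton) coupling is an equally valid route to the same conclusion.
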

 
\begin{lemma}\label{lem:cycles:posterior}
Let $\sum_{t=1}^{T}\1\{a_t=2\}\to\infty$ along a realized path and 
${Q}^{2}_*:=\arg\min_{q\in Q^{2}}\DKL(p^2_\ast\|q)$.
Then, under the true distribution induced by $p_\ast$,
$\mu_{T+1}(q|h_{T+1})\to 0$ a.s. $\forall q\notin {Q}^{2}_*$. 
\end{lemma}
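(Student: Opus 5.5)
The statement to prove is Lemma \ref{lem:cycles:posterior}: posterior weight on every non-KL-minimizing model vanishes a.s. I would prove it by a standard log-likelihood-ratio argument, restricted to the innovation dates, since those are the only dates on which Bayes' rule moves $\mu$.

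\textbf{Step 1 (posterior as a likelihood ratio).} Let $N_T:=\sum_{t=1}^{T}\1\{a_t=2\}$, and let $\tau_1<\tau_2<\dots$ be the successive dates with $a_{\tau_j}=2$. Iterating \eqref{eq:dynamic:mu-update}, routine dates leave $\mu$ unchanged, so for any $q,q'\in Q^{2}$,
$$\log\frac{\mu_{T+1}(q|h_{T+1})}{\mu_{T+1}(q'|h_{T+1})}=\log\frac{\mu_1(q)}{\mu_1(q')}+\sum_{j=1}^{N_T}\log\frac{q(y_{\tau_j})}{q'(y_{\tau_j})}.$$

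\textbf{Step 2 (i.i.d. structure along innovation dates).} Conditional on the action path, outcomes are independent with law $p^{a}_\ast$. I would argue that the innovation outcomes $(y_{\tau_j})_{j\ge1}$ are i.i.d. $p^2_\ast$, even though the $\tau_j$ are history-dependent stopping times. This is the only delicate point. The way I would handle it is to realize the outcomes via a ``stack of draws'': an i.i.d. sequence $(\varepsilon^2_j)_j\sim p^2_\ast$ is drawn in advance, and the $j$-th innovation consumes $\varepsilon^2_j$. This construction is equivalent in law to the true process because the action at $\tau_j$ is predictable. Alternatively, one can apply a martingale strong law to the increments $\1\{a_t=2\}\big(\log\frac{q(y_t)}{q'(y_t)}-\E_{p^2_\ast}\log\frac{q}{q'}\big)$, which have bounded conditional variance since $Y$ is finite and all models have full support. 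Either route gives, a.s. on $\{N_T\to\infty\}$,
$$\frac{1}{N_T}\sum_{j=1}^{N_T}\log\frac{q(y_{\tau_j})}{q'(y_{\tau_j})}\to \E_{p^2_\ast}\Big[\log\frac{q}{q'}\Big]=\DKL(p^2_\ast\|q')-\DKL(p^2_\ast\|q).$$

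\textbf{Step 3 (conclusion).} Fix $q\notin Q^{2}_*$ and pick $q'\in Q^{2}_*$. Then the limit in Step 2 is strictly negative. Since $N_T\to\infty$ on the realized path, the log posterior ratio tends to $-\infty$. Because $\mu_{T+1}(q')\le1$, this forces $\mu_{T+1}(q|h_{T+1})\to0$ a.s.

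I expect the main obstacle to be justifying Step 2 under endogenous, equilibrium-dependent action choices. I would rely on the stack-of-draws coupling first, since it makes the strong law for i.i.d. variables directly applicable. Finiteness of $Q^{2}$ then lets me take a finite intersection of the probability-one events over all pairs $(q,q')$.
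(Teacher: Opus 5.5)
Your proposal is correct and follows the paper's proof: the paper writes the same posterior log-odds identity over innovation dates and divides by $N_{2,T}$. It then uses a law of large numbers along those dates to get the strictly negative drift $\DKL(p^2_\ast\|\hat q)-\DKL(p^2_\ast\|q)$ and concludes as in your Step 3. The only difference is how that law of large numbers is justified: the paper reuses the Azuma--Hoeffding plus Borel--Cantelli argument for the time-changed empirical-frequency martingale from Lemma \ref{lem:cycles:llr-frequency} (together with finiteness of $Y$), whereas your stack-of-draws coupling or martingale strong law reaches the same conclusion.
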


\subsubsection{Proof of Theorem \ref{thm:cycles:frontier}}\label{app:proof:thm-cycles-frontier}

\begin{proof}[Proof of Theorem \ref{thm:cycles:frontier}]
 Define
$N_{2,T}:=\sum_{t=1}^{T}\1\{a_t=2\}$ and $\alpha_T:=N_{2,T}/T$.

\medskip
\noindent\emph{Step 1: reduce $\limsup_T\alpha_T$ to innovation times.}
Let $t_1<t_2<\dots$ be the (random) times at which the realized action is innovative, $a_{t_m}=2$.
For any $m\ge 1$ and any $T$ such that $t_m\le T<t_{m+1}$, we have $N_{2,T}=m$, hence $\alpha_T=\frac{m}{T}.$
Since $t_m\le T<t_{m+1}$, this yields 
\begin{align}\label{eq:app:cycles:alpha-two-sided}
\frac{m}{t_{m+1}-1} \le \alpha_T = \frac{m}{T} \le \frac{m}{t_m} = \alpha_{t_m}.
\end{align}
Then, for each $T$ with $N_{2,T}=m$, we have $\alpha_T\le \alpha_{t_m}$, so
$\limsup_{T\to\infty}\alpha_T\le \limsup_{m\to\infty}\alpha_{t_m}$.
Thus, $(\alpha_{t_m})_{m\ge 1}$ is a subsequence of $(\alpha_T)_{T\ge 1}$, hence
$\limsup_{m\to\infty}\alpha_{t_m}\le \limsup_{T\to\infty}\alpha_T$.
Therefore,
$\limsup_{T\to\infty}\alpha_T=\limsup_{m\to\infty}\alpha_{t_m}=: \alpha\in[0,1].$
By the definition of $\limsup$, there exists an increasing (possibly random) subsequence $(m_n)_{n\ge 1}$
such that $\alpha_{t_{m_n}}\to \alpha$ and $t_{m_n}\to\infty$. To apply Lemma \ref{lem:cycles:llr-frequency} at times $t_{m_n}$ (rather than $t_{m_n}+1$), 
set $T_n:=t_{m_n}-1$. For $n$ large enough, $t_{m_n}\ge 2$ so that $T_n\ge 1$ and $\alpha_{T_n}$ is well-defined.
Using $N_{2,t_m}=m$ and $N_{2,t_m-1}=m-1$ for $t_m\ge 2$, we have
$\left|\alpha_{t_m}-\alpha_{t_m-1}\right|
=
\big|\frac{m}{t_m}-\frac{m-1}{t_m-1}\big|
=
\frac{t_m-m}{t_m(t_m-1)}
\le
\frac{1}{t_m-1}.$
Since $t_{m_n}\to\infty$, $\alpha_{t_{m_n}-1}=\alpha_{T_n}\to \alpha$ almost surely.
Thus, Lemma \ref{lem:cycles:llr-frequency} applied to $(T_n)$ yields
\begin{align}\label{eq:app:cycles:lambda-subseq}
\lambda_{t_{m_n}} = \lambda_{T_n+1} \longrightarrow \bar\lambda:=\frac{\alpha D_\ast^{2}}{\gamma}
\qquad\text{a.s.}
\end{align}
where it is assumed that the roadmap is misspecified, i.e., $D_\ast^{2}>0$.

\medskip
\noindent\emph{Step 2: incentive capacity must asymptotically cover $k$ at innovation nodes.}
Fix an innovation time $t_m$ and the realized private history $h_{t_m}$. Let $(W_t)_{t\ge1}$ be the equilibrium continuation values defined by \eqref{eq:dynamic:infty:W}. For each candidate current action $a\in\{1,2\}$, define the immediate successor continuation-utility vector
$z_{t_m}^{a}(y)
:=
x_{t_m}(\chi_{t_m}(h_{t_m}),y)
+\beta W_{t_m+1}(h_{t_m},a,y),$ for $y\in Y.$
Let $(\mu_{t_m},\lambda_{t_m}):=(\mu_{t_m}(h_{t_m}),\lambda_{t_m}(h_{t_m}))$. Since $a_{t_m}=2$ is optimal at $h_{t_m}$ under the equilibrium contract, \eqref{eq:dynamic:infty:W} implies
$\mathcal G(z_{t_m}^2;\mu_{t_m},\lambda_{t_m})-k
\ge
g_{q^1}(z_{t_m}^1;\lambda_{t_m}).$
Set
$\eta_m:=
\beta\sup_{y\in Y}
\left|W_{t_m+1}(h_{t_m},2,y)-W_{t_m+1}(h_{t_m},1,y)\right|.$ Then, for every $y\in Y$,
$\big|z_{t_m}^1(y)-z_{t_m}^2(y)\big|
=\beta \big|W_{t_m+1}(h_{t_m},1,y)-W_{t_m+1}(h_{t_m},2,y)\big|
\le \eta_m,$
or equivalently,
$z_{t_m}^2-\eta_m\mathbf{1}\le z_{t_m}^1\le z_{t_m}^2+\eta_m\mathbf{1}.$ By monotonicity of $g_{q^1}(\cdot;\lambda_{t_m})$ and translation invariance (Lemma \ref{lem:translation}),
$$
g_{q^1}(z_{t_m}^2;\lambda_{t_m})-\eta_m
=
g_{q^1}(z_{t_m}^2-\eta_m\mathbf{1};\lambda_{t_m})
\le
g_{q^1}(z_{t_m}^1;\lambda_{t_m})
\le
g_{q^1}(z_{t_m}^2+\eta_m\mathbf{1};\lambda_{t_m})
=
g_{q^1}(z_{t_m}^2;\lambda_{t_m})+\eta_m.
$$
In particular, $g_{q^1}(z_{t_m}^1;\lambda_{t_m})\ge g_{q^1}(z_{t_m}^2;\lambda_{t_m})-\eta_m$. Combining with
$\mathcal G(z_{t_m}^2;\mu_{t_m},\lambda_{t_m})-k \ge g_{q^1}(z_{t_m}^1;\lambda_{t_m})$ yields $\mathcal G(z_{t_m}^2;\mu_{t_m},\lambda_{t_m})-k
\ge
g_{q^1}(z_{t_m}^2;\lambda_{t_m})-\eta_m,$
hence $M(\mu_{t_m},\lambda_{t_m};z_{t_m}^2)\ge k-\eta_m$, so $C(\mu_{t_m},\lambda_{t_m})\ge k-\eta_m$. We now show $\eta_{m_n}\to 0$ a.s. along the subsequence $(t_{m_n})$ from Step 1.
Since $Q_*^{2}=\{\hat q\}$ and innovation occurs infinitely often, Lemma \ref{lem:cycles:posterior} implies
$\mu_t(\hat q|h_t)\to 1$ a.s.; in particular, $\mu_{t_{m_n}}(\hat q|h_{t_{m_n}})\to 1$ a.s.
Fix any $N\ge 1$. Applying Lemma \ref{lem:dynamic:infty:mu-local} with $q_\star=\hat q$, $t=t_{m_n}$, and $h_t=h_{t_{m_n}}$ yields
$$
\sup_{\tilde h\in\mathcal H_{t_{m_n},N}(h_{t_{m_n}})}
\big\|\mu_{|\tilde h|}(\cdot|\tilde h)-\delta_{\{\hat q\}}\big\|_1
\le
2K^{N}\frac{1-\mu_{t_{m_n}}(\hat q|h_{t_{m_n}})}{\mu_{t_{m_n}}(\hat q|h_{t_{m_n}})}
\longrightarrow 0
\qquad\text{a.s.}
$$
for $K$ in \eqref{eq:app:infty:K}. Since $\lambda_t(h_t)=\mathrm{LLR}^2(h_t,Q^{2})/(\gamma(t-1))$ and $t_{m_n}\to\infty$, Lemma \ref{lem:dynamic:infty:llr-local} implies
$$
\sup_{\tilde h\in\mathcal H_{t_{m_n},N}(h_{t_{m_n}})}
\left|\lambda_{|\tilde h|}(\tilde h)-\lambda_{t_{m_n}}(h_{t_{m_n}})\right|
\le
\frac{B}{\gamma}\frac{1+\log(t_{m_n}+N)}{t_{m_n}}
\longrightarrow 0
\qquad\text{a.s.}
$$
for a constant $B<\infty$. By property \textup{(ii)} of the metric $d$ in Appendix \ref{subsec:dynamic:infty}, the above implies
$\sup_{\tilde h\in\mathcal H_{t_{m_n},N}(h_{t_{m_n}})}
d\big(\lambda_{|\tilde h|}(\tilde h),\lambda_{t_{m_n}}(h_{t_{m_n}})\big)\rightarrow 0$ a.s.
Combining with \eqref{eq:app:cycles:lambda-subseq} and triangle inequality for $d$ gives $\sup_{\tilde h\in\mathcal H_{t_{m_n},N}(h_{t_{m_n}})}
d\big(\lambda_{|\tilde h|}(\tilde h),\bar\lambda\big)\rightarrow 0$ a.s.
Thus, for each fixed $N$, the two suprema in \eqref{eq:dynamic:infty:state-unif} evaluated at $t=t_{m_n}$ converge to $0$
with $(\mu_\infty,\lambda_\infty)=(\delta_{\{\hat q\}},\bar\lambda)$. Since $\beta\in(0,1)$ and the contract is bounded,
Corollary \ref{cor:dynamic:infty:bridge-subseq-tree}, applied with
$\tau_n:=t_{m_n}$ and $L=1$, yields
$\sup_{\tilde h\in\mathcal H_{t_{m_n},1}(h_{t_{m_n}})}
\big|W_{|\tilde h|}(\tilde h)-W_{|\tilde h|}^\infty(\tilde h)\big|
\rightarrow 0$ $\text{a.s.,}$
where $(W_t^\infty)_{t\ge 1}$ solves \eqref{eq:dynamic:infty:Wlimit} for
$(\mu_\infty,\lambda_\infty)=(\delta_{\{\hat q\}},\bar\lambda)$.
Since each successor $(h_{t_{m_n}},a,y)$ belongs to
$\mathcal H_{t_{m_n},1}(h_{t_{m_n}})$, then
$\sup_{(a,y)\in\{1,2\}\times Y}
\big|W_{t_{m_n}+1}(h_{t_{m_n}},a,y)
-
W_{t_{m_n}+1}^\infty(h_{t_{m_n}},a,y)\big|
\rightarrow 0$ $\text{a.s.};$ $(W_t^\infty)_{t\ge 1}$ solves \eqref{eq:dynamic:infty:Wlimit} for
$(\mu_\infty,\lambda_\infty)=(\delta_{\{\hat q\}},\bar\lambda)$. By Lemma \ref{lem:cycles:Wlimit-public}, $W_{t_{m_n}+1}^\infty(h_{t_{m_n}},2,y)=
W_{t_{m_n}+1}^\infty(h_{t_{m_n}},1,y)$ $\forall y\in Y$. Thus,
$\eta_{m_n}\le
2\beta\sup_{(a,y)\in\{1,2\}\times Y}
\big|W_{t_{m_n}+1}(h_{t_{m_n}},a,y)-W_{t_{m_n}+1}^\infty(h_{t_{m_n}},a,y)\big|
\rightarrow 0$ $\text{a.s.}$, so
\begin{align}\label{eq:app:cycles:C-ge-k-subseq}
C(\mu_{t_{m_n}},\lambda_{t_{m_n}})\ge k-\eta_{m_n}\qquad\text{with }\eta_{m_n}\to 0\quad\text{a.s.}
\end{align}

\medskip
\noindent\emph{--- An upper bound.}
Fix $(\mu,\lambda)$ and any $x:Y\to\R$. Then, 
$$
M(\mu,\lambda;x)
=\sum_{q\in Q^{2}}\mu(q)\Big(g_q(x;\lambda)-g_{q^1}(x;\lambda)\Big)
=\sum_{q\in Q^{2}}\mu(q)\hspace{0.02in}M(\delta_{\{q\}},\lambda;x)
\le \sum_{q\in Q^{2}}\mu(q)\hspace{0.02in}C(\delta_{\{q\}},\lambda),
$$
where the inequality uses $M(\delta_{\{q\}},\lambda;x)\le \sup_{x'}M(\delta_{\{q\}},\lambda;x')=C(\delta_{\{q\}},\lambda)$ and we recall that $\mathcal G(x;\mu,\lambda)=\sum_{q\in Q^{2}}\mu(q)g_q(x;\lambda)$.
Taking the supremum over $x$ yields
\begin{align}\label{eq:app:cycles:C-mix-ub}
C(\mu,\lambda)\le \sum_{q\in Q^{2}}\mu(q)\hspace{0.02in}C(\delta_{\{q\}},\lambda).
\end{align}

\medskip
\noindent\emph{--- Reduction to KL-best fits.}
Let ${Q}^{2}_*:=\arg\min_{q\in Q^{2}}\DKL(p^2_\ast\|q)$.
Since innovation occurs infinitely often, $N_{2,T}\to\infty$, so Lemma \ref{lem:cycles:posterior} implies
$\mu_T(q)\to 0$ for all $q\notin{Q}^{2}_*$ almost surely. In particular, along $(t_{m_n})$,
$\varepsilon_n:=\sum_{q\notin {Q}^{2}_*}\mu_{t_{m_n}}(q) \rightarrow 0$ a.s.

If $\alpha=0$, $\limsup_{T\to\infty}\alpha_T=\alpha=0\le \alpha_\ast$, so the first claim holds.
Thus, assume $\alpha>0$, so $\bar\lambda>0$. By \eqref{eq:app:cycles:lambda-subseq}, there exists $n_0$ such that
 $\forall n\ge n_0$,
$\lambda_{t_{m_n}} \ge \bar\lambda/2.$
Since $Q^{2}$ is finite and Lemma \ref{lem:cycles:capacity} gives $C(\delta_{\{q\}},\lambda)<\infty$ for $\lambda>0$,
the constant $B:=\max_{q\in Q^{2}}C(\delta_{\{q\}},\bar\lambda/2)<\infty$
is well-defined. Lemma \ref{lem:cycles:capacity} also implies $C(\delta_{\{q\}},\lambda)$ is decreasing in $\lambda$
(for fixed $q$), so for $n\ge n_0$ and all $q\in Q^{2}$ we have $C(\delta_{\{q\}},\lambda_{t_{m_n}})\le B$.
Applying \eqref{eq:app:cycles:C-mix-ub} at $(\mu_{t_{m_n}},\lambda_{t_{m_n}})$ and splitting the sum into
$q\in{Q}_*^{2}$ and $q\notin{Q}_*^{2}$ gives, for $n\ge n_0$,
$C(\mu_{t_{m_n}},\lambda_{t_{m_n}})
\le
\max_{q\in{Q}^{2}_*} C(\delta_{\{q\}},\lambda_{t_{m_n}})+\varepsilon_n B.$
Combine with \eqref{eq:app:cycles:C-ge-k-subseq} and let $n\to\infty$ to obtain
\begin{align}\label{eq:app:cycles:k-ub}
k \le \max_{q\in{Q}^{2}_*} C(\delta_{\{q\}},\bar\lambda),
\end{align}
where we used $\varepsilon_n\to 0$, $\lambda_{t_{m_n}}\to \bar\lambda$, and continuity of
$C(\delta_{\{q\}},\lambda)$ in $\lambda>0$ (hence continuity of $\max_{q\in{Q}^{2}_*} C(\delta_{\{q\}},\lambda)$)
from Lemma \ref{lem:cycles:capacity}.

\medskip
\noindent\emph{--- Solve for the innovation frequency bound.}
By Lemma \ref{lem:cycles:capacity}, for each $q\in Q^{2}$ the inequality $C(\delta_{\{q\}},\lambda)\ge k$ is equivalent
to $\lambda\le \bar\lambda_k(q)$, where $\bar\lambda_k(q^2_H)=\bar\lambda^H_k$ and $\bar\lambda_k(q^2_L)=\bar\lambda^L_k$
as defined in \eqref{eq:cycles:lambda-bars-main}. Hence, \eqref{eq:app:cycles:k-ub} implies $\bar\lambda \le \max_{q\in{Q}^{2}_*}\bar\lambda_k(q).$ 
Since $\bar\lambda=\alpha D_\ast^{2}/\gamma$ \eqref{eq:app:cycles:lambda-subseq}, we conclude
$\limsup_{T\to\infty}\alpha_T=\alpha
 \le\
\frac{\gamma}{D_\ast^{2}}\hspace{0.03in}\max_{q\in{Q}^{2}_*}\bar\lambda_k(q)
 =: \alpha_\ast,$
which proves the first claim.

\medskip
\noindent\emph{--- Cycles.}
When $\alpha_\ast<1$ and innovation occurs infinitely often, then $\limsup_{T\to\infty}\alpha_T\le \alpha_\ast<1$.
If the routine action occurred only finitely many times, then eventually $a_t=2$ for all $t$, implying
$\alpha_T\to 1$, contradicting $\limsup_{T\to\infty}\alpha_T<1$. Thus, the routine action must occur infinitely often, which proves the second claim.
\end{proof}

\bibliography{ref}

\begin{thebibliography}{}

\bibitem[Anderson and Tushman, 2018]{innov90}
Anderson, P. and Tushman, M.~L. (2018).
\newblock Technological discontinuities and dominant designs: A cyclical model of technological change.
\newblock In {\em Organizational innovation}, pages 373--402. Routledge.

\bibitem[Berk, 1966]{berk66}
Berk, R.~H. (1966).
\newblock Limiting behavior of posterior distributions when the model is incorrect.
\newblock {\em The Annals of Mathematical Statistics}, 37(1):51--58.

\bibitem[Bewley, 2002]{bew02}
Bewley, T.~F. (2002).
\newblock Knightian decision theory. part i.
\newblock {\em Decisions in economics and finance}, 25:79--110.

\bibitem[Brier, 1950]{brier50}
Brier, G.~W. (1950).
\newblock Verification of forecasts expressed in terms of probability.
\newblock {\em Monthly weather review}, 78(1):1--3.

\bibitem[Carroll, 2015]{carroll15}
Carroll, G. (2015).
\newblock Robustness and linear contracts.
\newblock {\em American Economic Review}, 105(2):536--563.

\bibitem[Carroll, 2019]{carroll19}
Carroll, G. (2019).
\newblock Robustness in mechanism design and contracting.
\newblock {\em Annual Review of Economics}, 11(1):139--166.

\bibitem[Castelvecchi, 2023]{naturef23}
Castelvecchi, D. (2023).
\newblock How would room-temperature superconductors change science?
\newblock {\em Nature}, 621(7977):18--19.
\newblock News.

\bibitem[Cerreia-Vioglio et~al., 2025]{hansenmiss25}
Cerreia-Vioglio, S., Hansen, L.~P., Maccheroni, F., and Marinacci, M. (2025).
\newblock Making decisions under model misspecification.
\newblock {\em Review of Economic Studies}, page rdaf046.

\bibitem[Dahiya and Ray, 2012]{staged12}
Dahiya, S. and Ray, K. (2012).
\newblock Staged investments in entrepreneurial financing.
\newblock {\em Journal of Corporate Finance}, 18(5):1193--1216.

\bibitem[Dupuis and Ellis, 1997]{dupuis97}
Dupuis, P. and Ellis, R.~S. (1997).
\newblock {\em A Weak Convergence Approach to the Theory of Large Deviations}, volume 313.
\newblock John Wiley \& Sons.

\bibitem[D{\"u}tting et~al., 2024]{amb24}
D{\"u}tting, P., Feldman, M., Peretz, D., and Samuelson, L. (2024).
\newblock Ambiguous contracts.
\newblock {\em Econometrica}, 92(6):1967--1992.

\bibitem[Ederer and Manso, 2013]{manso13}
Ederer, F. and Manso, G. (2013).
\newblock Is pay for performance detrimental to innovation?
\newblock {\em Management Science}, 59(7):1496--1513.

\bibitem[Ewens and Marx, 2018]{founder18}
Ewens, M. and Marx, M. (2018).
\newblock Founder replacement and startup performance.
\newblock {\em The Review of Financial Studies}, 31(4):1532--1565.

\bibitem[Foutz and Srivastava, 1977]{miss77}
Foutz, R.~V. and Srivastava, R. (1977).
\newblock The performance of the likelihood ratio test when the model is incorrect.
\newblock {\em The annals of Statistics}, pages 1183--1194.

\bibitem[Fudenberg et~al., 2017]{fuden17}
Fudenberg, D., Romanyuk, G., and Strack, P. (2017).
\newblock Active learning with a misspecified prior.
\newblock {\em Theoretical Economics}, 12(3):1155--1189.

\bibitem[Garisto, 2023a]{naturelk23}
Garisto, D. (2023a).
\newblock Claimed superconductor lk-99 is an online sensation—but replication efforts fall short.
\newblock {\em Nature}, 620(7973):253--253.

\bibitem[Garisto, 2023b]{nature23}
Garisto, D. (2023b).
\newblock Lk-99 isn’t a superconductor—how science sleuths solved the mystery.
\newblock {\em Nature}, 620(7975):705--706.

\bibitem[Ghirardato, 1994]{ghir94}
Ghirardato, P. (1994).
\newblock Agency theory with non-additive uncertainty.
\newblock Technical report, mimeo, http://web. econ. unito. it/gma/paolo/age. pdf.

\bibitem[Gilboa and Schmeidler, 1989]{gilboa89}
Gilboa, I. and Schmeidler, D. (1989).
\newblock Maxmin expected utility with non-unique prior.
\newblock {\em Journal of mathematical economics}, 18(2):141--153.

\bibitem[Grossman and Hart, 1983]{gh83}
Grossman, S.~J. and Hart, O.~D. (1983).
\newblock An analysis of the principal-agent problem.
\newblock {\em Econometrica: Journal of the Econometric Society}, pages 7--45.

\bibitem[Hansen and Sargent, 2001]{hansen01}
Hansen, L. and Sargent, T.~J. (2001).
\newblock Robust control and model uncertainty.
\newblock {\em American Economic Review}, 91(2):60--66.

\bibitem[Holmstr{\"o}m, 1979]{holmstrom79}
Holmstr{\"o}m, B. (1979).
\newblock Moral hazard and observability.
\newblock {\em The Bell journal of economics}, pages 74--91.

\bibitem[Kellner, 2015]{kellner15}
Kellner, C. (2015).
\newblock Tournaments as a response to ambiguity aversion in incentive contracts.
\newblock {\em Journal of Economic Theory}, 159:627--655.

\bibitem[Kellner, 2017]{kellner17}
Kellner, C. (2017).
\newblock The principal-agent problem with smooth ambiguity.
\newblock {\em Review of Economic Design}, 21(2):83--119.

\bibitem[Klibanoff et~al., 2005]{smooth05}
Klibanoff, P., Marinacci, M., and Mukerji, S. (2005).
\newblock A smooth model of decision making under ambiguity.
\newblock {\em Econometrica}, 73(6):1849--1892.

\bibitem[Kuhn, 1970]{kuhn70}
Kuhn, T.~S. (1970).
\newblock {\em The structure of scientific revolutions}, volume~2.
\newblock University of Chicago press Chicago.

\bibitem[Lanzani, 2025]{lanzani2025}
Lanzani, G. (2025).
\newblock Dynamic concern for misspecification.
\newblock {\em Econometrica}, 93(4):1333--1370.

\bibitem[Lee et~al., 2023]{lk23}
Lee, S., Kim, J.-H., and Kwon, Y.-W. (2023).
\newblock The first room-temperature ambient-pressure superconductor.
\newblock {\em arXiv preprint arXiv:2307.12008}.

\bibitem[Loch et~al., 2011]{mgt11}
Loch, C.~H., DeMeyer, A., and Pich, M. (2011).
\newblock {\em Managing the unknown: A new approach to managing high uncertainty and risk in projects}.
\newblock John Wiley \& Sons.

\bibitem[Lopomo et~al., 2011]{amb11}
Lopomo, G., Rigotti, L., and Shannon, C. (2011).
\newblock Knightian uncertainty and moral hazard.
\newblock {\em Journal of Economic Theory}, 146(3):1148--1172.

\bibitem[Maccheroni et~al., 2006]{mmr06_JET}
Maccheroni, F., Marinacci, M., and Rustichini, A. (2006).
\newblock Dynamic variational preferences.
\newblock {\em Journal of Economic Theory}, 128(1):4--44.

\bibitem[Manso, 2011]{manso2011}
Manso, G. (2011).
\newblock Motivating innovation.
\newblock {\em The journal of finance}, 66(5):1823--1860.

\bibitem[Maselli, 2025]{maselli25}
Maselli, A. (2025).
\newblock Misspecification averse preferences.
\newblock Technical report, Penn Institute for Economic Research.

\bibitem[Maskin and Tirole, 1990]{maskin90}
Maskin, E. and Tirole, J. (1990).
\newblock The principal-agent relationship with an informed principal: The case of private values.
\newblock {\em Econometrica: Journal of the Econometric Society}, pages 379--409.

\bibitem[Maskin and Tirole, 1992]{maskin92}
Maskin, E. and Tirole, J. (1992).
\newblock The principal-agent relationship with an informed principal, ii: Common values.
\newblock {\em Econometrica: Journal of the Econometric Society}, pages 1--42.

\bibitem[Miao and Rivera, 2016]{miao16}
Miao, J. and Rivera, A. (2016).
\newblock Robust contracts in continuous time.
\newblock {\em Econometrica}, 84(4):1405--1440.

\bibitem[Nyarko, 1991]{nyarko91}
Nyarko, Y. (1991).
\newblock Learning in mis-specified models and the possibility of cycles.
\newblock {\em Journal of Economic Theory}, 55(2):416--427.

\bibitem[Radner, 1985]{radner85}
Radner, R. (1985).
\newblock Repeated principal-agent games with discounting.
\newblock {\em Econometrica: Journal of the Econometric Society}, pages 1173--1198.

\bibitem[Schmeidler, 1989]{schmeidler89}
Schmeidler, D. (1989).
\newblock Subjective probability and expected utility without additivity.
\newblock {\em Econometrica: Journal of the Econometric Society}, pages 571--587.

\bibitem[Spear and Srivastava, 1987]{spear87}
Spear, S.~E. and Srivastava, S. (1987).
\newblock On repeated moral hazard with discounting.
\newblock {\em The Review of Economic Studies}, 54(4):599--617.

\bibitem[Tian and Wang, 2014]{innov14}
Tian, X. and Wang, T.~Y. (2014).
\newblock Tolerance for failure and corporate innovation.
\newblock {\em The Review of Financial Studies}, 27(1):211--255.

\bibitem[Van~den Steen, 2005]{van2005}
Van~den Steen, E. (2005).
\newblock Organizational beliefs and managerial vision.
\newblock {\em Journal of Law, Economics, and organization}, 21(1):256--283.

\bibitem[Van~den Steen, 2010]{van10}
Van~den Steen, E. (2010).
\newblock On the origin of shared beliefs (and corporate culture).
\newblock {\em The RAND Journal of Economics}, 41(4):617--648.

\bibitem[Vuong, 1989]{miss89}
Vuong, Q.~H. (1989).
\newblock Likelihood ratio tests for model selection and non-nested hypotheses.
\newblock {\em Econometrica: journal of the Econometric Society}, pages 307--333.

\bibitem[Wasserman, 2003]{founder03}
Wasserman, N. (2003).
\newblock Founder-ceo succession and the paradox of entrepreneurial success.
\newblock {\em Organization science}, 14(2):149--172.

\end{thebibliography}
\bibliographystyle{apalike}

\clearpage

\renewcommand{\thepage}{OA-\arabic{page}} \setcounter{page}{1}
{\noindent\LARGE\bf For Online Publication}

\section{Appendix: General Extensions}\label{sec:general}
This appendix highlights that our qualitative results do not depend on the ARC functional form (Appendix \ref{sec:smooth}), whether the agent trusts the prior (Appendix \ref{sec:distrust}), or the LLR updating rule (Appendix \ref{subsec:disc:generic-update}). Appendix \ref{sec:disc:scope} then discusses the scope of Theorem \ref{thm:cycles:frontier}. All the proofs of results from this appendix are located in Appendix \ref{sec:appproof}.
\subsection{Smooth Bayesian criteria}\label{sec:smooth}
The recursive construction in Appendix \ref{subsec:dynamic:mmr} is tailored to ARC. However, the key logic behind the breakthrough trap is local: after a breakthrough, what matters is only the continuation ranking at the success node $(a_1=2,y_1=1)$. This makes it possible to extend the mechanism cleanly to the \textit{smooth Bayesian} criterion in \citet[][eq. (32)]{hansenmiss25} without re-solving the full dynamic program for every such preference.

Fix the post-success node $(a_1=2,y_1=1)$. Let $c:\Delta(Y)\times(Q^{2}\cup\{q^1\})\to [0,+\infty]$ be a valid misspecification penalty and $\phi:\R\to\R$ be continuous and strictly increasing. For every $q\in Q^{2}\cup\{q^1\}$ and $x\in\R^Y$, define the model-specific variational certainty equivalent
$g_q^c(x):=\min_{p\in\Delta(Y)}
\big\{\E_p[x(y)]+c(p,q)\big\}.$
Following \citet[][eq. (32)]{hansenmiss25}, define the post-success smooth Bayesian valuation of innovation by
\begin{align}\label{eq:disc:smooth:G}
\mathcal G_c^\phi\big(x;\mu_2(\cdot|y_1=1)\big)
:=
\phi^{-1}\Big(
\sum_{q\in Q^{2}}\mu_2(q|y_1=1)\hspace{0.02in}\phi\big(g_q^c(x)\big)
\Big).
\end{align}
The routine action is evaluated by $g_{q^1}^c(x)$. Hence, the post-success incentive gap generated by a continuation utility vector $x$ is
$M_c^\phi\big(\mu_2(\cdot|y_1=1);x\big)
:=
\mathcal G_c^\phi\big(x;\mu_2(\cdot|y_1=1)\big)-g_{q^1}^c(x)$
and the post-success incentive capacity is
$C_c^\phi\big(\mu_2(\cdot|y_1=1)\big)
:=
\sup_{x\in\R^Y}
M_c^\phi\big(\mu_2(\cdot|y_1=1);x\big).$
Define the \emph{optimistic within-model capacity}
$\overline C^{c}
:=
\sup_{x\in\R^Y}
\max_{q\in Q^{2}}
\big\{
g_q^c(x)-g_{q^1}^c(x)
\big\}.$

\begin{obs}\label{prop:disc:smooth-bayes-trap}
If $\overline C^{c}<k$, then
$C_c^\phi\big(\mu_2(\cdot|y_1=1)\big)<k.$
Thus, after $(a_1=2,y_1=1)$, no  contract can implement $a_2=2$ under the smooth Bayesian criterion \eqref{eq:disc:smooth:G}.
\end{obs}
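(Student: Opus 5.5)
The approach is a pointwise bound: fix an arbitrary continuation vector $x\in\R^Y$ and show that $M_c^\phi(\mu_2(\cdot|y_1=1);x)\le \overline C^{c}$. Taking the supremum over $x$ then gives $C_c^\phi(\mu_2(\cdot|y_1=1))\le\overline C^{c}<k$. The key observation is that $\phi^{-1}\big(\sum_q\mu(q)\phi(\cdot)\big)$ is a quasi-arithmetic (certainty-equivalent) mean. Because $\phi$ is continuous and strictly increasing, any such mean of finitely many numbers lies between their minimum and maximum.

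First, for fixed $x$, set $a_q:=g_q^c(x)$ for $q\in Q^{2}$. Since $\phi$ is strictly increasing, $\phi(a_q)\le\phi(\max_{q'}a_{q'})$ for every $q$. Because $\mu_2(\cdot|y_1=1)$ is a probability vector, this gives $\sum_q\mu_2(q|y_1=1)\phi(a_q)\le\phi(\max_{q}a_q)$.

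Second, the left side is a convex combination of points in $\phi(\R)$. Since $\phi$ is continuous, $\phi(\R)$ is an interval, so the left side lies in $\phi(\R)$ and $\phi^{-1}$ is defined there. Applying the increasing map $\phi^{-1}$ yields $\mathcal G_c^\phi(x;\mu_2(\cdot|y_1=1))\le\max_{q\in Q^{2}}g_q^c(x)$.

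Third, subtracting $g^c_{q^1}(x)$ from both sides gives $M_c^\phi\le\max_q\{g_q^c(x)-g_{q^1}^c(x)\}\le\overline C^{c}$.

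The implementability claim then mirrors Proposition \ref{prop:dynamic:terminal-impl} and the proof of Theorem \ref{thm:dynamic:trap-scaleup}.(1). At the success node, $a_2=2$ is weakly optimal under continuation vector $x_2(1,\cdot)$ exactly when $\mathcal G_c^\phi(x_2(1,\cdot))-k\ge g^c_{q^1}(x_2(1,\cdot))$, that is, when $M_c^\phi\ge k$. Since $M_c^\phi\le C_c^\phi<k$ for every finite vector, no contract implements $a_2=2$.

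The main obstacle is finiteness and well-definedness. A valid penalty $c$ may take the value $+\infty$, but grounding (some $p$ with $c(p,q)=0$) makes each $g_q^c(x)$ finite, since it is bounded above by $\E_p[x]$ and below by $\min_y x(y)$. With each $g_q^c(x)$ finite, the mean is well defined. I would state this finiteness explicitly, using the groundedness property of valid penalties from \citet{hansenmiss25}, before running the three steps.
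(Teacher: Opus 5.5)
Your proof is correct and follows the paper's argument essentially step for step. Both bound the quasi-arithmetic mean $\mathcal G_c^\phi$ by $\max_{q\in Q^{2}} g_q^c(x)$, subtract $g_{q^1}^c(x)$, take the supremum over $x$ to get $C_c^\phi\le\overline C^{c}<k$, and conclude that no $x$ satisfies the incentive inequality $M_c^\phi\ge k$. You also check two things the paper leaves implicit: that each $g_q^c(x)$ is finite by groundedness of $c$, and that the weighted sum lies in the interval $\phi(\R)$ where $\phi^{-1}$ is defined. Both checks are sound.
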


Observation \ref{prop:disc:smooth-bayes-trap} shows that the ARC structure is not needed for the breakthrough trap. What matters is whether, after a breakthrough, there remains enough within-model incentive leverage to cover the effort cost $k$. A smooth Bayesian criterion can reweight or smooth the model-specific continuation values $\{g_q^c(x)\}_{q\in Q^{2}}$, but since $\phi$ is increasing, it can never raise the continuation value above the most favorable one-model valuation. In other words, if even the most optimistic  structured model cannot justify paying for continued innovation after success, then averaging across them cannot save innovation.  
\par The smooth Bayesian criterion in \citet[][eq. (32)]{hansenmiss25} was axiomatized by \citet[][Theorem 4]{maselli25}. When the agent has no concern for misspecification, i.e., $c(p,q)=\delta_{\{q\}}(p)$ in \eqref{eq:disc:smooth:G}, we get \citeauthor{smooth05}'s (\citeyear{smooth05}) smooth ambiguity criterion.

\subsection{Variational robust Bayesian criteria}\label{sec:distrust}
Appendix \ref{sec:smooth} kept the communicated posterior $\mu_2(\cdot|y_1=1)$ fixed and only changed how the model-specific continuation values are aggregated across $Q^{2}$. We now allow the agent to \textit{distrust} that posterior itself. Notice that this is stronger than the smooth Bayesian criterion: the agent now distrusts not only the structured models in the roadmap, but also the roadmap's weighting across those models. This attitude is relevant when an organization does not have a strong track record in the new task, so that a breakthrough can make the agent question the entire roadmap $(\mu_2(\cdot|y_1=1),Q^{2})$, not just its models.

Fix the post-success node $(a_1=2,y_1=1)$ and keep the notation from Appendix \ref{sec:smooth}, i.e., $c$ is the model-misspecification penalty, $g_q^c(x)$ is the model-specific variational certainty equivalent for each $q\in Q^{2}\cup\{q^1\}$ and $x\in\R^Y$, and $\overline C^{c}$ is the optimistic within-model capacity. Let
$d(\cdot;\mu_2(\cdot|y_1=1)):\Delta(Q^{2})\to[0,+\infty]$
be any belief-misspecification penalty, where $\inf_{\nu\in\Delta(Q^{2})}d(\nu;\mu_2(\cdot|y_1=1))=0.$
Following \citet[][eq. (37)]{hansenmiss25}, define the post-success \textit{variational robust Bayesian} valuation of innovation by
\begin{align}\label{eq:disc:varrobust:G}
\mathcal G_c^d\big(x;\mu_2(\cdot|y_1=1)\big)
:=
\min_{\nu\in\Delta(Q^{2})}
\Bigg\{
\sum_{q\in Q^{2}}\nu(q)\hspace{0.02in}g_q^c(x)
+
d\big(\nu;\mu_2(\cdot|y_1=1)\big)
\Bigg\}.
\end{align}
The routine action is evaluated by $g_{q^1}^c(x)$. The post-success incentive gap generated by a continuation utility vector $x$ is
$M_c^d\big(\mu_2(\cdot|y_1=1);x\big)
:=
\mathcal G_c^d\big(x;\mu_2(\cdot|y_1=1)\big)-g_{q^1}^c(x)$
and the post-success incentive capacity is
$C_c^d\big(\mu_2(\cdot|y_1=1)\big)
:=
\sup_{x\in\R^Y}
M_c^d\big(\mu_2(\cdot|y_1=1);x\big).$

\begin{obs}\label{prop:disc:varrobust-trap}
If $\overline C^{c}<k$, then
$C_c^d\big(\mu_2(\cdot|y_1=1)\big)<k.$
Thus, after $(a_1=2,y_1=1)$, no contract can implement $a_2=2$ under the variational robust Bayesian criterion \eqref{eq:disc:varrobust:G}.
\end{obs}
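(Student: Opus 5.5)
The plan is to bound $M_c^d$ pointwise in $x$ by the optimistic within-model gap, then take suprema, mirroring the argument behind Observation \ref{prop:disc:smooth-bayes-trap}. Fix any continuation vector $x\in\R^Y$ and write $\mu_{\mathfrak s}:=\mu_2(\cdot|y_1=1)$.

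The first step uses the normalization $\inf_\nu d(\nu;\mu_{\mathfrak s})=0$ and the nonnegativity of $d$ to bound the minimization in \eqref{eq:disc:varrobust:G} from above. For any $\nu\in\Delta(Q^{2})$,
$$\mathcal G_c^d(x;\mu_{\mathfrak s})\le \sum_{q}\nu(q)g_q^c(x)+d(\nu;\mu_{\mathfrak s})\le \max_{q\in Q^{2}}g_q^c(x)+d(\nu;\mu_{\mathfrak s}).$$
Taking the infimum over $\nu$ on the right gives $\mathcal G_c^d(x;\mu_{\mathfrak s})\le \max_{q\in Q^{2}} g_q^c(x)$. This step uses only that $d$ attains infimum $0$ and that a convex combination is at most the maximum; it does not require the infimum to be attained.

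The second step subtracts $g_{q^1}^c(x)$, which is allowed because $g_{q^1}^c(x)$ is finite under a valid penalty (I take validity to include $c(q,q)=0$, so that $g^c_q(x)\le \E_q[x]<\infty$ and $g^c_q(x)\ge \min_y x(y)$). This yields
$$M_c^d(\mu_{\mathfrak s};x)\le \max_{q\in Q^2}\{g_q^c(x)-g_{q^1}^c(x)\}\le \overline C^{c}.$$
Taking the supremum over $x$ gives $C_c^d(\mu_{\mathfrak s})\le \overline C^c<k$.

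For the implementability conclusion, I would repeat the logic of Proposition \ref{prop:dynamic:terminal-impl}. At the post-success node, $a_2=2$ is weakly optimal under contract $x_2(1,\cdot)$ only if $\mathcal G_c^d(x_2(1,\cdot);\mu_{\mathfrak s})-k\ge g^c_{q^1}(x_2(1,\cdot))$, that is, only if $M_c^d\ge k$. This is impossible for every finite $x$, since $M_c^d\le C_c^d<k$. The only point needing care is whether $\overline C^c$ could be $+\infty$ or whether the $g$'s could be infinite, but the hypothesis $\overline C^c<k$ together with the finiteness of each $g^c_q$ settles this. No step is a real obstacle; the main subtlety is the inf-versus-min issue in the first step, which the inequality chain handles.
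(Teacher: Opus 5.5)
Your proposal is correct and is essentially the paper's own argument: the paper picks $\nu_\varepsilon$ with $d(\nu_\varepsilon;\mu_2(\cdot|y_1=1))\le\varepsilon$ and lets $\varepsilon\downarrow 0$, which is the same as your taking the infimum over $\nu$. It then subtracts $g^c_{q^1}(x)$ and takes the supremum over $x$ to get $C_c^d\le\overline C^{c}<k$, exactly as you do (incidentally, nonnegativity of $d$ is not needed for this upper bound).
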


Observation \ref{prop:disc:varrobust-trap} shows that allowing the agent to distrust the prior cannot save innovation once every structured model is already too weak to justify it after success. The key difference from Appendix \ref{sec:smooth} is that the agent can now also choose a pessimistic distortion of the communicated posterior. However, this extra layer of distrust still cannot create incentive leverage that is absent within every single model: since $d\ge 0$ and $\nu$ only averages the collection $\{g_q^c(x)\}_{q\in Q^{2}}$, the value $\mathcal G_c^d$ can never exceed the most favorable one-model continuation value. Thus, if even the best structured model cannot support a post-success wedge of size $k$, then distrust of the entire roadmap only reinforces the breakthrough trap. Lastly, we should note that the variational robust Bayesian criterion in  \citet[][eq. (37)]{hansenmiss25} was axiomatized by \citet[][Theorem 2]{maselli25}.

\subsection{General updating rules}\label{subsec:disc:generic-update}

The LLR updating rule in Assumption \ref{ass:dynamic:llr} is only one way to update misspecification concerns. In our framework, the breakthrough-trap mechanism requires only three elements: (i) the update must be driven by a statistic of the roadmap's \emph{absolute} fit to the realized signal, not merely by posterior odds within $Q^{2}$; (ii) worse fit must weakly raise misspecification concern; and (iii) any imperfect fit must induce a strictly positive alarm. The next result shows that these primitive conditions suffice to generate breakthrough traps.

More formally, for each signal $y\in Y$, let
$b(y):=\max_{q\in Q^{2}} q(y)$
denote the roadmap's best attainable fit to $y$. Consider any updating rule of the form
$\lambda_2(y)=\Psi\big(\varLambda(b(y))\big),$
where $\varLambda:(0,1]\to\R_+$ is continuous and strictly decreasing with $\varLambda(1)=0$, and $\Psi:\R_+\to\R_+$ is continuous, weakly increasing, and satisfies $\Psi(0)=0<\Psi(z)$ for all $z>0$.

\begin{obs}\label{prop:disc:generic-update}
 Under the updating rule $\lambda_2(\cdot)=\Psi(\varLambda(b(\cdot)))$, there exists $\bar\theta_H\in(p,1)$ such that, for every $\theta_H\in(p,\bar\theta_H)$, there is a $\bar\theta_L\in(0,p)$ with the following property: for all $\theta_L\in(0,\bar\theta_L)$, the post-success incentive capacity satisfies
$C\big(\mu_2(\cdot|y_1=1),\lambda_2(1)\big)<k.$
Consequently, under any dynamic contract and any induced subgame-perfect equilibrium, the continuation action $a_2=2$ is not implementable after $(a_1=2,y_1=1)$.
\end{obs}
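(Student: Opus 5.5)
The plan is to reuse the upper bound \eqref{eq:succ:Delta-ub} from the proof of Theorem \ref{thm:dynamic:trap-scaleup}. That bound relies only on Lemma \ref{lem:dynamic:two-point} and Bayes' rule, not on the LLR form of $\lambda_2(1)$. For any $\lambda>0$ it gives
\[
C\big(\mu_2(\cdot|y_1=1),\lambda\big)\le \frac{1}{\lambda}\log\frac{1-p}{1-\theta_H}+\frac{1-m_{\mathfrak s}(\theta_L)}{\lambda}\log\frac{p}{\theta_L}.
\]
I will make both terms small relative to $k$. Under the generic rule, $b(1)=\max\{\theta_H,\theta_L\}=\theta_H$, so $\lambda_2(1)=\Psi(\varLambda(\theta_H))$. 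This depends on $\theta_H$ only, not on $\theta_L$, and that independence is what lets me fix $\theta_H$ first and send $\theta_L\downarrow0$ afterwards.

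\textbf{Step 1: choose $\bar\theta_H$.} As $\theta_H\downarrow p$, the first term's numerator $\log\frac{1-p}{1-\theta_H}\to0$. Meanwhile $\varLambda(\theta_H)\to\varLambda(p)>0$, because $\varLambda$ is strictly decreasing with $\varLambda(1)=0$ and $p<1$. Hence $\lambda_2(1)=\Psi(\varLambda(\theta_H))\to\Psi(\varLambda(p))>0$ by continuity of $\Psi$ and the positivity $\Psi(z)>0$ for $z>0$.

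To get a uniform bound, for $\theta_H\in(p,\theta_H^0]$ with some fixed $\theta_H^0<1$, use monotonicity: $\varLambda(\theta_H)\ge\varLambda(\theta_H^0)>0$, so $\lambda_2(1)\ge\underline\lambda:=\Psi(\varLambda(\theta_H^0))>0$ since $\Psi$ is weakly increasing. Then pick $\bar\theta_H\in(p,\theta_H^0]$ such that $\frac{1}{\underline\lambda}\log\frac{1-p}{1-\theta_H}<k$ for all $\theta_H<\bar\theta_H$. This gives the analogue of $\lambda_2(1)>\lambda^*$ in Theorem \ref{thm:dynamic:trap-scaleup}, namely $\frac{1}{\lambda_2(1)}\log\frac{1-p}{1-\theta_H}\le k-\eta$ for some $\eta>0$ depending on $\theta_H$.

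\textbf{Step 2: choose $\bar\theta_L$.} With $\theta_H$ fixed, argue exactly as in the proof of Theorem \ref{thm:dynamic:trap-scaleup}. We have
\[
1-m_{\mathfrak s}(\theta_L)\le\frac{1-m}{m\theta_H}\,\theta_L,
\]
so $(1-m_{\mathfrak s})\log(p/\theta_L)\to0$ as $\theta_L\downarrow0$. Since $\lambda_2(1)$ does not move with $\theta_L$, pick $\bar\theta_L$ so that the second term is below $\eta/2$. Then $C<k-\eta/2<k$.

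\textbf{Step 3: conclude.} Apply Proposition \ref{prop:dynamic:terminal-impl} for non-implementability, as in Theorem \ref{thm:dynamic:trap-scaleup}.(1).

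The main obstacle is Step 1. Strict decrease of $\varLambda$ together with $\Psi(z)>0$ for $z>0$ must keep $\lambda_2(1)$ bounded away from zero while $\log\frac{1-p}{1-\theta_H}\to0$. Without $p<1$ and strict positivity the alarm could vanish as $\theta_H\downarrow p$, so I will state the uniform lower bound $\underline\lambda$ explicitly.
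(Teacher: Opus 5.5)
Your proof is correct and follows the paper's own argument: choose $\theta_H$ near $p$ so that the alarm $\Psi(\varLambda(\theta_H))$ stays bounded away from zero while $\frac{1}{k}\log\frac{1-p}{1-\theta_H}$ vanishes, then fix $\theta_H$ and send $\theta_L\downarrow 0$ using the capacity bound \eqref{eq:succ:Delta-ub} from the proof of Theorem~\ref{thm:dynamic:trap-scaleup}. Your explicit uniform lower bound $\underline\lambda$, obtained from monotonicity, and your direct use of that upper bound are slightly tighter than the paper's limit argument and its informal claim that the capacity ``converges'' to the one-model capacity, but the route is the same.
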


Observation \ref{prop:disc:generic-update} shows that the breakthrough-trap mechanism does not rely on the LLR rule. What matters is more basic: an updating rule must detect poor absolute fit of the roadmap and translate that poor fit into a strictly positive increase in misspecification concern. When $\theta_H$ is close to $p$, the Bayesian improvement in continuation incentives is arbitrarily weak because success barely distinguishes innovation from the routine action, whereas any nondegenerate fit-based alarm remains strictly positive as long as $\theta_H<1$. This is why the breakthrough trap is a generic phenomenon for fit-based updating rules.

A natural way to construct the updating rule $\varLambda$ is from a forecast-loss function for the realized success event. For example, if $\ell(1,m)$ is continuous and strictly decreasing in the forecast probability $m$, then
$\varLambda(m):=\ell(1,m)-\ell(1,1)$
satisfies the required conditions. More concretely, this approach nests several familiar ``scoring rules'' such as
$$
\ell_{\log}(1,m)=-\log m,\qquad
\ell_{\mathrm B}(1,m)=(1-m)^2,\qquad
\ell_{\mathrm{sph}}(1,m)=1-\frac{m}{\sqrt{m^2+(1-m)^2}},
$$
where the LLR rule is the logarithmic-loss case $\ell_{\log}$ with $\Psi(z)=z/\gamma$, the Brier score is the quadratic-loss case $\ell_{\mathrm B}$ \citep{brier50}, and the spherical score is based on the loss $\ell_{\mathrm{sph}}$.

\subsection{Discussion: scope of the mechanism}\label{sec:disc:scope}

Appendices \ref{sec:smooth}--\ref{subsec:disc:generic-update} clarify the scope of the breakthrough-trap mechanism. \ref{sec:smooth} and \ref{sec:distrust} show that the post-success implementability failure is not tied to ARC, to linear averaging across benchmark models, or to trusting the communicated prior.  \ref{subsec:disc:generic-update} complements this by showing that the dynamic trigger is also not tied to the LLR rule.  

\par These extensions also clarify the scope of Theorem \ref{thm:cycles:frontier}. Its closed-form frontier
$\alpha_\ast=\frac{\gamma}{D_\ast^{2}}\max_{q\in{Q}_*^{2}}\bar\lambda_k(q)$ in \eqref{eq:cycles:alpha-bar}
is special to the ARC-LLR pair because, in this case, both components are available in closed form: the largest misspecification level compatible with local implementability and the long-run rate at which innovation generates misspecification concern. Under a more general  decision criterion and a more general fit-based updating rule, the same speed-limit logic still applies, but the frontier becomes implicit. Specifically, let $\mathscr C(\tau)$ denote the criterion-specific incentive capacity as a function of the relevant distrust index $\tau$, and let $\tau(\alpha)$ denote the asymptotic distrust level generated by long-run innovation frequency $\alpha$. Then, any equilibrium path with innovation frequency $\alpha$ must satisfy $\mathscr C(\tau(\alpha))\ge k$; otherwise, innovation eventually becomes locally uncontractible. Thus, the long-run speed limit is characterized by the cutoff
$\alpha^{\mathscr C}:=\sup\{\alpha\in[0,1]: \mathscr C(\tau(\alpha))\ge k\}.$
 What is special in Theorem \ref{thm:cycles:frontier} is therefore the closed-form expression of the frontier, not the existence of the frontier itself. 
\par Moreover, if $\alpha^{\mathscr C}<1$ and innovation occurs infinitely often along an equilibrium path, then the routine action must also occur infinitely often. Otherwise, the routine action would be played only finitely many times, so the long-run innovation frequency would converge to 1, contradicting the bound $\alpha\leq \alpha^{\mathscr C}<1$. Thus, the implicit frontier $\alpha^{\mathscr C}$ still generates exploration-exploitation cycles even when it is not available in closed form.

\section{Appendix: Roadmap design}\label{sec:disc:roadmap-no-free-lunch}

It is natural to ask whether the principal can avoid breakthrough traps by designing the roadmap itself. We show that the answer is very limited. When the agent's misspecification-sensitivity parameter $\gamma$ is known, roadmap design can only help locally by making a success less surprising under the optimistic model. However, this same distortion can be costly because it can tighten the long-run speed limit. When $\gamma$ is unknown, the problem is even sharper: under full-support restriction, no roadmap can rule out breakthrough traps uniformly over all types $\gamma>0$. More broadly, we can also imagine that the principal may be constrained or disciplined by organization's protocols and credibility when designing a roadmap, e.g., a roadmap that departs too far from established protocol or scientific evidence is costly to defend and may not be trusted. Adding such constraints suggests that roadmap design is not an effective approach to address breakthrough traps. 
All proofs from this appendix are in Appendix \ref{sec:appproof}.
\subsection{When $\gamma$ is known}
Observation \ref{obs:roadmap-known-gamma} shows that, when $\gamma$ is known, the only direct roadmap lever against the success-node misspecification shock is the success probability of the optimistic model: $\theta_H$. Thus, prior design cannot directly help here because $\mu_1$ does not enter $\lambda_2(1)$.
\begin{obs}\label{obs:roadmap-known-gamma}
Fix $\gamma>0$. There exists a unique $\hat\theta_H(\gamma)\in(p,1)$ such that
$\lambda_2(1)\leq \lambda^\ast$ if and only if $\theta_H\geq \hat\theta_H(\gamma)$. Moreover, $\lambda_2(1)$ is independent of $\mu_1$ and $\theta_L$.
\end{obs}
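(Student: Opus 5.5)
The proof is a one-variable monotonicity argument in $\theta_H$, using the closed form of Lemma \ref{lem:dynamic:lambda-closed}. That lemma gives $\lambda_2(1)=-\frac{1}{\gamma}\log\theta_H$. This depends only on $\theta_H$ and $\gamma$; neither $\mu_1$ (that is, $m$) nor $\theta_L$ enters. That settles the independence claim at once. The Bayesian posterior $\mu_2$ depends on $m$ and $\theta_L$, but the LLR rule in Assumption \ref{ass:dynamic:llr} takes the maximum of $q(1)$ over $Q^{2}$, and this maximum is attained at $q^2_H$ because $\theta_L<p<\theta_H$.

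For the threshold claim, fix $\gamma,p,k$ and compare two functions of $\theta_H\in(p,1)$:
\[
L(\theta_H):=-\frac{1}{\gamma}\log\theta_H,\qquad
R(\theta_H):=\lambda^\ast=\frac{1}{k}\log\frac{1-p}{1-\theta_H}.
\]
Here $\lambda^\ast$ is the threshold from Theorem \ref{thm:dynamic:trap-scaleup}, viewed as a function of $\theta_H$. Set $h(\theta_H):=R(\theta_H)-L(\theta_H)$. The steps are:
\begin{itemize}
\item[(i)] Monotonicity: $L$ is strictly decreasing and $R$ is strictly increasing, so $h$ is continuous and strictly increasing on $(p,1)$.
\item[(ii)] Left endpoint: as $\theta_H\downarrow p$, $R\to 0$ while $L\to -\frac{1}{\gamma}\log p>0$, so $h(p^+)<0$.
\item[(iii)] Right endpoint: as $\theta_H\uparrow 1$, $R\to+\infty$ while $L\to 0$, so $h\to+\infty$.
\end{itemize}
By the intermediate value theorem and strict monotonicity, $h$ has a unique zero $\hat\theta_H(\gamma)\in(p,1)$. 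Moreover, $h(\theta_H)\ge 0$, which is equivalent to $\lambda_2(1)\le\lambda^\ast$, holds exactly when $\theta_H\ge\hat\theta_H(\gamma)$.

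There is one interpretive point, and it is the only possible obstacle. The comparison only works if $\lambda^\ast$ is allowed to vary with the roadmap parameter $\theta_H$, as its definition in Theorem \ref{thm:dynamic:trap-scaleup} requires. The endpoint limits in (ii) and (iii) rely on that dependence. If $\lambda^\ast$ were instead held fixed, only $L$ would move, and the argument would still go through provided $\lambda^\ast$ lies in the range $(0,-\frac{1}{\gamma}\log p)$. I will take the first reading, since it is the natural one. With that settled, the rest is routine calculus.
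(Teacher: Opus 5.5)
Your proposal is correct and matches the paper's proof: the paper also uses the closed form $\lambda_2(1)=-\frac{1}{\gamma}\log\theta_H$ from Lemma \ref{lem:dynamic:lambda-closed} and studies $F(\theta)=-\frac{1}{\gamma}\log\theta-\frac{1}{k}\log\frac{1-p}{1-\theta}$, which is your $-h$. It shows $F$ is strictly decreasing on $(p,1)$ with limits $-\frac{1}{\gamma}\log p>0$ and $-\infty$, and reads off both the unique cutoff and the independence from $\mu_1$ and $\theta_L$; it treats $\lambda^\ast$ as varying with $\theta_H$, exactly as in your first reading, so the fallback remark is unnecessary.
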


Observation \ref{obs:roadmap-no-free-lunch-alpha} establishes a sharp no-free-lunch result. If the principal raises $\theta_H$ to make an early success look less surprising, then she simultaneously lowers the long-run frontier in Theorem \ref{thm:cycles:frontier}, provided the optimistic model already overstates the success probability relative to the true innovation process. Thus, roadmap design can improve local implementability only by risking lower long-run innovation frequency. 

\begin{obs}[No free lunch]\label{obs:roadmap-no-free-lunch-alpha}
Suppose $q^2_H$ is the unique KL minimizer in Theorem \ref{thm:cycles:frontier} and $p<\theta_\ast<\theta_H$. Restrict attention to  $\theta_H$'s in a neighborhood where $q^2_H$ remains the unique KL minimizer. Then, $\lambda_2(1)$ is strictly decreasing in $\theta_H$, but
$\alpha_\ast=\frac{\gamma}{k}\frac{\log\frac{1-p}{1-\theta_H}}{\DKL(\theta_\ast\|\theta_H)}$ in (\ref{eq:cycles:alpha-bar})
is also strictly decreasing in $\theta_H$. Thus, making the roadmap more success-friendly relaxes the success-node misspecification shock but also tightens the long-run speed limit.
\end{obs}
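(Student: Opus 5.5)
The statement to prove is Observation \ref{obs:roadmap-no-free-lunch-alpha}. Its two monotonicity claims are separate one-variable calculus facts, so I would treat them in turn.

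\textbf{Step 1: the local shock.} By Lemma \ref{lem:dynamic:lambda-closed}, $\lambda_2(1)=-\gamma^{-1}\log\theta_H$. Its derivative in $\theta_H$ is $-1/(\gamma\theta_H)<0$, so $\lambda_2(1)$ is strictly decreasing in $\theta_H$. Nothing more is needed here.

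\textbf{Step 2: reduce $\alpha_\ast$ to an explicit ratio.} Because $q^2_H$ is the unique KL minimizer throughout the neighborhood, $Q^2_\ast=\{q^2_H\}$ and $D^2_\ast=\DKL(\theta_\ast\|\theta_H)$. Then \eqref{eq:cycles:alpha-bar} together with \eqref{eq:cycles:lambda-bars-main} gives
\[
\alpha_\ast=\frac{\gamma}{k}\,\frac{N(\theta_H)}{D(\theta_H)},\qquad N(\theta):=\log\frac{1-p}{1-\theta},\quad D(\theta):=\DKL(\theta_\ast\|\theta).
\]
Both $N$ and $D$ are strictly positive for $\theta_H>\theta_\ast>p$. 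It therefore suffices to show $N'D-ND'<0$.

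\textbf{Step 3: the sign computation, which is the main obstacle.} The derivatives are $N'(\theta)=1/(1-\theta)$ and $D'(\theta)=-\theta_\ast/\theta+(1-\theta_\ast)/(1-\theta)=(\theta-\theta_\ast)/(\theta(1-\theta))>0$. Multiplying the target inequality by $1-\theta$, it becomes
\[
\theta\,D(\theta)<(\theta-\theta_\ast)\,N(\theta).
\]
The difficulty is that this is a comparison of transcendental functions, so I would attack it with convexity.

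Fix $\theta$ and define $F(s):=(\theta-s)N(\theta)-\theta\,\DKL(s\|\theta)$ for $s\in[p,\theta]$.

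The first key fact is that $s\mapsto\DKL(s\|\theta)$ is convex, so $F$ is concave in $s$. At $s=\theta$ we have $F(\theta)=0$, and $F'(\theta)=-N(\theta)<0$ because $\partial_s\DKL(s\|\theta)=0$ at $s=\theta$.

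Concavity alone does not finish the argument. A concave function that vanishes at the right endpoint and has negative slope there is positive just to the left of that endpoint. The risk is that it could turn negative further left, so I need to control the other end.

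At $s=p$, expand
\[
\theta\,\DKL(p\|\theta)=\theta p\log\frac{p}{\theta}+\theta(1-p)\log\frac{1-p}{1-\theta}.
\]
The first term is negative because $p<\theta$, and $\theta(1-p)\le\theta-p$. Hence $\theta\,\DKL(p\|\theta)<(\theta-p)N(\theta)$, that is, $F(p)>0$.

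A concave $F$ with $F(p)>0$ and $F(\theta)=0$ satisfies $F>0$ on $[p,\theta)$: it lies above the chord joining these two endpoint values. Taking $s=\theta_\ast\in(p,\theta_H)$ then gives exactly the inequality in Step 3, so $\alpha_\ast$ is strictly decreasing in $\theta_H$.

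I would double-check the endpoint bound at $s=p$ carefully, since the whole argument rests on it. If it fails, the fallback is a direct Taylor expansion in $\theta-\theta_\ast$, which would give the result only for $\theta_H$ close to $\theta_\ast$, consistent with the neighborhood restriction in the statement.

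The final sentence of the statement is an interpretation: combining Step 1 and Step 3, raising $\theta_H$ lowers $\lambda_2(1)$ but also lowers $\alpha_\ast$.
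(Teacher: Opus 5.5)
Steps 1--2 are correct, and your reduction to $\theta D(\theta)<(\theta-\theta_\ast)N(\theta)$ is right (strictly, you multiply by $\theta(1-\theta)$, not $1-\theta$). The concavity/chord strategy also works. However, the step you flagged as load-bearing rests on a false inequality. The claim $\theta(1-p)\le\theta-p$ is equivalent to $\theta p\ge p$, i.e. $\theta\ge1$. For $\theta<1$ you actually have $\theta(1-p)-(\theta-p)=p(1-\theta)>0$. For example, $\theta=0.5$, $p=0.1$ gives $0.45\,N$ against $0.4\,N$. So the second term of $\theta\,\DKL(p\|\theta)$ exceeds $(\theta-p)N(\theta)$, and a term-by-term comparison cannot give $F(p)>0$.

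Your fallback does not rescue the argument. The neighborhood in the statement is a neighborhood of the given $\theta_H$ on which $q^2_H$ stays the unique KL minimizer, not a neighborhood of $\theta_\ast$. A local expansion in $\theta-\theta_\ast$ would therefore not cover the claim.

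The fix is to compute $F(p)$ exactly and let the negative first term absorb the excess:
\[
F(p)=\theta p\log\frac{\theta}{p}-p(1-\theta)\log\frac{1-p}{1-\theta}
=p\Big[\theta\log\frac{\theta}{p}+(1-\theta)\log\frac{1-\theta}{1-p}\Big]=p\,\DKL(\theta\|p)>0 .
\]
With this, concavity of $F$ together with $F(\theta)=0$ gives $F(\theta_\ast)\ge\frac{\theta-\theta_\ast}{\theta-p}F(p)>0$, and your proof is complete. The slope fact $F'(\theta)=-N(\theta)$ is then unnecessary.

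The repaired argument is a genuinely different route from the paper's. The paper writes $\DKL(\theta_\ast\|\theta)=\int_{\theta_\ast}^{\theta}\frac{u-\theta_\ast}{u(1-u)}\,du$ and bounds $(u-\theta_\ast)/u\le(\theta-\theta_\ast)/\theta$. This yields $D(\theta)\le\frac{\theta-\theta_\ast}{\theta}\log\frac{1-\theta_\ast}{1-\theta}$, and $p<\theta_\ast$ is used only in the final comparison of logarithms. No endpoint computation is needed.

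Your version varies the first argument of the divergence and anchors the chord at $s=p$. It uses $p<\theta_\ast$ only to place $\theta_\ast$ inside the chord interval, but it depends on the exact identity $F(p)=p\,\DKL(\theta\|p)$ at the anchor, which is precisely where your written argument broke.
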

In summary, when 
$\gamma$ is known, an endogenous roadmap is only a local calibration device: it can reduce the success-node shock by making success less surprising to the agent. However, when it does so by moving the roadmap away from the true innovation process, Observation \ref{obs:roadmap-no-free-lunch-alpha} shows that it creates a global misspecification cost, and Theorem \ref{thm:cycles:frontier} shows that this eventually lowers long-run innovation frequency.

\subsection{When $\gamma$ is unknown}
Observation \ref{obs:roadmap-unknown-gamma} shows that roadmap design is even less compelling when $\gamma$ is unknown. Under  full-support restriction on structured models in $Q^{2}$, the problem has no uniform solution across all types $\gamma$. If one drops full support, the only uniform way to kill the success-node shock is to set $\theta_H=1$, which is economically vacuous: success is then fully explained by construction, so the roadmap ceases to be a meaningful diagnostic object.

\begin{obs}\label{obs:roadmap-unknown-gamma}
Fix any full-support roadmap with $0<\theta_L<p<\theta_H<1$. Then, there exists $\bar\gamma>0$ such that
$C(\mu_2(\cdot|y_1=1),\lambda_2(1))<k$ for all $\gamma\in(0,\bar\gamma)$. Consequently, no full-support roadmap avoids breakthrough traps uniformly over all $\gamma>0$. If full support is dropped, uniform elimination of the success-node misspecification shock requires $\theta_H=1$.
\end{obs}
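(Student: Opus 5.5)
The plan is to reduce everything to the bound on post-success capacity already used in the proof of Theorem \ref{thm:dynamic:trap-scaleup}, and then let $\gamma\downarrow 0$. Fix the full-support roadmap $(p,\theta_H,\theta_L,m)$, so that $\mu_2(\cdot|y_1=1)$ does not depend on $\gamma$ and $m_{\mathfrak s}:=\mu_2(q^2_H|y_1=1)\in(0,1)$ is a fixed number. By Lemma \ref{lem:dynamic:lambda-closed}, $\lambda_2(1)=-\log(\theta_H)/\gamma$, and this is strictly positive and finite because $\theta_H<1$. It tends to $+\infty$ as $\gamma\downarrow0$.

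\textbf{Step 1: a $\gamma$-free bound.} Decompose $M(\mu_{\mathfrak s},\lambda;x)$ as a $\mu_{\mathfrak s}$-average of the single-model gaps $g_q-g_{q^1}$, exactly as in the displayed bound \eqref{eq:succ:Delta-ub}, or equivalently via \eqref{eq:app:cycles:C-mix-ub}. Then apply Lemma \ref{lem:dynamic:two-point} (or Lemma \ref{lem:cycles:capacity}) to each model. This gives, for every $\lambda>0$,
$$C(\mu_{\mathfrak s},\lambda)\le \frac{1}{\lambda}\Big[m_{\mathfrak s}\log\frac{1-p}{1-\theta_H}+(1-m_{\mathfrak s})\log\frac{p}{\theta_L}\Big]=:\frac{K}{\lambda}.$$
Here $K$ is a finite positive constant depending only on the roadmap, and it is finite precisely because of full support ($\theta_L>0$, $\theta_H<1$).

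\textbf{Step 2: the threshold.} Substitute $\lambda=\lambda_2(1)$ into Step 1 to get $C\le \gamma K/(-\log\theta_H)$. This is strictly less than $k$ whenever
$$\gamma<\bar\gamma:=\frac{k(-\log\theta_H)}{K}>0.$$
The claim that no full-support roadmap avoids traps uniformly over $\gamma>0$ is then immediate, since every such roadmap admits types $\gamma\in(0,\bar\gamma)$ for which the trap occurs. Non-implementability after $(a_1=2,y_1=1)$ follows from Proposition \ref{prop:dynamic:terminal-impl} as in Theorem \ref{thm:dynamic:trap-scaleup}.(1).

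\textbf{Step 3: dropping full support.} Uniform elimination of the success-node shock means $\lambda_2(1)=0$ for all $\gamma>0$. Since $\lambda_2(1)=-\log(\max_q q(1))/\gamma$, this holds for some (equivalently every) $\gamma$ iff $\max_{q\in Q^2}q(1)=1$. Given $\theta_L<p<\theta_H$, the maximizer is $q^2_H$, so this forces $\theta_H=1$. Conversely, $\theta_H=1$ gives $\lambda_2(1)=0$.

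\textbf{Main obstacle.} The step needing the most care is Step 1's constant. Without full support, $\log(p/\theta_L)$ or $\log\frac{1-p}{1-\theta_H}$ could blow up. One must also check that the mixture bound is legitimate, i.e.\ that the supremum of an average is at most the average of the suprema. Both points are routine once full support is assumed, so the argument is short.
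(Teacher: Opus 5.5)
Your proposal is correct and follows the paper's proof. Both arguments bound capacity by the mixture $C(\mu,\lambda)\le\sum_{q}\mu(q)C(\delta_q,\lambda)$, use the single-model formula of Lemma~\ref{lem:cycles:capacity} (equivalently Lemma~\ref{lem:dynamic:two-point}) with $\lambda_2(1)=-\log(\theta_H)/\gamma\to\infty$ as $\gamma\downarrow0$, and use the same one-line argument for the $\theta_H=1$ claim. The only difference is that you give the explicit threshold $\bar\gamma=k(-\log\theta_H)/K$, where the paper only argues $C\to0$ as $\gamma\downarrow0$.
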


Taken together, Observations \ref{obs:roadmap-known-gamma}--\ref{obs:roadmap-unknown-gamma} indicate that roadmap design is not the right remedy for breakthrough traps. When $\gamma$ is known, it can only trade off a local gain against the global misspecification cost in Theorem \ref{thm:cycles:frontier}. When $\gamma$ is unknown, it cannot eliminate the trap without collapsing to a trivial roadmap. This suggests that the more natural remedies are those in Section \ref{sec:avoidtrap} that affect the evolution of misspecification concerns.

\section{Appendix: Infinite-Horizon Contracts}\label{app:infinite}

This appendix explores the infinite-horizon extension of 
Section \ref{sec:dynamic}: the \emph{history-dependent continuation operator} that governs whether
innovation can be sustained by incentives after informative histories. The two-period analysis
identifies a breakthrough trap when the post-success misspecification concern tightens continuation incentives
enough to make the policy ``continue innovating after success'' uncontractible. The purpose of this appendix is to
(i) provide a long-run benchmark for that continuation incentive environment under sustained play, and
(ii) give a formal bridge from convergence of the state $(\mu_t,\lambda_t)$ to convergence of
the entire  continuation evaluation.

\medskip
\noindent\textbf{Outline.}
Appendix \ref{subsec:app:infty:limit} characterizes the long-run limit of the one-step ARC operator
$x\mapsto \mathcal G^a(x;\mu_t,\lambda_t)$ along histories with an eventual constant action. The results
pin down whether long-run continuation incentives converge to \citeauthor{manso2011}'s
(\citeyear{manso2011}) EU benchmark, to a worst-case criterion, or to an interior ARC limit. These results follow from results in \citet{lanzani2025} and are simpler because of our finiteness assumptions on outcomes and models. The novelty is in
Appendix \ref{subsec:dynamic:infty}, where we show that when the state stabilizes locally along continuation
contingencies, the induced infinite-horizon recursion converges along the realized history to the
recursion associated with the limiting operator. This justifies using the limiting operator as a
long-run approximation for incentive provision.

\subsection{Long-run continuation incentives}\label{subsec:app:infty:limit}

This appendix formalizes the infinite-horizon version of the learning environment in
Section \ref{subsec:dynamic:env}. Throughout, $Y$ and $A$ are finite. The main object of interest is the
\emph{continuation operator} $x\mapsto \mathcal G^a(x;\mu_t,\lambda_t)$, because it is the
node-by-node object that enters the implementability inequalities through $M(\cdot)$ and the
incentive capacity $C(\mu,\lambda)$ in (\ref{eq:dynamic:Delta-C}).

\medskip
\noindent\textbf{Global structured models.}
A \emph{(global) structured model} is a vector $q=(q^a)_{a\in A}\in\Delta(Y)^A$, where $q^a$ is the
(one-step) outcome distribution under action $a$. Let $Q\subset\Delta(Y)^A$ be finite and assume that
$q^a(y)>0$ for all $q\in Q$, $a\in A$, and $y\in Y$. The agent has an initial second-order belief
$\mu_1\in\Delta(Q)$ with full support.

\medskip
\noindent\textbf{Objective data-generating process.}
Let $p_\ast=(p_\ast^a)_{a\in A}\in\Delta(Y)^A$ denote the true model. Conditional on the action path $(a_t)_{t\ge1}$, outcomes are independent over time. 

\medskip
\noindent\textbf{Public vs. private histories.}
The principal observes only outcomes, while the agent observes both outcomes and his own past actions.
Accordingly, we distinguish:

\begin{itemize}[leftmargin=18pt]
\item \emph{Public history:} $s_1=\varnothing$, $S_t:=Y^{t-1}$ for $t\ge 2$,
$s_t=(y_1,\dots,y_{t-1})\in S_t$.
\item \emph{Private history:} $h_1=\varnothing$, $H_t:=(A\times Y)^{t-1}$ for $t\ge 2$,
$h_t=(a_1,y_1,\dots,a_{t-1},y_{t-1})\in H_t$.
\end{itemize}

Let $\chi_t:H_t\to S_t$ be the public projection $\chi_t(h_t):=(y_1,\dots,y_{t-1})$. Contracts must be
measurable w.r.t. $S_t$ (not $H_t$), whereas Bayesian updating of $(\mu_t,\lambda_t)$ is measurable
w.r.t. $H_t$. When no confusion arises, we write $\chi(h_t)$ instead of $\chi_t(h_t)$. At each $t\geq1$, the action $a_t$ is chosen after observing the private history $h_t=(a_1,y_1,\dots,a_{t-1},y_{t-1})\in H_t$ but before observing $y_t$. Thus, $a_t$ is $\mathcal{F}_{t-1}$-measurable, where $\mathcal{F}_t := \sigma(a_1, y_1, \ldots, a_t, y_t)$.

\medskip
\noindent\textbf{Bayesian updating of $\mu$.}
Given a \emph{private} history $h_t=(a_1,y_1,\dots,a_{t-1},y_{t-1})\in H_t$, define the likelihood of $q\in Q$, $L_t(q;h_t):=\prod_{\tau=1}^{t-1} q^{a_\tau}(y_\tau).$
For all $t$, $h_t$, and $q\in Q$, the posterior belief is given by Bayes rule:
\begin{align}
\mu_t(q|h_t)\propto\mu_1(q)L_t(q;h_t).
\label{eq:dynamic:infty:mu-update}
\end{align}

\medskip
\noindent\textbf{One-step ARC aggregator.}
For any action $a\in A$, posterior $\mu\in\Delta(Q)$, intensity $\lambda>0$, and continuation payoff vector
$x:Y\to\R$ (already in utility units), define
$$
\mathcal G^a(x;\mu,\lambda):=\sum_{q\in Q}\mu(q)\hspace{0.02in}g_{q^a}(x;\lambda),
\qquad
g_{q^a}(x;\lambda):=-\frac{1}{\lambda}\log\Big(\sum_{y\in Y}q^a(y)e^{-\lambda x(y)}\Big).
$$
Thus, $\mathcal G^a(x;\mu,\lambda)$ is the agent's valuation of the continuation payoff $x$
if he chooses action $a$ at a node with robustness state $(\mu,\lambda)$. In the contracting problem,
the principal selects continuation vectors $x$ to satisfy incentive constraints, so the long-run behavior
of the operator $x\mapsto \mathcal G^a(x;\mu_t,\lambda_t)$ determines the long-run feasibility of
continued innovation.

\paragraph{Best-fitting models.}
For each action $a\in A$, define the set of KL-best-fit models
\begin{align}
{Q}^{a}_*:=\arg\min_{q\in Q}\DKL \big(p_\ast^a\|q^a\big),
\label{eq:dynamic:infty:Q-bestfit}
\end{align}
which is nonempty since $Q$ is finite. The next result is the standard result showing that Bayesian posteriors concentrate on KL minimizers, which builds on the logic of \citet{berk66}.  \citet{lanzani2025} already proves more general versions of it, so we omit its proof. 

\begin{obs}[Posterior concentration on KL-best fits]\label{prop:dynamic:infty:posterior}
Fix $a\in A$ and suppose that along a realized path there exists $T<\infty$ such that $a_t=a$ for all
$t\ge T$. Then, under the true distribution induced by $p_\ast$, the following hold almost surely:
\begin{enumerate}[label=(\alph*),leftmargin=18pt]
\item For any $q\notin {Q}^{a}_*$ and any $\hat{q}\in {Q}^{a}_*$,
$\lim_{t\to\infty}\frac{1}{t}\log\frac{\mu_t(q|h_t)}{\mu_t(\hat{q}|h_t)}
=
-(\DKL(p_\ast^a\|q^a)-\DKL(p_\ast^a\|\hat{q}^a))<0.$
In particular, $\mu_t(q|h_t)\to 0$ for every $q\notin {Q}^{a}_*$.
\item Every subsequential limit $\nu$ of $(\mu_t(\cdot|h_t))_{t\ge 1}$ satisfies
$\mathrm{supp}(\nu)\subseteq {Q}^{a}_*$. If ${Q}^{a}_*=\{\hat{q}\}$ is a singleton, then
$\mu_t(\cdot|h_t)\to \delta_{\hat{q}}$.
\item If there exists $q\in Q$ with $q^a=p_\ast^a$, then
${Q}^{a}_*=\{q\in Q:q^a=p_\ast^a\}$ and every subsequential limit $\nu$ satisfies
$\mathrm{supp}(\nu)\subseteq \{q\in Q:q^a=p_\ast^a\}$.
\end{enumerate}
\end{obs}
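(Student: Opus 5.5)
The proof runs through the standard Berk-type argument for misspecified Bayesian updating, adapted to the finite setting here. Fix the action $a$ and the time $T$ after which $a_t=a$. For any two models $q,\hat q\in Q$, Bayes rule \eqref{eq:dynamic:infty:mu-update} gives
\[
\log\frac{\mu_t(q|h_t)}{\mu_t(\hat q|h_t)}=\log\frac{\mu_1(q)}{\mu_1(\hat q)}+\sum_{\tau=1}^{T-1}\log\frac{q^{a_\tau}(y_\tau)}{\hat q^{a_\tau}(y_\tau)}+\sum_{\tau=T}^{t-1}\log\frac{q^{a}(y_\tau)}{\hat q^{a}(y_\tau)}.
\]
The first two terms are finite because $\mu_1$ has full support, $T<\infty$, and all models have full support on the finite set $Y$. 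Dividing by $t$, both terms vanish.

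For the remaining sum, conditional on the action path, the outcomes $y_\tau$ for $\tau\ge T$ are independent draws from $p_\ast^a$. The summands are bounded because $Y$ is finite and $q^a,\hat q^a$ have full support. The strong law of large numbers gives, almost surely,
\[
\frac1t\sum_{\tau=T}^{t-1}\log\frac{q^a(y_\tau)}{\hat q^a(y_\tau)}\to \E_{p_\ast^a}\Big[\log\frac{q^a}{\hat q^a}\Big]=\DKL(p_\ast^a\|\hat q^a)-\DKL(p_\ast^a\|q^a).
\]
This is part (a), with strict negativity when $q\notin Q^a_\ast$ and $\hat q\in Q^a_\ast$.

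The main obstacle is that $T$ and the action path are random and may be endogenous to past outcomes, so the i.i.d. law of large numbers cannot be applied naively. I would handle this by working on the event $\{a_t=a\ \forall t\ge T\}$ for each deterministic $T$. On that event, the sum is a martingale plus a drift: $\log\frac{q^{a_\tau}(y_\tau)}{\hat q^{a_\tau}(y_\tau)}$ has conditional mean $\DKL(p_\ast^{a_\tau}\|\hat q^{a_\tau})-\DKL(p_\ast^{a_\tau}\|q^{a_\tau})$ given $\mathcal F_{\tau-1}$, because $a_\tau$ is $\mathcal F_{\tau-1}$-measurable. The increments are bounded, so the martingale strong law ($M_t/t\to0$ a.s.) gives the limit for the whole sequence. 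Then take the countable union over $T$.

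From (a), $\mu_t(q|h_t)\le \mu_t(q|h_t)/\mu_t(\hat q|h_t)\to0$ exponentially, which gives the "in particular" claim. Part (b) follows at once: since $Q$ is finite, any subsequential limit $\nu$ puts zero mass outside $Q^a_\ast$. If $Q^a_\ast=\{\hat q\}$, all mass on the complement vanishes, so $\mu_t\to\delta_{\hat q}$.

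For (c), if some $q$ has $q^a=p_\ast^a$, then $\DKL(p_\ast^a\|q^a)=0$ is the minimum. By Gibbs' inequality, $\DKL(p^a_\ast\|q'^a)=0$ if and only if $q'^a=p^a_\ast$, so $Q^a_\ast=\{q:q^a=p^a_\ast\}$. The support statement is then (b).
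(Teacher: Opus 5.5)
Your proof is correct and is essentially the argument the paper has in mind: the paper omits the proof of this observation and defers to Berk's logic and Lanzani's more general results, but its proof of Lemma~\ref{lem:cycles:posterior} uses the same two ingredients, namely the Bayes log-odds decomposition and a bounded-increment martingale law of large numbers to handle endogenous actions. The only difference is cosmetic: the paper runs the martingale argument through empirical outcome frequencies (via Azuma--Hoeffding and Borel--Cantelli), whereas you apply the martingale strong law directly to the log-likelihood-ratio increments, and the two are equivalent on a finite $Y$.
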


Under an eventually constant action $a$, the long-run posterior does not ``select the truth'' unless the
model class $Q$ is correctly specified. Instead, beliefs concentrate on the subset ${Q}^{a}_*$ of
structured models that best approximate the data in terms of KL divergence. Thus, even with unlimited data,
the relevant ``long-run roadmap'' is the set of best-fitting structured models, not necessarily the
true model. This matters for contracts because continuation incentives depend on the operator
averaging $g_{q^a}$ under $\mu_t$, and the long-run averaging is over ${Q}^{a}_*$.

\paragraph{Misspecification evidence.}
As in the main text, we continue to assume that the set of unstructured alternatives is the full simplex $\Delta(Y)^A$ \citep[see,][Section 5.1, p. 1351]{lanzani2025}.
Define the log-likelihood ratio (LLR) against all unstructured alternatives by
\begin{align}
\mathrm{LLR}(h_t,Q):=
-\log\left(
\frac{\max_{q\in Q}\prod_{\tau=1}^{t-1} q^{a_\tau}(y_\tau)}
     {\max_{p\in\Delta(Y)^A}\prod_{\tau=1}^{t-1} p^{a_\tau}(y_\tau)}
\right) \quad\forall t, h_t\in H_t.
\label{eq:dynamic:infty:LLR}
\end{align}
Appendix \ref{subsec:app:infty:innovation-llr} shows that $\text{LLR}^2$ in \eqref{eq:dynamic:infty:LLR2} is a special case of \eqref{eq:dynamic:infty:LLR} in our innovation settings. 
\begin{obs}\label{prop:dynamic:infty:llr-rate}
Fix $a\in A$ and suppose that along a realized path, there exists $T<\infty$ such that $a_t=a$ for all
$t\ge T$. Then, under the true distribution induced by $p_\ast$,
\begin{align}
\lim_{t\to\infty}\frac{\mathrm{LLR}(h_t,Q)}{t}
=
\min_{q\in Q}\DKL \big(p_\ast^a\|q^a\big).
\label{eq:dynamic:infty:llr-rate}
\end{align}
\end{obs}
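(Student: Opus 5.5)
The plan is to write $\mathrm{LLR}(h_t,Q)$ as a difference of two maximized log-likelihoods, compute the almost-sure growth rate of each by the strong law of large numbers, and subtract. Write $\ell_t(q):=\sum_{\tau=1}^{t-1}\log q^{a_\tau}(y_\tau)$ and $\ell_t^{\mathrm{u}}:=\max_{p\in\Delta(Y)^A}\sum_{\tau=1}^{t-1}\log p^{a_\tau}(y_\tau)$. Then \eqref{eq:dynamic:infty:LLR} reads $\mathrm{LLR}(h_t,Q)=\ell_t^{\mathrm{u}}-\max_{q\in Q}\ell_t(q)$. It therefore suffices to prove, almost surely,
$$\frac{1}{t}\ell_t^{\mathrm{u}}\to -H(p_\ast^a) \quad\text{and}\quad \frac{1}{t}\ell_t(q)\to -H(p_\ast^a)-\DKL(p_\ast^a\|q^a) \quad \text{for every } q\in Q .$$
Here $H$ is the Shannon entropy with the convention $0\log 0=0$; it is finite because $Y$ is finite.

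\textbf{Step 1 (isolate the prefix).} Split every sum at the random time $T$. Only finitely many periods precede $T$. Since $Q$ is finite and every $q^{a'}$ has full support, $\sum_{\tau<T}\log q^{a_\tau}(y_\tau)$ is bounded in absolute value by $T\max_{q,a',y}|\log q^{a'}(y)|<\infty$, so after division by $t$ it vanishes.

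For the unstructured term, the maximization over $p\in\Delta(Y)^A$ factorizes across actions. Each action $a'\neq a$ is used only finitely often, and its maximized factor lies in $[-N_{a'}\log|Y|,\,0]$, where $N_{a'}$ is the finite number of times $a'$ was played. This term is therefore $O(1)$. For action $a$ itself, the maximized factor equals $-N_t^a H(\hat p^a_t)$, where $N_t^a$ counts the uses of $a$ before $t$ and $\hat p^a_t$ is the empirical outcome distribution on those periods. The finitely many pre-$T$ uses of $a$ change $\hat p_t^a$ only by $O(1/t)$, so they are harmless.

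\textbf{Step 2 (strong law on the tail).} Conditional on the action path, outcomes are independent and, in every period with $a_\tau=a$, distributed according to $p_\ast^a$. The strong law then gives $\hat p_t^a\to p_\ast^a$ almost surely. Since $N_t^a/t\to1$ and $H$ is continuous on $\Delta(Y)$, we get $\frac1t\ell_t^{\mathrm{u}}\to -H(p_\ast^a)$.

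For each $q\in Q$ we have $\frac1t\ell_t(q)=\frac{N_t^a}{t}\sum_y \hat p^a_t(y)\log q^a(y)+o(1)$. This converges to $\sum_y p_\ast^a(y)\log q^a(y)=-H(p_\ast^a)-\DKL(p_\ast^a\|q^a)$. Because $Q$ is finite, the almost-sure convergences hold simultaneously, and the maximum over $Q$ commutes with the limit. Hence $\frac1t\max_q\ell_t(q)\to -H(p_\ast^a)-\min_{q\in Q}\DKL(p_\ast^a\|q^a)$. Subtracting the two limits yields \eqref{eq:dynamic:infty:llr-rate}.

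\textbf{Main obstacle.} The delicate point is justifying the strong law when both the action path and $T$ are endogenous: actions are chosen adaptively from past outcomes. I would handle this with an optional-skipping argument, in the spirit of Doob's theorem. Draw, for each action, an i.i.d. sequence $(\xi^{a'}_n)_{n\ge1}$ from $p_\ast^{a'}$ in advance, and let the $n$-th use of action $a'$ reveal $\xi^{a'}_n$. This coupling reproduces the true data-generating process, because each $a_\tau$ is measurable with respect to the past. The outcomes on $a$-periods are then literally an initial segment of an i.i.d. sequence, so the strong law applies on a single probability-one event. Restricting to that event together with the event $\{T<\infty\}$ completes the argument. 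This is the same device underlying Observation \ref{prop:dynamic:infty:posterior}, so it can also be cited from \citet{lanzani2025}.
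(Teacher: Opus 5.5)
Your proof is correct. The paper does not write out a proof of this Observation; it attributes the long-run limits of that appendix to \citet{lanzani2025}. Its proof of the innovation-specific analogue, Lemma \ref{lem:cycles:llr-frequency}, has the same skeleton as yours. The unstructured maximizer is the empirical distribution, so the LLR equals $N\min_{q}\DKL(\widehat p\|q)$; your entropy-minus-cross-entropy split is this identity written as two separate limits. The empirical distribution then converges to $p_\ast$, and finiteness of $Q$ together with continuity finishes the argument.

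The genuine difference is the probabilistic step. The paper stays on the original probability space. It forms the bounded-increment martingale $\sum_{t\le T}\1\{a_t=a\}(\1\{y_t=y\}-p^a_\ast(y))$, samples it at the stopping times of successive uses of the action, and applies Azuma--Hoeffding with Borel--Cantelli. That route gives exponential concentration and needs no auxiliary construction. Your optional-skipping coupling instead reduces everything to the ordinary strong law on a single probability-one event that does not depend on the action path. It also handles for free the fact that $T$ is random and not a stopping time. The price is checking that the coupling reproduces the joint law of $(a_t,y_t)$, which holds because each $a_t$ is measurable with respect to the past.

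One further point in your favor concerns the non-factorization across actions. The paper's identity is exact only because $\mathrm{LLR}^2$ discards routine periods. For the global $\mathrm{LLR}(h_t,Q)$ in this Observation, the structured maximum over $Q\subset\Delta(Y)^A$ does not factor across actions. Your bound on the finitely many off-$a$ and pre-$T$ contributions, uniform over the finite full-support $Q$, is exactly what recovers the identity up to an $O(1)$ error.
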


Define the limit above as $D_\ast^{a}:=\min_{q\in Q}\DKL(p_\ast^a\|q^a)$.
This KL discrepancy is a primitive measure of \emph{roadmap misspecification} under action $a$:
it is the best achievable approximation error within the organization's structured models. In particular,
when $D_\ast^{a}>0$, misspecification evidence grows linearly in time along sustained play of $a$, so the
long-run robustness state need not ``wash out.'' This quantity will play a key role below.

\paragraph{Limits of the one-step operator.}
The next result presents the two operator limits that will anchor the long-run continuation environment. This is a simple limit of the ARC criterion, which is used in \citet{lanzani2025}, so we omit its proof.

\begin{obs}\label{prop:dynamic:infty:limits}
Fix $a\in A$, $\mu\in\Delta(Q)$, and $x:Y\to\R$.
\begin{enumerate}[label=(\alph*),leftmargin=18pt]
\item \textup{Vanishing concern}. $\displaystyle \lim_{\lambda\downarrow 0}\mathcal G^a(x;\mu,\lambda)
=\sum_{q\in Q}\mu(q)\sum_{y\in Y} q^a(y)\hspace{0.02in}x(y)$.
\item \textup{Worst-case}. $\displaystyle \lim_{\lambda\uparrow\infty}\mathcal G^a(x;\mu,\lambda)=\min_{y\in Y}x(y)$.
\end{enumerate}
\end{obs}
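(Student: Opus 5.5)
The plan is to reduce both limits to the single-model certainty equivalent $g_{q^a}(x;\lambda)$ and then pass the limit through the finite average. Since $Q$ is finite and $\mu$ does not depend on $\lambda$, $\mathcal G^a(x;\mu,\lambda)=\sum_{q\in Q}\mu(q)\,g_{q^a}(x;\lambda)$ is a finite linear combination with fixed weights. It therefore suffices to show, for each fixed $q\in Q$, that
\[
g_{q^a}(x;\lambda)\to\E_{q^a}[x] \quad\text{as } \lambda\downarrow0,
\qquad
g_{q^a}(x;\lambda)\to\min_{y\in Y}x(y) \quad\text{as } \lambda\uparrow\infty.
\]
Summing with weights $\mu(q)$ then gives (a) directly. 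For (b), the limit $\min_y x(y)$ does not depend on $q$ and $\sum_q\mu(q)=1$, so it also follows.

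For (a), I will write $\Lambda_q(\lambda):=\log\sum_{y\in Y}q^a(y)e^{-\lambda x(y)}$. This is the cumulant generating function of $-x$ under $q^a$, evaluated at $\lambda$. Because $Y$ is finite, $\Lambda_q$ is smooth on $\R$ and satisfies $\Lambda_q(0)=\log 1=0$. Its derivative is
\[
\Lambda_q'(\lambda)=-\frac{\sum_y q^a(y)x(y)e^{-\lambda x(y)}}{\sum_y q^a(y)e^{-\lambda x(y)}},
\]
so $\Lambda_q'(0)=-\E_{q^a}[x]$. Hence
\[
g_{q^a}(x;\lambda)=-\frac{\Lambda_q(\lambda)-\Lambda_q(0)}{\lambda}\to-\Lambda_q'(0)=\E_{q^a}[x],
\]
which is just the definition of the derivative (equivalently, L'Hôpital). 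This is also consistent with the EU-limit convention used in Section \ref{sec:eu}.

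For (b), let $\underline x:=\min_{y}x(y)$ and pick $y^\ast$ attaining it. Factoring out $e^{-\lambda\underline x}$ gives
\[
g_{q^a}(x;\lambda)=\underline x-\frac1\lambda\log S_q(\lambda),
\qquad
S_q(\lambda):=\sum_{y\in Y}q^a(y)e^{-\lambda(x(y)-\underline x)}.
\]
Each exponent in $S_q(\lambda)$ is nonpositive, so $S_q(\lambda)\le 1$. The $y^\ast$ term equals $q^a(y^\ast)$, so $S_q(\lambda)\ge q^a(y^\ast)$. The full-support assumption on $Q$ gives $q^a(y^\ast)>0$. Therefore
\[
0\le-\tfrac1\lambda\log S_q(\lambda)\le\tfrac1\lambda\log\frac{1}{q^a(y^\ast)}\to0,
\]
and so $g_{q^a}(x;\lambda)\to\underline x$.

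The only real obstacle is in (b): the lower bound on $S_q(\lambda)$ needs $q^a(y^\ast)>0$. This is exactly where the standing full-support assumption is used. Without it, the limit would instead be the minimum of $x$ over the support of $q^a$, and averaging across $Q$ would not collapse to $\min_y x(y)$. Everything else is routine.
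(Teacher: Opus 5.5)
Your proof is correct and follows essentially the same route as the paper. The paper omits the proof of this Observation, but the same argument appears in the proof of Lemma~\ref{lem:app:infty:unifcont}: l'H\^{o}pital's rule on $\log\sum_y q(y)e^{-\lambda z(y)}$ as $\lambda\downarrow 0$; as $\lambda\uparrow\infty$, factoring out $\min_y z(y)$ and sandwiching the remaining sum between $1$ and the argmin-set mass, which is positive by full support; then averaging over the finite set $Q$.
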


\paragraph{Long-run robustness types.}
For sequences $(a_t),(b_t)$ with $b_t>0$, we write $a_t=o(b_t)$ to mean $a_t/b_t\to 0$ as $t\rightarrow\infty$. We follow \citet{lanzani2025} by classifying the agent's types depending on how he updates his misspecification concerns over time.

\begin{definition}\label{def:dynamic:infty:types}
Let $\Lambda:\bigcup_{t\ge 1}(A\times Y)^{t-1}\to[0,\infty)$ be a rule and set $\lambda_t(h_t):=\Lambda(h_t)$. Then,
\begin{enumerate}[label=(\alph*),leftmargin=18pt]
\item The rule $\Lambda$ is \emph{lenient} if
$$
\Lambda(h_t)=o\left(\frac{\mathrm{LLR}(h_t,Q)}{t}\right)
\quad\text{for every history sequence }(h_t)_{t\ge 1}.
$$
\item The rule $\Lambda$ is \emph{demanding} if
$$
o\big(\Lambda(h_t)\big)=\frac{\mathrm{LLR}(h_t,Q)}{t}
\quad\text{for every history sequence }(h_t)_{t\ge 1}.
$$
\item Fix $\gamma>0$. The rule $\Lambda$ is \emph{statistically sophisticated with scale $\gamma$} if
$$
\Lambda(h_t)=\frac{\mathrm{LLR}(h_t,Q)}{\gamma(t-1)}
\quad\text{for all }t\ge 2\text{ and all }h_t,
$$
with $\Lambda(h_1)=0$.
\end{enumerate}
\end{definition}
The next result summarizes the limiting implications of each type above. It follows immediately from the proof of \citet[][Theorem 2]{lanzani2025}, so we omit its proof.
\begin{obs}\label{prop:dynamic:infty:limit-pref}
Fix $a\in A$ and suppose that along a realized path there exists $T<\infty$ such that $a_t=a$ for all
$t\ge T$. Then, under the true distribution induced by $p_\ast$, the following hold almost surely:
\begin{enumerate}[label=(\alph*),leftmargin=18pt]
\item \textup{Belief support.} Every subsequential limit $\nu$ of $(\mu_t(\cdot|h_t))_{t\ge 1}$ satisfies
$\mathrm{supp}(\nu)\subseteq {Q}^{a}_*$.
\item \textup{Limit of $\lambda_t$.} Let $D_\ast^{a}:=\min_{q\in Q}\DKL(p_\ast^a\|q^a)$, which equals the limit in
\eqref{eq:dynamic:infty:llr-rate}. Then:
\begin{enumerate}[label=(\roman*),leftmargin=18pt]
\item If $\Lambda$ is lenient and $D_\ast^{a}>0$, then $\lambda_t(h_t)\to 0$.
\item If $\Lambda$ is demanding and $D_\ast^{a}>0$, then $\lambda_t(h_t)\to +\infty$.
\item If $\Lambda$ is statistically sophisticated with scale $\gamma$, then $\lambda_t(h_t)\to D_\ast^{a}/\gamma$.
\end{enumerate}
If $D_\ast^{a}=0$ then $\mathrm{LLR}(h_t,Q)/t\to 0$, and under a lenient rule $\lambda_t(h_t)\to 0$.
\item \textup{Limiting one-step preference.} Fix any $x:Y\to\R$ and any subsequential limit $\nu$ from part
\textup{(a)}. Along the same subsequence,
$$
\mathcal G^a(x;\mu_t(\cdot|h_t),\lambda_t(h_t))
\longrightarrow
\begin{cases}
\displaystyle \sum_{q\in Q}\nu(q)\sum_{y\in Y}q^a(y)\hspace{0.02in}x(y), & \text{lenient with }D_\ast^{a}>0,\\
\displaystyle \min_{y\in Y}x(y), & \text{demanding with }D_\ast^{a}>0,\\
\displaystyle \sum_{q\in Q}\nu(q)\hspace{0.02in}g_{q^a} \big(x;D_\ast^{a}/\gamma\big), & \text{statistically sophisticated.}
\end{cases}
$$
\end{enumerate}
\end{obs}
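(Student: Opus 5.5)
The statement to prove is Observation \ref{prop:dynamic:infty:limit-pref}. I would take the three parts in order, each leaning on an earlier observation, and reduce everything to the eventually-constant-action path $a_t=a$ for $t\ge T$.

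Part (a) is Observation \ref{prop:dynamic:infty:posterior}(b) restated. The likelihoods of the first $T-1$ periods form a fixed finite factor that multiplies each $\mu_T(q)$. After $T$ the posterior odds evolve by i.i.d. log-likelihood increments under $p^a_\ast$. The strong law then gives $\frac1t\log\frac{\mu_t(q)}{\mu_t(\hat q)}\to-(\DKL(p^a_\ast\|q^a)-\DKL(p^a_\ast\|\hat q^a))$. This rate is negative for $q\notin Q^a_*$, so any subsequential limit $\nu$ is supported on $Q^a_*$.

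For part (b), I first invoke Observation \ref{prop:dynamic:infty:llr-rate}: $\mathrm{LLR}(h_t,Q)/t\to D^a_\ast$ almost surely. I would check that this limit is not affected by the finitely many pre-$T$ observations. Those terms contribute $O(1)$ to the numerator, while the unstructured maximum contributes $\sum_y n_y\log(n_y/n)$, where $n$ counts the periods with action $a$. By the strong law this behaves like $-tH(p^a_\ast)+O(\log t)$. The structured maximum gives $-t\min_q[\,H(p^a_\ast)+\DKL(p^a_\ast\|q^a)\,]+o(t)$; here finiteness of $Q$ lets me exchange max and limit.

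The three cases then follow.
(i) Under a lenient rule, $\lambda_t=o(\mathrm{LLR}/t)$, and $\mathrm{LLR}/t$ has the finite limit $D^a_\ast$, so $\lambda_t\to0$.
(ii) Under a demanding rule, $\mathrm{LLR}/t=o(\lambda_t)$, meaning $(\mathrm{LLR}/t)/\lambda_t\to0$. The numerator tends to $D^a_\ast>0$, so $\lambda_t\to\infty$.
(iii) Under the sophisticated rule, $\lambda_t=\frac{t}{t-1}\cdot\frac{\mathrm{LLR}}{\gamma t}\to D^a_\ast/\gamma$.
When $D^a_\ast=0$, the lenient case gives $\lambda_t=o(\mathrm{LLR}/t)$ with $\mathrm{LLR}/t\to0$, hence again $\lambda_t\to0$.

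For part (c), I would view $\mathcal G^a(x;\mu,\lambda)=\sum_q\mu(q)g_{q^a}(x;\lambda)$ as jointly continuous in $(\mu,\lambda)$ on $\Delta(Q)\times[0,\infty]$. At the endpoints I set $g_{q^a}(x;0):=\E_{q^a}[x]$ and $g_{q^a}(x;\infty):=\min_y x(y)$, which is well defined because full support gives the worst-case limit. Continuity at $\lambda\in(0,\infty)$ is clear from the closed form. At the endpoints I use Observation \ref{prop:dynamic:infty:limits}, which holds uniformly in $\mu$ because $Q$ is finite. Along a subsequence with $\mu_t\to\nu$ and $\lambda_t\to\lambda_\infty\in\{0,\infty,D^a_\ast/\gamma\}$, joint continuity yields the three displayed limits.

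The main obstacle is the endpoint continuity in $\lambda$ jointly with $\mu$. I would handle it with the sandwich $\min_y x(y)\le g_{q^a}(x;\lambda)\le \E_{q^a}[x]$ together with the explicit monotone convergence of each of the finitely many $g_{q^a}$. Since $\mathcal G^a$ is affine in $\mu$ with bounded coefficients, uniformity in $\mu$ is immediate.
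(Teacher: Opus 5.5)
Your argument is correct, but it takes a different route from the paper. The paper gives no proof of this observation; it says the result follows from the proof of \citet[][Theorem 2]{lanzani2025}. You instead derive it from the paper's own ingredients:
\begin{enumerate}
\item posterior concentration (Observation~\ref{prop:dynamic:infty:posterior});
\item the LLR rate (Observation~\ref{prop:dynamic:infty:llr-rate});
\item one-line translations of the lenient, demanding and sophisticated definitions into limits of $\lambda_t$;
\item joint continuity of $(\mu,\lambda)\mapsto\mathcal G^a(x;\mu,\lambda)$ on $\Delta(Q)\times[0,\infty]$.
\end{enumerate}
Your part (c) is the continuity fact the paper later proves as Lemma~\ref{lem:app:infty:unifcont}, inside the proof of Proposition~\ref{thm:dynamic:infty:bridge}. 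Your part (b) parallels Lemma~\ref{lem:cycles:llr-frequency}. What your route buys is transparency: it shows the observation needs only finite-$Q$ facts already in the paper. It also handles the sophisticated case with $D^a_\ast=0$ cleanly via the convention $g_{q^a}(x;0)=\E_{q^a}[x]$, which the displayed formula leaves implicit. The citation route buys brevity and Lanzani's more general setting, but leaves these details to the reader.

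Two small points to tighten; neither affects your conclusions.

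First, $\sum_y n_y\log(n_y/n)=-nH(\hat p_n)$ does not differ from $-tH(p^a_\ast)$ by $O(\log t)$. By the law of the iterated logarithm the difference is generally of order $\sqrt{t\log\log t}$. You only need $o(t)$, so nothing breaks. It is cleaner to note that the entropy terms cancel exactly at the empirical level. Writing $n$ for the number of $a$-periods and $\hat p_n$ for their empirical distribution, this gives $\mathrm{LLR}(h_t,Q)=n\min_{q\in Q}\DKL(\hat p_n\|q^a)+O(1)$. Then use continuity of $r\mapsto\min_q\DKL(r\|q^a)$, as in the proof of Lemma~\ref{lem:cycles:llr-frequency}.

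Second, actions are chosen adaptively and $T$ is random. So, conditional on the event $\{a_t=a,\ t\ge T\}$, the post-$T$ outcomes are not literally i.i.d.\ and the plain strong law does not apply directly. Justify the almost-sure limits with a martingale argument (the paper uses Azuma--Hoeffding plus Borel--Cantelli) or with a stack-of-outcomes representation of the data-generating process.
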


In Section \ref{sec:dynamic}, continued innovation is implementable if and only if the
principal can choose a continuation vector $x$ that makes the innovation operator sufficiently favorable,
i.e., that yields $M(\mu,\lambda;x)\ge k$ and hence $C(\mu,\lambda)\ge k$ in (\ref{eq:dynamic:Delta-C}).
Observation \ref{prop:dynamic:infty:limit-pref} provides a long-run benchmark for this feasibility problem
under sustained play of a fixed action.
(i) Under lenient rules, misspecification concerns vanish and continuation incentives converge to the EU
environment of \citeauthor{manso2011} (\citeyear{manso2011});
(ii) under demanding rules, continuation evaluation collapses to \citeauthor{gilboa89}'s (\citeyear{gilboa89}) maxmin criterion, producing the
tightest possible continuation incentives;
(iii) under statistically sophisticated rules, misspecification concerns converge to an interior level
$\lambda_\infty=D_\ast^{a}/\gamma$ whenever $D_\ast^{a}>0$, in which case, endogenous misspecification concerns generate a \emph{persistent}
wedge relative to the EU benchmark.
Intuitively, $D_\ast^{a}$ measures how well the organization's roadmap can approximate the data under
action $a$.

\medskip
\noindent\textbf{Connection to the breakthrough trap.}
Recall that the two-period breakthrough trap in Section \ref{sec:dynamic} arises when a favorable intermediate outcome generates sufficiently strong
misspecification evidence that the resulting $(\mu_2,\lambda_2)$ makes continued innovation
uncontractible. The discussion above shows how to interpret this finite-horizon mechanism
through the lens of long-run limits: when sustained innovation has $D_\ast^{2}>0$ and robustness is
updated in a statistically sophisticated way, the continuation environment converges to an interior robust
benchmark rather than to expected utility. The breakthrough trap can therefore be viewed as an early
manifestation of a more general epistemic force: endogenous misspecification concerns can generate persistent
tightness of continuation incentives, even as data accumulates in the long run.

\subsection{Limiting Recursive Preference}\label{subsec:dynamic:infty}

We provide a bridge from convergence of the state
$(\mu_t(\cdot|h_t),\lambda_t(h_t))$ to convergence of the agent's induced recursive continuation values. This is
the infinite-horizon analogue of the backward-induction logic in Section \ref{sec:dynamic}: when the
local state stabilizes along continuation contingencies, the entire dynamically consistent evaluation of
future incentives becomes asymptotically time-homogeneous. To stay consistent with Section \ref{sec:dynamic}, let $A=\{1,2\}$.

\medskip
\noindent\textbf{Extended domain.}
Let $\overline{\R}_+:=[0,+\infty]$. Fix a metric $d$ on $\overline{\R}_+$ such that:
\begin{enumerate}[label=(\roman*),leftmargin=18pt]
\item $(\overline{\R}_+,d)$ is compact;
\item the topology induced by $d$ on $[0,+\infty)$ is the subspace topology inherited from the standard Euclidean metric;
\item for any sequence $(\lambda_n)$ in $[0,+\infty)$, $\lambda_n\to +\infty$ if and only if
$d(\lambda_n,+\infty)\to 0$.
\end{enumerate}
For example, the following metric satisfies (i)--(iii):
$d(\lambda,\lambda'):=\big|\arctan(\lambda)-\arctan(\lambda')\big|,$ where $\arctan(+\infty):=\frac{\pi}{2}.$

\medskip
\noindent\textbf{Continuous extension.}
For $q\in\Delta(Y)$ and $\lambda>0$, recall \eqref{eq:dynamic:gq}. Extend $g_q(\cdot;\lambda)$ to
$\lambda\in\overline{\R}_+$:
\begin{align}\label{eq:dynamic:infty:gq-ext}
g_q(x;0):=\sum_{y\in Y}q(y)x(y),\qquad
g_q(x;+\infty):=\min_{y\in Y}x(y).
\end{align}
Define the one-step ARC aggregator for innovation by the same formula as \eqref{eq:dynamic:G} for
$\lambda>0$, and extend it to $\lambda\in\overline{\R}_+$ via \eqref{eq:dynamic:infty:gq-ext}:
$\mathcal G(x;\mu,\lambda):=\sum_{q\in Q^{2}}\mu(q)\hspace{0.02in}g_q(x;\lambda)$, for $(\mu,\lambda)\in\Delta(Q^{2})\times\overline{\R}_+.$
Define also the action-contingent one-step operators (with costs $c^1=0$ and $c^2=k$ as in
Section \ref{subsec:dynamic:env}) by $\mathcal G^1(x;\mu,\lambda):=g_{q^1}(x;\lambda)$ and $\mathcal G^2(x;\mu,\lambda):=\mathcal G(x;\mu,\lambda)$ in \eqref{eq:dynamic:G}.

\medskip
\noindent\textbf{Infinite-horizon recursion.}
Let $S_1=\varnothing$ and $S_t:=Y^{t-1}$ for $t\ge 2$ denote public (observable) histories, and let
$H_1=\varnothing$ and $H_t:=(\{1,2\}\times Y)^{t-1}$ for $t\ge 2$ denote private (agent) histories.
For each $t\ge 1$, let $\chi_t:H_t\to S_t$ be the public projection $\chi_t(h_t)=(y_1,\dots,y_{t-1})$. Then, a (utility-space) infinite-horizon dynamic contract is a bounded sequence $(x_t)_{t\ge 1}$ with maps
$$
x_t:S_t\times Y\to\R,
\qquad
\sup_{t\ge 1}\sup_{s_t\in S_t}\sup_{y\in Y}|x_t(s_t,y)|<\infty.
$$
Thus, transfers depend only on the public outcome history, as required by moral hazard.

Given a history-dependent \emph{private} state $(\mu_t(\cdot|h_t),\lambda_t(h_t))\in\Delta(Q^{2})\times\overline{\R}_+$,
define continuation values $(W_t)_{t\ge 1}$ on private histories by
\begin{align}\label{eq:dynamic:infty:W}
W_t(h_t)
=
\max_{a\in\{1,2\}}
\left\{
\mathcal G^a \Big(y\mapsto x_t(\chi_t(h_t),y)+\beta\hspace{0.02in}W_{t+1}(h_t,a,y) ; \mu_t(\cdot|h_t),\lambda_t(h_t)\Big)
-c^a
\right\},
\end{align}
where $(h_t,a,y)\in H_{t+1}$ denotes the successor private history. When we say $(W_t)_{t\ge 1}$ is bounded, we mean that  $\text{sup}_{t\geq1}\text{sup}_{h_t\in H_t}|W_t(h_t)|<\infty$.

\medskip
\noindent\textbf{Finite-horizon continuation contingencies.}
Fix a realized private history sequence $(h_t)_{t\ge 1}$ with $h_1=\varnothing$ and
$h_{t+1}=(h_t,a_t,y_t)\in H_{t+1}$ for each $t\ge 1$. For $t\ge 1$ and an integer $N\ge 0$, define
the set of \emph{$N$-period continuation contingencies} from $h_t$ by
\begin{align}\label{eq:dynamic:infty:contingencies}
\mathcal H_{t,N}(h_t)
:=
\big\{\tilde h_{t+j}\in H_{t+j}: 0\le j\le N \text{ and } \tilde h_{t+j} \text{ extends } h_t\big\}.
\end{align}
Since $\{1,2\}\times Y$ is finite, $\mathcal H_{t,N}(h_t)$ is a finite set for each $(t,N)$.

\medskip
\noindent\textbf{local uniform convergence of states.}
Assume there exists a pair $(\mu_\infty,\lambda_\infty)\in\Delta(Q^{2})\times\overline{\R}_+$ such that for every
fixed $N\ge 0$, as $t\to\infty$,
\begin{align}\label{eq:dynamic:infty:state-unif}
\sup_{\tilde h\in\mathcal H_{t,N}(h_t)}\|\mu_{|\tilde h|}(\tilde h)-\mu_\infty\|_1\to 0,
\qquad
\sup_{\tilde h\in\mathcal H_{t,N}(h_t)} d \big(\lambda_{|\tilde h|}(\tilde h),\lambda_\infty\big)\to 0.
\end{align}
Define $\mathcal G^{a,\infty}(x):=\mathcal G^a(x;\mu_\infty,\lambda_\infty)$. We are now ready to state the main result of this appendix. Recall that the constant $\beta\in(0,1]$ denotes the agent's discount rate.

\begin{proposition}\label{thm:dynamic:infty:bridge}
Assume $\beta\in(0,1)$ and $\sup_{t,s_t,y}|x_t(s_t,y)|<\infty$. Then,
\begin{enumerate}[label=(\alph*),leftmargin=18pt]
\item The recursion \eqref{eq:dynamic:infty:W} admits a unique bounded solution $(W_t)_{t\ge 1}$ on private histories.
\item Under \eqref{eq:dynamic:infty:state-unif}, let $(W_t^\infty)_{t\ge 1}$ be the unique bounded solution of
\begin{align}\label{eq:dynamic:infty:Wlimit}
W_t^\infty(h_t)
=
\max_{a\in\{1,2\}}
\left\{
\mathcal G^{a,\infty} \Big(y\mapsto x_t(\chi_t(h_t),y)+\beta\hspace{0.02in}W_{t+1}^\infty(h_t,a,y)\Big)
-c^a
\right\}.
\end{align}
Then,
$|W_t(h_t)-W_t^\infty(h_t)| \longrightarrow 0$ as $t\to\infty.$
\end{enumerate}
\end{proposition}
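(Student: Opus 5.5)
Part (a) will follow from a contraction argument on the space of bounded functions over private histories. Let $\mathcal B$ be the space of families $W=(W_t)_{t\ge1}$ with $W_t:H_t\to\R$ and $\|W\|:=\sup_t\sup_{h_t}|W_t(h_t)|<\infty$, equipped with the sup norm. It is a Banach space. Define the operator $\mathcal T$ by the right-hand side of \eqref{eq:dynamic:infty:W}.

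Two properties of each one-step operator $\mathcal G^a(\cdot;\mu,\lambda)$ are needed, for every $(\mu,\lambda)\in\Delta(Q^{2})\times\overline{\R}_+$, including the endpoint extensions \eqref{eq:dynamic:infty:gq-ext}.
\begin{enumerate}
\item[(i)] Monotonicity: if $z'\ge z$ pointwise, then $\mathcal G^a(z';\mu,\lambda)\ge\mathcal G^a(z;\mu,\lambda)$.
\item[(ii)] Translation invariance: $\mathcal G^a(z+c\mathbf 1;\mu,\lambda)=\mathcal G^a(z;\mu,\lambda)+c$, as in Lemma \ref{lem:translation}.
\end{enumerate}
Both hold for the $\lambda>0$ formula by inspection. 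They also hold for the expectation at $\lambda=0$ and for the minimum at $\lambda=+\infty$.

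Together, (i) and (ii) give $|\mathcal G^a(z)-\mathcal G^a(z')|\le\|z-z'\|_\infty$. A maximum of $1$-Lipschitz maps is again $1$-Lipschitz. Hence $\|\mathcal TW-\mathcal TW'\|\le\beta\|W-W'\|$, so $\mathcal T$ is a $\beta$-contraction. It also maps $\mathcal B$ into itself, since $|\mathcal TW|\le\sup|x|+\beta\|W\|+k$. Banach's fixed point theorem then gives existence and uniqueness of the bounded solution. The same argument applies verbatim to the constant operators $\mathcal G^{a,\infty}$ and yields the unique bounded solution $W^\infty$ of \eqref{eq:dynamic:infty:Wlimit}.

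For part (b), let $\bar x:=\sup|x_t|$. Both $W$ and $W^\infty$ are bounded by $\bar W:=(\bar x+k)/(1-\beta)$, so every continuation vector $z$ entering the operators lies in the compact cube $K:=[-\bar x-\beta\bar W,\ \bar x+\beta\bar W]^Y$. Fix a contingency $\tilde h$ extending $h_t$, with $|\tilde h|=s$. Adding and subtracting $\mathcal G^a(z^\infty;\mu_s,\lambda_s)$, where $z^\infty$ is the vector built from $W^\infty$, gives
$$|W_s(\tilde h)-W^\infty_s(\tilde h)|\le \beta\max_{a,y}|W_{s+1}(\tilde h,a,y)-W^\infty_{s+1}(\tilde h,a,y)|+\omega\big(\mu_s(\tilde h),\lambda_s(\tilde h)\big),$$
where $\omega(\mu,\lambda):=\max_a\sup_{z\in K}|\mathcal G^a(z;\mu,\lambda)-\mathcal G^a(z;\mu_\infty,\lambda_\infty)|$.

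Iterating this inequality $N$ times from $h_t$ over the contingency tree yields
$$|W_t(h_t)-W_t^\infty(h_t)|\le\sum_{j=0}^{N-1}\beta^j\sup_{\tilde h\in\mathcal H_{t,N}(h_t)}\omega(\tilde h)+2\beta^N\bar W.$$
Now choose $N$ so that $2\beta^N\bar W<\varepsilon/2$. By \eqref{eq:dynamic:infty:state-unif}, the supremum of $\omega$ over the finite set $\mathcal H_{t,N}(h_t)$ tends to $0$ as $t\to\infty$, provided that $\omega(\mu,\lambda)\to0$ whenever $(\mu,\lambda)\to(\mu_\infty,\lambda_\infty)$ in $\|\cdot\|_1\times d$. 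This proves the claim.

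The main obstacle is exactly this last continuity claim: the map $(\mu,\lambda,z)\mapsto\mathcal G^a(z;\mu,\lambda)$ must be jointly continuous on the compact set $\Delta(Q^{2})\times\overline{\R}_+\times K$, with uniformity in $z$ then following from compactness. Linearity in $\mu$ with bounded coefficients handles the $\mu$ direction. For $\lambda$ in $(0,\infty)$, continuity is clear from the closed form. Two boundary cases need estimates uniform on $K$.
\begin{enumerate}
\item[(a)] At $\lambda\to0$: write $g_q(z;\lambda)=-\lambda^{-1}\log\E_q e^{-\lambda z}$ and use a second-order expansion of the cumulant generating function with $|z|$ bounded. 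The error is $O(\lambda\,\sup_K|z|^2)$.
\item[(b)] At $\lambda\to\infty$: by full support of $q$ on finite $Y$, one has $\min z\ge g_q(z;\lambda)\ge\min z+\lambda^{-1}\log\min_y q(y)$. This gives uniform convergence to $\min_y z(y)$.
\end{enumerate}
By property (iii) of the metric $d$, these two cases, together with interior continuity, give continuity in the $d$-topology. This completes the plan.
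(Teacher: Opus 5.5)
Your proposal is correct and follows essentially the paper's proof: the $1$-Lipschitz property of $\mathcal G^a$ gives a $\beta$-contraction for (a), and for (b) uniform continuity of $(\mu,\lambda,z)\mapsto\mathcal G^a(z;\mu,\lambda)$ on a compact domain is combined with an error recursion over the $N$-step contingency tree. The paper instead compares horizon-$(t+N)$ truncations and then adds the truncation errors, whereas you compare $W$ and $W^\infty$ directly and bound the tail by $2\beta^N\bar W$; this is the same estimate. One small slip: the correct bracket at $\lambda\to\infty$ is $\min_y z(y)\le g_q(z;\lambda)\le \min_y z(y)-\lambda^{-1}\log\min_y q(y)$, and you have the inequalities reversed, but only $|g_q(z;\lambda)-\min_y z(y)|\le\lambda^{-1}|\log\min_y q(y)|$ is used, so the argument is unaffected.
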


\begin{corollary}\label{cor:dynamic:infty:bridge-subseq-tree}
In Proposition \ref{thm:dynamic:infty:bridge}, 
let $(\tau_n)_{n\ge 1}$ be any sequence with $\tau_n\to\infty$.
Suppose there exists a pair
$(\mu_\infty,\lambda_\infty)\in\Delta(Q^{2})\times\overline{\R}_+$
such that, for every fixed $N\ge 0$, as $n\to\infty$,
\begin{equation}\label{eq:dynamic:infty:state-unif-subseq}
\sup_{\tilde h\in\mathcal H_{\tau_n,N}(h_{\tau_n})}
\|\mu_{|\tilde h|}(\tilde h)-\mu_\infty\|_1\to 0,
\qquad
\sup_{\tilde h\in\mathcal H_{\tau_n,N}(h_{\tau_n})}
d \big(\lambda_{|\tilde h|}(\tilde h),\lambda_\infty\big)\to 0.
\end{equation}
Let $(W_t)_{t\ge 1}$ be the unique bounded solution of
\eqref{eq:dynamic:infty:W}, and let $(W_t^\infty)_{t\ge 1}$ be the unique bounded solution of
\eqref{eq:dynamic:infty:Wlimit}.
Then, for every fixed $L\ge 0$,
$$
\sup_{\tilde h\in\mathcal H_{\tau_n,L}(h_{\tau_n})}
\big|W_{|\tilde h|}(\tilde h)-W_{|\tilde h|}^\infty(\tilde h)\big|
\longrightarrow 0
\qquad\text{as }n\to\infty.
$$
\end{corollary}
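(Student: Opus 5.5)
The plan is a truncation argument that exploits discounting. The idea is to unroll the recursions \eqref{eq:dynamic:infty:W} and \eqref{eq:dynamic:infty:Wlimit} for $N$ steps from each $\tilde h\in\mathcal H_{\tau_n,L}(h_{\tau_n})$, control the one-step operator discrepancy on the finite tree $\mathcal H_{\tau_n,L+N}(h_{\tau_n})$ using \eqref{eq:dynamic:infty:state-unif-subseq}, and absorb the tail with $\beta^N$.

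\textbf{Step 1 (a priori bounds).} Let $\bar x:=\sup_{t,s_t,y}|x_t(s_t,y)|<\infty$. By Proposition \ref{thm:dynamic:infty:bridge}(a) and its analogue for the limit recursion, both $W$ and $W^\infty$ are bounded. Each $\mathcal G^a(\cdot;\mu,\lambda)$ is monotone and translation invariant (Lemma \ref{lem:translation}, which extends to $\lambda\in\{0,+\infty\}$ through \eqref{eq:dynamic:infty:gq-ext}). Hence $|\mathcal G^a(z;\mu,\lambda)|\le\|z\|_\infty$, and the standard fixed-point bound gives $\sup|W|,\sup|W^\infty|\le B:=(\bar x+k)/(1-\beta)$. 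Every continuation vector $z$ entering either recursion therefore lies in the compact box $K:=[-\bar x-\beta B,\ \bar x+\beta B]^Y$.

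\textbf{Step 2 (uniform continuity of the one-step operators).} Define
\[
\omega(\mu,\lambda):=\max_{a\in\{1,2\}}\sup_{z\in K}\big|\mathcal G^a(z;\mu,\lambda)-\mathcal G^a(z;\mu_\infty,\lambda_\infty)\big|.
\]
I will show that $\omega(\mu,\lambda)\to0$ as $\|\mu-\mu_\infty\|_1+d(\lambda,\lambda_\infty)\to0$. For each fixed $z$, the map $\lambda\mapsto g_q(z;\lambda)$ is continuous on $(\overline{\R}_+,d)$: on $(0,\infty)$ this holds by the closed form, and at the endpoints it holds by Observation \ref{prop:dynamic:infty:limits} together with property (iii) of $d$. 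Moreover, every $g_q(\cdot;\lambda)$ is $1$-Lipschitz in the sup norm, uniformly in $\lambda$, by monotonicity and translation invariance. A finite $\delta$-net of $K$ combined with this equi-Lipschitz property upgrades pointwise continuity to continuity uniform over $z\in K$. Linearity of $\mathcal G$ in $\mu$ handles the $\mu$-coordinate, since $|g_q|\le\max_{z\in K}\|z\|_\infty$. I expect this step to be the main obstacle, because it requires uniformity at the endpoints $\lambda\in\{0,+\infty\}$, and that is exactly where the equi-Lipschitz property is needed.

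\textbf{Step 3 (unrolling).} Set $\varepsilon_n(M):=\sup_{\tilde h\in\mathcal H_{\tau_n,M}(h_{\tau_n})}\omega\big(\mu_{|\tilde h|}(\tilde h),\lambda_{|\tilde h|}(\tilde h)\big)$. For fixed $M$, \eqref{eq:dynamic:infty:state-unif-subseq} and Step 2 give $\varepsilon_n(M)\to0$. For any $\tilde h$, the inequality $|\max_a u_a-\max_a v_a|\le\max_a|u_a-v_a|$, the $1$-Lipschitz property of $\mathcal G^{a,\infty}$, and the fact that both recursions use the same $x_t$ together yield
\[
|W(\tilde h)-W^\infty(\tilde h)|\le \omega\big(\mu(\tilde h),\lambda(\tilde h)\big)+\beta\max_{a,y}|W(\tilde h,a,y)-W^\infty(\tilde h,a,y)|.
\]
Iterating $N$ times from any $\tilde h\in\mathcal H_{\tau_n,L}(h_{\tau_n})$, all visited histories stay in $\mathcal H_{\tau_n,L+N}(h_{\tau_n})$. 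This gives
\[
\sup_{\tilde h\in\mathcal H_{\tau_n,L}(h_{\tau_n})}|W-W^\infty|\le \frac{\varepsilon_n(L+N)}{1-\beta}+2B\beta^N.
\]

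\textbf{Step 4 (conclusion).} Given $\eta>0$, first choose $N$ with $2B\beta^N<\eta/2$. Then, for this fixed $N$, choose $n$ large enough that $\varepsilon_n(L+N)/(1-\beta)<\eta/2$. This proves the claimed local uniform convergence for every fixed $L\ge0$.
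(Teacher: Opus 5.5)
Your argument is correct and follows essentially the same route as the paper: a one-step error recursion on the finite contingency tree $\mathcal H_{\tau_n,L+N}(h_{\tau_n})$, uniform continuity of $(\mu,\lambda)\mapsto\mathcal G^a(z;\mu,\lambda)$ over a bounded box of continuation vectors, and discounting to control the tail. The differences are only in execution. You unroll the infinite-horizon fixed points $W$ and $W^\infty$ directly, with a single tail term $2B\beta^N$, whereas the paper passes through the zero-terminal truncations $W^T$ and $W^{\infty,T}$ and adds two truncation errors of size $\beta^{T+1-s}B$. You also obtain uniform continuity from the equi-Lipschitz property plus a finite net, rather than from compactness and Heine--Cantor as in Lemma \ref{lem:app:infty:unifcont}.
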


We present a mapping from the on-path limits in Observations
\ref{prop:dynamic:infty:posterior}--\ref{prop:dynamic:infty:limit-pref} to the local uniform convergence
required by \eqref{eq:dynamic:infty:state-unif}.  This condition asks that, after enough learning, the state is insensitive to any
fixed finite-horizon perturbation of future actions and outcomes. The belief component satisfies this because
posterior odds can change only by a bounded likelihood-ratio factor over $N$ periods (Lemma
\ref{lem:dynamic:infty:mu-local}). The robustness component satisfies this because the statistically sophisticated
$\lambda_t$ is an \emph{average} log-likelihood statistic, and adding $N$ periods changes the normalized LLR by at most
order $(\log t)/t$ uniformly (Lemma \ref{lem:dynamic:infty:llr-local}). Thus, the stronger local uniformity in
\eqref{eq:dynamic:infty:state-unif} follows from the weaker on-path limits in Observations
\ref{prop:dynamic:infty:posterior}--\ref{prop:dynamic:infty:limit-pref} under a natural uniqueness condition on the
innovation environment. Proposition \ref{thm:dynamic:infty:bridge} then implies that, in the long run, the agent evaluates
continuation contracts as if he had a time-invariant operator $\mathcal G^{a,\infty}(\cdot)$.

\medskip
\noindent\textbf{Sequential optimality and equilibrium paths.}
Given a bounded contract $(x_t)_{t\ge 1}$, a strategy is a sequence
$\sigma=(\sigma_t)_{t\ge 1}$ with $\sigma_t:H_t\to\{1,2\}$. We say that $\sigma$ is
\emph{sequentially optimal} if, for the bounded solution $(W_t)_{t\ge 1}$ of
\eqref{eq:dynamic:infty:W}, for every $t\ge 1$ and every $h_t\in H_t$,
$$
\sigma_t(h_t)\in
\arg\max_{a\in\{1,2\}}
\left\{
\mathcal G^a \Big(y\mapsto x_t(\chi_t(h_t),y)+\beta\hspace{0.02in}W_{t+1}(h_t,a,y);
\mu_t(\cdot|h_t),\lambda_t(h_t)\Big)-c^a
\right\}.
$$
An equilibrium path is any realized private-history sequence generated by such a sequentially optimal
strategy together with the true outcome process $p_\ast$.
Since the action set is finite, the argmax is nonempty whenever \eqref{eq:dynamic:infty:W} has a bounded
solution. Thus, Proposition \ref{thm:dynamic:infty:bridge}.(a) implies that every bounded contract admits
at least one sequentially optimal strategy and hence at least one induced equilibrium path.

\subsection{Convergence of States}\label{app:dynamic:verification}
 \eqref{eq:dynamic:infty:state-unif} requires that 
$(\mu_t(\cdot|h_t),\lambda_t(h_t))$ stabilizes \emph{uniformly} over all $N$-period continuation contingencies
$\tilde h\in\mathcal H_{t,N}(h_t)$, for every fixed $N$. This is stronger than the on-path convergence
statements in Observations \ref{prop:dynamic:infty:posterior} and  \ref{prop:dynamic:infty:limit-pref}. We show that in the innovation environment, this uniformity is
a natural consequence of (i) posterior concentration and (ii) the fact that both Bayesian odds and the
normalized log-likelihood ratio move only by a bounded amount over any fixed finite horizon. All results below are used to prove Theorem \ref{thm:cycles:frontier}.

\medskip
\noindent\textbf{Step 1: Bayesian posteriors are locally stable.}
As maintained throughout, the routine action has a singleton structured benchmark, so learning about the roadmap concerns only the innovation component $Q^{2}\subset\Delta(Y)$, and periods with $a_\tau=1$ leave relative likelihoods across $q\in Q^{2}$ unchanged. Maintain full support on $Y$ within $Q^{2}$, so
$q(y)>0$ for all $q\in Q^{2}$ and $y\in Y$. Since $Q^{2}$ and $Y$ are finite, define the finite constant
\begin{align}\label{eq:app:infty:K}
K:= \max_{q,q'\in Q^{2}}\max_{y\in Y} \frac{q(y)}{q'(y)} \in [1,+\infty).
\end{align}

\begin{lemma}\label{lem:dynamic:infty:mu-local}
Fix any $q_\star\in Q^{2}$ and any private history $h_t\in H_t$ such that $\mu_t(q_\star|h_t)>0$.
Then, for every integer $N\ge 0$ and every continuation contingency $\tilde h\in\mathcal H_{t,N}(h_t)$,
\begin{align}\label{eq:dynamic:infty:mu-local}
1-\mu_{|\tilde h|}(q_\star|\tilde h)
 \le
K^{N}\frac{1-\mu_t(q_\star|h_t)}{\mu_t(q_\star|h_t)}.
\end{align}
Consequently, $\big\|\mu_{|\tilde h|}(\cdot|\tilde h)-\delta_{\{q_\star\}}\big\|_1
=
2\big(1-\mu_{|\tilde h|}(q_\star|\tilde h)\big)
\le
2K^{N}\frac{1-\mu_t(q_\star|h_t)}{\mu_t(q_\star|h_t)}.$
\end{lemma}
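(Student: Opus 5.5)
The plan is to work with posterior odds rather than posterior probabilities, because Bayes' rule \eqref{eq:dynamic:infty:mu-update} acts multiplicatively on odds. Fix $q_\star$ and $h_t$ with $\mu_t(q_\star|h_t)>0$, and let $\tilde h\in\mathcal H_{t,N}(h_t)$ extend $h_t$ by $j\le N$ steps $(a_{t},y_{t}),\dots,(a_{t+j-1},y_{t+j-1})$. For each $q\neq q_\star$, \eqref{eq:dynamic:infty:mu-update} gives
\[
\frac{\mu_{t+j}(q|\tilde h)}{\mu_{t+j}(q_\star|\tilde h)}=\frac{\mu_t(q|h_t)}{\mu_t(q_\star|h_t)}\prod_{\tau=t}^{t+j-1}\left(\frac{q(y_\tau)}{q_\star(y_\tau)}\right)^{\1\{a_\tau=2\}}.
\]
Here routine periods contribute a factor of $1$, since the routine benchmark $q^1$ is common to all models, exactly as noted before \eqref{eq:app:infty:K}. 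Each innovation factor is at most $K$ by \eqref{eq:app:infty:K}. Since $K\ge1$, the product is at most $K^j\le K^N$.

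Next, sum over $q\neq q_\star$. Write $\rho:=\mu_t(q_\star|h_t)$ and $\tilde\rho:=\mu_{|\tilde h|}(q_\star|\tilde h)$. Summing the displayed bound gives
\[
\frac{1-\tilde\rho}{\tilde\rho}\le K^N\frac{1-\rho}{\rho}.
\]
Because $\tilde\rho\le 1$, we have $1-\tilde\rho\le(1-\tilde\rho)/\tilde\rho$, and this yields \eqref{eq:dynamic:infty:mu-local}. Full support of every $q\in Q^{2}$ guarantees that $\tilde\rho>0$, so the odds are well defined; positivity of $\rho$ propagates because all likelihoods are strictly positive.

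For the norm statement, compute directly:
\[
\|\mu-\delta_{\{q_\star\}}\|_1=(1-\mu(q_\star))+\sum_{q\neq q_\star}\mu(q)=2(1-\mu(q_\star)),
\]
and then combine this with the previous bound.

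The only delicate point is to be careful that the product runs only over innovation periods, so that the bound $K^N$ holds uniformly over all contingencies, including those with routine actions and all outcome sequences. Once the odds representation is written down, the remaining steps are routine.
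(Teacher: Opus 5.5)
Your proposal is correct and follows essentially the same route as the paper's proof. Both use the posterior-odds form of Bayes' rule, give routine periods a factor of one and innovation periods a factor of at most $K$, sum over $q\neq q_\star$, use $\mu_{|\tilde h|}(q_\star|\tilde h)\le 1$ to pass from odds to $1-\mu$, and then apply the identity $\|\mu-\delta_{\{q_\star\}}\|_1=2(1-\mu(q_\star))$.
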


\medskip
\noindent\textbf{Step 2: LLR updates are locally stable.}
Recall innovation-roadmap LLR  \eqref{eq:dynamic:infty:LLR2}. Let
$$
M^2_t(h_t):=\max_{q\in Q^{2}}\prod_{\tau=1}^{t-1}q(y_\tau)^{\1\{a_\tau=2\}},
\quad
U^2_t(h_t):=\max_{p\in\Delta(Y)}\prod_{\tau=1}^{t-1}p(y_\tau)^{\1\{a_\tau=2\}},
\quad
\mathrm{LLR}^2(h_t,Q^{2})=\log\frac{U^2_t(h_t)}{M^2_t(h_t)}.
$$
Let
\begin{align}\label{eq:app:infty:qunderline}
\underline q\ :=\ \min_{q\in Q^{2}}\min_{y\in Y}q(y)\in(0,1),
\end{align}
which is strictly positive by finiteness and full support.

\begin{lemma}\label{lem:dynamic:infty:llr-local}
Fix $N\ge 0$. There exists a constant $B<\infty$ (depending only on $N$ and $\underline q$) such that for every
$t\ge 2$, every $h_t\in H_t$ and every $\tilde h\in\mathcal H_{t,N}(h_t)$,
\begin{align}\label{eq:dynamic:infty:llr-local}
\left|
\frac{\mathrm{LLR}^2(\tilde h,Q^{2})}{|\tilde h|-1}
-
\frac{\mathrm{LLR}^2(h_t,Q^{2})}{t-1}
\right|
 \le
B\frac{1+\log(t+N)}{t}.
\end{align}
In particular, for each fixed $N$, the right-hand side converges to $0$ as $t\to\infty$ uniformly over
$\tilde h\in\mathcal H_{t,N}(h_t)$.
\end{lemma}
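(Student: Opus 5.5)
We need to prove Lemma \ref{lem:dynamic:infty:llr-local}: a bound on how much the normalized LLR can change over $N$ further periods. The approach is to handle the numerator and denominator of the normalized statistic separately.

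\textbf{Step 1: size of the LLR.} For any history $h_t$, let $n_t$ denote the number of innovation periods. First, $M^2_t(h_t)\ge \underline q^{\,n_t}$, because every model assigns at least $\underline q$ to each outcome. Second, $U^2_t(h_t)\le 1$. Hence
\[
0\le \mathrm{LLR}^2(h_t,Q^{2})\le (t-1)\log(1/\underline q).
\]
In particular, the normalized statistic lies in $[0,\log(1/\underline q)]$.

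\textbf{Step 2: increment of the LLR over $j\le N$ extra periods.} Write $\tilde h\in\mathcal H_{t,N}(h_t)$ with $|\tilde h|=t+j$.

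\emph{The $M^2$ term.} Each extra innovation period multiplies every model's likelihood by a factor in $[\underline q,1]$. Since the maximum over $q$ inherits this, $\log M^2$ changes by at most $N\log(1/\underline q)$ in absolute value.

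\emph{The $U^2$ term.} This is the Bernoulli profile maximum likelihood, with
\[
\log U^2=n_1\log(n_1/n)+n_0\log(n_0/n),
\]
where $n=n_0+n_1$ counts innovation periods. Adding one observation to the first count changes this by
\[
(n_1+1)\log\frac{n_1+1}{n+1}-n_1\log\frac{n_1}{n}+n_0\log\frac{n}{n+1}.
\]
This expression is bounded in absolute value by $C(1+\log(n+1))$. One sees this either from the entropy form $-nH(n_1/n)$ and the mean value theorem, or more crudely: the new likelihood is at most the old maximum times 1, and at least the old maximizer's likelihood times $\min\{\hat p,1-\hat p\}\ge 1/(n+1)$.

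The crude sandwich is the cleanest route. Adding a point gives $U'\le U$, because the maximum of a product with an extra factor $\le1$ cannot increase. It also gives $U'\ge U\cdot\frac{1}{n+1}$, by evaluating at the old maximizer, which puts mass at least $1/n$ on each observed outcome. (If the new outcome is unseen, evaluate at the mixture $\frac{n}{n+1}\hat p+\frac{1}{n+1}\delta_y$ instead; each old factor then shrinks by at most $\frac{n}{n+1}$, giving a total factor $\ge (1-\frac{1}{n+1})^n\frac{1}{n+1}\ge \frac{1}{e(n+1)}$.)

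So each step changes $\log U^2$ by at most $1+\log(t+N)$. Over $N$ steps,
\[
|\mathrm{LLR}^2(\tilde h)-\mathrm{LLR}^2(h_t)|\le N\big(\log(1/\underline q)+1+\log(t+N)\big).
\]

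\textbf{Step 3: combine.} Write
\[
\frac{L'}{t-1+j}-\frac{L}{t-1}=\frac{L'-L}{t-1+j}-\frac{jL}{(t-1)(t-1+j)}.
\]
The first term is bounded via Step 2 by $N(\log(1/\underline q)+1+\log(t+N))/(t-1)$. The second term is at most $N\log(1/\underline q)/(t-1+j)$ via Step 1. Using $t-1\ge t/2$ for $t\ge2$ yields
\[
B\,\frac{1+\log(t+N)}{t}\qquad\text{with}\qquad B=4N\big(1+\log(1/\underline q)\big).
\]
For $N=0$ the bound is trivial.

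The main obstacle is the $U^2$ increment. The unrestricted maximizer over $\Delta(Y)$ can shift, and a new outcome that has not yet been observed makes the old maximizer assign it zero probability. The mixture construction above handles this and produces the $\log t$ term in the statement.
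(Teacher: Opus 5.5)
Your proof is correct and follows essentially the same route as the paper. Both arguments use the $\underline q^{\,j}$ sandwich for the structured maximum and a mixture-with-point-mass construction to lower-bound the unrestricted maximum likelihood; where you add new innovation observations one at a time with a seen/unseen case split, the paper builds a single mixture $\tilde p$ covering all $j$ new periods at once. Both then split the normalized difference using the linear-in-$t$ bound on $\mathrm{LLR}^2$.
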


\begin{lemma}\label{lem:cycles:Wlimit-public}
Let $(W_t^\infty)_{t\ge 1}$ be the unique bounded solution of \eqref{eq:dynamic:infty:Wlimit}. If
$\chi_t(h_t)=\chi_t(h_t')$, then $W_t^\infty(h_t)=W_t^\infty(h_t')$. In particular,
$W_{t+1}^\infty(h_t,a,y)$ is independent of $a$.
\end{lemma}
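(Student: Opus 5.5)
The key observation is that the limiting recursion \eqref{eq:dynamic:infty:Wlimit} contains no private state. The operators $\mathcal G^{a,\infty}(\cdot)=\mathcal G^a(\cdot;\mu_\infty,\lambda_\infty)$ are fixed and do not depend on $h_t$. The contract $x_t$ depends on $h_t$ only through $\chi_t(h_t)$. Finally, the successor's public projection satisfies $\chi_{t+1}(h_t,a,y)=(\chi_t(h_t),y)$, which does not depend on $a$. The plan is to build a solution of \eqref{eq:dynamic:infty:Wlimit} that is measurable with respect to public histories, and then use uniqueness of bounded solutions to identify it with $W^\infty$.

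\emph{Step 1: a public recursion.} On public histories $s_t\in S_t$, consider
\begin{align*}
V_t(s_t)=\max_{a\in\{1,2\}}\Big\{\mathcal G^{a,\infty}\big(y\mapsto x_t(s_t,y)+\beta V_{t+1}(s_t,y)\big)-c^a\Big\}.
\end{align*}
Each $\mathcal G^{a,\infty}$ is monotone and translation invariant. For $\lambda_\infty\in(0,\infty)$ this is Lemma \ref{lem:translation} together with the monotonicity used in the proof of Theorem \ref{thm:cycles:frontier}. At the endpoints $\lambda_\infty\in\{0,+\infty\}$, \eqref{eq:dynamic:infty:gq-ext} gives an expectation or a minimum, which are plainly monotone and translation invariant. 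Hence each $\mathcal G^{a,\infty}$ is $1$-Lipschitz in the sup norm. Since the contract is bounded and $\beta\in(0,1)$, the map $(V_{t})_{t\ge1}\mapsto(\text{RHS}_t)_{t\ge1}$ is a $\beta$-contraction on uniformly bounded sequences of functions on public histories. Banach's fixed point theorem then gives a bounded solution $(V_t)_{t\ge1}$.

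\emph{Step 2: lift and compare.} Set $\tilde W_t(h_t):=V_t(\chi_t(h_t))$. Substituting $\chi_{t+1}(h_t,a,y)=(\chi_t(h_t),y)$ shows that $\tilde W$ satisfies \eqref{eq:dynamic:infty:Wlimit} on private histories, and it is bounded. Proposition \ref{thm:dynamic:infty:bridge}(b) already asserts that \eqref{eq:dynamic:infty:Wlimit} has a unique bounded solution. To be self-contained, let $W,W'$ be two bounded solutions and put $\Delta_t:=\sup_{h_t}|W_t(h_t)-W'_t(h_t)|$. The Lipschitz property and the fact that a max of $1$-Lipschitz maps is $1$-Lipschitz give $\Delta_t\le\beta\Delta_{t+1}$. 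Iterating, $\Delta_t\le\beta^n\sup_s\Delta_s\to0$. Hence $W^\infty=\tilde W$, so $W^\infty_t(h_t)=V_t(\chi_t(h_t))$ depends only on $\chi_t(h_t)$. This proves the first claim.

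\emph{Step 3: the ``in particular'' part.} For any $a,a'$ we have $\chi_{t+1}(h_t,a,y)=\chi_{t+1}(h_t,a',y)$. The first claim then gives $W^\infty_{t+1}(h_t,a,y)=W^\infty_{t+1}(h_t,a',y)$.

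The only delicate point is the uniqueness and contraction argument for a non-stationary, infinite-horizon recursion, including the endpoint cases $\lambda_\infty\in\{0,+\infty\}$. I expect the uniform Lipschitz bound together with iteration of $\Delta_t\le\beta\Delta_{t+1}$ to settle it, since boundedness of all solutions is part of the hypothesis.
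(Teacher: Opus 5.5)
Your proposal is correct and follows the same route as the paper. You define the recursion on public histories, solve it by contraction, lift it to private histories via $\chi_t$, use uniqueness of bounded solutions to \eqref{eq:dynamic:infty:Wlimit} to identify the lift with $W^\infty$, and then use the fact that $\chi_{t+1}(h_t,a,y)$ does not record $a$. The only difference is cosmetic: you obtain existence by Banach's fixed point theorem on uniformly bounded sequences, whereas the paper invokes its truncation-based contraction argument from Proposition \ref{thm:dynamic:infty:bridge}(a).
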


\subsection{Innovation-roadmap LLR}\label{subsec:app:infty:innovation-llr}
We now clarify how the innovation-roadmap statistic $\mathrm{LLR}^2$ in
\eqref{eq:dynamic:infty:LLR2} is obtained from the average LLR rule in
\citet[][eq. (2)]{lanzani2025}. Appendix \ref{subsec:app:infty:limit} defines the global
$\mathrm{LLR}(h_t,Q)$ in \eqref{eq:dynamic:infty:LLR} for environments where every action component
of the global structured model may be subject to misspecification. The baseline innovation environment
imposes a more specific structure: the routine method induces a familiar benchmark model $q^1$, while the innovation roadmap
$Q^2$ is the new method whose fit is being evaluated over time. Formally, specialize the global structured family and the admissible
unstructured family to
$\mathcal Q
:=
\{(q^1,q):q\in Q^2\}\subset\Delta(Y)^A,$ $\mathcal N
:=
\{(q^1,p):p\in\Delta(Y)\}\subset\Delta(Y)^A .$
Thus, both the structured and unstructured families maintain the same routine component $q^1$, while the
innovation component is tested against the unrestricted simplex $\Delta(Y)$, as in \citet[][Section 2.2]{lanzani2025}
for finite outcome spaces. This formulation captures the idea that routine outcomes are not diagnostic about
misspecification of innovation models $Q^2$.

Fix a private history $h_t=(a_1,y_1,\dots,a_{t-1},y_{t-1})$. Let
$I_a(h_t):=\{\tau<t:a_\tau=a\}$ for $a\in\{1,2\}$ and define the routine likelihood term
$R_t(h_t):=\prod_{\tau\in I_1(h_t)}q^1(y_\tau)$.
For $\bar q=(q^1,q)\in\mathcal Q$, the likelihood of $h_t$ is
$R_t(h_t)\prod_{\tau\in I_2(h_t)}q(y_\tau)$; for
$\bar p=(q^1,p)\in\mathcal N$, it is
$R_t(h_t)\prod_{\tau\in I_2(h_t)}p(y_\tau)$. Applying the LLR in
\citet[][eq. (2)]{lanzani2025} to $(\mathcal Q,\mathcal N)$ gives
\begin{align*}
-\log\left(
\frac{
\max_{\bar q\in\mathcal Q}
\prod_{\tau=1}^{t-1}\bar q^{a_\tau}(y_\tau)}
{
\max_{\bar p\in\mathcal N}
\prod_{\tau=1}^{t-1}\bar p^{a_\tau}(y_\tau)}
\right)
&=
-\log\left(
\frac{
R_t(h_t)\max_{q\in Q^2}\prod_{\tau\in I_2(h_t)}q(y_\tau)}
{
R_t(h_t)\max_{p\in\Delta(Y)}\prod_{\tau\in I_2(h_t)}p(y_\tau)}
\right)\\
&=
-\log\left(
\frac{
\max_{q\in Q^2}\prod_{\tau\in I_2(h_t)}q(y_\tau)}
{
\max_{p\in\Delta(Y)}\prod_{\tau\in I_2(h_t)}p(y_\tau)}
\right)\\
&=
-\log\left(
\frac{
\max_{q\in Q^2}\prod_{\tau=1}^{t-1}q(y_\tau)^{\1\{a_\tau=2\}}}
{
\max_{p\in\Delta(Y)}\prod_{\tau=1}^{t-1}p(y_\tau)^{\1\{a_\tau=2\}}}
\right)\\
&=\mathrm{LLR}^2(h_t,Q^2),
\end{align*}
which is \eqref{eq:dynamic:infty:LLR2}. The first equality separates routine and innovation
dates and uses the fact that $R_t(h_t)$ is independent of $q$ and $p$; the second cancels this
common positive routine likelihood term; the third rewrites the innovation-date products as
full-history products.

\section{Appendix: Outcome and Action Expansion}\label{app:dynamic:extensions}
We explore how enlarging the outcome and action spaces affect our impossibility result. 

\subsection{Diagnostic Milestones}\label{app:dynamic:milestones}

This appendix complements Theorem \ref{thm:dynamic:trap-scaleup} by clarifying when a richer, \emph{verifiable}
performance taxonomy can prevent breakthrough traps. 
We provide two insights:
(i) adding more outcomes does not necessarily prevent breakthrough traps;
(ii) adding a \emph{diagnostic} milestone that is substantially more likely under routine execution than under innovation
can restore implementability even when $\lambda_2(1)$ is large.
This analysis identifies when measurement design---what can be contracted on, and how the roadmap
models differ on that statistic---can complement the tools in Section \ref{sec:avoidtrap} to help avoid breakthrough traps.

\medskip
\noindent\textbf{Three outcomes.}
We modify Section \ref{subsec:dynamic:env} by letting the contractible outcome be
$Y=\{1,0,d\}$, where $1$ remains the breakthrough event, $0$ is ordinary failure, and
$d$ is a verifiable milestone that is much more likely when the agent pursues incremental routine tasks (e.g., a standardized benchmark pass or a benchmark prototype).
\par The routine action $a=1$ is described by a model $q^1\in\Delta(Y)$ with
$q^1(1)=p$ and $q^1(d)=\zeta$, and innovation $a=2$ is described by $Q^{2}=\{q^2_L,q^2_H\}\subset\Delta(Y)$ with
$q_i(1)=\theta_i$ and $q_i(d)=\varepsilon_i$ for $i\in\{L,H\}$, where $0<\theta_L<p<\theta_H<1$.
All remaining primitives and the definition of $g_q(\cdot;\lambda)$ in \eqref{eq:dynamic:gq} are unchanged such as full support assumptions, and we continue to maintain the LLR updating rule in Assumption \ref{ass:dynamic:llr} for the misspecification parameter $\lambda$. We first show that adding some arbitrary outcome in the framework of Section \ref{subsec:dynamic:env} may not affect the existence of breakthrough traps. 
\begin{obs}\label{prop:dynamic:milestones-irrelevance}
Suppose that conditional on $\{y\neq 1\}$, the relative frequencies of $d$ and $0$ are the same across all models:
there exists $\eta\in(0,1)$ such that
$q^1(d)=(1-p)\eta, q^1(0)=(1-p)(1-\eta),$
and for each $q\in Q^{2}$,
$q(d)=(1-q(1))\eta, q(0)=(1-q(1))(1-\eta).$
Then, for every $(\mu,\lambda)$, $C(\mu,\lambda)$ coincides with the binary-outcome capacity with outcomes $\{1,\neg 1\}$. Hence, adding outcome $d$ does not relax
breakthrough traps.
\end{obs}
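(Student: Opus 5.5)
The plan is to reduce the three-outcome problem to a binary one by showing that, under the proportionality assumption, every model-specific certainty equivalent $g_q(x;\lambda)$ depends on $x$ only through two numbers: $x(1)$ and a ``merged'' payoff on the event $\{y\neq1\}$. That merged payoff will not depend on $q$. Once this holds, the incentive gap $M(\mu,\lambda;x)$ coincides with the binary gap evaluated at a suitable binary vector. The capacities can then be compared by taking suprema in both directions.

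\textbf{Step 1 (factorization).} Fix $\lambda>0$ and $x\in\R^Y$. Define
\[
\hat x(\neg1):=-\frac{1}{\lambda}\log\big(\eta e^{-\lambda x(d)}+(1-\eta)e^{-\lambda x(0)}\big),\qquad \hat x(1):=x(1).
\]
For any $q\in Q^{2}\cup\{q^1\}$ satisfying the hypothesis, write $\theta_q:=q(1)$. Then
\[
\sum_{y}q(y)e^{-\lambda x(y)}=\theta_q e^{-\lambda x(1)}+(1-\theta_q)\big(\eta e^{-\lambda x(d)}+(1-\eta)e^{-\lambda x(0)}\big)=\theta_q e^{-\lambda \hat x(1)}+(1-\theta_q)e^{-\lambda\hat x(\neg1)}.
\]
Applying $-\frac1\lambda\log$ gives $g_q(x;\lambda)=g_{\hat q}(\hat x;\lambda)$, where $\hat q$ is the Bernoulli law on $\{1,\neg1\}$ with success probability $\theta_q$. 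The crucial point is that $\eta$ is common to all models, so the same $\hat x$ serves every $q$. Averaging with $\mu$ yields $\mathcal G(x;\mu,\lambda)=\hat{\mathcal G}(\hat x;\mu,\lambda)$, and hence $M(\mu,\lambda;x)=\hat M(\mu,\lambda;\hat x)$.

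\textbf{Step 2 (equality of suprema).} The map $x\mapsto\hat x$ sends $\R^Y$ onto $\R^{\{1,\neg1\}}$. Given any binary vector $(a,b)$, the choice $x(1)=a$, $x(0)=x(d)=b$ gives $\hat x=(a,b)$. Therefore
\[
C(\mu,\lambda)=\sup_x \hat M(\mu,\lambda;\hat x)=\sup_{\hat x}\hat M(\mu,\lambda;\hat x)=\hat C(\mu,\lambda).
\]
For $\lambda=0$, the EU-limit convention gives the same conclusion. In that case $\hat x(\neg1)=\eta x(d)+(1-\eta)x(0)$, which is again onto, so the capacities coincide there too.

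\textbf{Step 3 (state variables).} It remains to check that the relevant states are unchanged, so that ``coincides'' also holds at the post-success node.

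The posterior after $y_1=1$ uses only the likelihoods $q(1)=\theta_q$, so it matches the binary posterior. The LLR rule gives $\lambda_2(1)=-\frac1\gamma\log\theta_H$, which is also unchanged.

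Consequently, Theorem \ref{thm:dynamic:trap-scaleup} applies verbatim, and adding $d$ does not relax the trap.

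I expect no real obstacle in this argument. The only delicate point is the $\lambda$-independence of $\eta$ in Step 1: the merged payoff $\hat x(\neg1)$ depends on $\lambda$, but it does not depend on the model, and model-independence is all the argument requires.
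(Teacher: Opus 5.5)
Your proposal is correct and follows the paper's proof essentially step for step. The paper defines the same merged payoff $x_F$ on $\{d,0\}$ and shows $M(\mu,\lambda;x)=\hat M(\mu,\lambda;x(1),x_F)$; it then obtains the reverse inequality with the same vector $x(1)=u$, $x(d)=x(0)=v$. Your Step 3 (checking that $\mu_2(\cdot|y_1=1)$ and $\lambda_2(1)$ are unchanged) and your $\lambda=0$ case are harmless additions that the paper leaves implicit.
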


The idea is that to affect the prevalence of breakthrough traps, the additional outcome has to  significantly affect the likelihood ratio of models. This is shown below.

\begin{obs}\label{prop:dynamic:milestones-diagnostic}
Fix the post-success state $(\mu_{\mathfrak{s}},\lambda_{\mathfrak{s}}):=(\mu_2(\cdot|y_1=1),\lambda_2(1))$.
If
\begin{align}\label{eq:dynamic:milestones-condition}
\sum_{q\in Q^{2}}\mu_{\mathfrak{s}}(q)\log\Big(\frac{q^1(d)}{q(d)}\Big)>\lambda_{\mathfrak{s}} k,
\end{align}
then $C(\mu_{\mathfrak{s}},\lambda_{\mathfrak{s}})\ge k$ and innovation is implementable after success.
In particular, if $\mu_{\mathfrak{s}}(q^2_H)\to 1$, then a sufficient limiting condition is
$\log(\zeta/\varepsilon_H)>\lambda_{\mathfrak{s}}k$. More explicitly, for any strict slack $\log(\zeta/\varepsilon_H)>\lambda_{\mathfrak{s}}k+\eta$ with $\eta>0$, there exists $\bar\delta>0$ such that the sufficient condition in \eqref{eq:dynamic:milestones-condition} holds whenever $\mu_{\mathfrak{s}}(q^2_H)\ge 1-\bar\delta$.
\end{obs}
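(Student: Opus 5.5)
The plan is to exhibit an explicit family of continuation vectors that penalize only the diagnostic milestone $d$ and are otherwise flat. We then pass to the limit in the incentive gap $M$ and read off condition \eqref{eq:dynamic:milestones-condition}.

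First note that $\lambda_{\mathfrak{s}}=-\frac{1}{\gamma}\log\max_{q\in Q^{2}}q(1)=-\frac{1}{\gamma}\log\theta_H>0$, since $\theta_H<1$. Hence the multiplier formula \eqref{eq:dynamic:gq} applies with a strictly positive intensity.

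For $T>0$ define $x_T$ by $x_T(d)=0$ and $x_T(1)=x_T(0)=T$. For any full-support $q\in\Delta(Y)$, \eqref{eq:dynamic:gq} gives
$$g_q(x_T;\lambda_{\mathfrak{s}})=-\frac{1}{\lambda_{\mathfrak{s}}}\log\big(q(d)+(1-q(d))e^{-\lambda_{\mathfrak{s}}T}\big)\longrightarrow-\frac{1}{\lambda_{\mathfrak{s}}}\log q(d)\qquad(T\to\infty).$$
Since $Q^{2}$ is finite, we can average over $\mu_{\mathfrak{s}}$ and subtract the routine term with $q^1(d)=\zeta$. This yields
$$\lim_{T\to\infty}M(\mu_{\mathfrak{s}},\lambda_{\mathfrak{s}};x_T)=\frac{1}{\lambda_{\mathfrak{s}}}\sum_{q\in Q^{2}}\mu_{\mathfrak{s}}(q)\log\frac{q^1(d)}{q(d)}.$$
By \eqref{eq:dynamic:milestones-condition} this limit is strictly larger than $k$. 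Therefore some finite $T$ already satisfies $M(\mu_{\mathfrak{s}},\lambda_{\mathfrak{s}};x_T)>k$, so $C(\mu_{\mathfrak{s}},\lambda_{\mathfrak{s}})>k$. Proposition \ref{prop:dynamic:terminal-impl}, whose proof does not depend on $|Y|=2$, then gives implementability of $a_2=2$ after success with the finite contract $x_2(1,\cdot)=x_T$. Translation invariance (Lemma \ref{lem:translation}) shows that normalizing $x_T(d)=0$ is harmless.

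For the limiting statement, write $\mu_H:=\mu_{\mathfrak{s}}(q^2_H)$ and decompose the left side of \eqref{eq:dynamic:milestones-condition} as
$$\log\frac{\zeta}{\varepsilon_H}+(1-\mu_H)\log\frac{\varepsilon_H}{\varepsilon_L}.$$
The second term is bounded in absolute value by $(1-\mu_H)\,|\log(\varepsilon_H/\varepsilon_L)|$; full support makes this finite. Assume $\log(\zeta/\varepsilon_H)>\lambda_{\mathfrak{s}}k+\eta$ and set $\bar\delta:=\eta/(1+|\log(\varepsilon_H/\varepsilon_L)|)$. Then $\mu_H\ge1-\bar\delta$ makes the perturbation smaller than $\eta$, so \eqref{eq:dynamic:milestones-condition} holds, and the first part applies.

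The only delicate point is the passage from the limit to a finite contract. Capacity is defined as a supremum, so a limit equal to $k$ would not suffice. This is exactly why the strict inequality in \eqref{eq:dynamic:milestones-condition} is needed: it lets us select a finite $T$ with $M>k$, so that attainment of the supremum never becomes an issue.
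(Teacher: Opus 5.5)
Your proposal is correct and follows essentially the paper's route. The paper uses the same test vectors, with $x^T(1)=x^T(0)=T/\lambda_{\mathfrak{s}}$ instead of $T$; rather than passing to the limit, it drops the nonnegative routine error term and bounds the innovation error via $\log(1+u)\le u$ to get an explicit finite $T$ with $M\ge k$. Your explicit choice $\bar\delta=\eta/(1+|\log(\varepsilon_H/\varepsilon_L)|)$ also proves the ``in particular'' claim, which the paper's written proof leaves implicit.
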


To summarize, Observation \ref{prop:dynamic:milestones-irrelevance} formalizes that ``more outcomes'' is irrelevant unless the refinement
adds \emph{action-diagnostic} content: if models agree on the relative composition of non-breakthrough outcomes, then the
agent compresses them into a single certainty equivalent and the breakthrough trap is unchanged.
Observation \ref{prop:dynamic:milestones-diagnostic} shows that a verifiable milestone $d$ that is markedly more likely under
routine execution than under innovative models (large likelihood ratio $q^1(d)/q(d)$) enlarges the post-success
incentive capacity, thereby restoring implementability when $\lambda_2(1)$ is high.
Thus, beyond the tools in Section \ref{sec:avoidtrap} that act by shaping the evolution of $\lambda_t(\cdot)$,
\emph{measurement/model} design can also matter: identifying and contracting on diagnostic milestones can prevent the
post-success collapse of implementable innovation incentives that drives our impossibility result (Theorem \ref{thm:dynamic:trap-scaleup}).

\subsection{Shirking vs. Innovation}\label{app:dynamic:shirking}

This appendix extends Section \ref{sec:dynamic} by introducing a third action,
\emph{shirking} $a=0$, alongside exploitation $a=1$ and exploration $a=2$. The purpose is to clarify why shirking
\emph{reinforces} these results by tightening implementability constraints.

\medskip
\noindent\textbf{Primitives.}
Maintain all primitives from Section \ref{sec:dynamic} unless explicitly modified here.
The action set becomes $A=\{0,1,2\}$. At each date $t\in\{1,2\}$, conditional on action $a_t=a$,
output $y_t\in Y=\{0,1\}$ is drawn from:
\begin{itemize}
  \item \textit{Shirking} $a=0$: a known Bernoulli distribution $q^0\in\Delta(Y)$ with $q^0(1)=p_0\in(0,1)$.
  \item \textit{Exploitation} $a=1$: the known ``routine'' Bernoulli distribution $q^1\in\Delta(Y)$ from the main text
        (so $q^1(1)=p\in(0,1)$).
  \item \textit{Innovation} $a=2$: the ambiguous distribution as in Section \ref{subsec:dynamic:env}, with
        $Q^{2}=\{q^2_L,q^2_H\}$ and belief $\mu_t(\cdot|h_t)\in\Delta(Q^{2})$ updated by Bayes rule as in the main text.
\end{itemize}
Costs are $c^0=0$, $c^1=k_1>0$, and $c^2=k>0$,
where $k_1$ captures the
idea that even ``routine'' execution requires costly effort, whereas shirking avoids such costs. Let $k\geq k_1$ to capture that innovation is at least as costly as the routine action.

\medskip
\noindent\textbf{Robust one-step evaluation at $t=2$.}
Fix a continuation utility vector $x:Y\to\R$, a robustness parameter $\lambda>0$, and a posterior
$\mu\in\Delta(Q^{2})$. Recall the multiplier criterion $g_q(\cdot;\lambda)$ in
\eqref{eq:dynamic:gq} and the ARC aggregator $\mathcal G(\cdot;\mu,\lambda)$ in \eqref{eq:dynamic:G}.
At $t=2$ there is no continuation beyond the current outcome, so the agent evaluates each action as:
\begin{align}\label{eq:app:shirk:V2-actions}
V_2(\mu,\lambda; a, x)=
\begin{cases}
g_{q^0}(x;\lambda) & \text{if } a=0,\\
g_{q^1}(x;\lambda)-k_1 & \text{if } a=1,\\
\mathcal G(x;\mu,\lambda)-k & \text{if } a=2.
\end{cases}
\end{align}

\medskip
\noindent\textbf{Pre-cost advantages.}
Define, for any $(\mu,\lambda)$ and any $x:Y\to\R$,
\begin{align}\label{eq:app:shirk:Delta-a}
M^{0}(\mu,\lambda;x):=\mathcal G(x;\mu,\lambda)-g_{q^0}(x;\lambda),
\qquad
M^{1}(\mu,\lambda;x):=\mathcal G(x;\mu,\lambda)-g_{q^1}(x;\lambda).
\end{align}
Note that $M^{1}$ coincides with $M$ in \eqref{eq:dynamic:Delta-C}. Here,
$M^{0}$ and $M^{1}$ measure the \emph{robust} (pre-cost) value of innovation relative to,
respectively, shirking and exploitation, given the wage/utility schedule $x$ and local robustness $\lambda$.

\medskip
\noindent\textbf{Terminal implementability of continued innovation.}
At a given $(\mu,\lambda)$ and contract-induced $x$, innovation $a=2$ is optimal at $t=2$ if and only if it
weakly dominates \emph{both} non-innovative actions $a\in\{0,1\}$. Using \eqref{eq:app:shirk:V2-actions},
this is equivalent to the conjunction
\begin{align}\label{eq:app:shirk:two-IC}
M^{0}(\mu,\lambda;x) \ge k
\qquad\text{and}\qquad
M^{1}(\mu,\lambda;x) \ge k-k_1.
\end{align}
It is convenient to combine these into a single ``slack'' functional:
\begin{align}\label{eq:app:shirk:Delta-underline}
\underline{M}(\mu,\lambda;x)
:=
\min\Big\{M^{0}(\mu,\lambda;x)-k, M^{1}(\mu,\lambda;x)-(k-k_1)\Big\}.
\end{align}
Thus, $\underline{M}(\mu,\lambda;x)\ge 0$ if and only if \eqref{eq:app:shirk:two-IC} holds.

\begin{obs}[Terminal implementability with shirking]\label{prop:app:shirk:terminal-impl}
Fix any dynamic contract $(x_1,x_2)$. At $t=2$, conditional on $y_1$, let the state be
$(\mu_2(\cdot|y_1),\lambda_2(y_1))$ as in Section \ref{sec:dynamic}. Then, $a_2=2$ is (weakly) optimal after history
$y_1$ if and only if
\begin{align}\label{eq:app:shirk:terminal-criterion}
\underline{M} \big(\mu_2(\cdot|y_1),\lambda_2(y_1); x_2(y_1,\cdot)\big) \ge 0.
\end{align}
Define the \emph{shirking-adjusted incentive capacity} at $(\mu,\lambda)$ by
\begin{align}\label{eq:app:shirk:capacity}
\underline C(\mu,\lambda):=\sup_{x:Y\to\R} \underline{M}(\mu,\lambda;x).
\end{align}
Then, $a_2=2$ is implementable after history $y_1$ if and only if  either $\underline C(\mu_2(\cdot|y_1),\lambda_2(y_1))>0$, or
$\underline C(\mu_2(\cdot|y_1),\lambda_2(y_1))=0$ and the supremum defining $\underline C(\cdot,\cdot)$ in
\eqref{eq:app:shirk:capacity} is attained.
\end{obs}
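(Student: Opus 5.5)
The statement to prove is Observation \ref{prop:app:shirk:terminal-impl}. The plan mirrors the proof of Proposition \ref{prop:dynamic:terminal-impl}, with one incentive constraint replaced by two.

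\textbf{Reduction to a one-step comparison.} Fix the contract and a history $y_1$. At $t=2$ there is no further continuation. Since $x_2(y_1,\cdot)$ is already in utility units, the agent's value of each action is given by \eqref{eq:app:shirk:V2-actions}, evaluated at $x=x_2(y_1,\cdot)$ and the state $(\mu_2(\cdot|y_1),\lambda_2(y_1))$. For $\lambda_2(y_1)>0$, the identification of the routine value with $g_{q^1}$ follows from the variational formula \eqref{eq:dynamic:gq}. The innovation value equals $\mathcal G$ by Lemma \ref{lem:app:screening:decomposition}. Shirking is handled like routine, with $q^0$ in place of $q^1$. When $\lambda_2(y_1)=0$, we apply the EU-limit convention throughout.

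\textbf{Equivalence with the slack functional.} Action $a_2=2$ is weakly optimal exactly when $\mathcal G(x)-k\ge g_{q^0}(x)$ and $\mathcal G(x)-k\ge g_{q^1}(x)-k_1$. Rearranging gives \eqref{eq:app:shirk:two-IC}, namely $M^0\ge k$ and $M^1\ge k-k_1$. Because a minimum of two numbers is nonnegative if and only if both are nonnegative, this conjunction is equivalent to $\underline M\ge 0$. This proves \eqref{eq:app:shirk:terminal-criterion}.

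\textbf{Characterization of implementability.} Implementability of $a_2=2$ after $y_1$ means there is some finite $x$, used as $x_2(y_1,\cdot)$, with $\underline M(\cdot;x)\ge 0$. The contract can condition on $y_1$, and the $t=2$ problems separate across histories, so this choice does not interfere with other nodes. It remains to translate existence of such an $x$ into a statement about the supremum $\underline C$. If $\underline C>0$, the definition of the supremum yields some $x$ with $\underline M(\cdot;x)>0$. If $\underline C=0$ and the supremum is attained, the maximizer gives $\underline M=0$. Conversely, if some $x$ has $\underline M(\cdot;x)\ge 0$, then $\underline C\ge 0$. Either $\underline C>0$, or $\underline C=0$ and that $x$ attains the supremum. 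If $\underline C<0$, no $x$ works.

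\textbf{Main obstacle.} There is no real difficulty here. The only care needed is to justify the one-step identifications of the three action values, particularly the ARC value for innovation, and to invoke the EU-limit convention at $\lambda=0$. Both can be taken from Appendix \ref{subsec:dynamic:mmr} and Lemma \ref{lem:app:screening:decomposition}.
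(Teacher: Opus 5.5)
Your proposal is correct and follows essentially the same route as the paper's proof. You compare the three action values from \eqref{eq:app:shirk:V2-actions}, rewrite weak optimality as the two constraints $M^0\ge k$ and $M^1\ge k-k_1$, collapse these into $\underline M\ge 0$, and read implementability off the supremum $\underline C$. Where the paper takes the one-step identifications as given and states only one direction of the supremum characterization, you additionally justify those identifications and spell out the converse direction.
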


\medskip
\noindent\textbf{Shirking tightens constraints.}
We compare the two-action problem that keeps the modified routine cost $c^1=k_1$ but deletes the shirking action. In that comparison, the no-shirking feasible set is
$\mathcal X^{\{1,2\}}_2(\mu,\lambda)
=
\big\{x: M^{1}(\mu,\lambda;x)\ge k-k_1\big\},$
whereas the three-action feasible set is
$\mathcal X^{\{0,1,2\}}_2(\mu,\lambda)
=
\big\{x: M^{0}(\mu,\lambda;x)\ge k \text{ and } M^{1}(\mu,\lambda;x)\ge k-k_1\big\}.$
Therefore,
$\mathcal X^{\{0,1,2\}}_2(\mu,\lambda)\subseteq \mathcal X^{\{1,2\}}_2(\mu,\lambda)$ $\forall(\mu,\lambda).$
Thus, relative to the same-cost two-action benchmark, adding shirking can only \textit{tighten} the implementability requirement for innovation.

\section{Appendix: Omitted Proofs}\label{sec:appproof}

\subsection{Proof of Lemmas \ref{lem:translation}, \ref{lem:F-lambda}, and \ref{lem:app:screening:decomposition}}
\begin{proof}[Proof of Lemma \ref{lem:translation}]
Fix $q\in Q^{2}$. For any $x$ and $\zeta$,
we have $\sum_{y\in Y} q(y)e^{-\lambda(x(y)-\zeta)}
= e^{\lambda\zeta} \sum_{y\in Y} q(y)e^{-\lambda x(y)}.$
Taking logarithms and multiplying by $-1/\lambda$ gives the expression
$-\frac{1}{\lambda}\log \big(\sum_{y\in Y} q(y)e^{-\lambda(x(y)-\zeta)}\big)
=
-\zeta
-\frac{1}{\lambda}\log \big(\sum_{y\in Y} q(y)e^{-\lambda x(y)}\big).$
Taking the expectation over $q$ with respect to $\mu$ yields
$\mathcal{G}(T_\zeta(x);\mu,\lambda)=\mathcal{G}(x;\mu,\lambda)-\zeta$.
\end{proof}

\begin{proof}[Proof of Lemma \ref{lem:F-lambda}]
Fix $q\in Q^{2}$ and recall
$g_q(x;\lambda)=
-\frac{1}{\lambda}\log\big(\sum_{y\in Y} q(y)e^{-\lambda x(y)}\big).$
Define the exponentially tilted probability (the ``worst-case distortion'' inside model $q$):
\begin{align}
p_{q,\lambda}(y;x):=
\frac{q(y)e^{-\lambda x(y)}}{\sum_{z\in Y}q(z)e^{-\lambda x(z)}}.
\label{eq:pqlambda}
\end{align}
Then,
$\frac{\partial g_q(x;\lambda)}{\partial \lambda}
=
\frac{1}{\lambda^2}\log\big(\sum_{y}q(y)e^{-\lambda x(y)}\big)
+\frac{1}{\lambda}\sum_{y}p_{q,\lambda}(y;x)x(y).$
Using the identity
\begin{align*}
\log\Big(\sum_y q(y)e^{-\lambda x(y)}\Big)
&=
\sum_y p_{q,\lambda}(y;x)\log\frac{q(y)e^{-\lambda x(y)}}{p_{q,\lambda}(y;x)}\\
&=
-\lambda\sum_y p_{q,\lambda}(y;x)x(y)
+\sum_y p_{q,\lambda}(y;x)\log\frac{q(y)}{p_{q,\lambda}(y;x)},
\end{align*}
we obtain $\frac{\partial g_q(x;\lambda)}{\partial \lambda}
=
-\frac{1}{\lambda^2}\sum_{y\in Y} p_{q,\lambda}(y;x)\log\frac{p_{q,\lambda}(y;x)}{q(y)}
=
-\frac{1}{\lambda^2}\DKL\big(p_{q,\lambda}(\cdot;x)\|q\big).$ Thus, we get
$\partial_\lambda g_q(x;\lambda)\le 0$ for all $\lambda>0$, so $g_q(x;\lambda)$ is weakly decreasing in
$\lambda$.

If $q$ has full support, $\DKL(p_{q,\lambda}\|q)=0$ if and only if $p_{q,\lambda}(\cdot;x)=q(\cdot)$, which
by \eqref{eq:pqlambda} holds if and only if $x$ is constant on $Y$. Hence, for non-constant $x$,
$\partial_\lambda g_q(x;\lambda)<0$ and $g_q$ is strictly decreasing. Finally,
$\mathcal{G}(x;\mu,\lambda)=\E_{\mu}[g_q(x;\lambda)],$
so the same monotonicity (and strictness under the stated conditions) holds for $\mathcal{G}$.
\end{proof}
\begin{proof}[Proof of Lemma \ref{lem:app:screening:decomposition}]
Fix $\mu,\lambda,z$ as in the statement and write $\pi^0(q,y)=\mu(q)q(y)$. Consider any
$\pi\in\Delta(Q^{2}\times Y)$ satisfying the marginal restriction $\pi_{Q^{2}}=\mu$. Let $\text{supp}(\mu):=\{q\in Q^{2}:\mu(q)>0\}$. The marginal restriction implies that for each
$q\in Q^{2}$ we have $\sum_{y\in Y}\pi(q,y)=\mu(q)$. In particular, if $\mu(q)=0$ then necessarily
$\pi(q,y)=0$ for all $y\in Y$. For each $q\in\text{supp}(\mu)$, define the conditional distribution $r_q(y):=\frac{\pi(q,y)}{\mu(q)}$  for all $y\in Y.$
Then, $r_q\in\Delta(Y)$. Conversely, any collection $\{r_q\}_{q\in\text{supp}(\mu)}$ with $r_q\in\Delta(Y)$
induces a feasible $\pi$ by setting $\pi(q,y)=\mu(q)r_q(y)$ for $q\in\text{supp}(\mu)$ and $\pi(q,y)=0$
for $q\notin\text{supp}(\mu)$, which automatically satisfies $\pi_{Q^{2}}=\mu$. In this case, the payoff term decomposes as
$\sum_{(q,y)}\pi(q,y) z(y)
=\sum_{q\in Q^{2}}\mu(q)\sum_{y\in Y} r_q(y) z(y).$
Next, the KL term also decomposes:
$$
\DKL(\pi\|\pi^0)
=\sum_{(q,y)}\pi(q,y)\log\frac{\pi(q,y)}{\pi^0(q,y)}
=\sum_{q\in Q^{2}}\sum_{y\in Y}\mu(q)r_q(y)\log\frac{\mu(q)r_q(y)}{\mu(q)q(y)}
=\sum_{q\in Q^{2}}\mu(q) \DKL(r_q\|q).
$$
Therefore, the minimization problem on the left-hand side of
\eqref{eq:app:screening:decomposition} is equivalent to
$$
\min_{\{r_q\in\Delta(Y)\}_{q\in Q^{2}}}
\sum_{q\in Q^{2}}\mu(q)\Big\{
\sum_{y\in Y}r_q(y) z(y)+\frac{1}{\lambda} \DKL(r_q\|q)
\Big\}.
$$
Since the choice variables $\{r_q\}$ are separable across $q$ and $\mu(q)$ is fixed,
this equals
$$
\sum_{q\in Q^{2}}\mu(q)
\min_{r\in\Delta(Y)}\Big\{\sum_{y\in Y}r(y) z(y)+\frac{1}{\lambda} \DKL(r\|q)\Big\}.
$$
The inner minimum equals $g_q(z;\lambda)$ as
defined in \eqref{eq:dynamic:gq}. This proves \eqref{eq:app:screening:decomposition}.

Finally, \citet[][Proposition 1.4.2]{dupuis97} delivers the minimizer for each $q$:
$r_q^\ast(y)=
\frac{q(y)\exp\{-\lambda z(y)\}}{\sum_{y'\in Y}q(y')\exp\{-\lambda z(y')\}},$
and hence $\pi^\ast(q,y)=\mu(q)r_q^\ast(y)$.
\end{proof}

\subsection{Proofs from Appendix \ref{sec:dynamic:optimal}}

\subsubsection*{Proof of Proposition \ref{prop:dynamic:mmr-recursive}}

\begin{proof}
Fix an arbitrary history-dependent action plan $\psi=(a_1,\sigma_2)$. Since $Y$ and $Q^{2}$ are finite, the simplices $\Delta(Y)$ and $\Delta(Q^{2}\times Y)$ are compact. As noted earlier, $\lambda\equiv0$ is the EU-limit case and the recursive formulation below still works, so we focus on $\lambda>0$. 

\medskip
\noindent\textit{Step 1: reduce the $a=2$ problem.}
Fix $t\in\{1,2\}$ and a history $h_t$.
For $r\in\Delta(Y)$ define
$$
\varPi(r) := \{\pi\in\Delta(Q^{2}\times Y): \pi_Y=r\},
\qquad
\tau_t(h_t,2;r) := \inf_{\pi\in\varPi(r)} \kappa_t(h_t,2;\pi),
$$
where $\pi_Y$ is the $Y$-marginal of $\pi$ and $\kappa_t(h_t,2;\pi)=+\infty$ unless $\pi_{Q^{2}}=\mu_t(\cdot|h_t)$.

\begin{claim}\label{clm:dynamic:tau-attainment}
For every $r\in\Delta(Y)$, the infimum defining $\tau_t(h_t,2;r)$ is attained: there exists $\pi^\ast(r)\in\varPi(r)$ such that
$\tau_t(h_t,2;r)=\kappa_t\big(h_t,2;\pi^\ast(r)\big)$.
\end{claim}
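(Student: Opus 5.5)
The claim is a standard attainment statement: minimize a lower semicontinuous function over a compact set. The plan is to show that $\varPi(r)$, intersected with the marginal constraint $\pi_{Q^{2}}=\mu_t(\cdot|h_t)$, is nonempty and compact, and that $\kappa_t(h_t,2;\cdot)$ is lower semicontinuous and finite on that set. Weierstrass then gives a minimizer.

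\emph{Step 1: feasibility.} I will show that the set
$$\varPi_\mu(r):=\{\pi\in\Delta(Q^{2}\times Y):\ \pi_Y=r,\ \pi_{Q^{2}}=\mu_t(\cdot|h_t)\}$$
is nonempty by exhibiting the product coupling $\pi(q,y):=\mu_t(q|h_t)\,r(y)$. Its marginals are correct by construction. Because every $q\in Q^{2}$ has full support on $Y$, the benchmark $\pi^0_t(q,y)=\mu_t(q|h_t)q(y)$ is positive wherever $\mu_t(q|h_t)>0$. Hence $\pi\ll\pi^0_t$ and $\kappa_t(h_t,2;\pi)=\lambda_t(h_t)^{-1}\DKL(\pi\|\pi^0_t)<\infty$. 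This shows that $\tau_t(h_t,2;r)$ is finite. It also shows that the infimum over $\varPi(r)$ equals the infimum over $\varPi_\mu(r)$, since $\kappa_t=+\infty$ off the marginal constraint.

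\emph{Step 2: compactness and semicontinuity.} $\varPi_\mu(r)$ is cut out of the compact simplex $\Delta(Q^{2}\times Y)$ by finitely many linear equalities, so it is a closed subset of a compact set in a finite-dimensional space, hence compact. On this set, $\pi\mapsto \sum_{(q,y)}\pi(q,y)\log\frac{\pi(q,y)}{\pi^0_t(q,y)}$ is continuous, with the convention $0\log 0=0$. The only potential issue is cells with $\pi^0_t(q,y)=0$. These occur exactly when $\mu_t(q|h_t)=0$, and there the marginal constraint forces $\pi(q,y)=0$, so those terms vanish identically. Each remaining term $u\log(u/c)$ with $c>0$ is continuous on $[0,1]$. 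The objective is therefore a finite continuous function on $\varPi_\mu(r)$.

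\emph{Step 3: conclusion.} Weierstrass gives $\pi^\ast(r)\in\varPi_\mu(r)\subseteq\varPi(r)$ attaining the minimum, and $\tau_t(h_t,2;r)=\kappa_t(h_t,2;\pi^\ast(r))$. The case $\lambda_t(h_t)=0$ is not needed, since the proof restricts to $\lambda>0$; if it were needed, the indicator convention would make $\pi^0_t$ the unique feasible point whenever $r=\bar q_\mu$, and the infimum would be $+\infty$ otherwise, trivially attained. I do not expect a real obstacle. The only delicate point is Step 2's handling of the zero-benchmark cells, which the marginal constraint resolves. The KL decomposition \eqref{eq:dynamic:KL-decomp} offers an alternative route: the problem becomes minimizing $\sum_q\mu(q)\DKL(r_q\|q)$ over conditionals $r_q$ with mixture $\sum_q\mu(q)r_q=r$, which is continuous and strictly convex over a compact polytope.
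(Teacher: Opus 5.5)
Your proposal is correct and follows the paper's argument: show the feasible set is nonempty via the product coupling $\mu_t(\cdot|h_t)\otimes r$, note it is compact, and apply Weierstrass. The only differences are cosmetic. You restrict to the marginal-constrained set, use continuity of the KL objective there, and observe that the infimum is finite. The paper instead applies Weierstrass to the extended-valued lower semicontinuous $\kappa_t(h_t,2;\cdot)$ on all of $\varPi(r)$ and treats the $+\infty$ case separately; your finiteness observation shows that case never occurs.
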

\textit{Proof of Claim \ref{clm:dynamic:tau-attainment}.}
The set $\varPi(r)$ is closed in $\Delta(Q^{2}\times Y)$ and hence compact. Notice that $\varPi(r)$ is nonempty: the product distribution
$\hat{\pi}\in\Delta(Q^{2}\times Y)$ defined by
$\hat{\pi}(q,y):=\mu_t(q|h_t)\hspace{0.02in}r(y)$
satisfies $\hat{\pi}_Y=r$ and $\hat{\pi}_{Q^{2}}=\mu_t(\cdot|h_t)$, so $\hat{\pi}\in\varPi(r)$. Moreover, $\pi\mapsto \kappa_t(h_t,2;\pi)$ is lower semicontinuous on $\Delta(Q^{2}\times Y)$---it is $\lambda_t(h_t)^{-1}\DKL(\pi\|\pi_t^0)$ on the closed affine set $\{\pi:\pi_{Q^{2}}=\mu_t(\cdot|h_t)\}$ and $+\infty$ outside it.
If $\inf_{\pi\in\varPi(r)}\kappa_t(h_t,2;\pi)=+\infty$, then any element of $\varPi(r)$ is a minimizer.
Otherwise, the infimum is finite, and Weierstrass' theorem for lower semicontinuous functions on compact sets yields a minimizer. \hfill$\blacksquare$

Let $z:Y\to\R$. Since $z$ depends only on $y$, for any $\pi$,
$\sum_{(q,y)}\pi(q,y)\hspace{0.02in}z(y)=\sum_{y\in Y}\pi_Y(y)\hspace{0.02in}z(y)$.
Hence, for any $\pi$ with $r:=\pi_Y$,
$\sum_{(q,y)}\pi(q,y)\hspace{0.02in}z(y)+\kappa_t(h_t,2;\pi)
 \ge \sum_{y\in Y} r(y)\hspace{0.02in}z(y)+\tau_t(h_t,2;r).$
Taking $\inf$ over $\pi\in\Delta(Q^{2}\times Y)$ gives
$$
\inf_{\pi\in\Delta(Q^{2}\times Y)}\Big\{\sum_{(q,y)}\pi(q,y)\hspace{0.02in}z(y)+\kappa_t(h_t,2;\pi)\Big\}
 \ge 
\inf_{r\in\Delta(Y)}\Big\{\sum_{y\in Y} r(y)\hspace{0.02in}z(y)+\tau_t(h_t,2;r)\Big\}.
$$
Conversely, fix $r\in\Delta(Y)$ and let $\pi^\ast(r)$ be a minimizer from Claim \ref{clm:dynamic:tau-attainment}. Then
\begin{align*}
\inf_{\pi\in\Delta(Q^{2}\times Y)}\Big\{\sum_{(q,y)}\pi(q,y)\hspace{0.02in}z(y)+\kappa_t(h_t,2;\pi)\Big\}
&\le
\sum_{(q,y)}\pi^\ast(r)(q,y)\hspace{0.02in}z(y)+\kappa_t\big(h_t,2;\pi^\ast(r)\big)\\
&=
\sum_{y\in Y} r(y)\hspace{0.02in}z(y)+\tau_t(h_t,2;r).
\end{align*}
Therefore, $\mathcal V_t(h_t,2;z)
=
\inf_{r\in\Delta(Y)}\big\{\sum_{y\in Y} r(y)\hspace{0.02in}z(y)+\tau_t(h_t,2;r)\big\}-k.$
By Step 2 below, $r\mapsto \tau_t(h_t,2;r)$ is lower semicontinuous on the compact set $\Delta(Y)$, so the infimum is attained and the infimum equals a minimum. This yields \eqref{eq:dynamic:mmr-one-step-innov}.

\medskip
\noindent\textit{Step 2: admissibility of the one-period-ahead indices.}
Define $\tau_t(h_t,1;r):=\kappa_t(h_t,1;r)=\lambda_t(h_t)^{-1}\DKL(r\|q^1)$.
We verify that for each fixed $(t,h_t,a)$ the map $r\mapsto\tau_t(h_t,a;r)$ is grounded, convex, and lower semicontinuous on $\Delta(Y)$. Since $\Delta(Y)$ is finite-dimensional, convexity and lower semicontinuity imply that
$r\mapsto \tau_t(h_t,a;r)$ is closed, which matches the admissibility requirement in
\citet[][Proposition 2(ii)]{mmr06_JET}.

\medskip
\noindent\emph{Case $a=1$.}
The map $r\mapsto\DKL(r\|q^1)$ is convex and lower semicontinuous on $\Delta(Y)$ and satisfies $\DKL(q^1\|q^1)=0$.
Since $\lambda_t(h_t)>0$, the same properties hold for $\tau_t(h_t,1;\cdot)$, which is grounded. 

\medskip
\noindent\emph{Case $a=2$.}

\smallskip
\textit{Groundedness.}
Let $r^0:=(\pi_t^0)_Y$ be the $Y$-marginal of the benchmark $\pi_t^0(q,y)=\mu_t(q|h_t)q(y)$.
Since $(\pi_t^0)_{Q^{2}}=\mu_t(\cdot|h_t)$, we have $\kappa_t(h_t,2;\pi_t^0)=0$ and $\pi_t^0\in\varPi(r^0)$, hence
$\tau_t(h_t,2;r^0)\le 0$.
Since $\kappa_t(h_t,2;\pi)\ge 0$ for all $\pi$, we also have $\tau_t(h_t,2;r)\ge 0$ for all $r$.
Thus, $\tau_t(h_t,2;r^0)=0$. Moreover, since $r^0\in\Delta_{++}(Y):=\{r\in\Delta(Y): r(y)>0 \hspace{0.03in}\forall y\}$ under full support and $\tau_t(h_t,2;r^0)=0<\infty$, we have
$\mathrm{dom}\big(\tau_t(h_t,2;\cdot)\big)\cap\Delta_{++}(Y)\neq\varnothing$, and similarly, $\mathrm{dom}\big(\tau_t(h_t,1;\cdot)\big)\cap\Delta_{++}(Y)\neq\varnothing$ because $q^1\in \Delta_{++}(Y)$ and $\tau_t(h_t,1;q^1)=0$, which verify the domain condition in \citet[][Proposition 2(ii)]{mmr06_JET}.

\smallskip
\textit{Convexity.}
Fix $r_0,r_1\in\Delta(Y)$ and $\theta\in[0,1]$.
If $\tau_t(h_t,2;r_i)=+\infty$ for some $i$, the convexity inequality is immediate.
Otherwise, pick minimizers $\pi_i\in\varPi(r_i)$ with $\kappa_t(h_t,2;\pi_i)=\tau_t(h_t,2;r_i)$.
Then, $\theta\pi_0+(1-\theta)\pi_1\in\varPi(\theta r_0+(1-\theta)r_1)$ and, by convexity of $\kappa_t(h_t,2;\cdot)$,
\begin{align*}
\tau_t\big(h_t,2;\theta r_0+(1-\theta)r_1\big)
\le \kappa_t\big(h_t,2;\theta\pi_0+(1-\theta)\pi_1\big)
&\le \theta\kappa_t(h_t,2;\pi_0)+(1-\theta)\kappa_t(h_t,2;\pi_1)\\
&= \theta\tau_t(h_t,2;r_0)+(1-\theta)\tau_t(h_t,2;r_1).
\end{align*}

\smallskip
\textit{Lower semicontinuity.} Let $r_n\to r$ in $\Delta(Y)$.
If $\liminf_{n\to\infty}\tau_t(h_t,2;r_n)=+\infty$, then $\tau_t(h_t,2;r)\le \liminf_{n\to\infty}\tau_t(h_t,2;r_n)$ holds trivially.
Otherwise, pass to a subsequence (not relabeled) such that $\tau_t(h_t,2;r_n)\to \liminf_{m\to\infty}\tau_t(h_t,2;r_m)<\infty$.
Let $\pi_n\in\varPi(r_n)$ be minimizers, so $\kappa_t(h_t,2;\pi_n)=\tau_t(h_t,2;r_n)<\infty$ for all $n$, hence $(\pi_n)_{Q^{2}}=\mu_t(\cdot|h_t)$ for all $n$.
By compactness of $\Delta(Q^{2}\times Y)$, extract a further subsequence (not relabeled) with $\pi_n\to\pi$. Continuity of marginals gives $\pi_Y=r$.
Moreover, since each $\pi_n$ is a minimizer with $\kappa_t(h_t,2;\pi_n)<\infty$ whenever $\tau_t(h_t,2;r_n)<\infty$,
we have $(\pi_n)_{Q^{2}}=\mu_t(\cdot|h_t)$ for all sufficiently large $n$ with finite value, and hence $\pi_{Q^{2}}=\mu_t(\cdot|h_t)$ by continuity of marginals.
Thus, $\kappa_t(h_t,2;\pi)<\infty$ whenever the liminf is finite.
By lower semicontinuity of $\kappa_t(h_t,2;\cdot)$,
$\kappa_t(h_t,2;\pi) \le \liminf_{n\to\infty}\kappa_t(h_t,2;\pi_n)
=\liminf_{n\to\infty}\tau_t(h_t,2;r_n).$
Since $\tau_t(h_t,2;r)\le \kappa_t(h_t,2;\pi)$ by definition of $\tau_t(h_t,2;r)$ as an infimum over $\varPi(r)$, we obtain $\tau_t(h_t,2;r) \le \liminf_{n\to\infty}\tau_t(h_t,2;r_n),$
establishing lower semicontinuity.

\medskip
\noindent\textit{Step 3: dynamic consistency, backward induction.}
Fix a plan $\psi=(a_1,\sigma_2)$ and the finite event tree: $\Omega=Y\times Y$,
$\mathcal F_1=\{\varnothing,\Omega\}$, $\mathcal F_2=\sigma(y_1)$, and
$\mathcal F_3=\sigma(y_1,y_2)$ (Appendix
\ref{subsec:dynamic:mmr}).

By Step 2, for each decision node $(t,h_t)$, the plan-induced map
$r\mapsto \tau_t\big(h_t,\psi_t(h_t);r\big)$
is grounded, convex, and lower semicontinuous on $\Delta(Y)$, and its domain intersects
$\Delta_{++}(Y)$. Measurability is immediate because the tree is finite: for each fixed
$r\in\Delta(Y)$, $\omega\mapsto \tau_1(\varnothing,a_1;r)$ is
$\mathcal F_1$-measurable, while
$\omega=(y_1,y_2)\mapsto \tau_2(y_1,\sigma_2(y_1);r)$ is
$\mathcal F_2$-measurable under the plan-induced convention stated in Appendix
\ref{subsec:dynamic:mmr}. Thus, the selected family
$\{\tau_t(h_t,\psi_t(h_t);\cdot)\}_{t=1,2}$ satisfies the one-period-ahead
ambiguity-index conditions in \citet[][Proposition 2]{mmr06_JET}. Then, Appendix \ref{subsec:dynamic:mmr} constructs the corresponding dynamic ambiguity
indices $c_3^\psi,c_2^\psi,c_1^\psi$ and the induced recursive values
$W_2^\psi,W_1^\psi$ in
\eqref{eq:dynamic:c3}--\eqref{eq:dynamic:W1-recursive}. These equations are the
specialization of \citet[][Theorem 1, eqs. (10)--(12), and Theorem 2]{mmr06_JET} to the
two-period tree. Therefore, the plan-induced preference is a recursive variational
preference in the sense of \citet[][Theorem 2]{mmr06_JET}. Thus, by \citet[][Theorem 1]{mmr06_JET}, it is
dynamically consistent.

It now remains to connect this plan-induced construction to the maximized values
\eqref{eq:dynamic:V2}--\eqref{eq:dynamic:V1}. Fix $(x_1,x_2)$. If the first-period action is
$b\in\{1,2\}$, define
$$
\mathcal V_2^b(y_1,a;z):=
\begin{cases}
\widehat{\mathcal V}_2(y_1,a;z), & \text{if }b=1,\\
\mathcal V_2(y_1,a;z), & \text{if }b=2,
\end{cases}
\qquad
V_2^b(y_1;x_2):=\max_{a_2\in\{1,2\}}\mathcal V_2^b\big(y_1,a_2;x_2(y_1,\cdot)\big).
$$
Thus, $V_2^1=\widehat V_2$ and $V_2^2=V_2$. For any $\sigma_2$,
$\mathcal V_2^b\big(y_1,\sigma_2(y_1);x_2(y_1,\cdot)\big)\le V_2^b(y_1;x_2)$ $\forall y_1\in Y.$
Each operator $\mathcal V_1(\varnothing,b;\cdot)$ is monotone in its continuation vector, since increasing the continuation vector raises the objective for every feasible distortion and hence raises the minimum. Thus, for each fixed $b$,
$\mathcal V_1\big(\varnothing,b;x_1(\cdot)+\beta V_2^b(\cdot;x_2)\big)
\ge
\mathcal V_1\big(\varnothing,b;x_1(\cdot)+\beta w^\sigma(\cdot)\big),$
where
$w^\sigma(y_1):=\mathcal V_2^b\big(y_1,\sigma_2(y_1);x_2(y_1,\cdot)\big).$
As a result, for each first-period action $b$, the optimal continuation policy is obtained by maximizing node by node at $t=2$. Taking the maximum over $b\in\{1,2\}$ gives exactly the expression in \eqref{eq:dynamic:V1}:
$$
V_1(x_1,x_2)
=
\max\left\{
 \mathcal V_1 \big(\varnothing,1; x_1(\cdot)+ \beta \widehat V_2(\cdot;x_2)\big),
 \mathcal V_1 \big(\varnothing,2; x_1(\cdot)+ \beta V_2(\cdot;x_2)\big)
 \right\}.
$$
Dynamic consistency ensures that these node-by-node maximizers do not generate any commitment/re-optimization discrepancy after $y_1$ is observed. \end{proof}

\subsubsection*{Proof of Proposition \ref{prop:dynamic:terminal-impl}}
\begin{proof}
Fix $\psi=(a_1,\sigma_2)$ and $y_1\in Y$. Let
$\mu:=\mu_2^\psi(\cdot|y_1)$, $\lambda:=\lambda_2^\psi(y_1)$, and
$x:=x_2(y_1,\cdot)$. If $\lambda=0$, use the EU-limit
$g_q(x;0):=\sum_{y\in Y}q(y)x(y)$ and
$\mathcal G(x;\mu,0):=\sum_{q\in Q^2}\mu(q)g_q(x;0)$. For $\lambda>0$, the routine value is
$\min_{r\in\Delta(Y)}
\left\{\sum_{y\in Y}r(y)x(y)+\frac{1}{\lambda}\DKL(r\|q^1)\right\}
=g_{q^1}(x;\lambda)$
by the variational formula \citep[][Proposition 1.4.2]{dupuis97}. The same identity holds at
$\lambda=0$ by the EU-limit convention.

For innovation, let $\pi^0(q,y):=\mu(q)q(y)$. Since the marginal on $Q^2$ is fixed at $\mu$,
any feasible $\pi\in\Delta(Q^2\times Y)$ can be written as
$\pi(q,y)=\mu(q)r_q(y)$ with $r_q\in\Delta(Y)$. The KL decomposition gives
$\DKL(\pi\|\pi^0)=\sum_{q\in Q^2}\mu(q)\DKL(r_q\|q).$
Hence, for $\lambda>0$,
$$
\begin{aligned}
\min_{\substack{\pi\in\Delta(Q^2\times Y):\\ \pi_{Q^2}=\mu}}
&\Big\{\sum_{(q,y)}\pi(q,y)x(y)+\frac{1}{\lambda}\DKL(\pi\|\pi^0)\Big\}-k \\
&\qquad =
\sum_{q\in Q^2}\mu(q)
\min_{r_q\in\Delta(Y)}
\Big\{\sum_{y\in Y}r_q(y)x(y)+\frac{1}{\lambda}\DKL(r_q\|q)\Big\}-k \\
&\qquad =
\sum_{q\in Q^2}\mu(q)g_q(x;\lambda)-k
=
\mathcal G(x;\mu,\lambda)-k.
\end{aligned}
$$
Again, the same identity holds at $\lambda=0$ by the EU-limit convention.

Therefore $a_2=2$ is weakly optimal if and only if
$\mathcal G(x;\mu,\lambda)-k\ge g_{q^1}(x;\lambda),$
which is equivalent to
$M(\mu,\lambda;x)\ge k.$
Substituting back $\mu=\mu_2^\psi(\cdot|y_1)$, $\lambda=\lambda_2^\psi(y_1)$, and
$x=x_2(y_1,\cdot)$ proves the first statement. The implementability characterization follows directly
from the definition
$C(\mu,\lambda)=\sup_{x\in\R^Y}M(\mu,\lambda;x)$: a finite vector satisfying $M\ge k$ exists if and
only if either $C>k$, or $C=k$ and the supremum in \eqref{eq:dynamic:Delta-C} is attained.
\end{proof}

\subsubsection*{Proof of Observation \ref{prop:dynamic:capacity-lb}}
\begin{proof} 
Fix $(\mu,\lambda)$ with $\mu(q^2_H)=m$ and $\mu(q^2_L)=1-m$, and fix $T>0$.
Define $x^T: Y\to\R$ by
$x^T(0)=0$ and $x^T(1)=-T.$
For any Bernoulli distribution $q(\eta)$ with $\eta\in(0,1)$,
$$
g_{q(\eta)}(x^T;\lambda)
=
-\frac{1}{\lambda}\log\big((1-\eta)+\eta e^{\lambda T}\big)
=
-T-\frac{1}{\lambda}\log\big(\eta+(1-\eta)e^{-\lambda T}\big).
$$
Therefore, for any $\eta,\vartheta\in(0,1)$,
$g_{q(\eta)}(x^T;\lambda)-g_{q(\vartheta)}(x^T;\lambda)
=
\frac{1}{\lambda}\log
\frac{\vartheta+(1-\vartheta)e^{-\lambda T}}{\eta+(1-\eta)e^{-\lambda T}}.$
As $T\to\infty$, $e^{-\lambda T}\downarrow 0$, so the right-hand side converges to
$\frac{1}{\lambda}\log(\vartheta/\eta)$. Hence
$$
\lim_{T\to\infty}\Big(g_{q^2_H}(x^T;\lambda)-g_{q^1}(x^T;\lambda)\Big)
=
\frac{1}{\lambda}\log \frac{p}{\theta_H},
\qquad
\lim_{T\to\infty}\Big(g_{q^2_L}(x^T;\lambda)-g_{q^1}(x^T;\lambda)\Big)
=
\frac{1}{\lambda}\log \frac{p}{\theta_L}.
$$
By linearity of $\mathcal G(\cdot;\mu,\lambda)$ in $\mu$,
$\lim_{T\to\infty}{M}(\mu,\lambda;x^T)
=
\frac{1}{\lambda}\big(m\log \frac{p}{\theta_H}+(1-m)\log \frac{p}{\theta_L}\big).$
Since $C(\mu,\lambda)=\sup_x {M}(\mu,\lambda;x)\ge \sup_{T>0}{M}(\mu,\lambda;x^T)$, we obtain
$$
C(\mu,\lambda) \ge \limsup_{T\to\infty}{M}(\mu,\lambda;x^T)
=
\frac{1}{\lambda}\left(m\log \frac{p}{\theta_H}+(1-m)\log \frac{p}{\theta_L}\right),
$$
which is the desired bound.
\end{proof}

\subsubsection*{Proof of Corollary \ref{cor:dynamic:failure-impl}}
\begin{proof}
Let $\mu_{\mathfrak{f}}:=\mu_2(\cdot|y_1=0)$ and $\lambda_{\mathfrak{f}}:=\lambda_2(0)=-\log(1-\theta_L)/\gamma$.
Apply Observation \ref{prop:dynamic:capacity-lb} with $(\mu,\lambda)=(\mu_{\mathfrak{f}},\lambda_{\mathfrak{f}})$:
\begin{align}
C(\mu_{\mathfrak{f}},\lambda_{\mathfrak{f}}) \ge\
\frac{1}{\lambda_{\mathfrak{f}}}\left(m_F(\theta_L)\log \frac{p}{\theta_H}+(1-m_F(\theta_L))\log \frac{p}{\theta_L}\right),
\label{eq:fail:lb}
\end{align}
where $m_F(\theta_L)=\mu_{\mathfrak{f}}(q^2_H)$ is given by \eqref{eq:dynamic:mu-update}.

\medskip
\noindent\emph{A valid lower bound on the bracket.}
Since $\theta_H>p$, we have $\log(p/\theta_H)<0$, and since $m_F(\theta_L)\in[0,1]$, $m_F(\theta_L)\log\frac{p}{\theta_H}\ \ge\ \log\frac{p}{\theta_H}.$
Thus, the bracket in \eqref{eq:fail:lb} satisfies
$$
m_F(\theta_L)\log \frac{p}{\theta_H}+(1-m_F(\theta_L))\log \frac{p}{\theta_L}
\ \ge\
\log\frac{p}{\theta_H}+(1-m_F(\theta_L))\log \frac{p}{\theta_L}.
$$
Moreover, $m_F(\theta_L)\to m_F(0):=\frac{m(1-\theta_H)}{m(1-\theta_H)+(1-m)}<1$ as $\theta_L\downarrow 0$,
so there exist $\tilde{\eta}>0$ and $\bar\theta_L\in(0,p)$ such that for all $\theta_L\in(0,\bar\theta_L)$,
$1-m_F(\theta_L)\ge \tilde{\eta}$. Also, $-\log(1-\theta_L)\le \theta_L/(1-\theta_L)$ implies that for all sufficiently small $\theta_L$,
$\lambda_{\mathfrak{f}} \le 2\theta_L/\gamma$, hence $1/\lambda_{\mathfrak{f}} \ge \gamma/(2\theta_L)$. Finally, since $\log(p/\theta_L)\to+\infty$, shrink $\bar\theta_L$ if needed so that
$\log(p/\theta_L)\ge 2|\log(p/\theta_H)|/\tilde{\eta}$ for all $\theta_L\in(0,\bar\theta_L)$; then
$\log\frac{p}{\theta_H}+\tilde{\eta}\log\frac{p}{\theta_L}\ \ge\ \frac{\tilde{\eta}}{2}\log\frac{p}{\theta_L}.$
Combining these bounds with \eqref{eq:fail:lb}, for $\theta_L\in(0,\bar\theta_L)$, we obtain
$C(\mu_{\mathfrak{f}},\lambda_{\mathfrak{f}})
 \ge \frac{1}{\lambda_{\mathfrak{f}}}\frac{\tilde{\eta}}{2}\log\frac{p}{\theta_L}
 \ge \frac{\gamma}{4\theta_L}\tilde{\eta}\log\frac{p}{\theta_L}
 \xrightarrow[\theta_L\downarrow 0]{}\ +\infty,$
since $\log(1/\theta_L)/\theta_L\to\infty$. This proves $\lim_{\theta_L\downarrow 0}C(\mu_{\mathfrak{f}},\lambda_{\mathfrak{f}})=+\infty$,
and in particular implies that for any fixed $k>0$ we can choose $\theta_L$ small enough so that
$C(\mu_{\mathfrak{f}},\lambda_{\mathfrak{f}})> k$. 
\end{proof}

\subsubsection*{Proof of Proposition \ref{prop:dynamic:first-period-ic}}

\begin{proof}
By \eqref{eq:dynamic:V1}, choosing $a_1=2$ gives $\mathcal V_1(\varnothing,2;x_1(\cdot)+\beta V_2(\cdot;x_2))$, while choosing $a_1=1$ gives $\mathcal V_1(\varnothing,1;x_1(\cdot)+\beta\widehat V_2(\cdot;x_2))$. ARC gives $\mathcal V_1(\varnothing,2;z)=\mathcal G(z;\mu_1,\lambda_1)-k$ and $\mathcal V_1(\varnothing,1;z)=g_{q^1}(z;\lambda_1)$. Substituting the two continuation payoff vectors gives \eqref{eq:dynamic:first-period-IC}.
\end{proof}

\subsection{Proofs from Appendix \ref{sec:general}}

\begin{proof}[Proof of Observation \ref{prop:disc:smooth-bayes-trap}]
Fix any continuation utility vector $x\in\R^Y$ and define $a_q:=g_q^c(x)$, for each $q\in Q^{2}.$
Since $\phi$ is continuous and strictly increasing, the map
$(a_q)_{q\in Q^{2}}
\mapsto
\phi^{-1}\big(
\sum_{q\in Q^{2}}\mu_2(q|y_1=1)\hspace{0.02in}\phi(a_q)
\big)$
is a quasi-arithmetic mean of the numbers $\{a_q\}_{q\in Q^{2}}$. Therefore, it lies between their minimum and maximum, i.e.,
$$
\min_{q\in Q^{2}} a_q
\le
\phi^{-1}\Big(
\sum_{q\in Q^{2}}\mu_2(q|y_1=1)\hspace{0.02in}\phi(a_q)
\Big)
\le
\max_{q\in Q^{2}} a_q.
$$
Substituting back $a_q=g_q^c(x)$ yields
$\mathcal G_c^\phi\big(x;\mu_2(\cdot|y_1=1)\big)
\le
\max_{q\in Q^{2}} g_q^c(x).$
Subtracting the routine-action valuation $g_{q^1}^c(x)$ from both sides gives
$M_c^\phi\big(\mu_2(\cdot|y_1=1);x\big)
\le
\max_{q\in Q^{2}}
\big\{
g_q^c(x)-g_{q^1}^c(x)
\big\}.$
Taking the supremum over $x\in\R^Y$ on both sides,
$$
C_c^\phi\big(\mu_2(\cdot|y_1=1)\big)
=
\sup_{x\in\R^Y}
M_c^\phi\big(\mu_2(\cdot|y_1=1);x\big)
\le
\sup_{x\in\R^Y}
\max_{q\in Q^{2}}
\Big\{
g_q^c(x)-g_{q^1}^c(x)
\Big\}
=
\overline C^{c}.
$$
Hence, $\overline C^{c}<k$ implies
$C_c^\phi\big(\mu_2(\cdot|y_1=1)\big)<k.$ Now, consider any continuation contract after $(a_1=2,y_1=1)$, with continuation utility vector $x$. Under \eqref{eq:disc:smooth:G}, innovation yields utility
$\mathcal G_c^\phi\big(x;\mu_2(\cdot|y_1=1)\big)-k,$
while the routine action yields utility $g_{q^1}^c(x)$. Therefore, inducing $a_2=2$ requires
$M_c^\phi\big(\mu_2(\cdot|y_1=1);x\big)\ge k.$
But the strict inequality
$C_c^\phi\big(\mu_2(\cdot|y_1=1)\big)<k$
means that no continuation utility vector $x$ can satisfy this incentive inequality. Thus, no continuation contract can implement $a_2=2$ after a breakthrough.
\end{proof}

\begin{proof}[Proof of Observation \ref{prop:disc:varrobust-trap}]
Fix any continuation utility vector $x\in\R^Y$ and let $\varepsilon>0$. Since
$\inf_{\nu\in\Delta(Q^{2})}d\big(\nu;\mu_2(\cdot|y_1=1)\big)=0,$
there exists $\nu_\varepsilon\in\Delta(Q^{2})$ such that
$d\big(\nu_\varepsilon;\mu_2(\cdot|y_1=1)\big)\le \varepsilon.$
By the definition of \eqref{eq:disc:varrobust:G},
$\mathcal G_c^d\big(x;\mu_2(\cdot|y_1=1)\big)
\le
\sum_{q\in Q^{2}}\nu_\varepsilon(q)\hspace{0.02in}g_q^c(x)+\varepsilon
\le
\max_{q\in Q^{2}} g_q^c(x)+\varepsilon.$
Letting $\varepsilon\downarrow 0$ yields
$\mathcal G_c^d\big(x;\mu_2(\cdot|y_1=1)\big)
\le
\max_{q\in Q^{2}} g_q^c(x).$
Subtracting the routine-action valuation $g_{q^1}^c(x)$ from both sides gives
$M_c^d\big(\mu_2(\cdot|y_1=1);x\big)
\le
\max_{q\in Q^{2}}
\big\{
 g_q^c(x)-g_{q^1}^c(x)
\big\}.$
Taking the supremum over $x\in\R^Y$ on both sides,
$$
C_c^d\big(\mu_2(\cdot|y_1=1)\big)
=
\sup_{x\in\R^Y}
M_c^d\big(\mu_2(\cdot|y_1=1);x\big)
\le
\sup_{x\in\R^Y}
\max_{q\in Q^{2}}
\Big\{
 g_q^c(x)-g_{q^1}^c(x)
\Big\}
=
\overline C^{c}.
$$
Hence, $\overline C^{c}<k$ implies
$C_c^d\big(\mu_2(\cdot|y_1=1)\big)<k.$
Now, consider any continuation contract after $(a_1=2,y_1=1)$, with continuation utility vector $x$. Under \eqref{eq:disc:varrobust:G}, innovation yields utility
$\mathcal G_c^d\big(x;\mu_2(\cdot|y_1=1)\big)-k,$
while the routine action yields utility $g_{q^1}^c(x)$. Therefore, inducing $a_2=2$ requires
$M_c^d\big(\mu_2(\cdot|y_1=1);x\big)\ge k.$
But the strict inequality
$C_c^d\big(\mu_2(\cdot|y_1=1)\big)<k$
means that no continuation utility vector $x$ can satisfy this incentive inequality. Thus, no continuation contract can implement $a_2=2$ after a breakthrough.
\end{proof}

\begin{proof}[Proof of Observation \ref{prop:disc:generic-update}]
After a success, the best-fitting model in $Q^{2}=\{q^2_L,q^2_H\}$ is $q^2_H$, so
$b(1)=\max_{q\in Q^{2}} q(1)=\theta_H$ and hence $\lambda_2(1)=\Psi(\varLambda(\theta_H)).$
Since $\varLambda$ is continuous and strictly decreasing with $\varLambda(1)=0$, and $p<1$, we have
$\varLambda(p)>0.$
Since $\Psi$ is continuous and strictly positive on $(0,\infty)$,
$\Psi(\varLambda(\theta_H))\rightarrow \Psi(\varLambda(p))>0$ as $\theta_H\downarrow p$. 
Now, define the critical post-success level
$\bar\lambda(\theta_H):=\frac{1}{k}\log\frac{1-p}{1-\theta_H}.$
As $\theta_H\downarrow p$,
$\bar\lambda(\theta_H)\downarrow 0.$
Thus,  there exists $\bar\theta_H\in(p,1)$ such that  $\forall\theta_H\in(p,\bar\theta_H)$,
$\lambda_2(1)=\Psi(\varLambda(\theta_H))>\bar\lambda(\theta_H).$ The remaining argument is the same capacity argument used in the proof of Theorem \ref{thm:dynamic:trap-scaleup}. For fixed $\theta_H\in(p,\bar\theta_H)$, $\lambda_2(1)$ is fixed as $\theta_L\downarrow0$ and satisfies
$\lambda_2(1)>\frac{1}{k}\log\frac{1-p}{1-\theta_H}.$
Moreover, $\mu_2(q_H^2|y_1=1)\to1$ as $\theta_L\downarrow0$. Thus, the post-success capacity converges to the one-model capacity
$\frac{1}{\lambda_2(1)}\log\frac{1-p}{1-\theta_H}<k.$
Therefore, by convergence, there exists $\bar\theta_L\in(0,p)$ such that $\forall \theta_L\in(0,\bar\theta_L)$,
$C(\mu_2(\cdot|y_1=1),\lambda_2(1))<k.$
The implementability conclusion follows from Proposition \ref{prop:dynamic:terminal-impl}.
\end{proof}

\subsection{Proofs from Appendix \ref{sec:disc:roadmap-no-free-lunch}}

\begin{proof}[Proof of Observation \ref{obs:roadmap-known-gamma}]
By Lemma \ref{lem:dynamic:lambda-closed}, $\lambda_2(1)=-\frac{1}{\gamma}\log\theta_H$, while Theorem \ref{thm:dynamic:trap-scaleup} defines $\lambda^\ast=\frac{1}{k}\log\frac{1-p}{1-\theta_H}$. Hence $\lambda_2(1)\leq \lambda^\ast$ is equivalent to
$F(\theta_H)\leq 0$, where
$F(\theta):=-\frac{1}{\gamma}\log\theta-\frac{1}{k}\log\frac{1-p}{1-\theta}$.
Now $F'(\theta)=-\frac{1}{\gamma\theta}-\frac{1}{k(1-\theta)}<0$ on $(p,1)$, while $\lim_{\theta\downarrow p}F(\theta)=-\frac{1}{\gamma}\log p>0$ and $\lim_{\theta\uparrow 1}F(\theta)=-\infty$. So there is a unique cutoff $\hat\theta_H(\gamma)\in(p,1)$ with $F(\hat\theta_H(\gamma))=0$, and the equivalence follows. The formula for $\lambda_2(1)$ gives the second claim.
\end{proof}

\begin{lemma}\label{lem:bernoulli-kl-integral}
For any $a,b\in(0,1)$,
$\DKL(a\|b)
=
a\log\frac{a}{b}+(1-a)\log\frac{1-a}{1-b}
=
\int_a^b \frac{u-a}{u(1-u)}du.$
\end{lemma}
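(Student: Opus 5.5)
The identity $\DKL(a\|b)=\int_a^b \frac{u-a}{u(1-u)}\,du$ will be proved by viewing $\DKL(a\|\cdot)$ as a function of its second argument and applying the fundamental theorem of calculus. Fix $a\in(0,1)$ and define $\Phi(u):=a\log\frac{a}{u}+(1-a)\log\frac{1-a}{1-u}$ for $u\in(0,1)$, so that $\Phi(b)=\DKL(a\|b)$. The first equality in the statement is just the definition of Bernoulli KL divergence, so only the integral representation needs work.

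Step one evaluates $\Phi$ at $u=a$: both logarithms vanish, so $\Phi(a)=0$. Step two differentiates on the open interval $(0,1)$, where $\Phi$ is smooth:
\[
\Phi'(u)=-\frac{a}{u}+\frac{1-a}{1-u}=\frac{-a(1-u)+(1-a)u}{u(1-u)}=\frac{u-a}{u(1-u)}.
\]
Step three applies the fundamental theorem of calculus between $a$ and $b$; both endpoints lie in $(0,1)$, so the integrand is continuous on the closed segment joining them. This gives $\Phi(b)=\Phi(a)+\int_a^b \Phi'(u)\,du=\int_a^b\frac{u-a}{u(1-u)}\,du$, which is the claim.

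The orientation convention is the only point that needs care. When $b<a$, the integral $\int_a^b$ is minus the integral over $[b,a]$. On that interval $u-a\le 0$, so the value is still nonnegative, consistent with $\DKL\ge 0$. The formula is therefore valid for both orderings of $a$ and $b$ without splitting into cases.

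There is no real obstacle here; the lemma is a one-line calculus identity. It is presumably stated so that the proof of Observation \ref{obs:roadmap-no-free-lunch-alpha} can differentiate $\DKL(\theta_\ast\|\theta_H)$ in $\theta_H$ directly. Its derivative there is $\frac{\theta_H-\theta_\ast}{\theta_H(1-\theta_H)}>0$ when $\theta_H>\theta_\ast$.
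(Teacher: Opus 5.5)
Your proof is correct and is essentially the paper's argument: fix $a$, view the Bernoulli KL divergence as a function of its second argument, note it vanishes at $a$, compute its derivative $\frac{u-a}{u(1-u)}$, and apply the fundamental theorem of calculus. One small correction to your closing remark: the paper's proof of Observation~\ref{obs:roadmap-no-free-lunch-alpha} uses the integral representation itself, not just the derivative, to bound $D(\theta)$ above by $\frac{\theta-\theta_\ast}{\theta}\log\frac{1-\theta_\ast}{1-\theta}$.
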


\begin{proof}[Proof of Lemma \ref{lem:bernoulli-kl-integral}]
Fix $a\in(0,1)$ and define
$f(b):=a\log\frac{a}{b}+(1-a)\log\frac{1-a}{1-b},$ for $b\in(0,1).$
Then, $f$ is continuously differentiable on $(0,1)$, with
$f'(b)
=
-\frac{a}{b}+\frac{1-a}{1-b}
=
\frac{b-a}{b(1-b)}.$
Also, $f(a)=0$. Therefore, by the fundamental theorem of calculus,
$f(b)-f(a)=\int_a^b f'(u)\hspace{0.02in}du,$
so
$\DKL(a\|b)=f(b)=\int_a^b \frac{u-a}{u(1-u)}du.$
\end{proof}
\begin{proof}[Proof of Observation \ref{obs:roadmap-no-free-lunch-alpha}]
The first claim is immediate from $\lambda_2(1)=-\frac{1}{\gamma}\log\theta_H$.

For the second claim, write
$L(\theta):=\log\frac{1-p}{1-\theta}$ and
$D(\theta):=\DKL(\theta_\ast\|\theta)$ for $\theta\in(\theta_\ast,1)$, so that
$\alpha_\ast(\theta)=\frac{\gamma}{k}\frac{L(\theta)}{D(\theta)}$.
Since $D'(\theta)=\frac{\theta-\theta_\ast}{\theta(1-\theta)}$, we have
$\frac{d}{d\theta}\log \alpha_\ast(\theta)
=
\frac{1}{(1-\theta)L(\theta)}
-
\frac{\theta-\theta_\ast}{\theta(1-\theta)D(\theta)}.$
It is therefore enough to show
$D(\theta)<\frac{\theta-\theta_\ast}{\theta}L(\theta)$.
By Lemma \ref{lem:bernoulli-kl-integral},
$D(\theta)=\int_{\theta_\ast}^{\theta}\frac{u-\theta_\ast}{u(1-u)}du.$
Since $u\mapsto \frac{u-\theta_\ast}{u}=1-\frac{\theta_\ast}{u}$ is increasing on $(\theta_\ast,1)$,
$$
D(\theta)\leq \frac{\theta-\theta_\ast}{\theta}\int_{\theta_\ast}^{\theta}\frac{du}{1-u}
=
\frac{\theta-\theta_\ast}{\theta}\log\frac{1-\theta_\ast}{1-\theta}
<
\frac{\theta-\theta_\ast}{\theta}\log\frac{1-p}{1-\theta}
=
\frac{\theta-\theta_\ast}{\theta}L(\theta),
$$ strict inequality uses $p<\theta_\ast$. Thus, $\frac{d}{d\theta}\log \alpha_\ast(\theta)<0$, so $\alpha_\ast$ is strictly decreasing in $\theta_H$.
\end{proof}

\begin{proof}[Proof of Observation \ref{obs:roadmap-unknown-gamma}]
Since $\theta_H<1$, Lemma \ref{lem:dynamic:lambda-closed} gives
$\lambda_2(1)=-\frac{1}{\gamma}\log\theta_H\to+\infty$ as $\gamma\downarrow 0$.
For any $\mu$ and $\lambda$,
$M(\mu,\lambda;x)=\sum_{q\in Q^{2}}\mu(q)M(\delta_q,\lambda;x)\leq \sum_{q\in Q^{2}}\mu(q)C(\delta_q,\lambda),$
so taking the supremum over $x$ yields
$C(\mu,\lambda)\leq \sum_{q\in Q^{2}}\mu(q)C(\delta_q,\lambda)$.
By Lemma \ref{lem:cycles:capacity},
$C(\delta_q,\lambda)=\frac{1}{\lambda}\log\max\big\{\frac{p}{q(1)},\frac{1-p}{1-q(1)}\big\}$,
which is finite for every full-support $q$ and converges to $0$ as $\lambda\to+\infty$. Hence,
$C(\mu_2(\cdot|y_1=1),\lambda_2(1))\to 0$ as $\gamma\downarrow 0$. Since $k>0$, there exists $\bar\gamma>0$ such that
$C(\mu_2(\cdot|y_1=1),\lambda_2(1))<k$ for all $\gamma\in(0,\bar\gamma)$.

For the last claim, uniform elimination of the success-node shock means $\lambda_2(1)=0$ for every $\gamma>0$. Since $\lambda_2(1)=-\frac{1}{\gamma}\log\theta_H$, this holds if and only if $\theta_H=1$.
\end{proof}

\subsection{Proofs from Appendix \ref{app:infinite}}

\subsubsection*{Proof of Proposition \ref{thm:dynamic:infty:bridge} and Corollary \ref{cor:dynamic:infty:bridge-subseq-tree}}

\begin{proof}[Proof of Proposition \ref{thm:dynamic:infty:bridge}]
Fix $\beta\in(0,1)$ and an infinite-horizon contract $(x_t)_{t\ge 1}$ that is bounded, i.e.,
$\sup_{t\ge 1}\sup_{s_t\in S_t}\sup_{y\in Y}|x_t(s_t,y)|<\infty$.
Fix a metric $d$ on $\overline{\R}_+=[0,+\infty]$ satisfying (i)--(iii) in
Appendix \ref{subsec:dynamic:infty}. Fix a realized private history sequence $(h_t)_{t\ge 1}$ and
$(\mu_\infty,\lambda_\infty)$ satisfying \eqref{eq:dynamic:infty:state-unif}. Recall the public projection
$\chi_t:H_t\to S_t$.

\medskip
\noindent\textit{Step 1: One-step Lipschitz property in payoffs.}
Fix $q\in\Delta(Y)$ and $\lambda\in\overline{\R}_+$. For any payoff vectors $z,z':Y\to\R$,
\begin{align}\label{eq:app:infty:lipschitz-g}
\big|g_q(z;\lambda)-g_q(z';\lambda)\big|\le \|z-z'\|_\infty.
\end{align}
Consequently, for any $a\in\{1,2\}$ and any $(\mu,\lambda)\in\Delta(Q^{2})\times\overline{\R}_+$,
\begin{align}\label{eq:app:infty:lipschitz-G}
\big|\mathcal G^a(z;\mu,\lambda)-\mathcal G^a(z';\mu,\lambda)\big|\le \|z-z'\|_\infty.
\end{align}

\smallskip
\noindent\emph{Proof of \eqref{eq:app:infty:lipschitz-g} and \eqref{eq:app:infty:lipschitz-G}.}
If $\lambda\in(0,\infty)$, set $d_0:=\|z-z'\|_\infty$. Then, $z(y)\le z'(y)+d_0$ for all $y$, hence
$e^{-\lambda z(y)}\ge e^{-\lambda d_0}e^{-\lambda z'(y)}$ and so
$\sum_y q(y)e^{-\lambda z(y)}\ge e^{-\lambda d_0}\sum_y q(y)e^{-\lambda z'(y)}$.
Applying $-(1/\lambda)\log(\cdot)$ yields $g_q(z;\lambda)\le g_q(z';\lambda)+d_0$; swapping $z,z'$
gives the reverse inequality, hence $|g_q(z;\lambda)-g_q(z';\lambda)|\le d_0$.

If $\lambda=0$, then $g_q(z;0)=\sum_y q(y)z(y)$ by \eqref{eq:dynamic:infty:gq-ext}, so
$|g_q(z;0)-g_q(z';0)|\le \|z-z'\|_\infty$.
If $\lambda=+\infty$, then $g_q(z;+\infty)=\min_y z(y)$, so
$|\min_y z(y)-\min_y z'(y)|\le \|z-z'\|_\infty$.
This proves \eqref{eq:app:infty:lipschitz-g}. \eqref{eq:app:infty:lipschitz-G} follows by
averaging across $q$ (for $a=2$) and noting that $a=1$ is $g_{q^1}$ (independent of $\mu$).\hfill$\blacksquare$

\medskip
\noindent\textit{Step 2: Contraction of the date-$t$ Bellman operator.}
For each $t\ge 1$, define an operator $\mathcal T_t$ on bounded functions $W:H_{t+1}\to\R$ by
$$
(\mathcal T_t W)(h_t):=\max_{a\in\{1,2\}}
\left\{
\mathcal G^a \Big(y\mapsto x_t(\chi_t(h_t),y)+\beta W(h_t,a,y) ; \mu_t(\cdot|h_t),\lambda_t(h_t)\Big)-c^a
\right\}.
$$
Then, \eqref{eq:dynamic:infty:W} is $W_t=\mathcal T_t W_{t+1}$. By \eqref{eq:app:infty:lipschitz-G},
for any bounded $W,W'$,
\begin{align}\label{eq:app:infty:contraction}
\|\mathcal T_t W-\mathcal T_t W'\|_\infty\le \beta\|W-W'\|_\infty.
\end{align}

\medskip
\noindent\textit{Step 3: Existence and uniqueness of bounded solutions (part (a)).}
Let $\bar x:=\sup_{t,s_t,y}|x_t(s_t,y)|<\infty$ and $\bar c:=\max\{c^1,c^2\}=k$.
Set
\begin{align}\label{eq:app:infty:B}
B:=\frac{\bar x+\bar c}{1-\beta}.
\end{align}
For each finite horizon $T\ge 1$, define the truncated value functions $(W_t^T)_{t=1}^{T+1}$ by
$W_{T+1}^T\equiv 0$ and for $t=T,T-1,\dots,1$,
$W_t^T:=\mathcal T_t W_{t+1}^T.$
We first show by backward induction that $\|W_t^T\|_\infty\le B$ for all $t\le T+1$. Note that
$\|W_{T+1}^T\|_\infty=0\le B$. Suppose $\|W_{t+1}^T\|_\infty\le B$. Then, for each $h_t$ and $a\in\{1,2\}$,
the vector $y\mapsto x_t(\chi_t(h_t),y)+\beta W_{t+1}^T(h_t,a,y)$ lies in $[-(\bar x+\beta B),\bar x+\beta B]^Y$.
By \eqref{eq:app:infty:lipschitz-G} with $z'\equiv 0$ and $|\mathcal G^a(0;\cdot)|=0$, we obtain $\Big|\mathcal G^a \big(y\mapsto x_t(\chi_t(h_t),y)+\beta W_{t+1}^T(h_t,a,y) ; \mu_t(\cdot|h_t),\lambda_t(h_t)\big)\Big|
\le \bar x+\beta B.$
Therefore, using $c^a\le \bar c$ and the definition of $\mathcal T_t$,
$|W_t^T(h_t)|
\le
\max_{a\in\{1,2\}}\big\{(\bar x+\beta B)+\bar c\big\}
=
\bar x+\bar c+\beta B
=
B,$
where the last equality uses \eqref{eq:app:infty:B}. Hence, $\|W_t^T\|_\infty\le B$. Now fix $t\ge 1$ and take $T'>T\ge t$. Applying the contraction bound \eqref{eq:app:infty:contraction}
iteratively from date $t$ up to date $T$ yields the following
$\|W_t^{T'}-W_t^T\|_\infty
=
\|\mathcal T_t\cdots \mathcal T_T W_{T+1}^{T'}-\mathcal T_t\cdots \mathcal T_T W_{T+1}^{T}\|_\infty
\le
\beta^{T+1-t}\hspace{0.02in}\|W_{T+1}^{T'}-W_{T+1}^{T}\|_\infty.$
Since $W_{T+1}^{T}\equiv 0$ and $\|W_{T+1}^{T'}\|_\infty\le B$, we obtain
\begin{align}\label{eq:app:infty:cauchy}
\|W_t^{T'}-W_t^T\|_\infty\le \beta^{T+1-t}\hspace{0.02in}B.
\end{align}
Thus, $(W_t^T)_{T\ge t}$ is Cauchy in the Banach space of bounded functions on $H_t$, hence converges
to some bounded function $W_t$. Letting $T'\to\infty$ in \eqref{eq:app:infty:cauchy} gives
\begin{align}\label{eq:app:infty:trunc-error-W}
\|W_t-W_t^T\|_\infty\le \beta^{T+1-t}B\qquad\text{for all }T\ge t.
\end{align}
To verify that $(W_t)_{t\ge 1}$ satisfies \eqref{eq:dynamic:infty:W}, fix $t$ and note that
$W_t^T=\mathcal T_t W_{t+1}^T$ for all $T\ge t+1$. Taking $T\to\infty$ and using that $\mathcal T_t$ is
$\beta$-Lipschitz under the sup norm by \eqref{eq:app:infty:contraction}, we obtain
$W_t=\mathcal T_t W_{t+1}$. Hence, \eqref{eq:dynamic:infty:W} holds.

Uniqueness: if $(\widetilde W_t)$ is another bounded solution, then by \eqref{eq:app:infty:contraction},
$$
\|W_t-\widetilde W_t\|_\infty
=
\|\mathcal T_t W_{t+1}-\mathcal T_t \widetilde W_{t+1}\|_\infty
\le \beta\|W_{t+1}-\widetilde W_{t+1}\|_\infty
\le \cdots \le \beta^n \|W_{t+n}-\widetilde W_{t+n}\|_\infty.
$$
Since both sequences are bounded and $\beta^n\to 0$, $\|W_t-\widetilde W_t\|_\infty=0$.
This proves part (a).

\medskip
\noindent\textit{Step 4: Uniform continuity on bounded domains.}
Fix $B_0<\infty$ and consider the domain
$$
\mathscr D_{B_0}:=\Delta(Q^{2})\times\overline{\R}_+\times[-B_0,B_0]^Y
$$
equipped with the product metric $\|\mu-\mu'\|_1+d(\lambda,\lambda')+\|z-z'\|_\infty$.

\begin{lemma}\label{lem:app:infty:unifcont}
For each $a\in\{1,2\}$, the map
$\Phi_a(\mu,\lambda,z):=\mathcal G^a(z;\mu,\lambda)$
is continuous on $\mathscr D_{B_0}$ and therefore uniformly continuous.
Equivalently: for every $\varepsilon>0$ there exists $\rho>0$ such that if
$\|\mu-\mu'\|_1<\rho$, $d(\lambda,\lambda')<\rho$, and $\|z-z'\|_\infty<\rho$, then $\big|\mathcal G^a(z;\mu,\lambda)-\mathcal G^a(z';\mu',\lambda')\big|<\varepsilon.$
\end{lemma}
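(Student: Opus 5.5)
The plan is to show that $\Phi_a$ is jointly continuous on $\mathscr D_{B_0}$. Uniform continuity then follows because $\mathscr D_{B_0}$ is compact: $\Delta(Q^{2})$ is a simplex, $(\overline{\R}_+,d)$ is compact by property (i), $[-B_0,B_0]^Y$ is compact, and a finite product of compact metric spaces is compact under the sum metric. So the real work is joint continuity. Since $\mathcal G^1(z;\mu,\lambda)=g_{q^1}(z;\lambda)$ and $\mathcal G^2(z;\mu,\lambda)=\sum_{q\in Q^{2}}\mu(q)g_q(z;\lambda)$ is bilinear in $\mu$ and $(g_q)_q$, it suffices to prove that $(\lambda,z)\mapsto g_q(z;\lambda)$ is jointly continuous on $\overline{\R}_+\times[-B_0,B_0]^Y$ for each full-support $q$.

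To control the $\mu$-dependence, I would write
\[
|\mathcal G^2(z;\mu,\lambda)-\mathcal G^2(z';\mu',\lambda')|\le \|\mu-\mu'\|_1\max_q|g_q(z;\lambda)|+\max_q|g_q(z;\lambda)-g_q(z';\lambda')|.
\]
The first term is at most $\|\mu-\mu'\|_1B_0$, because $\min_y z(y)\le g_q(z;\lambda)\le \E_q[z]$ (this follows from the variational formula or from the extension \eqref{eq:dynamic:infty:gq-ext}). For the $z$-dependence, the Lipschitz bound \eqref{eq:app:infty:lipschitz-g} gives $|g_q(z;\lambda)-g_q(z';\lambda')|\le\|z-z'\|_\infty+|g_q(z';\lambda)-g_q(z';\lambda')|$, uniformly in $\lambda$. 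This reduces everything to continuity in $\lambda$ at fixed $z$, with the important requirement that it hold uniformly over $z\in[-B_0,B_0]^Y$.

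On $(0,\infty)$, the map $(\lambda,z)\mapsto -\lambda^{-1}\log\sum_y q(y)e^{-\lambda z(y)}$ is a composition of continuous functions, so it is jointly continuous; by property (ii), the $d$-topology agrees with the Euclidean one there. The main obstacle is the two endpoints, where one needs uniform-in-$z$ estimates.

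\emph{At $\lambda\to\infty$:} since $z(y)\ge m:=\min_y z(y)$, and the minimizing outcome has weight at least $\underline q_0:=\min_y q(y)>0$, we get
\[
\underline q_0\, e^{-\lambda m}\le \sum_y q(y)e^{-\lambda z(y)}\le e^{-\lambda m}.
\]
Hence $m\le g_q(z;\lambda)\le m+\lambda^{-1}\log(1/\underline q_0)$. This converges to $\min_y z$ uniformly in $z$, which by property (iii) handles $d(\lambda,+\infty)\to0$.

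\emph{At $\lambda\downarrow0$:} I would use a second-order expansion of the log-moment generating function, $\log\E_q e^{-\lambda z}=-\lambda\E_q z+O(\lambda^2B_0^2)$ uniformly on $|z|\le B_0$. Hoeffding's lemma gives this directly: $\lambda\E_q z\le -\log\E_qe^{-\lambda z}$ and $\log\E_q e^{-\lambda z}\le-\lambda\E_qz+\lambda^2B_0^2/2$. Therefore $|g_q(z;\lambda)-\E_q z|\le\lambda B_0^2/2$.

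Combining the uniform endpoint estimates with interior continuity yields joint continuity of each $g_q$, hence of $\Phi_a$. Uniform continuity then follows from compactness (Heine--Cantor), and the stated $\varepsilon$--$\rho$ form is simply its restatement under the product metric.
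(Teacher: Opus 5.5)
Your proposal is correct and follows essentially the paper's route: joint continuity from the $1$-Lipschitz bound in $z$ plus endpoint estimates, using the same $\lambda^{-1}\log(1/q_{\min})$ bound at $+\infty$, and then Heine--Cantor on the compact $\mathscr D_{B_0}$. The one variation is at $\lambda\downarrow0$, where you use Hoeffding's lemma to get a uniform $O(\lambda B_0^2)$ rate while the paper uses l'H\^opital pointwise, which already suffices because $g_q$ is equi-Lipschitz in $z$. Your Jensen step is reversed as written: it should read $-\log\E_q e^{-\lambda z}\le\lambda\E_q z$, and with that direction Hoeffding gives your bound $|g_q(z;\lambda)-\E_q z|\le\lambda B_0^2/2$.
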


\par\noindent\textit{Proof of Lemma \ref{lem:app:infty:unifcont}}.
For $a=1$, $\mathcal G^1(z;\mu,\lambda)=g_{q^1}(z;\lambda)$ does not depend on $\mu$.
Continuity in $z$ is immediate from \eqref{eq:app:infty:lipschitz-g}. It remains to show continuity
in $\lambda\in\overline{\R}_+$.

Fix a full-support $q\in\Delta(Y)$ and $z\in[-B_0,B_0]^Y$. For $\lambda\in(0,\infty)$, set
$S(\lambda):=\sum_{y\in Y}q(y)e^{-\lambda z(y)}$, so $g_q(z;\lambda)=-(1/\lambda)\log S(\lambda)$.
Since $S$ is $C^1$ on $(0,\infty)$, $g_q$ is continuous on $(0,\infty)$. We check continuity at the
endpoints, which match \eqref{eq:dynamic:infty:gq-ext}.

\emph{As $\lambda\downarrow 0$.}
We have $S(0)=1$ and $S'(\lambda)=-\sum_y q(y)z(y)e^{-\lambda z(y)}$, so $S'(0)=-\sum_y q(y)z(y)$.
Since $\log S(\lambda)\to 0$ as $\lambda\downarrow 0$, by l'H\^{o}pital's rule,
$$
\lim_{\lambda\downarrow 0} g_q(z;\lambda)
=
-\lim_{\lambda\downarrow 0}\frac{\log S(\lambda)}{\lambda}
=
-\lim_{\lambda\downarrow 0}\frac{S'(\lambda)/S(\lambda)}{1}
=
-\frac{S'(0)}{S(0)}
=
\sum_{y\in Y}q(y)z(y)
=
g_q(z;0).
$$

\emph{As $\lambda\uparrow\infty$.}
Let $m:=\min_{y\in Y}z(y)$ and write $z(y)=m+\varDelta(y)$ with $\varDelta(y)\ge 0$.
Then, $S(\lambda)=e^{-\lambda m}\sum_{y\in Y}q(y)e^{-\lambda \varDelta(y)}$ and $g_q(z;\lambda)=m-\frac{1}{\lambda}\log\big(\sum_{y\in Y}q(y)e^{-\lambda \varDelta(y)}\big).$
Let $Y_\ast:=\{y:\varDelta(y)=0\}$ be the (nonempty) argmin set. Since $q$ has full support,
$\sum_{y\in Y_\ast}q(y)\in(0,1]$. Moreover, for all $\lambda\ge 0$,
$\sum_{y\in Y_\ast}q(y) \le \sum_{y\in Y}q(y)e^{-\lambda \varDelta(y)} \le \sum_{y\in Y}q(y)=1.$
Taking logs and dividing by $\lambda$ yields a quantity between $\frac{1}{\lambda}\log\sum_{y\in Y_\ast}q(y)$ and $0$,
which converges to $0$ as $\lambda\to\infty$. Hence, $\lim_{\lambda\uparrow\infty}g_q(z;\lambda)=m=g_q(z;+\infty)$.

To verify joint continuity at $\lambda=+\infty$, we show this convergence is uniform over the compact domain.
Define $q_{\min}:=\min_{q\in Q^{2}\cup\{q^1\}}\min_{y\in Y}q(y)>0$. Since $q_{\min}\le\sum_{y\in Y}q(y)e^{-\lambda\varDelta(y)}\le 1$ for all $\lambda\ge 0$, $q\in Q^{2}\cup\{q^1\}$, and $z\in[-B_0,B_0]^Y$, we obtain
$$
\big|g_q(z;\lambda)-g_q(z;+\infty)\big|
=
\frac{1}{\lambda}\Big|\log\Big(\sum_{y\in Y}q(y)e^{-\lambda\varDelta(y)}\Big)\Big|
\le
\frac{|\log q_{\min}|}{\lambda}
$$
for all $\lambda>0$. This bound is independent of $(q,z)$ and vanishes as $\lambda\to+\infty$, so the convergence $g_q(z;\lambda)\to g_q(z;+\infty)$ is uniform over $(q,z)\in (Q^{2}\cup\{q^1\})\times[-B_0,B_0]^Y$.

\par Thus, $g_q(z;\lambda)$ is jointly continuous on $(Q^{2}\cup\{q^1\})\times[-B_0,B_0]^Y\times\overline{\R}_+$ when $\overline{\R}_+$ is endowed with any metric $d$ satisfying (ii)--(iii), since these conditions ensure that $d$-convergence agrees with the usual boundary notions $\lambda\downarrow 0$ and $\lambda\uparrow\infty$.

For $a=2$, $\mathcal G^2(z;\mu,\lambda)=\sum_{q\in Q^{2}}\mu(q)g_q(z;\lambda)$ is a finite sum of products
of continuous functions in $(\mu,\lambda,z)$, hence continuous on $\mathscr D_{B_0}$ as well.

Finally, $\Delta(Q^{2})$ is compact, $[-B_0,B_0]^Y$ is compact, and $(\overline{\R}_+,d)$ is compact by (i), so
$\mathscr D_{B_0}$ is compact. By Heine--Cantor's theorem, $\Phi_a$ is uniformly continuous on $\mathscr D_{B_0}$. \hfill$\blacksquare$

\medskip
\noindent\textit{Step 5: The limiting recursion and its truncation error.}
Define the limiting Bellman operators $\mathcal T_t^\infty$ on bounded $W:H_{t+1}\to\R$ by
$$
(\mathcal T_t^\infty W)(h_t):=\max_{a\in\{1,2\}}
\left\{
\mathcal G^a \Big(y\mapsto x_t(\chi_t(h_t),y)+\beta W(h_t,a,y) ; \mu_\infty,\lambda_\infty\Big)-c^a
\right\}.
$$
By the same argument as in Steps 2--3 (replacing $\mathcal T_t$ by $\mathcal T_t^\infty$), the recursion
$W_t^\infty=\mathcal T_t^\infty W_{t+1}^\infty$ admits a unique bounded solution. Defining the
finite-horizon truncations $(W_t^{\infty,T})_{t=1}^{T+1}$ by $W_{T+1}^{\infty,T}\equiv 0$ and
$W_t^{\infty,T}:=\mathcal T_t^\infty W_{t+1}^{\infty,T}$ for $t\le T$, we obtain the analogue of
\eqref{eq:app:infty:trunc-error-W}:
\begin{align}\label{eq:app:infty:trunc-error-Winf}
\|W_t^\infty-W_t^{\infty,T}\|_\infty\le \beta^{T+1-t}B\qquad\text{for all }T\ge t.
\end{align}

\medskip
\noindent\textit{Step 6: Pathwise convergence $W_t(h_t)\to W_t^\infty(h_t)$ (part (b)).}
Fix an integer $N\ge 0$ and set $T:=t+N$ (so $T-t=N$). Consider the finite-horizon truncations
$W^T=(W_s^T)_{s\le T+1}$ and $W^{\infty,T}=(W_s^{\infty,T})_{s\le T+1}$. For $s\in\{t,t+1,\dots,T\}$, define
$\mathcal H_{t,N}^s(h_t):=\mathcal H_{t,N}(h_t)\cap H_s.$
For the terminal successor layer, set
$\mathcal H_{t,N}^{T+1}(h_t)
:=
\big\{(\tilde h_T,a,y):\tilde h_T\in\mathcal H_{t,N}^{T}(h_t),a\in\{1,2\}, y\in Y\big\}.$
Define
$e_s^{t,N}:=\sup_{\tilde h_s\in\mathcal H_{t,N}^s(h_t)}
\big|W_s^T(\tilde h_s)-W_s^{\infty,T}(\tilde h_s)\big|,$ for $ s=t,\dots,T+1.$
The terminal condition is $e_{T+1}^{t,N}=0$ because $W_{T+1}^T=W_{T+1}^{\infty,T}\equiv0$ on the terminal successor layer.

\medskip
\noindent\emph{Step 6.1: A finite-horizon error recursion.}
Fix $s\in\{t,\dots,T\}$ and $\tilde h_s\in\mathcal H_{t,N}^s(h_t)$. Using
$W_s^T=\mathcal T_s W_{s+1}^T$ and $W_s^{\infty,T}=\mathcal T_s^\infty W_{s+1}^{\infty,T}$, we write
\begin{align*}
\big|W_s^T(\tilde h_s)-W_s^{\infty,T}(\tilde h_s)\big| \le
\big|\big(\mathcal T_s W_{s+1}^T\big)(\tilde h_s)-\big(\mathcal T_s W_{s+1}^{\infty,T}\big)(\tilde h_s)\big|
+
\big|\big(\mathcal T_s W_{s+1}^{\infty,T}\big)(\tilde h_s)-\big(\mathcal T_s^\infty W_{s+1}^{\infty,T}\big)(\tilde h_s)\big|.
\end{align*}
We bound the first term using \eqref{eq:app:infty:lipschitz-G}. For each $a\in\{1,2\}$, define
$z_{s,a}^T(\tilde h_s)(y):=x_s(\chi_s(\tilde h_s),y)+\beta W_{s+1}^T(\tilde h_s,a,y)$ and $z_{s,a}^{\infty,T}(\tilde h_s)(y):=x_s(\chi_s(\tilde h_s),y)+\beta W_{s+1}^{\infty,T}(\tilde h_s,a,y).$
Then, for each $a$,
\begin{align*}
\Big|\mathcal G^a\big(z_{s,a}^T(\tilde h_s);\mu_s(\tilde h_s),\lambda_s(\tilde h_s)\big)
-
\mathcal G^a\big(z_{s,a}^{\infty,T}(\tilde h_s);\mu_s(\tilde h_s),\lambda_s(\tilde h_s)\big)\Big|
&\le
\|z_{s,a}^T(\tilde h_s)-z_{s,a}^{\infty,T}(\tilde h_s)\|_\infty\\
&=
\beta\sup_{y\in Y}\big|W_{s+1}^T(\tilde h_s,a,y)-W_{s+1}^{\infty,T}(\tilde h_s,a,y)\big|.
\end{align*}
Taking the maximum over $a$ yields
$$
\big|\big(\mathcal T_s W_{s+1}^T\big)(\tilde h_s)-\big(\mathcal T_s W_{s+1}^{\infty,T}\big)(\tilde h_s)\big|
\le
\beta\sup_{a\in\{1,2\}}\sup_{y\in Y}\big|W_{s+1}^T(\tilde h_s,a,y)-W_{s+1}^{\infty,T}(\tilde h_s,a,y)\big|.
$$
Since $(\tilde h_s,a,y)\in \mathcal H_{t,N}^{s+1}(h_t)$ for $\tilde h_s\in\mathcal H_{t,N}^s(h_t)$,
 then we obtain the following inequality $
\sup_{a\in\{1,2\}}\sup_{y\in Y}\big|W_{s+1}^T(\tilde h_s,a,y)-W_{s+1}^{\infty,T}(\tilde h_s,a,y)\big|
\le
e_{s+1}^{t,N}.
$
Thus,
\begin{align}\label{eq:app:infty:finite-rec-1}
\big|\big(\mathcal T_s W_{s+1}^T\big)(\tilde h_s)-\big(\mathcal T_s W_{s+1}^{\infty,T}\big)(\tilde h_s)\big|
\le \beta\hspace{0.02in}e_{s+1}^{t,N}.
\end{align}

For the second term, define
$$
\eta_s^{t,N}:=
\sup_{\tilde h_s\in\mathcal H_{t,N}^s(h_t)}
\max_{a\in\{1,2\}}
\Big|
\mathcal G^a\big(z_{s,a}^{\infty,T}(\tilde h_s);\mu_s(\tilde h_s),\lambda_s(\tilde h_s)\big)
-
\mathcal G^a\big(z_{s,a}^{\infty,T}(\tilde h_s);\mu_\infty,\lambda_\infty\big)
\Big|.
$$
Then, by definition of $\mathcal T_s$ and $\mathcal T_s^\infty$ as maxima over $a$ of the same cost-adjusted
terms, we have for each $\tilde h_s$,
$\big|\big(\mathcal T_s W_{s+1}^{\infty,T}\big)(\tilde h_s)-\big(\mathcal T_s^\infty W_{s+1}^{\infty,T}\big)(\tilde h_s)\big|
\le \eta_s^{t,N}.$
Combining with \eqref{eq:app:infty:finite-rec-1} and taking the supremum over
$\tilde h_s\in\mathcal H_{t,N}^s(h_t)$ yields the finite-horizon error recursion
\begin{align}\label{eq:app:infty:finite-rec}
e_s^{t,N}\le \beta\hspace{0.02in}e_{s+1}^{t,N}+\eta_s^{t,N}
\qquad\text{for }s=t,t+1,\dots,T,
\end{align}
with terminal condition $e_{T+1}^{t,N}=0$. Iterating \eqref{eq:app:infty:finite-rec} forward from $s=t$ to $s=T$ gives 
\begin{align}\label{eq:app:infty:finite-bound}
e_t^{t,N}\le \sum_{j=0}^{N}\beta^j\hspace{0.02in}\eta_{t+j}^{t,N}.
\end{align}

\medskip
\noindent\emph{Step 6.2: Control of $\eta_s^{t,N}$.}
Let $\bar W:=B$ from \eqref{eq:app:infty:B}. From the construction of $W^{\infty,T}$ in Step 5 and the same
bound argument as in Step 3, we have $\|W_{s}^{\infty,T}\|_\infty\le B$ for all $s\le T+1$.
Define $B':=\bar x+\beta B.$
Then, for all $s\le T$, all $\tilde h_s\in H_s$, and all $a\in\{1,2\}$, the vector
$z_{s,a}^{\infty,T}(\tilde h_s)$ lies in $[-B',B']^Y$. Fix $\varepsilon>0$. Apply Lemma \ref{lem:app:infty:unifcont} with $B_0=B'$ to obtain $\rho>0$ such that
if $\|\mu-\mu_\infty\|_1<\rho$ and $d(\lambda,\lambda_\infty)<\rho$, then
$\big|\mathcal G^a(z;\mu,\lambda)-\mathcal G^a(z;\mu_\infty,\lambda_\infty)\big|
<
\varepsilon(1-\beta)$ $\text{for all }a\in\{1,2\}\text{ and all }z\in[-B',B']^Y.$
Fix the horizon $N$. By assumption \eqref{eq:dynamic:infty:state-unif}, there
exists $T_\varepsilon$ such that for all $t\ge T_\varepsilon$, $\sup_{\tilde h\in\mathcal H_{t,N}(h_t)}\|\mu_{|\tilde h|}(\tilde h)-\mu_\infty\|_1<\rho,$ and $\sup_{\tilde h\in\mathcal H_{t,N}(h_t)} d \big(\lambda_{|\tilde h|}(\tilde h),\lambda_\infty\big)<\rho.$
In particular, for all $t\ge T_\varepsilon$, for all $s\in\{t,\dots,t+N\}$, and all
$\tilde h_s\in\mathcal H_{t,N}^s(h_t)$, we have
$\|\mu_s(\tilde h_s)-\mu_\infty\|_1<\rho$ and $d(\lambda_s(\tilde h_s),\lambda_\infty)<\rho$.
Therefore, for all such $(s,\tilde h_s)$ and all $a\in\{1,2\}$,
$\Big|
\mathcal G^a\big(z_{s,a}^{\infty,T}(\tilde h_s);\mu_s(\tilde h_s),\lambda_s(\tilde h_s)\big)
-
\mathcal G^a\big(z_{s,a}^{\infty,T}(\tilde h_s);\mu_\infty,\lambda_\infty\big)
\Big|
<
\varepsilon(1-\beta).$
Taking the maximum over $a$ and the supremum over $\tilde h_s\in\mathcal H_{t,N}^s(h_t)$ yields
$\eta_s^{t,N}\le \varepsilon(1-\beta)$ $\text{for all }s=t,t+1,\dots,t+N \text{ whenever }t\ge T_\varepsilon.$
Plugging into \eqref{eq:app:infty:finite-bound} gives, for $t\ge T_\varepsilon$, $e_t^{t,N}
\le
\sum_{j=0}^{N}\beta^j\hspace{0.02in}\varepsilon(1-\beta)
\le
\varepsilon.$

\medskip
\medskip
\noindent\emph{Step 6.3: Conclude the infinite-horizon.}
Fix $\varepsilon'>0$. We combine the truncation error from Step 3 with the finite-horizon operator error from Step 6.2.

\emph{Step (i): Choose truncation horizon.}
Select $N$ large enough so that
$2\hspace{0.02in}\beta^{N+1}B<\varepsilon'/3.$

\emph{Step (ii): Apply Step 6.2 with $\varepsilon:=\varepsilon'/3$.}
By Step 6.2 (with the choice $\varepsilon:=\varepsilon'/3$), there exists $T_{\varepsilon'}$ such that for all $t\ge T_{\varepsilon'}$,
$e_t^{t,N}
=
\sup_{\tilde h_t\in\mathcal H_{t,N}^t(h_t)}\big|W_t^{t+N}(\tilde h_t)-W_t^{\infty,t+N}(\tilde h_t)\big|\leq \varepsilon'/3.$
In particular, $\big|W_t^{t+N}(h_t)-W_t^{\infty,t+N}(h_t)\big|<\varepsilon'/3$ for all $t\ge T_{\varepsilon'}$.

\emph{Step (iii): Apply the bounds.}
By \eqref{eq:app:infty:trunc-error-W} and \eqref{eq:app:infty:trunc-error-Winf} with $T=t+N$, for all $t\ge 1$,
$$
|W_t(h_t)-W_t^{t+N}(h_t)|\le \beta^{N+1}B<\varepsilon'/6,
\qquad
|W_t^\infty(h_t)-W_t^{\infty,t+N}(h_t)|\le \beta^{N+1}B<\varepsilon'/6.
$$

\emph{Step (iv): Combine via the triangle inequality.}
For all $t\ge T_{\varepsilon'}$,
\begin{align*}
|W_t(h_t)-W_t^\infty(h_t)|
&\le
|W_t(h_t)-W_t^{t+N}(h_t)|
+
|W_t^{t+N}(h_t)-W_t^{\infty,t+N}(h_t)|
+
|W_t^{\infty,t+N}(h_t)-W_t^\infty(h_t)|\\
&
\leq\varepsilon'/6+\varepsilon'/3+\varepsilon'/6
 = 2\varepsilon'/3
 < \varepsilon'.
\end{align*}
Since $\varepsilon'>0$ was arbitrary, $|W_t(h_t)-W_t^\infty(h_t)|\to 0$ as $t\to\infty$, proving part (b).
\end{proof}

\begin{proof}[Proof of Corollary \ref{cor:dynamic:infty:bridge-subseq-tree}]
Fix $L\ge 0$ and $\varepsilon'>0$.
Let $B$ be the uniform bound in \eqref{eq:app:infty:B}.
Choose an integer $R\ge 0$ such that
$2\beta^{R+1}B<\varepsilon'/3,$
and define
$N:=L+R.$

Fix $n$ and write $t:=\tau_n$ and $T:=t+N$.
Consider the finite-horizon truncations
$(W_s^T)_{s\le T+1}$ and $(W_s^{\infty,T})_{s\le T+1}$
defined in the proof of Proposition \ref{thm:dynamic:infty:bridge}.
For each $s\in\{t,\dots,T\}$, define
$e_s^{t,N}
:=
\sup_{\tilde h_s\in\mathcal H_{t,N}^s(h_t)}
\big|W_s^T(\tilde h_s)-W_s^{\infty,T}(\tilde h_s)\big|.$

Step 6.1 in the proof of Proposition \ref{thm:dynamic:infty:bridge} gives
\begin{equation}\label{eq:app:infty:finite-rec-subseq}
e_s^{t,N}
\le
\beta\hspace{0.02in}e_{s+1}^{t,N}
+
\eta_s^{t,N},
\qquad
s=t,\dots,T,
\end{equation}
with terminal condition $e_{T+1}^{t,N}=0$, where $\eta_s^{t,N}$ is defined there.

By \eqref{eq:dynamic:infty:state-unif-subseq}, the local uniform convergence of states holds along
the base-date sequence $(\tau_n)$ for the fixed horizon $N=L+R$.
Therefore, Step 6.2 in the proof of Proposition
\ref{thm:dynamic:infty:bridge}, applied along this sequence, implies that there exists
$n_{\varepsilon'}$ such that for every $n\ge n_{\varepsilon'}$,
$\eta_s^{t,N}
\le
\frac{\varepsilon'}{3}(1-\beta)$ $\text{for all }s=t,t+1,\dots,T.$
Iterating \eqref{eq:app:infty:finite-rec-subseq} forward from any
$s\in\{t,\dots,T\}$ yields
$e_s^{t,N}
\le
\sum_{j=0}^{T-s}\beta^j\eta_{s+j}^{t,N}
\le
\frac{\varepsilon'}{3}(1-\beta)
\sum_{j=0}^{T-s}\beta^j
\le
\frac{\varepsilon'}{3}.$

Now, fix any
$\tilde h\in\mathcal H_{t,L}(h_t)$
and let $s:=|\tilde h|$.
Since $s\le t+L$ and $N=L+R$, we have
$\tilde h\in\mathcal H_{t,N}^s(h_t)$ and
$T+1-s
=
t+L+R+1-s
\ge
R+1.$
Hence,
$\big|W_s^T(\tilde h)-W_s^{\infty,T}(\tilde h)\big|
\le
e_s^{t,N}
\le
\frac{\varepsilon'}{3}.$
Moreover, by \eqref{eq:app:infty:trunc-error-W} and
\eqref{eq:app:infty:trunc-error-Winf},
$\big|W_s(\tilde h)-W_s^T(\tilde h)\big|
\le
\beta^{T+1-s}B
\le
\beta^{R+1}B
<
\frac{\varepsilon'}{6},$
and
$\big|W_s^\infty(\tilde h)-W_s^{\infty,T}(\tilde h)\big|
\le
\beta^{T+1-s}B
\le
\beta^{R+1}B
<
\frac{\varepsilon'}{6}.$
Therefore, for every $n\ge n_{\varepsilon'}$ and every
$\tilde h\in\mathcal H_{\tau_n,L}(h_{\tau_n})$,
$\big|W_{|\tilde h|}(\tilde h)-W_{|\tilde h|}^\infty(\tilde h)\big|
\le
\frac{\varepsilon'}{6}
+
\frac{\varepsilon'}{3}
+
\frac{\varepsilon'}{6}
<
\varepsilon'.$
Taking the supremum over
$\tilde h\in\mathcal H_{\tau_n,L}(h_{\tau_n})$
and using that $\varepsilon'>0$ was arbitrary proves the claim.
\end{proof}

\subsubsection*{Proofs of  Lemmas \ref{lem:cycles:llr-frequency}--\ref{lem:cycles:Wlimit-public}}
\begin{proof}[Proof of Lemma \ref{lem:cycles:llr-frequency}]
Let $N_{2,T}:=\sum_{t=1}^T\1\{a_t=2\}$. For $N_{2,T}>0$, let
$\widehat p_T^2(y):=N_{2,T}^{-1}\sum_{t=1}^T\1\{a_t=2,y_t=y\}$; if $N_{2,T}=0$, define $\widehat p_T^2$ arbitrarily. We first show that $\widehat p_T^2\to p_\ast^2$ a.s. on $\{N_{2,T}\to\infty\}$. Fix $y\in Y$ and set
$M_T(y):=\sum_{t=1}^T\1\{a_t=2\}\big(\1\{y_t=y\}-p_\ast^2(y)\big)$. Since $a_t$ is chosen before $y_t$ is realized and, conditional on $a_t=2$, $y_t$ has distribution $p_\ast^2$, $(M_T(y))_{T\ge1}$ is a martingale with increments bounded by one. Let $\tau_n:=\inf\{T:N_{2,T}=n\}$. Then $(M_{\tau_n}(y))_{n\ge1}$, stopped after $\tau_n=\infty$, is a bounded-increment martingale, so Azuma-Hoeffding's inequality gives
$\mathbb P(|M_{\tau_n}(y)|>\varepsilon n,\tau_n<\infty)\le 2e^{-\varepsilon^2 n/2}$ for every $\varepsilon>0$. Borel-Cantelli therefore implies $M_{\tau_n}(y)/n\to0$ a.s. on $\{\tau_n<\infty\ \forall n\}$, or equivalently $M_T(y)/N_{2,T}\to0$ a.s. on $\{N_{2,T}\to\infty\}$. Since $Y$ is finite, this proves $\widehat p_T^2\to p_\ast^2$ a.s. on $\{N_{2,T}\to\infty\}$. Next, by the definition of $\mathrm{LLR}^2$ in \eqref{eq:dynamic:infty:LLR2}, for every $T$ with $N_{2,T}>0$,
$\mathrm{LLR}^2(h_{T+1},Q^{2})
=
N_{2,T}\min_{q\in Q^{2}}\DKL(\widehat p_T^2\|q).$
Indeed, the unrestricted likelihood over $\Delta(Y)$ is maximized at the empirical distribution $\widehat p_T^2$, while the restricted likelihood is maximized over $q\in Q^{2}$. If $N_{2,T}=0$, both sides are zero by the convention that empty product is $1$. Since $Q^{2}$ is finite and all its elements have full support, $r\mapsto\min_{q\in Q^{2}}\DKL(r\|q)$ is continuous on $\Delta(Y)$ and bounded above by $\log(1/\underline q)$, where $\underline q:=\min_{q\in Q^{2}}\min_{y\in Y}q(y)>0$. Thus, if $\alpha=0$, then
$\mathrm{LLR}^2(h_{T_n+1},Q^{2})/T_n\le \alpha_{T_n}\log(1/\underline q)\to0=\alpha D_\ast^2$. If $\alpha>0$, then $N_{2,T_n}\to\infty$, so $\widehat p_{T_n}^2\to p_\ast^2$ a.s.; hence
$$
\frac{\mathrm{LLR}^2(h_{T_n+1},Q^{2})}{T_n}
=
\alpha_{T_n}\min_{q\in Q^{2}}\DKL(\widehat p_{T_n}^2\|q)
\longrightarrow
\alpha\min_{q\in Q^{2}}\DKL(p_\ast^2\|q)
=
\alpha D_\ast^2 \quad \text{a.s.}
$$
The statement for $\lambda_{T_n+1}$ follows from the definition
$\lambda_{T_n+1}=\mathrm{LLR}^2(h_{T_n+1},Q^{2})/(\gamma T_n)$.
\end{proof}

\begin{proof}[Proof of Lemma \ref{lem:cycles:posterior}]
Fix $q\notin{Q}_*^{2}$ and choose any $\hat q\in{Q}_*^{2}$. Since $\mu_1$ has full support on $Q^{2}$ and all models in $Q^{2}$ have full support on $Y$, Bayes rule gives
$\log\frac{\mu_{T+1}(q|h_{T+1})}{\mu_{T+1}(\hat q|h_{T+1})}
=
\log\frac{\mu_1(q)}{\mu_1(\hat q)}
+
\sum_{t=1}^{T}\1\{a_t=2\}\log\frac{q(y_t)}{\hat q(y_t)}.$
Divide by $N_{2,T}:=\sum_{t=1}^T\1\{a_t=2\}$. By the empirical-convergence argument in the proof of Lemma \ref{lem:cycles:llr-frequency}, the second term divided by $N_{2,T}$ converges a.s. to
$\sum_{y\in Y}p_\ast^2(y)\log(q(y)/\hat q(y))=\DKL(p_\ast^2\|\hat q)-\DKL(p_\ast^2\|q)<0$, while the prior-odds term divided by $N_{2,T}$ converges to zero. Hence the log posterior odds converge to $-\infty$, so
$\mu_{T+1}(q|h_{T+1})/\mu_{T+1}(\hat q|h_{T+1})\to0$ a.s. Since $\mu_{T+1}(\hat q|h_{T+1})\le1$, this implies $\mu_{T+1}(q|h_{T+1})\to0$ a.s. for every $q\notin{Q}_*^{2}$.
\end{proof}
\begin{proof}[Proof of Lemma \ref{lem:dynamic:infty:mu-local}]
Write $|\tilde h|=t+j$ with $0\le j\le N$. For any $q\in Q^{2}\setminus\{q_\star\}$, Bayes rule
\eqref{eq:dynamic:infty:mu-update} implies the posterior-odds identity $\frac{\mu_{t+j}(q|\tilde h)}{\mu_{t+j}(q_\star|\tilde h)}
=
\frac{\mu_t(q|h_t)}{\mu_t(q_\star|h_t)}
\prod_{\tau=t}^{t+j-1}\frac{q^{a_\tau}(y_\tau)}{q_\star^{a_\tau}(y_\tau)}.$
Along any period $\tau$ in the extension with $a_\tau=1$, the likelihood ratio equals $1$ because the routine action
distribution is common across all structured models. Along any period with $a_\tau=2$,
the likelihood ratio is $q(y_\tau)/q_\star(y_\tau)\le K$ by \eqref{eq:app:infty:K}. Since there are at most
$j\le N$ periods in the extension, we get $\frac{\mu_{t+j}(q|\tilde h)}{\mu_{t+j}(q_\star|\tilde h)}
\le
K^{N}\frac{\mu_t(q|h_t)}{\mu_t(q_\star|h_t)},$ $\forall q\neq q_\star$.
Summing over $q\neq q_\star$ yields
$$
\frac{1-\mu_{t+j}(q_\star|\tilde h)}{\mu_{t+j}(q_\star|\tilde h)}
=
\sum_{q\neq q_\star}\frac{\mu_{t+j}(q|\tilde h)}{\mu_{t+j}(q_\star|\tilde h)}
\le
K^{N}\sum_{q\neq q_\star}\frac{\mu_t(q|h_t)}{\mu_t(q_\star|h_t)}
=
K^{N}\frac{1-\mu_t(q_\star|h_t)}{\mu_t(q_\star|h_t)}.
$$
Since $\mu_{t+j}(q_\star|\tilde h)\le 1$, multiplying both sides by $\mu_{t+j}(q_\star|\tilde h)$ gives
\eqref{eq:dynamic:infty:mu-local}. The $\ell^1$ bound follows from
$\|\mu-\delta_{\{q_\star\}}\|_1 = 2(1-\mu(q_\star))$ on the finite set $Q^{2}$.
\end{proof}

\begin{proof}[Proof of Lemma \ref{lem:dynamic:infty:llr-local}]
Write $|\tilde h|=t+j$ with $0\le j\le N$.

\emph{(i) Bounding the structured likelihood term.}
Let $q_t\in\arg\max_{q\in Q^{2}}\prod_{\tau=1}^{t-1}q(y_\tau)^{\1\{a_\tau=2\}}$, so $M^2_t(h_t)=\prod_{\tau=1}^{t-1}q_t(y_\tau)^{\1\{a_\tau=2\}}$.
For any extension by $j$ periods,
$$
M^2_{t+j}(\tilde h)
=
\max_{q\in Q^{2}}\prod_{\tau=1}^{t+j-1}q(y_\tau)^{\1\{a_\tau=2\}}
 \ge
\prod_{\tau=1}^{t+j-1}q_t(y_\tau)^{\1\{a_\tau=2\}}
=
M^2_t(h_t)\prod_{\tau=t}^{t+j-1}q_t(y_\tau)^{\1\{a_\tau=2\}}
 \ge
M^2_t(h_t)\underline q^{j},
$$
where the last inequality uses \eqref{eq:app:infty:qunderline}. Since also $M^2_{t+j}(\tilde h)\le M^2_t(h_t)$, we obtain
\begin{align}\label{eq:app:infty:M-ratio}
0 \le \log\frac{M^2_t(h_t)}{M^2_{t+j}(\tilde h)} \le j\log\frac{1}{\underline q} \le N\log\frac{1}{\underline q}.
\end{align}

\emph{(ii) Bounding the unstructured likelihood term.}
Let $p_t\in\arg\max_{p\in\Delta(Y)}\prod_{\tau=1}^{t-1}p(y_\tau)^{\1\{a_\tau=2\}}$, so $U^2_t(h_t)=\prod_{\tau=1}^{t-1}p_t(y_\tau)^{\1\{a_\tau=2\}}$.
Define $\tilde p\in\Delta(Y)$ by setting
$$
\tilde p
:=
\left(1-\frac{n_2}{t+j}\right)p_t
+
\frac{1}{t+j}\sum_{\tau=t:a_\tau=2}^{t+j-1}\delta_{\{y_\tau\}},
$$
where $n_2:=\#\big\{\tau\in\{t,\ldots,t+j-1\}:a_{\tau}=2\big\}\in\{0,1,\ldots,j\}$. Then, $\tilde p\in\Delta(Y)$, for each new innovation observation $\tau\in\{t,\dots,t+j-1\}$ with $a_\tau=2$, we have
$\tilde p(y_\tau)\ge 1/(t+j)$, and for each past innovation observation $\tau\le t-1$ with $a_\tau=2$, we have
$\tilde p(y_\tau)\ge (1-\frac{n_2}{t+j})p_t(y_\tau)\ge \frac{t}{t+j}p_t(y_\tau)$ because $1-\frac{n_2}{t+j}=\frac{t+j-n_2}{t+j}\geq\frac{t}{t+j}$ (as $n_2\le j$). Thus,
$$
U^2_{t+j}(\tilde h)
=
\max_{p\in\Delta(Y)}\prod_{\tau=1}^{t+j-1}p(y_\tau)^{\1\{a_\tau=2\}}
 \ge
\prod_{\tau=1}^{t+j-1}\tilde p(y_\tau)^{\1\{a_\tau=2\}}
 \ge
\left(\frac{t}{t+j}\right)^{t-1}U^2_t(h_t) \left(\frac{1}{t+j}\right)^{j}.
$$
Using $\big(\frac{t}{t+j}\big)^{t-1}=\left(1+\frac{j}{t}\right)^{-(t-1)}\ge e^{-j}$, we get
\begin{align}\label{eq:app:infty:U-ratio}
0 \le \log\frac{U^2_t(h_t)}{U^2_{t+j}(\tilde h)} \le j\big(1+\log(t+j)\big) \le N\big(1+\log(t+N)\big).
\end{align}

\emph{(iii) Normalized bound.}
Since $\mathrm{LLR}^2=\log U^2-\log M^2$, combining \eqref{eq:app:infty:M-ratio}--\eqref{eq:app:infty:U-ratio} gives
$$
\big|\mathrm{LLR}^2(\tilde h,Q^{2})-\mathrm{LLR}^2(h_t,Q^{2})\big|
\le
N\log\frac{1}{\underline q}+N\big(1+\log(t+N)\big)
\le
C_N\big(1+\log(t+N)\big)
$$
for a constant $C_N<\infty$ depending only on $N$ and $\underline q$. Finally,
$\big|\frac{\mathrm{LLR}^2(\tilde h,Q^{2})}{t+j-1}-\frac{\mathrm{LLR}^2(h_t,Q^{2})}{t-1}\big|
\le
\frac{|\mathrm{LLR}^2(\tilde h,Q^{2})-\mathrm{LLR}^2(h_t,Q^{2})|}{t-1}
+
\mathrm{LLR}^2(\tilde h,Q^{2})\big|\frac{1}{t+j-1}-\frac{1}{t-1}\big|.$
For the second term, note $U^2_{t+j}(\tilde h)\le 1$ and $M^2_{t+j}(\tilde h)\ge \underline q^{t+j-1}$, so
$\mathrm{LLR}^2(\tilde h,Q^{2})=\log\frac{U^2_{t+j}(\tilde h)}{M^2_{t+j}(\tilde h)}
\le
\log\frac{1}{\underline q^{t+j-1}}
\le
(t+N)\log\frac{1}{\underline q}.$
Also, $\big|\frac{1}{t+j-1}-\frac{1}{t-1}\big|=\frac{j}{(t-1)(t+j-1)}\le \frac{N}{(t-1)^2}$, so the second term is bounded by
$\log(1/\underline q) \frac{N(t+N)}{(t-1)^2}\le C'_N/t$ for a constant $C'_N<\infty$ (fixed $N$). Combining these bounds
 yields \eqref{eq:dynamic:infty:llr-local}.
\end{proof}

\begin{proof}[Proof of Lemma \ref{lem:cycles:Wlimit-public}]
Define a recursion on public histories as the following function:
$\bar W_t^\infty(s_t)
=
\max_{a\in\{1,2\}}
\left\{
\mathcal G^{a,\infty}\big(y\mapsto x_t(s_t,y)+\beta\bar W_{t+1}^\infty(s_t,y)\big)-c^a
\right\}.$
The same contraction argument as in Proposition \ref{thm:dynamic:infty:bridge}.(a) gives a unique bounded solution $(\bar W_t^\infty)_{t\ge1}$. Define
$\widetilde W_t^\infty(h_t):=\bar W_t^\infty(\chi_t(h_t)).$
Since $x_t$ depends on private histories only through $\chi_t(h_t)$, the sequence $(\widetilde W_t^\infty)_{t\ge1}$ satisfies the private-history recursion \eqref{eq:dynamic:infty:Wlimit}. By uniqueness of the bounded solution to \eqref{eq:dynamic:infty:Wlimit}, $\widetilde W_t^\infty=W_t^\infty$ $\forall t$. Thus, $W_t^\infty(h_t)$ depends on $h_t$ only through $\chi_t(h_t)$. Finally, for any $a,a'\in\{1,2\}$ and $y\in Y$,
$\chi_{t+1}(h_t,a,y)=\chi_{t+1}(h_t,a',y),$
because the public history records outcomes but not actions. Therefore,
$W_{t+1}^\infty(h_t,a,y)
=
\bar W_{t+1}^\infty(\chi_{t+1}(h_t,a,y))
=
\bar W_{t+1}^\infty(\chi_{t+1}(h_t,a',y))
=
W_{t+1}^\infty(h_t,a',y),$
so $W_{t+1}^\infty(h_t,a,y)$ is independent of $a$.
\end{proof}

\subsection{Proofs from Appendix \ref{app:dynamic:extensions}}

\begin{proof}[Proof of Observation \ref{prop:dynamic:milestones-irrelevance}]
Fix $(\mu,\lambda)$, $\eta\in(0,1)$, and an arbitrary continuation utility vector $x:Y\to\R$, where $Y=\{1,0,d\}$.
Define $x_F:=-\frac{1}{\lambda}\log\big(\eta e^{-\lambda x(d)}+(1-\eta)e^{-\lambda x(0)}\big).$
We first show that under the stated restriction, each model's entropic certainty equivalent depends on $x$
only through the pair $(x(1),x_F)$. For any $q\in\{q^1\}\cup Q^{2}$, we are given that
$q(d)=(1-q(1))\eta$ and $q(0)=(1-q(1))(1-\eta)$, hence
$\sum_{y\in Y} q(y)e^{-\lambda x(y)}
= q(1)e^{-\lambda x(1)} + q(d)e^{-\lambda x(d)} + q(0)e^{-\lambda x(0)}
= q(1)e^{-\lambda x(1)} + (1-q(1))e^{-\lambda x_F}.$
Therefore, recalling \eqref{eq:dynamic:gq},
$g_q(x;\lambda)
= -\frac{1}{\lambda}\log\big(q(1)e^{-\lambda x(1)}+(1-q(1))e^{-\lambda x_F}\big).$
In particular, $g_q(x;\lambda)$ depends on $x$ only through $(x(1),x_F)$ and on $q$ only through $q(1)$.

Next, define the two-outcome (binary) certainty-equivalent functional
$\hat g_{\theta}(u,v;\lambda)
:=-\frac{1}{\lambda}\log\big(\theta e^{-\lambda u}+(1-\theta)e^{-\lambda v}\big),$ for $ \theta\in(0,1),\ (u,v)\in\R^2.$
Then, $g_q$ can be rewritten as
$g_q(x;\lambda)=\hat g_{q(1)}(x(1),x_F;\lambda)$. Let $\hat M(\mu,\lambda;u,v)$ denote the binary analogue of $M(\mu,\lambda;x)$ in
\eqref{eq:dynamic:Delta-C}, i.e.,
$\hat M(\mu,\lambda;u,v)
:=\sum_{q\in Q^{2}}\mu(q) \hat g_{q(1)}(u,v;\lambda)-\hat g_{q^1(1)}(u,v;\lambda)$, so $M(\mu,\lambda;x)=\hat M(\mu,\lambda;x(1),x_F).$
Taking suprema over $x$ yields
\begin{align}\label{eq:app:milestones:C-leq}
C(\mu,\lambda)=\sup_{x:Y\to\R}M(\mu,\lambda;x)
\le \sup_{(u,v)\in\R^2}\hat M(\mu,\lambda;u,v).
\end{align}

We will now show the reverse inequality.
Fix any $(u,v)\in\R^2$ and construct $x$ by setting $x(1)=u$ and $x(d)=x(0)=v$.
Then, $x_F=v$ by definition of $x_F$ (since $\eta e^{-\lambda v}+(1-\eta)e^{-\lambda v}=e^{-\lambda v}$), and hence
$M(\mu,\lambda;x)=\hat M(\mu,\lambda;u,v)$.
Thus,
$\sup_{(u,v)\in\R^2}\hat M(\mu,\lambda;u,v)
\le \sup_{x:Y\to\R}M(\mu,\lambda;x)=C(\mu,\lambda),$
which combined with \eqref{eq:app:milestones:C-leq} gives equality.
Thus, the three-outcome capacity coincides with the binary capacity.
\end{proof}

\begin{proof}[Proof of Observation \ref{prop:dynamic:milestones-diagnostic}]
Fix the post-success state $(\mu_{\mathfrak{s}},\lambda_{\mathfrak{s}})$ and suppose \eqref{eq:dynamic:milestones-condition} holds, i.e.,
$\sum_{q\in Q^{2}}\mu_{\mathfrak{s}}(q)\log\frac{q^1(d)}{q(d)}>\lambda_{\mathfrak{s}} k.$ For $T>0$, define the continuation utility vector $x^{T}:Y\to\R$ by $x^{T}(d)=0$ and $x^{T}(1)=x^{T}(0)=\frac{T}{\lambda_{\mathfrak{s}}}$. Then, $e^{-\lambda_{\mathfrak{s}} x^T(d)}=1$ and $e^{-\lambda_{\mathfrak{s}} x^T(1)}=e^{-\lambda_{\mathfrak{s}} x^T(0)}=e^{-T}$, so for any
$q\in\{q^1\}\cup Q^{2}$,
$\sum_{y\in Y}q(y)e^{-\lambda_{\mathfrak{s}} x^{T}(y)}
= q(d)\cdot 1+(1-q(d))e^{-T}$ and $g_q(x^{T};\lambda_{\mathfrak{s}})
=-\frac{1}{\lambda_{\mathfrak{s}}}\log\big(q(d)+(1-q(d))e^{-T}\big).$
Hence, recalling $M$ in \eqref{eq:dynamic:Delta-C},
\begin{align}
&M(\mu_{\mathfrak{s}},\lambda_{\mathfrak{s}};x^{T})
=\sum_{q\in Q^{2}}\mu_{\mathfrak{s}}(q)g_q(x^{T};\lambda_{\mathfrak{s}})-g_{q^1}(x^{T};\lambda_{\mathfrak{s}})\nonumber\\
&=\frac{1}{\lambda_{\mathfrak{s}}}\Bigg[
\log\big(q^1(d)+(1-q^1(d))e^{-T}\big)
-\sum_{q\in Q^{2}}\mu_{\mathfrak{s}}(q)\log\big(q(d)+(1-q(d))e^{-T}\big)
\Bigg].\label{eq:app:milestones:MT}
\end{align}

\smallskip
\noindent\textit{Step 1: bound approximation error.}
Fix any $a\in(0,1)$ and $T>0$. Write
$a+(1-a)e^{-T}=a(1+\frac{1-a}{a}e^{-T}).$
Therefore,
\begin{align}\label{eq:app:milestones:log-bound}
0\le \log\big(a+(1-a)e^{-T}\big)-\log a
=\log\Big(1+\frac{1-a}{a}e^{-T}\Big)
\le \frac{1-a}{a}e^{-T},
\end{align}
where the last inequality uses $\log(1+u)\le u$ for all $u\ge 0$.

\smallskip
\noindent\textit{Step 2: lower bound.}
Define the ``limit wedge'' $L:=\frac{1}{\lambda_{\mathfrak{s}}}\sum_{q\in Q^{2}}\mu_{\mathfrak{s}}(q)\log\big(\frac{q^1(d)}{q(d)}\big).$
By assumption, $L>k$. Using \eqref{eq:app:milestones:MT} and then \eqref{eq:app:milestones:log-bound} (applied to $a=q(d)$), we get
\begin{align*}
M(\mu_{\mathfrak{s}},\lambda_{\mathfrak{s}};x^{T})
&=\frac{1}{\lambda_{\mathfrak{s}}}\Bigg[
\log q^1(d)-\sum_{q\in Q^{2}}\mu_{\mathfrak{s}}(q)\log q(d)
+\Big(\log\big(q^1(d)+(1-q^1(d))e^{-T}\big)-\log q^1(d)\Big)\\
&\hspace{1.9in}
-\sum_{q\in Q^{2}}\mu_{\mathfrak{s}}(q)\Big(\log\big(q(d)+(1-q(d))e^{-T}\big)-\log q(d)\Big)
\Bigg]\\
&\ge L-\frac{1}{\lambda_{\mathfrak{s}}}\sum_{q\in Q^{2}}\mu_{\mathfrak{s}}(q) \frac{1-q(d)}{q(d)}e^{-T}.
\end{align*}
Let $Z:=\sum_{q\in Q^{2}}\mu_{\mathfrak{s}}(q)\frac{1-q(d)}{q(d)}\in(0,\infty).$
Then,
\begin{align}\label{eq:app:milestones:MT-lower}
M(\mu_{\mathfrak{s}},\lambda_{\mathfrak{s}};x^{T})\ge L-\frac{Z}{\lambda_{\mathfrak{s}}}e^{-T}.
\end{align}

\smallskip
\noindent\textit{Step 3: choose finite $T$.}
Since $L-k>0$, choose any $T$ such that
$e^{-T}\le \frac{\lambda_{\mathfrak{s}}(L-k)}{Z}$, i.e., $T\ge \log\frac{Z}{\lambda_{\mathfrak{s}}(L-k)}$.
Then, \eqref{eq:app:milestones:MT-lower} implies $M(\mu_{\mathfrak{s}},\lambda_{\mathfrak{s}};x^{T})\ge k$. Finally, by definition of capacity in \eqref{eq:dynamic:Delta-C}, $C(\mu_{\mathfrak{s}},\lambda_{\mathfrak{s}})=\sup_{x:Y\to\R}M(\mu_{\mathfrak{s}},\lambda_{\mathfrak{s}};x) \ge M(\mu_{\mathfrak{s}},\lambda_{\mathfrak{s}};x^{T})\ge k.$
Therefore, innovation is implementable after success by Proposition \ref{prop:dynamic:terminal-impl}.
\end{proof}

\begin{proof}[Proof of Observation \ref{prop:app:shirk:terminal-impl}]
Fix $(\mu,\lambda)$ and $x$. By \eqref{eq:app:shirk:V2-actions}, $V_2(\mu,\lambda;2,x)\ge V_2(\mu,\lambda;0,x)$
is equivalent to $\mathcal G(x;\mu,\lambda)-k \ge g_{q^0}(x;\lambda)$, i.e., $M^{0}(\mu,\lambda;x)\ge k$.
Similarly, $V_2(\mu,\lambda;2,x)\ge V_2(\mu,\lambda;1,x)$ is equivalent to
$\mathcal G(x;\mu,\lambda)-k \ge g_{q^1}(x;\lambda)-k_1$, i.e., $M^{1}(\mu,\lambda;x)\ge k-k_1$.
Thus, $a_2=2$ is optimal if and only if \eqref{eq:app:shirk:two-IC} holds, which is equivalent to
$\underline{M}(\mu,\lambda;x)\ge 0$ by definition \eqref{eq:app:shirk:Delta-underline}.
Applying this at the realized state $(\mu_2(\cdot|y_1),\lambda_2(y_1))$ and $x=x_2(y_1,\cdot)$ yields
\eqref{eq:app:shirk:terminal-criterion}. Existence of a contract that induces $a_2=2$ after $y_1$ is equivalent to existence of some finite vector $x:Y\to\R$ such that $\underline M(\mu_2(\cdot|y_1),\lambda_2(y_1);x)\ge 0$. Such a vector exists when either $\underline C(\mu_2(\cdot|y_1),\lambda_2(y_1))>0$, or with equality if the supremum is attained.
\end{proof}

\end{document}